\newcommand{\be}{\begin{equation}}
\newcommand{\ee}{\end{equation}}
\newcommand{\mb}[1]{\mathbb{#1}}
\newcommand{\mc}[1]{\mathcal{#1}}
\newcommand{\beq}{\begin{eqnarray}}
\newcommand{\eeq}{\end{eqnarray}}
\newcommand{\la}{\langle}
\newcommand{\ra}{\rangle}
\newcommand{\tr}{{\rm tr}}
\newcommand{\bsp}{\begin{split}}
	\newcommand{\esp}{\end{split}}
\newcommand{\ie}{{i.e., }}
\newcommand{\eg}{{e.g., }}
\newcommand{\z}{\mathbb{Z}}
\definecolor{darkblue}{rgb}{0.,0.,0.4}
\definecolor{darkred}{rgb}{0.5,0.,0.}
\definecolor{BlueViolet}{RGB}{138,43,226}
\definecolor{SkyBlue}{RGB}{30,144,255}
\definecolor{DarkGreen}{RGB}{0,100,0}
\renewcommand{\vec}[1]{\bm{#1}}
\newtheorem{theorem}{Theorem}[section]
\newtheorem{lemma}[theorem]{Lemma}
\begin{document}
	\title{Topological characterization of Lieb-Schultz-Mattis constraints and\\ applications to symmetry-enriched quantum criticality}
	\author{Weicheng Ye}
	\affiliation{Perimeter Institute for Theoretical Physics, Waterloo, Ontario, Canada N2L 2Y5}
	\affiliation{Department of Physics and Astronomy, University of Waterloo, Waterloo, Ontario, Canada N2L 3G1}
	\author{Meng Guo}
	\affiliation{Perimeter Institute for Theoretical Physics, Waterloo, Ontario, Canada N2L 2Y5}
	\affiliation{Department of Mathematics, University of Toronto, Toronto, Ontario, Canada M5S 2E4}
	\author{Yin-Chen He}
	\affiliation{Perimeter Institute for Theoretical Physics, Waterloo, Ontario, Canada N2L 2Y5}
	\author{Chong Wang}
	\affiliation{Perimeter Institute for Theoretical Physics, Waterloo, Ontario, Canada N2L 2Y5}
	\author{Liujun Zou}
	\affiliation{Perimeter Institute for Theoretical Physics, Waterloo, Ontario, Canada N2L 2Y5}

\begin{abstract}

Lieb-Schultz-Mattis (LSM) theorems provide powerful constraints on the \textit{emergibility} problem, i.e. whether a quantum phase or phase transition can emerge in a many-body system. We derive the topological partition functions that characterize the LSM constraints in spin systems with $G_s\times G_{int}$ symmetry, where $G_s$ is an arbitrary space group in one or two spatial dimensions, and $G_{int}$ is any internal symmetry whose projective representations are classified by $\z_2^k$ with $k$ an integer. We then apply these results to study the emergibility of a class of exotic quantum critical states, including the well-known deconfined quantum critical point (DQCP), $U(1)$ Dirac spin liquid (DSL), and the recently proposed non-Lagrangian Stiefel liquid. These states can emerge as a consequence of the competition between a magnetic state and a non-magnetic state. We identify all possible realizations of these states on systems with $SO(3)\times \z_2^T$ internal symmetry and either $p6m$ or $p4m$ lattice symmetry. Many interesting examples are discovered, including a DQCP adjacent to a ferromagnet, stable DSLs on square and honeycomb lattices, and a class of quantum critical spin-quadrupolar liquids of which the most relevant spinful fluctuations carry spin-$2$. In particular, there is a realization of spin-quadrupolar DSL that is beyond the usual parton construction. We further use our formalism to analyze the stability of these states under symmetry-breaking perturbations, such as spin-orbit coupling. As a concrete example, we find that a DSL can be stable in a recently proposed candidate material, NaYbO$_2$.
 
\end{abstract}

\maketitle

\tableofcontents

\section{Introduction} \label{sec: intro}

An important task of condensed matter physics is to understand the quantum phase or phase transition that emerges from a many-body system. However, this is often challenging in strongly correlated systems, both theoretically and experimentally, due to the lack of i) theoretical tools to exactly solve the many-body ground state in the generic setting, and ii) experimentally accessible signatures that can unambiguously diagnose the nature of the phase or phase transition.

In light of this, Lieb-Shultz-Mattis (LSM) type constraints are especially valuable \cite{Lieb1961, Oshikawa1999, Hastings2003}. Given some general symmetry-related properties of a system, which are often relatively easy to determine, the LSM constraints constrain the {\it emergibility} of a phase or phase transition, \ie whether this phase or phase transition can possibly emerge from this system. Such constraints have been widely applied to the search for exotic states beyond the symmetry-breaking paradigm, \eg quantum spin liquid phases and exotic phase transitions. For instance, a simple example of LSM constraints states that in a $(d+1)$-d lattice spin system with $SO(3)$ spin rotation and lattice translation symmetries that are not explicitly or spontaneously broken, if each unit cell hosts an odd number of spin-1/2 moments, then the ground state must be exotic (\ie topologically ordered or gapless). Since symmetry breaking is often relatively easy to detect experimentally and numerically, its absence is often taken as the first evidence of an exotic state in such systems.

There has been great progress in understanding LSM constraints in recent years \cite{Cheng2015, Po2017, Jian2017, Cho2017, Metlitski2017, Huang2017, Else2020}. In particular, it was realized that LSM constraints can be captured by {\it LSM anomalies}, the quantum anomalies carried by the boundaries of some higher-dimensional topological crystalline phases. Such relations between LSM constraints and anomalies can be very powerful in constraining the emergibility of a phase or phase transition, because the quantum anomaly of this phase or phase transition, which we refer to as its {\it IR anomaly}, must match with the LSM anomaly (in a sense to be sharpened later).

In order to utilize these constraints, we need to know how to compare an LSM anomaly and an IR anomaly. The latter can be derived from the effective field theory of the corresponding phase or phase transition, and it is often characterized by a topological partition function (TPF). However, to date the TPFs corresponding to the LSM anomalies are unknown in the general setting, so the full power of the LSM constraints has not been uncovered;  although these constraints have been applied to various systems and shed important insights in the emergibility of some states \cite{Zaletel2014, Qi2015, Qi2016, Ning2019}, most previous analyses were performed in a case-by-case manner and/or did not take the full symmetry constraint into account, and a systematic framework is lacking.

The first major goal of this paper is to fill this gap. Motivated by the studies of quantum magnetism, we consider $(2+1)$-d spin systems with $G_s\times G_{int}$ symmetry, where the lattice symmetry $G_s$ is any of the 17 wallpaper groups, and $G_{int}$ is any internal symmetry whose projective representations are classified by $\z_2^k$ with $k$ some integer, \eg $G_{int}=SO(3)\times\z_2^T$, the combination of $SO(3)$ spin rotational symmetry and time reversal symmetry. Given $G_s\times G_{int}$, there are still topologically distinct LSM constraints, specified by the projective representation (PR) under $G_{int}$ carried by the degrees of freedom (DOF) of the system, and the spatial distribution of these DOF. For {\it all} cases, we derive the TPFs of the LSM anomalies. Similar analysis is also performed for $(1+1)$-d lattice spin systems. This topological characterization of the LSM constraints is the basis of a systematic framework that uses the LSM constraints to understand the emergibility of quantum phases and phase transitions in a many-body system.

The second major goal of this paper is to apply the obtained topological characterization of the LSM constraints to study the emergibility of exotic states. Here we focus on exotic quantum criticality, rather than other classes of exotic states, \eg topological phases, which may be more commonly done in the literature. Our choice is motivated by the following reasons. First, quantum critical states may have many elegant structures that are worth studying, such as emergent conformal invariance at low energies. Second, many quantum critical states can serve as the parent states of other phases (including topological phases), which can emerge through perturbing the quantum critical states. So a thorough understanding of the quantum criticality may provide a unified understanding of not only the critical state itself, but also the nearby phases. However, compared to topological phases, quantum criticality is much less understood, especially in two and three spatial dimensions. So it is interesting and important to further explore them.

A useful notion here is symmetry-enriched quantum criticality. This notion is actually rather familiar, but let us discuss it in a more modern perspective. By now, it is well appreciated that the universal long-distance and low-energy physics of most (if not all) quantum many-body systems are specified by two levels of data. The first level is characterized by what we refer to as the {\it emergent order}. In the language of renormalization group (RG), the emergent order is described by properties of the RG fixed point corresponding to this system, which are independent of the exact microscopic symmetry. For example, the RG fixed point corresponding to gapped states are described by certain topological quantum field theory (TQFT), or variants of it. Short-range entangled (SRE) states, \ie states smoothly connected to a product state without quantum entanglement, are related to a trivial TQFT. In contrast, long-range entangled gapped states, which cannot be smoothly connected to product states, correspond to some nontrivial TQFT.
On the other hand, gapless states have different emergent orders, and many of their RG fixed points are described by a conformal field theory (CFT). States described by different RG fixed points are distinct at the level of their emergent orders.

\begin{figure}
    \centering
    \includegraphics[width=0.47\textwidth]{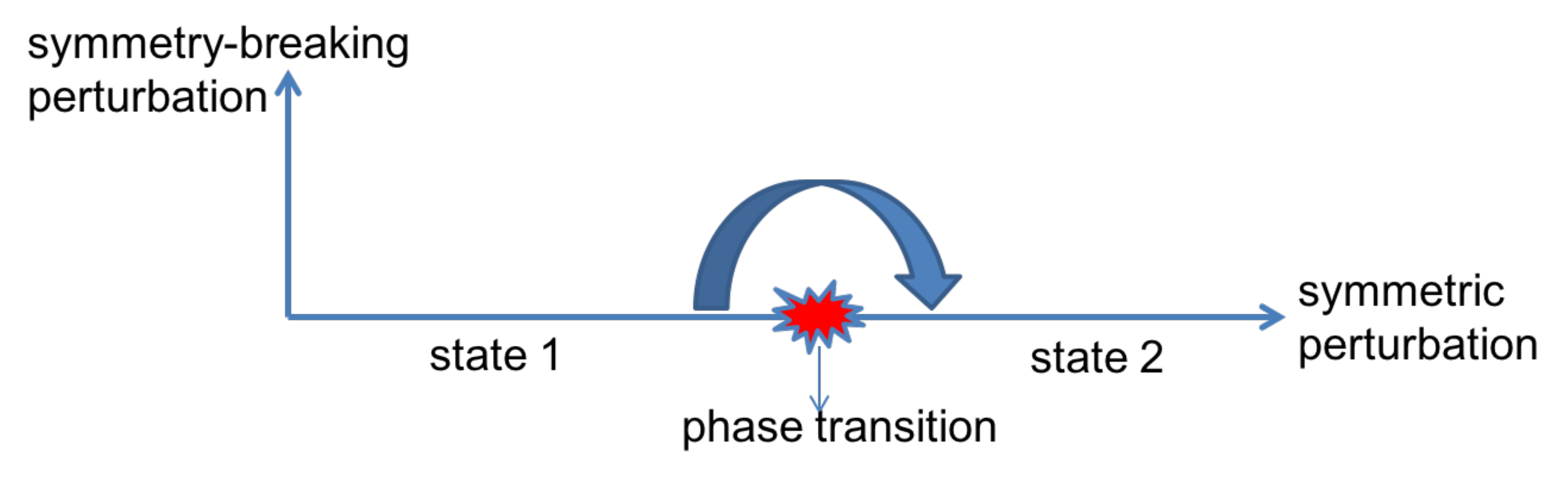}
    \caption{If two states have the same emergent order and exact microscopic symmetry, and if they can be smoothly connected when symmetry-breaking perturbations are allowed, but are necessarily separated by a phase transition when the relevant symmetries are preserved, then these states are said to have symmetry-protected distinction.}
    \label{fig:symmetry-protected distinction}
\end{figure}

Even if two states have the same emergent order (RG fixed point), their exact microscopic symmetries provide a second level of data that may distinguish them. Two states with the same emergent order but different exact microscopic symmetries are considered distinct. If they have the same emergent order and same exact microscopic symmetries, they may still have {\it symmetry-protected distinction}: they are not smoothly connected if certain symmetries are imposed, while they are if these symmetries are broken (see Fig. \ref{fig:symmetry-protected distinction}). Two SRE states with symmetry-protected distinction are referred to as different SPTs, two topological orders with symmetry-protected distinction are referred to as different symmetry-enriched topological states (SETs), and two quantum critical states with symmetry-protected distinction are referred to as different symmetry-enriched criticality. There has been great progress in understanding SPTs and SETs in the past years, but a systematic understanding of symmetry-enriched criticality is lacking.

In this paper, we focus on the emergibility of a family of quantum critical states dubbed Stiefel liquids (SLs), each of which is labeled by an integer $N\geqslant 5$ and denoted by SL$^{(N)}$ \cite{Zou2021}. The well-known deconfined quantum critical point (DQCP) \cite{Senthil2003, Senthil2003a, Senthil2005, Wang2017} and $U(1)$ Dirac spin liquid (DSL) \cite{Affleck1988, Wen1995,Hastings2000} are unified as the two simplest SLs, with $N=5$ and $N=6$, respectively. SL$^{(N\geqslant 7)}$ are conjectured to be {\it non-Lagrangian}, \ie they are so strongly interacting, such that they cannot be described by any weakly-coupled continuum Lagrangian at any energy scale. We would like to understand whether the SLs can emerge in lattice spin systems, and if they can, which different types of symmetry-enriched SLs can emerge. 

Here we characterize each realization of SL$^{(N)}$ by its symmetry embedding pattern (SEP), \ie how the microscopic symmetries act on its {\it local}, {\it low-energy} DOF. This characterization has a number of advantages. First and most fundamentally, it captures the symmetry actions in an intrinsic and direct way. This is in contrast to the more common treatment of emergent gauge theories in condensed matter physics (\eg for DQCP and DSL), where one first considers the symmetry actions on gauge non-invariant operators (such as spinons) and then converts them into actions on local operators, which is indirect and sometimes complicated, especially when there is a $(2+1)$-d $U(1)$ gauge field where the quantum numbers of the local monopole operators cannot be identified with those of any gauge-invariant composite of the matter fields, and when some symmetries act as duality between different gauge-theoretic formulations of the same critical state. Second, using this characterization we can easily read off the symmetry-breaking patterns of the ordered phases adjacent to the exotic quantum criticality. This information provides valuable guidance on where to look for these quantum critical states: if the corresponding ordered phases are found in a material or model, then exploring the vicinity of the phase diagram may result in the critical state. Third, using this characterization it is easy to check the stability of the critical state under various perturbations, \eg spin-orbit couplings (SOC). Recently, NaYbO$_2$ and related materials emerge as candidates for DSL \cite{Zhang2018, Ranjith2019, Ding2019, Bordelon2019, Bordelon2020, Baenitz2018, Sichelschmidt2018, Ranjith2019a, Chen2020, Dai2020}. These systems have strong SOC, so it is important to ask if DSL remains stable under SOC. We showcase how to use our approach to argue that the DSL can be stable in NaYbO$_2$.

To check the emergibility of a Stiefel liquid with a given symmetry embedding pattern, we rely on the {\it hypothesis of emergibility} \cite{Zou2021}: a state is emergible if and only if its IR anomaly matches with that of the LSM-like anomaly of the microscopic system. The necessity of this condition has been established, while its sufficiency is hypothetical, but supported by many nontrivial examples. Using the symmetry embedding pattern, we can match the IR anomaly of an SL with the LSM anomaly of a lattice spin system characterized by the TPF we derive{\footnote{In Ref. \cite{Zou2021}, the TPFs for some of the LSM anomalies are listed, and anomaly-matching is performed to check the emergibility of various SLs. However, both those TPFs and the anomaly-matching calculations therein are problematic, and the current paper presents the correct TPFs and anomaly-matching calculations. All specific examples studied in Ref. \cite{Zou2021} are treated with care in this paper (see Sec. \ref{subsec: pullback calculation DQCP} and Appendix \ref{app: pullback example}), and it is found that all final physical results regarding the emergibility of these examples are correctly obtained in Ref. \cite{Zou2021}.}}, and we search for {\it all} realizations of SL$^{(N=5,6,7)}$ that can emerge due to the competition between a magnetic state and a non-magnetic state on lattice systems with $G_s\times G_{int}$ symmetry, where $G_s$ is either the $p4m$ or $p6m$ wallpaper group, and $G_{int}=SO(3)\times\z_2^T$. We discover many interesting realizations of these states. For example, we find that the DSL can be realized as a quantum critical spin-quadrupolar liquid, \ie a critical state whose most relevant spinful excitations have spin-2. So far, the construction of the DSL is often based on a type of parton mean field. However, we show that our spin-quadrupolar realization of the DSL is beyond that parton mean field. With all realizations at hand, we will see that, given an SL$^{(N)}$ and its microscopic symmetries, different symmetry embedding patterns typically correspond to different symmetry-enriched SLs (we discuss the subtle cases where this may not be true at the end of Sec.~\ref{sec: spin liquids}).

We highlight that this exhaustive search of realizations of SLs is possible because we have obtained our topological characterization of the LSM contraints, without which we cannot examine the emergibility of states systematically. We also remark that even if the hypothesis of emergibility turns out to be false, \ie it is just a necessary but insufficient condition for emergibility, the result of our search is still useful, because all SLs that are emergible must belong to the ones we identify.

The organization of the rest of the paper and a brief summary of the main results are as follows.

\begin{enumerate}
    
    \item In Sec. \ref{sec: topological characterization}, we derive the topological partition functions of the LSM constraints of the lattice spin systems of our interest. The structure of these topological partition functions is given in Eq. \eqref{eq: master cocycle}, where $\eta$ is determined by the projective representation carried by the local degrees of freedom under the internal symmetry, and $\lambda$ is determined by the locations of the local degrees of freedom. The characterization of $\lambda$ for different space groups can be found in Secs. \ref{subsubsec: p6m}, \ref{subsubsec: p4m} and \ref{subsubsec: 1D LSM}, and Appendix \ref{app: top invariants}. Some further arguments leading to these topological partition functions are presented in Appendices \ref{app: LSM partition function}, \ref{app: pg}, \ref{app: non-LSM} and \ref{app: 1D LSM}.
    
    \item In Sec. \ref{sec: applications}, we sketch how to use anomaly-matching to understand the emergibility of various Stiefel liquids. Detailed examples of caculations are presented in this section and also in Appendix \ref{app: pullback example}.
    
    \item In Secs. \ref{sec: spin liquids} and \ref{sec: spin-nematic liquids}, we present some interesting realizations of SLs, while the complete results are summarized in the attached codes, which can be read with the instruction in Appendix \ref{app: exhaustive search}. Table \ref{tab: realization number} records the total numbers of realizations in different cases, and Table \ref{tab: realization number magnetic order} records the numbers of realizations that are adjacent to classical regular magnetic orders. The stability of each realization is also analyzed, which is recorded in the attached codes. In Appendix \ref{app: realizations in familiar lattices}, we present all stable realizations on various familiar lattice systems. The highlighted examples in the main text include i) a deconfined quantum critical point between a ferromagnet and a valence bond solid, ii) stable $U(1)$ Dirac spin liquids in spin-1/2 square and honeycomb lattices, iii) various realizations of the non-Lagrangian Stiefel liquid, and iv) realizations of SLs where the most relevant spinful excitations carry spin-2, which, in particular, include a $U(1)$ Dirac spin liquid that cannot be desribed by the usual parton approach.
    
    \item We demonstrate how to use our formalism to study the stability of these states under symmetry-breaking perturbations in Sec. \ref{sec: stability}, where we argue that the DSL can be stable in NaYbO$_2$. More analysis regarding NaYbO$_2$ and twisted bilayer WSe$_2$ is presented in Appendix \ref{app: stability}.
    
    \item We conclude in Sec. \ref{sec: discussion}.
    
    \item Various appendices include further details, some of which may be of general interest. For example, Appendix \ref{app: math review} is a review of the basic mathematical tools we use. Appendix \ref{app: cohomology ring} contains descriptions of all 17 wallpaper groups, as well as information about their $\z_2$ cohomology, including their $\z_2$ cohomology rings and all representative cochains at degree 1 and 2. Appendix \ref{app: SLs} contains more details of the Stiefel liquids, including some that do not appear in Ref. \cite{Zou2021}. Appendix \ref{app: regular magnetic orders} presents the configurations of spins of all classical regular magnetic orders in triangular, honeycomb, kagome and square lattices.
    
\end{enumerate}

\section{Topological characterization of LSM constraints} \label{sec: topological characterization}

In this section, we develop a topological characterization of the LSM constraints applicable to a $(2+1)$-d lattice spin system, whose Hilbert space is a tensor product of local bosonic Hilbert spaces, and whose Hamiltonian is also local. We assume that the system has a symmetry group $G=G_s\times G_{int}$, where $G_s$ is one of the 17 wallpaper groups and $G_{int}$ is an internal symmetry group. Throughout this paper, we consider $G_{int}$ whose projective representations (PR) are classified by $\mb{Z}_2^k$ with some $k\in\mb{N}^+$, \ie $H^2(G_{int}, U(1)_\rho)=\mb{Z}_2^k$ {\footnote{In this paper, a few different objects have the structure of $\z_2^k$ with some $k\in\mb{N}$. These $k$'s are independent unless explicitly claimed, and we will abuse the notation to use the same $k$ when we make a statement about this $\z_2^k$ structure.}}, with the subscript $\rho$ indicating the complex conjugation action of any spacetime orientation reversal symmetry on the $U(1)$ coefficient. Typical examples of such $G_{int}$ include $SO(3)$, $SO(3)\times \z_2^T$, $\z_2^T$, $O(2)$, $\z_2\times \z_2$, etc. These choices of $G$ and $G_{int}$ are motivated by the systems and models relevant to quantum magnetism. We will also perform a similar analysis for $(1+1)$-d lattice spin systems.

Some $G_{int}$ may have multiple types of PR. For example, for $G_{int}=SO(3)\times\z_2^T$, $H^2(SO(3)\times\z_2^T, U(1)_\rho)=\z_2^2$, so there are 3 different types of nontrivial PR, corresponding to spinor under $SO(3)$ while Kramers singlet under $\z_2^T$, singlet under $SO(3)$ while Kramers doublet under $\z_2^T$, and spinor under $SO(3)$ while Kramers doublet under $\z_2^T$. In this paper, we will mainly consider systems with at most one type of nontrivial PR, \ie there may be some DOF carrying trivial PR under $G_{int}$, but all DOF with nontrivial PR carry the same type of nontrivial PR which we refer to as the {\it PR type of the system}. If all DOF carry trivial PR, then the PR type of the system is said to be the trivial type. Many of our results can be straightforwardly generalized to the case where the system has DOF with different types of nontrivial PR, on which we sometimes explicitly comment.

\subsection{Review of lattice homotopy and the connection to SPT} \label{subsec: review}

To be self-contained, we begin by reviewing lattice homotopy \cite{Po2017}, in a way that will lead to our topological characterization of the LSM constraints most easily. 

All LSM constraints should be fully determined by the spatial distribution of the DOF in the system. The key idea of lattice homotopy is that, to characterize the LSM constraints for a given lattice system, one can always first smoothly deform the system so that all DOF are moved to the high-symmetry points of the corresponding wallpaper symmetry group, while preserving the $G=G_s\times G_{int}$ symmetry during the process. These high-symmetry points are called the irreducible Wyckoff positions (IWP); their precise definition can be found in Ref. \cite{Po2017} and they are well documented for each space group in the standard crystallographic literature. All distributions of DOF that can be smoothly deformed into each other are referred to be in the same {\it lattice homotopy class}. Below we always assume that a smooth deformation has been performed, such that all DOF are located at some IWP.
Then to determine the presence or absence of an LSM constraint, one can invoke one or multiple of the following 3 types of basic no-go theorems that preclude symmetric SRE (sym-SRE) ground states in various cases \cite{Watanabe2015, Po2017}:

\begin{enumerate}
    
    \item Define a fundamental domain to be a region that tiles the 2D space under the actions of translation and glide symmetries. When the total PR within a fundamental domain is nontrivial, a sym-SRE ground state is forbidden.
    
    \item When there is a translation symmetry along a mirror axis, and the total PR within a translation unit along this mirror axis is nontrivial, a sym-SRE ground state is forbidden.
    
    \item In our case, the PR of $G_{int}$ are classified by $\z_2^k$. Then if the total PR at a $C_2$ rotation center is nontrivial, a sym-SRE ground state is forbidden. However, PR at a $C_n$ rotation center for odd $n$ does not forbid a sym-SRE ground state.
    
\end{enumerate}
Note that these no-go theorems do not require the full wallpaper symmetry to be applicable. In particular, the first applies whenever there are translation or glide symmetries, the second applies whenever there are commuting translation and mirror symmetries, and the last applies whenever there is a rotation symmetry. When a full wallpaper symmetry is present, there are often multiple translations, glide reflections, mirror and rotation symmetries, so a given distribution of DOF may trigger multiple of these basic no-go theorems. It is straightforward to check that knowing which no-go theorems are triggered is actually also sufficient to know which lattice homotopy class this distribution of DOF is in.

One can see that, for a given wallpaper group $G_s$ and a PR type of the system, the spatial distributions of DOF form an Abelian group, denoted by $A_{\rm LH}$. Each group element in $A_{\rm LH}$ corresponds to a lattice homotopy class of distributions of DOF, the multiplication between two group elements corresponds to physically stacking two such distributions of DOF together, and the trivial group element corresponds to a distribution of DOF that is free of all 3 basic no-go theorems above (\ie a distribution of DOF with no net nontrivial PR in any fundamental domain, any translation unit on any mirror axis, or any $C_2$ rotation center). Due to the $\mb{Z}_2$ nature of the PR, the inverse of each group element is itself, so $A_{\rm LH}=\mb{Z}_2^{k}$ with $k\in\mb{N}$.

It turns out that elements in $A_{\rm LH}$ are in one-to-one correspondence with different LSM constraints \cite{Po2017}, \ie the ground states emergible in systems with distributions of DOF corresponding to different group elements of $A_{\rm LH}$ must be different, in the sense that they have different emergent order or symmetry-protected distinction. As an example, the trivial element represents the absence of any LSM constraint, \ie a sym-SRE ground state is allowed if the microscopic DOF of the system are arranged in a configuration corresponding to the trivial element. Therefore, the intuitive geometric picture based on lattice homotopy gives an elegant characterization and classification of LSM constraints. An important observation that will be very useful later is that the structure of $A_{\rm LH}$ only depends on $G_s$ and the fact that all PR of $G_{int}$ has a $\mb{Z}_2$ nature, but not on other details of $G_{int}$.

When PR of $G_{int}$ are $\z_2^k$-classified with $k>1$, the above discussion applies to the case where at most one type of nontrivial PR is present in the system. If all nontrivial PR are allowed to be present, all LSM constraints are classified by $A_{\rm LH}^{k}$, \ie each nontrivial PR can result in LSM constraints classified by $A_{\rm LH}$, and nontrivial LSM constraints from different nontrivial PR are all different.

To make this discussion more concrete, below we consider two specific examples that will be relevant for the later part of the paper.

\begin{figure}
    \centering
    \includegraphics[width=0.45\textwidth]{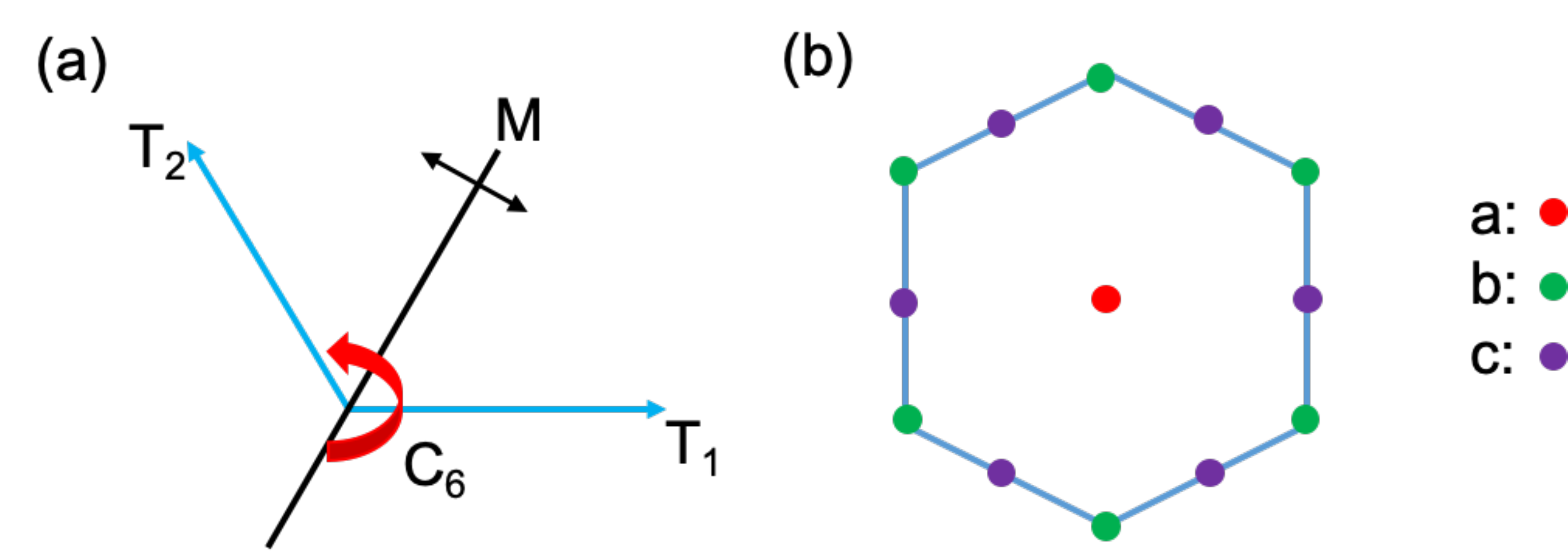}
    \caption{Panel (a) shows the generators of the wallpaper group $p6m$. In panel (b), the hexagon is a translation unit cell of the wallpaper group $p6m$. It has three IWP, usually labelled by $a$, $b$ and $c$ in crystallography, and they form the sites of the triangular, honeycomb and kagome lattices, respectively. The $C_6$ rotation center is at the type-$a$ IWP.}
    \label{fig:p6m}
\end{figure}

\subsubsection{$G_s=p6m$}
We start with the example where $G_s=p6m$, which is the symmetry group of triangular, kagome and honeycomb lattices. The generators, a translation unit cell and IWP of $p6m$ are shown in Fig. \ref{fig:p6m}. The translation vectors of $T_1$ and $T_2$ have the same length, and their angle is $2\pi/3$. There is also a 6-fold rotational symmetry, denoted by $C_6$. Finally, there is a mirror symmetry $M$, whose mirror axis passes through the $C_6$-center and bisects the translation vectors of $T_1$ and $T_2$.

We wish to understand how to identify the distributions of DOF with the elements in $A_{\rm LH}$ in this example. First consider the case where all DOF in the system are in the trivial PR. This distribution of DOF is free of all the 3 basic no-go theorems, so it corresponds to the trivial element of $A_{\rm LH}$, which physically implies that there is no LSM constraint associated with this distribution of DOF, and sym-SRE ground states are allowed. This is indeed the common belief. 

Next, consider putting DOF with nontrivial PR on any of the three types of IWP. First, imagine putting DOF with nontrivial PR on the type-$b$ IWP. One can check that none of the 3 basic no-go theorems is triggered, so this distribution of DOF also corresponds to the trivial group element, and there should be no associated LSM constraint. Indeed, this configuration is where the DOF are on a honeycomb lattice, and it is known that sym-SRE ground states are allowed in this case \cite{Kimchi2012, Jian2015, Kim2015, Latimer2020}, consistent with the absence of any LSM constraint. Second, imagine putting DOF with nontrivial PR on the type-$a$ IWP. One can check that all 3 basic no-go theorems are triggered, so this configuration should correspond to a nontrivial element in $A_{\rm LH}$, and such a system has a nontrivial LSM constraint that precludes any sym-SRE ground state. The same is true if DOF with nontrivial PR are put on the type-$c$ IWP. Moreover, one can also check that the distributions of DOF on type-$a$ and type-$c$ IWP are in different lattice homotopy classes, \ie they cannot be smoothly deformed into each other. So they correspond to different group elements in $A_{\rm LH}$, which indicates different LSM constraints. These two types of IWP form a triangular and kagome lattice, respectively, and there is indeed no known example of symmetric states that can emerge in both triangular and kagome lattices, without showing any difference in emergent order or symmetry-protected distinction.{\footnote{In fact, even spontaneously-symmetry-breaking states (such as ferromagnetic states) realized on these two lattices should be distinct, because they have different anomalies. However, to the best of our knowledge, it is still an open problem to explicitly calculate the complete anomalies for these spontaneously-symmetry-breaking states, which is an interesting problem beyond the scope of the current paper.}}

Finally, one can also consider putting DOF with nontrivial PR on multiple of the three types of IWP. For instance, putting these DOF on both type-$a$ and type-$c$ IWP is equivalent to stacking systems with DOF arranged on a triangular lattice and kagome lattice together, which corresponds to multiplying the two nontrivial group elements in the last paragraph.

Taken together, the above analysis indicates that the LSM constraints on a lattice with $p6m$ symmetry are classified by $A_{\rm LH}=\mb{Z}_2^2$, and the two generators can be taken to correspond to distributions of DOF on triangular and kagome lattices, respectively.

In the above, we have worked out $A_{\rm LH}$ by examining whether any of the basic no-go theorems is triggered by a distribution of DOF. To finish the discussion of this case with $G_s=p6m$, we demonstrate how the information about which basic no-go theorems are triggered can uniquely determine the lattice homotopy class. In this case, we just need to consider the third type of the basic no-go theorems. For this type of no-go theorems, there are two independent ones, triggered by putting DOF with nontrivial PR on the type-$a$ and type-$c$ IWP, respectively. So if we know which of the no-go theorems are triggered, we also know whether there are nontrivial PR carried by type-$a$ and type-$c$ IWP. From the previous discussion, this can uniquely determine the lattice homotopy class. This observation will be very useful when we construct a topological characterization of the LSM constraints later.

\begin{figure}
    \centering
    \includegraphics[width=0.45\textwidth]{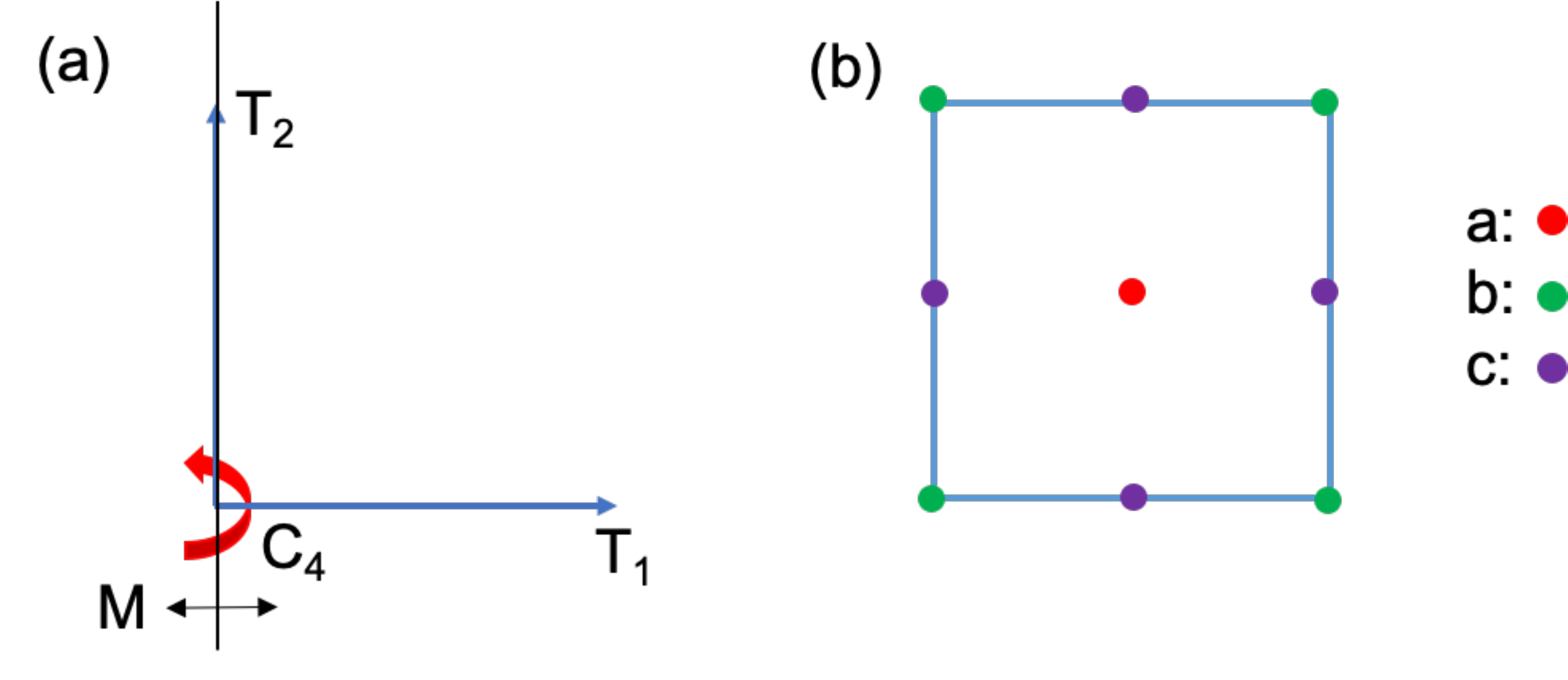}
    \caption{Panel (a) shows the generators of the wallpaper group $p4m$. In panel (b), the square is a translation unit cell of the wallpaper group $p4m$. It has three IWP, usually labelled by $a$, $b$ and $c$ in crystallography. Type-$a$ and type-$b$ both form a square lattice. The $C_4$ rotation center in panel (a) is taken to be at the type-$a$ IWP.}
    \label{fig:p4m}
\end{figure}

\subsubsection{$G_s=p4m$}

Warmed up with the example where $G_s=p6m$, now we can easily apply the similar analysis to the other 16 wallpaper groups. Here, we examine the case where $G_s=p4m$, which will be relevant to our later discussion.

The $p4m$ group describes the symmetry of square and checkerboard lattices. The generators, a translation unit cell and IWP of $p4m$ are shown in Fig. \ref{fig:p4m}. The translation vectors of $T_1$ and $T_2$ have the same length and are perpendicular. There is also a 4-fold rotational symmetry, denoted by $C_4$. Finally, there is a mirror symmetry $M$, whose mirror axis passes through the $C_4$-center and is parallel to the translation vector of $T_2$. There are 3 types of IWP. The type-$a$ IWP is the 2-fold rotation centers of $C_4^2$, the type-$b$ IWP is the 2-fold rotation centers of $T_1T_2C_4^2$, and the type-$c$ IWP includes the 2-fold rotation centers of both $T_1C_4^2$ and $T_2C_4^2$. Note that the type-$a$ and type-$b$ are actually also 4-fold rotation centers, and all three IWP lie on some mirror axes.

Below, we enumerate some distributions of DOF that correspond to different elements in $A_{\rm LH}$ in this case:
\begin{enumerate}
    \item All DOF have trivial PR: trivial element in $A_{\rm LH}$.
    
    \item DOF with nontrivial PR at one of the three types of IWP: three different elements in $A_{\rm LH}$.
    
    \item DOF with nontrivial PR at multiple of the IWP: product of elements in the previous case.
\end{enumerate}

This analysis implies that the LSM constraints on a lattice with $p4m$ symmetry are classified by $A_{\rm LH}=\mb{Z}_2^3$, and the three generators can be taken to correspond to distributions of DOF on the three types of IWP. Note that both the type-$a$ and type-$b$ IWP form a square lattice, and type-$c$ IWP form a checkerboard lattice. Again, it is easy to see that knowing which of the basic no-go theorems are triggered can uniquely determine the lattice homotopy class.

Before finishing the review, we note that it has also been realized that LSM constraints are intimately related to anomalies and higher dimensional SPTs \cite{Cheng2015, Jian2017, Cho2017, Metlitski2017, Huang2017}. In the present context, our $(2+1)$-d system with DOF carrying PR can be viewed as a boundary of a $(3+1)$-d system made of stacked $(1+1)$-d SPTs protected by $G_{int}$, which are also classified by $H^2(G_{int}, U(1)_\rho)=\mb{Z}_2^k$. The spatial extension of these $(1+1)$-d SPTs is along the extra dimension. The boundaries of these SPTs carry the PR, whose types and locations precisely match with the DOF of the original $(2+1)$-d system, which have been moved to the IWP using lattice homotopy. Furthermore, the wallpaper symmetry $G_s$ can be naturally extended into a symmetry of the $(3+1)$-d system. Then the $(3+1)$-d system is an SPT protected by $G_s\times G_{int}$, and a sym-SRE boundary of such a nontrivial SPT is forbidden due to the nontrivial quantum anomaly, which implies the LSM constraints. Moreover, different SPTs have different anomalies on the boundary, so their corresponding LSM constraints must be different, such that ground states emergible in systems with different LSM constraints must have distinction in their emergent order or symmetry-protected distinction. For these reasons, in the following we will view an LSM constraint and the $(3+1)$-d $G_s\times G_{int}$ SPT corresponding to this LSM constraint on equal footing.

\subsection{Topological characterization of the LSM constraints} \label{subsec: topological invariants}

The above picture of lattice homotopy and higher dimensional SPTs allows us to derive a topological characterization of the LSM constraints. In particular, we will identify the topological partition function (TPF) of the $(3+1)$-d SPT corresponding to each nontrivial LSM constraint for a given $G_s$ and $G_{int}$.

To do it, we use the fact that the SPT of interest can be constructed by stacking the nontrivial $(1+1)$-d $G_{int}$ SPT at various IWP. Suppose, in the language of Dijkgraff-Witten theories \cite{Dijkgraaf1990, Chen2013, Wang2014a}, the TPF of this $(1+1)$-d SPT is encoded in a nontrivial cocycle in $H^2(G_{int}, U(1)_\rho)\cong \mb{Z}_2^k$, which can be represented by $\exp\left(i\pi \eta(a_1, a_2)\right)$, where $a_{1,2}\in G_{int}$ and $\eta$ takes values in $\{0, 1\}$ (taking $\eta\in\{0, 1\}$ is valid since such SPTs are $\mb{Z}_2^k$-classified). To write down the TPF of the relevant $(3+1)$-d $G_s\times G_{int}$ SPT, we view $G_s$ on equal footing with $G_{int}$, keeping in mind that any orientation-reversal element in $G_s$ should also complex conjugate the $U(1)$ coefficient, in accordance with the crystalline equivalence principle \cite{Thorngren2016}. Then the TPF can be encoded in a cocycle $\Omega(g_1, g_2, g_3, g_4)$ in $H^4(G_s\times G_{int}, U(1)_\rho)$, where $g_{1,2,3,4}\in G_s\times G_{int}$. The picture based on lattice homotopy and stacks of $(1+1)$-d $G_{int}$ SPT strongly suggests that $\Omega(g_1, g_2, g_3, g_4)$ takes the form
\beq \label{eq: master cocycle}
\Omega(g_1, g_2, g_3, g_4)=e^{i\pi\lambda(l_1, l_2)\eta(a_3, a_4)}
\eeq
where $g_i\in G_s\times G_{int}$ is written as $g_i=l_i\otimes a_i$, with $l_i\in G_s$ and $a_i\in G_{int}$, and $\lambda$ also takes values in $\{0, 1\}$. Physically, $\lambda$ encodes the information of which IWP host the $(1+1)$-d $G_{int}$ SPT. The lattice homotopy picture further suggests that $\lambda$ is completely determined by $G_s$ and the lattice homotopy class corresponding to the particular LSM constraint, and should be the same for all $G_{int}$ with $\z_2^k$-classified PR and all PR types of the system. Such a cocycle implies that the TPF, in terms of lattice gauge theory on a triangulated manifold, takes the form
\beq \label{eq: master partition function}
\mc{Z}=e^{i\pi\int_{\mc{M}_4}\lambda[A_s]\cup \eta[A_{int}]}
\eeq
where $\mc{M}_4$ is the 4 dimensional spacetime manifold of the SPT, $A_s$ and $A_{int}$ are the (1-form) gauge fields resulting from gauging $G_s$ and $G_{int}$, respectively, and $\exp(i\pi\int \eta[A_{int}])$ gives the TPF of the $(1+1)$-d $G_{int}$ SPT. Note that although the TPF is constructed from a cup product of $\lambda$ and $\eta$, generically $\lambda$ (or $\eta$) itself cannot be written as a cup product of $A_s$ (or $A_{int}$). 

In Appendix \ref{app: LSM partition function}, we show that the above expectation is indeed correct. Furthermore, $\lambda(l_1, l_2)$ can be viewed as a representative cochain in $H^2(G_s, \mb{Z}_2)$. Assuming that the $(1+1)$-d $G_{int}$ SPT is already understood (\ie the $\eta$ corresponding to the PR type of the system is known), the task to identify the TPF for the $(3+1)$-d $G_s\times G_{int}$ SPT corresponding to the LSM constraints becomes identifying $\lambda(l_1, l_2)$ for a given $G_s$ and lattice homotopy class.

Before proceeding, let us pause to clarify what it means to identify $\lambda(l_1, l_2)$. After all, as reviewed in Appendix \ref{subapp: group cohomology}, $\lambda(l_1, l_2)$ changes under coboundary transformations, so it is not an invariant characterization of the LSM constraints. However, inequivalent $\lambda$'s can be diagnosed by quantities related to it that are invariant under coboundary transformations. So identifying $\lambda(l_1, l_2)$ really means identifying these {\it topological invariants}. To relate to some known results of such topological invariants, we define $\omega(l_1, l_2)\equiv e^{i\pi\lambda(l_1, l_2)}$, which encodes the same information as $\lambda(l_1, l_2)$. Then a topological invariant takes the form of $\alpha[\omega]$, a functional of $\omega$.

Now we proceed to derive these topological invariants. Because $\lambda(l_1, l_2)$ or $\omega(l_1, l_2)$ is the same for all $G_{int}$, it suffices to derive it in a particularly simple and illuminating case, \ie $G_{int}=SO(3)$. According to Sec. \ref{subsec: review}, in this case the $(3+1)$-d $G_s\times G_{int}$ SPTs related to the LSM constraints are fully characterized by the spatial distribution of Haldane chains, \ie $(1+1)$-d SPT protected by the $SO(3)$ symmetry. Therefore, to characterize the LSM constraints, all we have to do is to identify topological invariants for $H^2(G_s, \mb{Z}_2)$ that can tell us which IWP host Haldane chains. To this end, we utilize the fact that, for a given spatial distribution of Haldane chains, which IWP host Haldane chains is fully encoded in which of the 3 basic no-go theorems are triggered. So if we can characterize the 3 basic no-go theorems using some topological invariants, we can further get the topological invariants corresponding to the LSM constraints.

To obtain the topological invariants corresponding to the 3 basic no-go theorems, it is useful to consider coupling the system to a probe gauge field of the $SO(3)$ symmetry and examine the monopoles of this $SO(3)$ gauge field, which is a method proven to be extremely powerful \cite{Wang2013, Wang2014, Wang2015, Zou2017, Zou2017a, Hsin2019, Ning2019}. Because the wave function of the system acquires a $-1$ topological phase factor when an $SO(3)$ monopole circles around a Haldane chain\footnote{Consider moving a Haldane chain around an $SO(3)$ monopole. The topological phase factor generated in this process is given by the topological partition function of the Haldane chain, calculated on the manifold defined by the spacetime trajectory it moves along, with a background $SO(3)$ gauge bundle exerted by the $SO(3)$ monopole. It is known that the topological partition function of a Haldane chain is $e^{i\pi\int_{\mc{M}}w_2^{SO(3)}}$, where $w_2^{SO(3)}$ is the second Stiefel-Whitney class of the $SO(3)$ gauge bundle. Furthermore, $\int_{\mc{M}}w_2^{SO(3)}=1$ around an $SO(3)$ monopole. Therefore, there is a $-1$ phase factor generated in this process, which also implies that moving an $SO(3)$ monopole around a Haldane chain results in a $-1$ topological phase factor.}, we will see below that if any of the 3 basic no-go theorems is triggered, the $G_s$ symmetry will fractionalize on the $SO(3)$ monopole in a specific way, \ie the $SO(3)$ monopole will carry a specific projective representation of $G_s$. The symmetry fractionalization pattern of $G_s$ on the $SO(3)$ monopole will thus completely encode the LSM constraint. Since the fusion rule of the $SO(3)$ monopole is determined by $\pi_1(SO(3))=\mb{Z}_2$ \cite{Wu1975}, the symmetry fractionalization patterns of $G_s$ on the $SO(3)$ monopole are classified by $H^2(G_s, \mb{Z}_2)$ \cite{Barkeshli2014, Tarantino2015, Qi2015, Qi2016}. So the LSM constraints can be characterized by elements in $H^2(G_s, \mb{Z}_2)$, consistent with the previous general discussion. This also implies that when $G_{int}=SO(3)$, for a given $G_s$ and lattice homotopy class, the $\lambda(l_1, l_2)$ in Eq. \eqref{eq: master cocycle} should be precisely the element in $H^2(G_s, \mb{Z}_2)$ that describes the symmetry fractionalization pattern of $G_s$ on the $SO(3)$ monopole in the corresponding SPT. However, one should not expect that all symmetry fractionalization patterns captured by $H^2(G_s, \z_2)$ are related to LSM constraints. To see it, consider breaking the $SO(3)$ symmetry to $U(1)$. Then the original LSM-related $G_s\times SO(3)$ SPT will become a trivial $G_s\times U(1)$ SPT, since the Haldane chain is trivialized upon this symmetry breaking. Therefore, the $U(1)$ monopole, which is the descendent of the $SO(3)$ monopole after symmetry breaking, should carry no nontrivial symmetry fractionalization pattern. It implies that certain nontrivial symmetry fractionalization pattern on the $SO(3)$ monopoles, or certain elements in $H^2(G_s, \z_2)$, may be unrelated to LSM constraints. We will see this explicitly below.

We start with the first no-go theorem, and focus on the case where only translation symmetry is important, and defer a similar discussion where the glide reflection is also important to Appendix \ref{app: pg}. Denote the two translation generators by $T_1$ and $T_2$, and apply the operation $T_2^{-1}T_1^{-1}T_2T_1$ to an $SO(3)$ monopole, which moves it around a translation unit cell. If each translation unit cell constains an odd (even) number of Haldane chains, this process results in a $-1$ (1) phase factor, which precisely characterizes how the translation symmetry fractionalizes on the $SO(3)$ monopole. By slightly abusing the notation, we write the subgroup of $G_s$ generated by $T_1$ and $T_2$ as $T_1\times T_2$. The fractionalization patterns of the $T_1\times T_2$ symmetry should be classified by $H^2(T_1\times T_2, \mb{Z}_2)=\mb{Z}_2$, so the aforementioned phase factor must be given by the unique nontrivial topological invariant in $H^2(T_1\times T_2, \mb{Z}_2)$, \ie $\alpha_1[\omega]=\frac{\omega(T_1, T_2)}{\omega(T_2, T_1)}$. Denote two elements in this subgroup by $l_1=T_1^{x_1}T_2^{y_1}$ and $l_2=T_1^{x_2}T_2^{y_2}$, with $x_{1,2}, y_{1,2}\in\mb{Z}$, a representative cochain that triggers this topological invariant is $\omega(l_1, l_2)=(-1)^{y_1x_2}$.

Next, consider the second no-go theorem. Denote the generator of the relevant mirror symmetry by $M$, and suppose $T$ generates a translation symmetry on the mirror plane. Note that this implies $TM=MT$. Apply the operation $MT^{-1}MT$ to an $SO(3)$ monopole, which moves it along a trajectory that encloses a translation unit along the mirror plane. Suppose there is an odd (even) number of Haldane chains in this translation unit, this process results in a $-1$ (1) phase factor, which precisely characterizes how the symmetry group generated by $M$ and $T$ fractionalizes on the $SO(3)$ monopole. Write the subgroup of $G_s$ generated by $M$ and $T$ as $M\times T$, the fractionalization patterns of the $M\times T$ symmetry are classified by $H^2(M\times T, \mb{Z}_2)=\mb{Z}_2^2$. So there are two nontrivial topological invariants in $H^2(M\times T, \mb{Z}_2)$, and they can be written as $\alpha_2[\omega]=\frac{\omega(T, M)}{\omega(M, T)}$ and $\alpha_{\rm non-LSM}=\frac{\omega(M, M)}{\omega(1, 1)}$, where in the denominator $1$ stands for the trivial group element in $M\times T$. Note that $\alpha_{\rm non-LSM}=-1$ would imply when the $SO(3)$ symmetry is broken to $U(1)$, the resulting $G_s\times U(1)$ state is a nontrivial SPT, since this represents a nontrivial symmetry fractionalization pattern of a $U(1)$ monopole \cite{Zou2017, Zou2017a}. According to the previous general discussion, $\alpha_{\rm non-LSM}$ should be unrelated to LSM constraints of interest. 

We can also directly see that $\alpha_{\rm non-LSM}$ is unrelated to the LSM constraints without considering breaking the $SO(3)$ symmetry. Denote two elements in $M\times T$ by $l_1=T^{x_1}M^{m_1}$ and $l_2=T^{x_2}M^{m_2}$, with $x_{1,2}\in\mb{Z}$ and $m_{1,2}\in\{0, 1\}$, representative cochains that trigger these two topological invariants are $\omega(l_1, l_2)=(-1)^{m_1x_2}$ and $\omega(l_1, l_2)=(-1)^{m_1m_2}$, respectively. Suppose $\lambda$ in Eq. \eqref{eq: master cocycle} constains a piece $\lambda(l_1, l_2)=m_1m_2$, such that $\alpha_{\rm non-LSM}=-1$, from Eq. \eqref{eq: master partition function}, we see the TPF of the $(3+1)$-d SPT contains a part given by $\exp(i\pi\int (w_1^{TM})^2w_2^{SO(3)})$, where $w_1^{TM}$ is the first Stiefel-Whitney class of the tangent bundle of the spacetime manifold, and $w_2^{SO(3)}$ is the second Stiefel-Whitney class of the $SO(3)$ gauge bundle. In writing this down, we have used that $M$ is an orientation reversal symmetry and that the TPF of a Haldane chain is $\exp(i\pi\int w_2^{SO(3)})$. This means that when the symmetry is broken down to $M\times SO(3)$, the system is still a nontrivial SPT.\footnote{In this SPT, the symmetry $M$ fractionalizes on the $SO(3)$ monopole, \ie acting $M$ twice on an $SO(3)$ monopole yields a $-1$ phase factor. This symmetry fractionalization pattern is captured by $H^2(M, \mb{Z}_2)=\z_2$, whose unique topological invariant is $\alpha_{\rm non-LSM}$. So $\alpha_{\rm non-LSM}$ should be identified as this phase factor. In Appendix \ref{app: non-LSM}, we further show that such an SPT can be constructed by putting on its $M$ mirror plane a $(2+1)$-d $\z_2\times SO(3)$ SPT, whose $\z_2$ domain walls are decorated with Haldane chains.} However, all SPTs corresponding to LSM constraints become trivial when the lattice symmetry contains only a mirror symmetry (as can be seen from lattice homotopy, or simply from the lack of basic no-go theorem that only requires a mirror symmetry as the lattice symmetry), and hence a contradiction. This again means $\alpha_{\rm non-LSM}=1$ for SPTs corresponding to LSM constraints (see Appendix \ref{app: non-LSM} for the physics of the SPTs that trigger $\alpha_{\rm non-LSM}$). Therefore, the phase factor resulted from acting $MT^{-1}MT$ to an $SO(3)$ monopole must be given by $\alpha_2$.

Finally, consider the third no-go theorem. Denote the generator of the relevant 2-fold rotational symmetry by $C_2$, and apply $C_2$ to an $SO(3)$ monopole twice, which moves it around a $C_2$ rotation axis. Suppose there is an odd (even) number of Haldane chains in this $C_2$ rotation axis, this process results in a $-1$ (1) phase factor, which precisely characterizes how the $C_2$ rotational symmetry fractionalizes on the $SO(3)$ monopole. Write the subgroup of $G_s$ generated by $C_2$ also as $C_2$. The fractionalization patterns of the $C_2$ symmetry are classified by $H^2(C_2, \mb{Z}_2)=\mb{Z}_2$, so the aforementioned phase factor must be given by the unique topological invariant in $H^2(C_2, \mb{Z}_2)$, \ie $\alpha_3[\omega]=\frac{\omega(C_2, C_2)}{\omega(1, 1)}$. Denote two elements in this subgroup by $l_1=C_2^{c_1}$ and $l_2=C_2^{c_2}$, with $c_{1,2}\in\{0, 1\}$, a representative cochain that triggers this topological invariant is $\omega(l_1, l_2)=(-1)^{c_1c_2}$.

In summary, we have found 4 basic types of symmetry fractionalization patterns of $G_s$, characterized by the above 4 types of topological invariants, $\alpha_{1,2,3}$ and $\alpha_{\rm non-LSM}$. The first three are related to the 3 basic no-go theorems, thus to the LSM constraints, while the last is a non-LSM symmetry fractionalization pattern. As mentioned before, if $G_{int}=SO(3)$ is broken to $U(1)$, $\alpha_{\rm non-LSM}$ detects a nontrivial symmetry fractionalization pattern of a $U(1)$ monopole,
captured by a nontrivial element in $H^2(G_s, U_\rho(1))$ \footnote{More precisely, $H^2_{\rm Borel}(G_s, U(1)_\rho)$. Especially, $H^2_{\rm Borel}(p1, U(1)_\rho)\cong H^3(p1, \z_\rho)=0$.}. One can also see that $\alpha_{1,2,3}=-1$ does not imply that the descendent $G_s\times U(1)$ SPT is nontrivial, since they correspond to trivial elements in $H^2(G_s, U_\rho(1))$. These 4 basic fractionalization patterns are clearly independent of each other, as they correspond to completely different $(3+1)$-d SPTs. Furthermore, for all 17 wallpaper groups $G_s$, these 4 types of fractionalization patterns give a complete set of topological invariants that can distinguish all elements of $H^2(G_s, \mb{Z}_2)$, as explicitly checked in Appendix \ref{app: top invariants}. These actually mean that
\beq
A_{\rm LH}=\ker[\tilde i: H^2(G_s, \z_2)\rightarrow H^2(G_s, U(1)_\rho)]
\eeq
where $\tilde i$ is the map defined in Eq. \eqref{eq: inclusion map}. 

With this in mind, to further derive the topological invariants corresponding to an LSM constraint, we just need to write down the complete set of independent topological invariants of $H^2(G_s, \mb{Z}_2)$, and bridge the combinations of these topological invariants with distributions of DOF. Then each combination is a topological invariant for an LSM constraint, which determines $\lambda$ in Eq. \eqref{eq: master cocycle}. Combined with $\eta$ corresponding to the PR type of the system, Eq. \eqref{eq: master cocycle} or \eqref{eq: master partition function} gives the TPF of the $(3+1)$-d $G_s\times G_{int}$ SPT corresponding to this LSM constraint. An advantage of this approach is its intuitive nature, \ie everything can be done by simply inspecting the IWP. Below we perform this analysis in detail for the cases with $G_s=p6m$ and $G_s=p4m$, which will be relevant to the discussion of symmetry-enriched criticality later in the paper. In Appendix \ref{app: top invariants}, we present all topological invariants that characterize $H^2(G_s, \mb{Z}_2)$, with $G_s$ being any of the 17 wallpaper groups.

Before moving on, we stress again that the topological characterization of the LSM constraints obtained here applies to all $G_{int}$ with $\mb{Z}_2^k$-classified PR and all PR types of the system, although it is derived in a special case with $G_{int}=SO(3)$.

\subsubsection{$G_s=p6m$} \label{subsubsec: p6m}

All fractionalization patterns of $p6m$ are classified by $H^2(p6m, \mb{Z}_2)=\mb{Z}_2^4$. As discussed in Sec. \ref{subsec: review}, $p6m$ has two IWP related to LSM constraints, type-$a$ and type-$c$. The former is the 2-fold rotation center of $C_6^3$, and the latter includes the 2-fold rotation centers of $T_1C_6^3$, $T_2C_6^3$ and $T_1T_2C_6^3$. In addition, $p6m$ also has two independent mirror symmetries, $M$ and $C_6^3M$. Using the 4 types of basic topological invariants discussed above, we can immediately write down the complete set of independent topological invariants which can distinguish all elements in $H^2(p6m, \mb{Z}_2)$:
\beq
\begin{split}
    &\alpha^{p6m}_1[\omega]=\frac{\omega(C_6^3, C_6^3)}{\omega(1, 1)}\\
    &\alpha^{p6m}_2[\omega]=\frac{\omega(T_1C_6^3, T_1C_6^3)}{\omega(1, 1)}\\
    &\alpha^{p6m}_3[\omega]=\frac{\omega(M ,M)}{\omega(1, 1)}\\
    &\alpha^{p6m}_4[\omega]=\frac{\omega(C_6^3M, C_6^3M)}{\omega(1, 1)}
\end{split}
\eeq
Physically, $\alpha^{p6m}_1$ and $\alpha^{p6m}_2$ measure the PR at the type-$a$ and type-$c$ IWP, respectively, while $\alpha^{p6m}_3$ and $\alpha^{p6m}_4$ determine whether the $(3+1)$-d $G_s\times G_{int}$ SPT contains a non-LSM component. Mathematically, the correctness, completeness and independence of these topological invariants can be checked using the representative cochains in Appendix \ref{app: cohomology ring}.

Therefore, when $\alpha_3^{p6m}=\alpha_4^{p6m}=1$, the combinations $(\alpha_1^{p6m}, \alpha_2^{p6m})$ are the sought-for topological invariants that characterize the LSM constraints in a lattice with $G_s=p6m$. In particular, $(\alpha_1^{p6m}, \alpha_2^{p6m})=(-1, 1)$ and $(\alpha_1^{p6m}, \alpha_2^{p6m})=(1, -1)$ imply that there are DOF with nontrivial PR at the type-$a$ and type-$c$ IWP, respectively, which are the generators of $A_{\rm LH}$, as discussed in Sec. \ref{subsec: review}. When at least one of $\alpha_3^{p6m}$ and $\alpha_4^{p6m}$ is $-1$, this combination does not correspond to any LSM constraint.

We remark that the choice of topological invariants is not unique.
For example, the expression of $\alpha^{p6m}_2[\omega]$ can be replaced by either $\frac{\omega(T_2C_6^3, T_2C_6^3)}{\omega(1, 1)}$ or $\frac{\omega(T_1T_2C_6^3, T_1T_2C_6^3)}{\omega(1, 1)}$, because $\frac{\omega(T_1C_6^3, T_1C_6^3)}{\omega(1, 1)}=\frac{\omega(T_2C_6^3, T_2C_6^3)}{\omega(1, 1)}=\frac{\omega(T_1T_2C_6^3, T_1T_2C_6^3)}{\omega(1, 1)}$ for a cocycle $\omega(g_1, g_2)$ in $H^2(p6m, \mb{Z}_2)$, as can be checked by using the representative cochains in Appendix \ref{app: cohomology ring}. Physically, this just means that the 2-fold rotation centers of $T_1C_6^3$, $T_2C_6^3$ and $T_1T_2C_6^3$ are related by symmetry, so the PR at these three rotation centers should be the same. We can also replace the expression of $\alpha_2^{p6m}[\omega]$ by $\frac{\omega(T_1, T_2)}{\omega(T_2, T_1)}$, which tells us whether there is a net nontrivial PR in a translation unit cell and equals $\alpha_1^{p6m}[\omega]\cdot \alpha_2^{p6m}[\omega]$. This information combined with $\alpha_1^{p6m}[\omega]$ also completely specifies which IWP host Haldane chains.

It is useful to notice some interesting relations between LSM constraints with $G_s=p6m$ and those with $G_s$ being a subgroup of $p6m$. In particular, consider the case where $G_s=cmm$, which is a subgroup of $p6m$ generated by $T_1$, $T_2$, $C_2\equiv C_6^3$ and $M$. That is, the 3-fold rotational symmetry generated by $C_6^2$ is absent. This wallpaper group has 3 IWP, where the first is the 2-fold rotation center of $C_2$, the second is the 2-fold rotation center of $T_1T_2C_2$, and the last includes the 2-fold rotation centers of both $T_1C_2$ and $T_2C_2$. Furthermore, there are two independent mirror symmetries, generated by $M$ and $C_2M$. Similar analysis as before indicates that, for $cmm$, the 3 LSM fractionalization patterns and 2 non-LSM fractionalization patterns are detected by topological invariants
\beq
\begin{split}
    &\alpha^{cmm}_1[\omega]=\frac{\omega(C_2, C_2)}{\omega(1, 1)}\\
    &\alpha^{cmm}_2[\omega]=\frac{\omega(T_1T_2C_2, T_1T_2C_2)}{\omega(1, 1)}\\
    &\alpha^{cmm}_3[\omega]=\frac{\omega(T_1C_2, T_1C_2)}{\omega(1, 1)}\\
    &\alpha^{cmm}_4[\omega]=\frac{\omega(M, M)}{\omega(1, 1)}\\
    &\alpha^{cmm}_5[\omega]=\frac{\omega(C_2M, C_2M)}{\omega(1, 1)}
\end{split}
\eeq
So when $\alpha_4^{cmm}=\alpha_5^{cmm}=1$, the combinations $(\alpha_1^{cmm}, \alpha_2^{cmm}, \alpha_3^{cmm})$ are the topological invariants that characterize the LSM constraints in a lattice with $G_s=cmm$.

It is easy to see that the first IWP of $cmm$ is just the descendent of the type-$a$ IWP of $p6m$, and the second and third IWP of $cmm$ are descendent of the type-$c$ IWP of $p6m$. Moreover, the mirror symmetries of $cmm$ are also the descendent mirror symmetries of $p6m$. This means that the fractionalization pattern of $p6m$ can be completely specified by that of its $cmm$ subgroup. More precisely, for a $cmm$ subgroup of $p6m$, we have
\beq
\begin{split}
&\alpha_1^{p6m}=\alpha_1^{cmm},\quad
\alpha_2^{p6m}=\alpha_2^{cmm}=\alpha_3^{cmm},\\
&\alpha_3^{p6m}=\alpha_4^{cmm},\quad
\alpha_4^{p6m}=\alpha_5^{cmm}
\end{split}
\eeq
These relations allow us to focus on the $cmm$ subgroup of a $p6m$ group when we consider its fractionalization classes, which sometimes simplifies the analysis.

\subsubsection{$G_s=p4m$} \label{subsubsec: p4m}

Using similar analysis as before, it is easy to see that the LSM constraints for the case with $G_s=p4m$ are classified by $\mb{Z}_2^3$, generated by distributions of DOF with nontrivial PR on the 3 IWP. The 3 root LSM constraints can be detected by topological invariants
\beq \label{eq: p4m LSM SFP}
\begin{split}
    &\alpha^{p4m}_1=\frac{\omega(C_4^2, C_4^2)}{\omega(1, 1)}\\
    &\alpha^{p4m}_2=\frac{\omega(T_1T_2C_4^2, T_1T_2C_4^2)}{\omega(1, 1)}\\
    &\alpha^{p4m}_3=\frac{\omega(T_1C_4^2, T_1C_4^2)}{\omega(1, 1)}
\end{split}
\eeq
Again, the $p4m$ symmetry requires that $\frac{\omega(T_1C_4^2, T_1C_4^2)}{\omega(1, 1)}=\frac{\omega(T_2C_4^2, T_2C_4^2)}{\omega(1, 1)}$. There are also non-LSM fractionalization patterns classified by $\mb{Z}_2^3$, with the following topological invariants for the corresponding generators:
\beq \label{eq: p4m non-LSM SFP}
\begin{split}
    &\alpha^{p4m}_4[\omega]=\frac{\omega(M, M)}{\omega(1, 1)}\\
    &\alpha^{p4m}_5[\omega]=\frac{\omega(T_1M, T_1M)}{\omega(1, 1)}\\
    &\alpha^{p4m}_6[\omega]=\frac{\omega(C_4M, C_4M)}{\omega(1, 1)}
\end{split}
\eeq
In the case with $G_{int}=SO(3)$, these three topological invariants imply that acting $M$, $T_1M$ and $C_4M$ on an $SO(3)$ monopole twice yields a $-1$ phase factor, respectively.

Therefore, when $\alpha_4^{p4m}=\alpha_5^{p4m}=\alpha_6^{p4m}=1$, the combinations $(\alpha_1^{p4m}, \alpha_2^{p4m}, \alpha_3^{p4m})$ are the topological invariants that characterize the LSM constraints in a lattice with $G_s=p4m$.

Again, it is useful to note the relation between the LSM constraints for $G_s=p4m$ and those for its subgroups. Let us consider the $pmm$ subgroup of $p4m$, generated by $T_1$, $T_2$, $M$ and $C_2\equiv C_4^2$. That is, the 4-fold rotation is absent while the 2-fold rotation is retained in $pmm$. By inspecting the IWP of $pmm$, we can immediately write down the topological invariants corresponding to the LSM constraints
\beq
\begin{split}
    &\alpha^{pmm}_1[\omega]=\frac{\omega(C_2, C_2)}{\omega(1, 1)}\\
    &\alpha^{pmm}_2[\omega]=\frac{\omega(T_1T_2C_2, T_1T_2C_2)}{\omega(1, 1)}\\
    &\alpha^{pmm}_3[\omega]=\frac{\omega(T_1C_2, T_1C_2)}{\omega(1, 1)}\\
    &\alpha^{pmm}_4[\omega]=\frac{\omega(T_2C_2, T_2C_2)}{\omega(1, 1)}\\
\end{split}
\eeq
and the topological invariants for the non-LSM fractionalization patterns
\beq
\begin{split}
    &\alpha^{pmm}_5[\omega]=\frac{\omega(M, M)}{\omega(1, 1)}\\
    &\alpha^{pmm}_6[\omega]=\frac{\omega(C_2M, C_2M)}{\omega(1, 1)}\\
    &\alpha^{pmm}_7[\omega]=\frac{\omega(T_1M, T_1M)}{\omega(1, 1)}\\
    &\alpha^{pmm}_8[\omega]=\frac{\omega(T_2C_2M, T_2C_2M)}{\omega(1, 1)}\\
\end{split}
\eeq
So when all $\alpha_5^{pmm}=\alpha_6^{pmm}=\alpha_7^{pmm}=\alpha_8^{pmm}=1$, the combinations $(\alpha_1^{pmm}, \alpha_2^{pmm}, \alpha_3^{pmm}, \alpha_4^{pmm})$ are the topological invariants that characterize the LSM constraints in a lattice with $G_s=pmm$.

Furthermore, by examining the relation between IWP of $p4m$ and the IWP of its $pmm$ subgroup, we get
\beq
\begin{split}
&\alpha_1^{p4m}=\alpha_1^{pmm},
\quad
\alpha_2^{p4m}=\alpha_2^{pmm},\\
&\alpha_3^{p4m}=\alpha_3^{pmm}=\alpha_4^{pmm},\\
&\alpha_4^{p4m}=\alpha_5^{pmm}=\alpha_6^{pmm},\\
&\alpha_5^{p4m}=\alpha_7^{pmm}=\alpha_8^{pmm}
\end{split}
\eeq
So 5 of the 6 topological invariants for $p4m$ can be determined by examining its $pmm$ subgroup. To further determine the last topological invariant, $\alpha_6^{p4m}$, one can simply examine the subgroup generated by $C_4M$. This observation will also simplify some analysis.

\subsubsection{Topological characterization of the LSM constraints in $(1+1)$-d} \label{subsubsec: 1D LSM}

In the above we have derived the topological characterization of LSM constraints in $(2+1)$-d systems. Similar derivation can be carried out for $(1+1)$-d systems with $G_s\times G_{int}$ symmetry, where $G_s$ is one of the two line groups, and $G_{int}$ is an internal symmetry group whose PR are $\mb{Z}_2^k$-classified. In this case, the lattice homotopy picture still applies in an analogous way, and there are $(2+1)$-d $G_s\times G_{int}$ SPTs corresponding to each LSM constraint. Here we present the cocycle and TPF of these SPTs, and leave the details of derivation to Appendix \ref{app: 1D LSM}.

When $G_s=p1$, the line group that contains only translation generated by $T$, the classification of LSM constraints is $\mb{Z}_2$, detected by the total PR inside each translation unit cell. The cocycle describing the $(2+1)$-d $p1\times G_{int}$ SPT related to the nontrivial LSM constraint is
\beq \label{eq: 1D p1 LSM main}
\Omega(g_1, g_2, g_3)=e^{i\pi x_1\eta(a_2, a_3)}
\eeq
where $g_i=T^{x_i}\otimes a_i$, with $x_i\in\mb{Z}$ and $a_i\in G_{int}$, for $i=1,2,3$. The corresponding TPF can be written as
\beq
\mc{Z}=e^{i\pi\int_{\mc{M}_3}x\cup\eta[A_{int}]}
\eeq
where $\mc{M}_3$ is the $(2+1)$-d spacetime manifold the SPT lives in, and $x$ is the gauge field corresponding to the translation symmetry.

When $G_s=p1m$, the line group that contains both translation $T$ and mirror $M$, the classification of LSM constraints is $\mb{Z}_2^2$, detected by the total PR at the mirror centers of $M$ and $TM$. For the case where only the mirror center of $M$ has a net nontrivial PR, the corresponding cocycle is
\beq \label{eq: 1D p1m LSM 1 main}
\Omega(g_1, g_2, g_3)=e^{i\pi (x_1+m_1)\eta(a_2, a_3)}
\eeq
where $g_i=T^{x_i}M^{m_i}\otimes a_i$, with $x_i\in\mb{Z}$, $m_i\in\mb\{0, 1\}$ and $a_i\in G_{int}$, for $i=1,2,3$. The corresponding TPF can be written as
\beq
\mc{Z}=e^{i\pi\int_{\mc{M}_3}(x+m)\cup\eta[A_{int}]}
\eeq
where $x$ is still the gauge field of translation, and $m$ is the gauge field of mirror symmetry. For the case where only the mirror center of $TM$ has a net nontrivial PR, using similar notations, the corresponding cocycle and TPF are respectively
\beq \label{eq: 1D p1m LSM 2 main}
\Omega(g_1, g_2, g_3)=e^{i\pi x_1\eta(a_2, a_3)}
\eeq
and
\beq
\mc{Z}=e^{i\pi\int_{\mc{M}_3}x\cup\eta[A_{int}]}
\eeq

\section{Applications to symmetry-enriched quantum criticality} \label{sec: applications}

The above topological characterization of the LSM constraints is not only conceptually important, but also of practical relevance. A crucial question in condensed matter physics is what we call the question of {\it emergibility}: given an IR effective theory, can it emerge at low energies in a lattice system described by a local Hamiltonian? This question is generically rather challenging, and we will utilize the {\it hypothesis of emergibility}: given a $(d+1)$-dimensional IR effective theory with symmetry $G_{\rm IR}$, a necessary and sufficient condition for it to emerge from a lattice system with symmetry $G_{\rm UV}$ is that there is a symmetry embedding pattern (SEP), \ie a homomorphism $\varphi$
\beq\label{eq: homomorphism}
\varphi: G_{\rm UV}\rightarrow G_{\rm IR},
\eeq
such that the anomaly of this IR effective theory matches with the anomaly of the lattice system coming from the LSM-like constraint, in the sense that
\beq\label{eq: emergibility}
\Omega_\text{UV} = \varphi^*\left(\Omega_\text{IR}\right)
\eeq
where $\Omega_{\rm UV}$ describes the LSM-like anomaly of the lattice system, $\Omega_{\rm IR}$ is the anomaly of the IR effective theory, and $\varphi^*$ is the pullback induced by $\varphi$ (see Appendix \ref{subapp: map} for a review). In fact, the necessity of this condition has been established (\ie 't Hooft anomaly-matching condition), and only the sufficiency of it is hypothetical. Although this hypothesis has not been proved so far, it is supported by many nontrivial examples. In the following we will {\it assume} the correctness of the hypothesis of emergibility.

The hypothesis of emergibility provides an intrinsic characterization of the emergibility of an IR effective theory, without relying on any of its specific constructions. It is especially useful when there is no known lattice construction of this IR effective theory, but its anomaly is known. A class of such IR theories is the {\it non-Lagrangian} Stiefel liquids (SLs) proposed in Ref. \cite{Zou2021}. A theory is Lagrangian if it can be described by a {\it weakly-coupled} Lagrangian at high energies, which at low energies may flow to a strongly-coupled fixed point under RG. A non-Lagrangian theory is one that is so strongly interacting, such that it cannot be described by any weakly-coupled Lagrangian at any energy scale. The SLs are proposed to be an infinite family of quantum critical states, where its simplest members are the celebrated deconfined quantum critical point (DQCP) and $U(1)$ Dirac spin liquid (DSL), while other members are conjectured to be non-Lagrangian. Due to the intrinsic absence of a weakly-coupled description, it is difficult to construct these non-Lagrangian states on a lattice system by usual means. However, the anomalies of these SLs are derived in Ref. \cite{Zou2021}. With $\Omega_{\rm UV}$ derived in Sec. \ref{sec: topological characterization} (given by Eqs. \eqref{eq: master cocycle} or \eqref{eq: master partition function}), we can check the emergibility of these states in various lattice spin systems, by checking the existence of SEP that can match the anomalies. Based on this approach, some interesting realizations of the non-Lagrangian SLs on triangular and kagome lattices are proposed \cite{Zou2021}. Here we will explore this problem more systematically.

Motivated by their relevance in the study of quantum magnetism, in the following we will focus on lattice systems with $G_{\rm UV}=G_s\times G_{int}$ symmetry, where $G_s$ is $p4m$ or $p6m$, and $G_{int}=O(3)^T\equiv SO(3)\times\z_2^T$, the product of spin rotation and time reversal symmetries. We further demand that the PR type of the system correspond to half-integer spin, \ie spinor under $SO(3)$ while Kramers doublet under $\z_2^T$, which implies that the $(1+1)$-d SPT related to the LSM constraints has a TPF $\exp\left(i\pi\int_{M_2}(w_2^{SO(3)}+t^2)\right)=\exp\left(i\pi\int_{M_2}w_2^{O(3)^T}\right)$. For the IR effective theory, we focus on DQCP, DSL, and the simplest non-Lagrangian SL, denoted by SL$^{(7)}$. We will exhaustively search SEP that can match the anomalies of these IR theories with the LSM anomalies on these lattices, assuming that the IR theories can emerge as a consequence of the competition between a magnetic state (a state that breaks the $SO(3)$ spin rotational symmetry, \eg a Neel state) and a non-magnetic state (an $SO(3)$ symmetric state, \eg a valance bond solid (VBS)).

\subsection{Review of Stiefel liquids} \label{subsec: review of SLs}

First, we briefly review the physics of SLs \cite{Zou2021} (see Appendix \ref{app: SLs} for more details, including useful information absent in Ref. \cite{Zou2021}). In Ref. \cite{Zou2021}, the proposed non-Lagrangian SLs are defined in terms of $2+1$ dimensional non-linear sigma models with target spaces being a Stiefel manifold. Although this sigma-model description is very effective in capturing the kinematic properties of the SLs, these non-renormalizable sigma models have infinite-dimensional parameter spaces, and they do not fully specify the universal low-energy physics of the system until all their infinitely many parameters are specified. So it is desirable to have a definition of these SLs without explicitly referring to any Lagrangian. In the following, we will review the symmetries, anomalies and some dynamical aspects of the SLs, and these proporties can also be viewed as an intrinsic definition of SLs without relying on any Lagrangian.

A SL is labelled by an integer $N\geqslant 5$, and we denote this state by SL$^{(N)}$. The DQCP and DSL correspond to SL$^{(5)}$ and SL$^{(6)}$, respectively, and SL$^{(N>6)}$ are conjectured to be non-Lagrangian. The DOF of SL$^{(N)}$ is represented by an $N\times (N-4)$ matrix $n$ with orthonormal columns. The symmetry $G_{\rm IR}$ of SL$^{(N)}$ includes Poincar\'e symmetry and
\beq \label{eq: SL GIR}
\frac{O(N)^T\times O(N-4)^T}{\z_2}
\eeq
The $O(N)$ acts as $n\rightarrow Ln$ with $L\in O(N)$, and the $O(N-4)$ acts as $n\rightarrow nR$ with $R\in O(N-4)$. The superscript ``T" represents a locking condition: an improper rotation of either the $O(N)$ or $O(N-4)$ is a symmetry if and only if it is combined with a spacetime orientation reversal symmetry. This locking condition is one of the reasons why SL$^{(N\geqslant 7)}$ may be non-Lagrangian (see Appendix \ref{app: SLs}). The modding of $\z_2$ is because the operation with $L=-I_{N}$ and $R=-I_{N-4}$ has no action on $n$. For $N=5$, $n$ reduces to a 5-component vector, and $G_{\rm IR}$ includes Poincar\'e symmetry and an $O(5)^T$ symmetry that acts by left multiplication on $n$, such that the improper rotation is combined with a spacetime orientation reversal symmetry.

The anomaly of SL$^{(N)}$ is captured by $\Omega_{\rm IR}$, an element in $H^4(G_{\rm IR}, U(1)_\rho)$. It is useful to consider the projection from $\tilde G_{\rm IR}\equiv O(N)^T\times O(N-4)^T$ to $G_{\rm IR}$, and the pullback of $\Omega_{\rm IR}$ induced by the projection is given by $\tilde \Omega_{\rm IR}=e^{i\pi \tilde L_{\rm IR}}\in H^4(\tilde G_{\rm IR}, U(1)_\rho)$, where
\begin{widetext}
\beq \label{eq: SL anomaly}
\tilde L_{\rm IR}=w_4^{O(N)}+w_4^{O(N-4)}+\left[w_2^{O(N-4)}+\left(w_1^{O(N-4)}\right)^2\right]\left(w_2^{O(N)}+w_2^{O(N-4)}\right)+\left(w_1^{O(N-4)}\right)^4
\eeq
\end{widetext}
supplemented with a constraint $w_1^{TM}+w_1^{O(N)}+w_1^{O(N-4)}=0\ ({\rm mod\ }2)$, which originates from the locking between the spacetime orientation reversals and the improper rotations of $O(N)$ and $O(N-4)$. Here $w_i^{O(N)}$, $w_i^{O(N-4)}$ and $w_i^{TM}$ are the $i$-th Stiefel-Whitney classes of the $O(N)$, $O(N-4)$ gauge bundles and the tangent bundle of the spacetime manifold, respectively. For odd $N$, $\tilde \Omega_{\rm IR}$ completely characterizes $\Omega_{\rm IR}$ (see Appendix \ref{app: SLs} for more details). However, for even $N$, $\tilde \Omega_{\rm IR}$ misses some important information. Fortunately, it turns out that $\tilde \Omega_{\rm IR}$ is still adequate for the following discussion, even for the case with $N=6$ (see Appendix~\ref{subapp: pullback example DSL} and discussions below Eq. \eqref{eq: embedding DSL Canonical}). Below we will view $\tilde \Omega_{\rm IR}$ as the IR anomaly of SLs and omit the tilde symbol, \ie we rewrite $\tilde \Omega_{\rm IR}$ and $\tilde L_{\rm IR}$ as $\Omega_{\rm IR}$ and $L_{\rm IR}$ for simplicity.

The low-energy dynamics of SLs is not fully understood so far. There is evidence that DQCP is a pseudo-critical state~\cite{Wang2017,Gorbenko2018,Ma2019,Nahum2019,He2021nonWF}, which can be approximated by a CFT, whose relevant operators are the conserved current, the $SO(5)$ vector and symmetric traceless rank-2 tensor, and possibly time-reversal breaking $SO(5)$ singlet. There is also evidence that SL$^{(N\geqslant 6)}$ are genuine CFTs with no $G_{\rm IR}$-symmetric relevant operator. Furthermore, various numerical studies (e.g. see a recent conformal bootstrap study~\cite{He2021bootstrapDSL} and references therein) give (indirect) support that the only relevant operators in these states are either conserved currents, or time-reversal-breaking operators, or Lorentz scalar operators in the representations $(V_L, V_R)$ and $(A_L, A_R)$, where $V_L$ ($V_R$) and $A_L$ ($A_R$) represent the vector and anti-symmetric rank-2 tensor of $SO(N)$ ($SO(N-4)$), respectively. The effects of these relevant operators are complicated: some of them change the emergent order of the state, but others do not (see Appendix \ref{app: SLs} for more details). We are interested in the stability of these states in a specific lattice realization, whose symmetry is $G_{\rm UV}$. Some perturbations that are not $G_{\rm IR}$-symmetric can be $G_{\rm UV}$-symmetric and drive the states unstable. For a realization to be stable, we demand that $G_{\rm UV}$ forbid all the aforementioned relevant perturbations that change the emergent order of the state.

It is also sometimes useful to refer to the gauge-theoretic description of the DSL (SL$^{(6)}$). This description is in terms of 4 flavors of gapless Dirac fermions coupled to a dynamical $U(1)$ gauge field. The global symmetry of this theory is given by Eq. \eqref{eq: SL GIR} with $N=6$. The fundamental operator in this theory is the monopole operators of the $U(1)$ gauge field, which transform as a bi-vector under the $(SO(6)\times SO(2))/\z_2$ symmetry \cite{Borokhov2002,Song2018, Song2018a}. These monopoles are represented by the $6\times 2$ matrix $n$ in the language of SLs, which are also the $(V_L, V_R)$ relevant operator mentioned above. In terms of the gauge theory, other relevant perturbations listed above are as follows. The conserved currents are the flavor currents of the Dirac fermions and the electromagnetic field strengths of the $U(1)$ gauge field, the time-reversal-breaking operator is the fermion mass that is a singlet under the flavor symmetry, and the $(A_L, A_R)$ operator is the fermion mass that is in the adjoint representation of the flavor symmetry.

\subsection{Method for anomaly-matching}\label{subsec: pullback calculation}

Now we sketch a streamlined method to check the emergibility condition Eq. \eqref{eq: emergibility}, for a given symmetry embedding pattern (SEP) $\varphi$. This method crucially relies on the fact that $H^4(G_{\rm UV}, U(1)_\rho)=\z_2^k$ with some $k\in\mb{N}$, which always holds if $G_{\rm UV}=G'\times\z_2^T$ with some group $G'$. In our case, $G'=G_s\times SO(3)$. More generally, as long as $G_{\rm UV}=G'\times \z_2^T$ for any $G'$, we expect this method to be useful in matching the LSM anomaly of a lattice system with the anomaly of any IR effective theories. This subsection is relatively formal and abstract, and readers more interested in the physical results can skip to the next section.

To motivate this method, first note that $\Omega_{\rm UV}$ and $\varphi^*(\Omega_{\rm IR})$ are elements in $H^4(G_{\rm UV}, U(1)_\rho)$. To compare two elements in $H^4(G_{\rm UV}, U(1)_\rho)$, generically we need a complete set of topological invariants (or some equivalents) for $H^4(G_{\rm UV}, U(1)_\rho)$, which is often difficult to obtain. This difficulty comes from the fact that we are considering cohomology with $U(1)$ coefficients.

Nevertheless, simplification occurs when $G_{\rm UV}=G'\times\z_2^T$ and hence $H^4(G_{\rm UV}, U(1)_\rho) = \z_2^k$ with some $k\in \mb{N}$. This enables us to connect $\Omega_{\rm UV}$ and $\varphi^*(\Omega_{\rm IR})$ to elements in $H^*(G_{\rm UV}, \z_2)$, which simplifies the analysis due to the salient features of cohomologies with $\z_2$ coefficients.

To see the connection to $H^*(G_{\rm UV}, \z_2)$, first recall that $\Omega_{\rm UV}$ takes the form of Eq. \eqref{eq: master cocycle}. We can view $\lambda$ and $\eta$ as elements in $H^2(G_s, \z_2)$ and $H^2(G_{int}, \z_2)$, respectively. Then $\lambda(l_1, l_2)\eta(a_3, a_4)$ is in fact the cup product $\lambda\cup \eta$ \footnote{More specifically, the cross product $\lambda\times \eta$, defined in Eq. \eqref{eq: cross product def} in Appendix \ref{subapp: product}.}, which is an element in $H^4(G_{\rm UV}, \z_2)$ that we denote by $L_{\rm UV}$. As a group, here the group operation of two elements in $H^4(G_{\rm UV}, \z_2)$ is realized as the mod 2 addition of the representative cochains of these elements, which take values in $\z_2=\{0, 1\}$. Then $\Omega_{\rm UV}$ can be written as $e^{i\pi L_{\rm UV}}$, or more formally as $\tilde i(L_{\rm UV})$, where $\tilde i$ is a map induced by the inclusion $i: \z_2\rightarrow U(1)$ introduced in Eq. \eqref{eq: inclusion map}. That is, the LSM anomaly $\Omega_{\rm UV}$ can be expressed as an image of an element $L_{\rm UV}\in H^4(G_{\rm UV}, \z_2)$ under $\tilde i$. {\footnote{In fact, since $G_{\rm UV}=G'\times\z_2^T$, which implies that $H^n(G_{\rm UV}, U(1)_\rho)=\z_2^k$ for any $n\in \mb{N}$, any element in $H^n(G_{\rm UV}, U(1)_\rho)$ can be written as the image of an element in $H^n(G_{\rm UV}, \z_2)$ under $\tilde i$.}}
    
Furthermore, there is an {\it injective} map from $H^4(G_{\rm UV}, U(1)_\rho)$ to $H^5(G_{\rm UV}, \z_2)$, given by $\tilde p\circ\beta$, \ie the combination of the Bockstein homorphism $\beta: H^4(G_{\rm UV}, U(1)_\rho)\rightarrow H^5(G_{\rm UV}, \z_\rho)$ and an injective map $\tilde p: H^5(G_{\rm UV}, \z_\rho)\rightarrow H^5(G_{\rm UV}, \z_2)$ (see Appendix~\ref{subapp: map} for a brief introduction of these maps). Here the fact that $\tilde p$ is injective is again guaranteed by $H^4(G_{\rm UV}, U(1)_\rho)=\z_2^k$, which is crucial for this method. This means that checking Eq. \eqref{eq: emergibility} is equivalent to checking 
\beq\label{eq: emergibility middle}
(\tilde p\circ\beta)\Omega_{\rm UV}=(\tilde p\circ\beta)\varphi^*(\Omega_{\rm IR})
\eeq
where both sides are elements in $H^5(G_{\rm UV}, \z_2)$.
    
Now we discuss the relevant simplifying features of cohomology with $\z_2$ coefficient. First, for any group $G$, $H^*(G, \z_2)$ has a {\it ring structure}, where the addition is the mod 2 addition as above, and the multiplication between two elements is realized as their cup product. The entire cohomology ring $H^*(G, \z_2)$ can be presented by generators and relations, such that any of its elements can be written as sum of cup products of these generators, while the relations dictate that some sums are in fact the trivial element.
    
Moreover, $H^*(G_s\times G_{int}, \z_2)\cong H^*(G_s, \z_2)\otimes H^*(G_{int}, \z_2)$ for any $G_s$ and $G_{int}$, which allows us to understand $H^*(G_s\times G_{int}, \z_2)$ by understanding $H^*(G_s, \z_2)$ and $H^*(G_{int}, \z_2)$ separately. 

We are interested in the case with $G_{int} = O(3)^T\equiv SO(3)\times \z_2^T$. The cohomology ring $H^*(O(3)^T, \z_2)$ is generated by the Stiefel-Whitney classes of $O(3)^T$, \ie $w_1^{O(3)^T}\in H^1(O(3)^T, \z_2)$, $w_2^{O(3)^T}\in H^2(O(3)^T, \z_2)$ and $w_3^{O(3)^T}\in H^3(O(3)^T, \z_2)$, with no relation among the generators. Sometimes we also need to write $H^*(O(3)^T, \z_2)$ as $H^*(SO(3), \z_2)\otimes H^*(\z_2^T, \z_2)$, where $H^*(SO(3), \z_2)$ is generated by the Stiefel-Whitney classes $w_2^{SO(3)}$ and $w_3^{SO(3)}$ of $SO(3)$, and $H^*(\z_2^T, \z_2)$ is generated by $t\in H^1(\z_2^T, \z_2)$. These two sets of generators are related by 
\beq
    w_1^{O(3)^T} &=& t \nonumber \\
    w_2^{O(3)^T} &=& w_2^{SO(3)} + t^2 \nonumber \\
    w_3^{O(3)^T} &=& w_3^{SO(3)} + t w_2^{SO(3)} + t^3
\eeq
    
As for $H^*(G_s, \z_2)$, we have calculated the $\z_2$ cohomology ring, \ie the generators and relations, of all 17 wallpaper groups (see Appendix \ref{app: cohomology ring}). It turns out that for all wallpaper groups $G_s$ except $p4g$, all generators belong to $H^1(G_s, \z_2)$ and $H^2(G_s, \z_2)$. For $p4g$, besides elements in $H^1(p4g, \z_2)$ and $H^2(p4g, \z_2)$, another element in $H^3(p4g, \z_2)$ is also needed to form a complete set of generators.

The above observations motivate us to consider the following diagram, where each rectangular sub-diagram is commuting {\footnote{The reason to use dashed lines to connect the left corner of the diagram to the rest is because $H^4(G_{\rm IR}, \z_2)$ is relevant in this analysis only for theories like SLs, where $\Omega_{\rm IR}$ is the image of an element in $H^4(G_{\rm IR}, \z_2)$ under $\tilde i$. For a generic IR effective theory, the left corner is irrelevant to the analysis of anomaly-matching. See Appendix \ref{subapp: pullback example SU21} for an IR effective theory (the $SU(2)_1$ CFT) where this is the case.}}:
\begin{widetext}
\beq\label{eq: master commuting diagram}
\begin{tikzcd}
H^{4}(G_\text{IR}, \mathbb{Z}_2) \ar[d, dashed, "\varphi^*"] \ar[r, dashed, "\tilde i"]& H^{4}(G_\text{IR}, U(1)_\rho) \arrow{d}{\varphi^*} \arrow{r}{\beta}&H^{5}(G_\text{IR}, \mb{Z}_\rho) \arrow{d}{\varphi^*}  \arrow{r}{\tilde p} & H^{5}(G_\text{IR}, \mathbb{Z}_2) \arrow{d}{\varphi^*} \\
H^{4}(G_\text{UV}, \mathbb{Z}_2)  \arrow{r}{\tilde i} &  H^{4}(G_\text{UV}, U(1)_\rho)  \arrow{r}{\beta} & H^{5}(G_\text{UV}, \z_\rho)  \arrow{r}{\tilde p} & H^{5}(G_\text{UV}, \mathbb{Z}_2)
\end{tikzcd}
\eeq
\end{widetext}
From the commutativity of the diagram, checking Eq. \eqref{eq: emergibility middle} is equivalent to checking
\beq \label{eq: emergibility simplify inter}
\mc{SQ}^1(L_{\rm UV})=\varphi^*(\tilde p\circ\beta)(\Omega_{\rm IR})
\eeq
in $H^5(G_{\rm UV}, \z_2)$, where
\beq
\mc{SQ}^1\equiv\tilde p\circ\beta\circ\tilde i.
\eeq
Some important properties and calculations of $\mc{SQ}^1$ are given in Appendix \ref{subapp: SQ1}. Because of the salient features of cohomologies with $\z_2$ coeffiecients, checking Eq. \eqref{eq: emergibility simplify inter} is expected to be simpler than directly checking Eq. \eqref{eq: emergibility} for a generic IR effective theory.

For SLs, a further simplification takes place since $\Omega_{\rm IR}=e^{i\pi L_{\rm IR}}\in H^4(G_{\rm IR}, U(1)_\rho)$ for SLs. Here $L_{\rm IR}$ can also be viewed as an element in $H^4(G_{\rm IR}, \z_2)$, in a way similar to $L_{\rm UV}\in H^4(G_{\rm UV}, \z_2)$. Then $\Omega_{\rm IR}$ is the image of $L_{\rm IR}$ under the map $\tilde i: H^4(G_{\rm IR}, \z_2)\rightarrow H^4(G_{\rm IR}, U(1)_\rho)$. Therefore, Eq. \eqref{eq: emergibility simplify inter} becomes
\beq \label{eq: emergibility simplify}
\mc{SQ}^1(L_{\rm UV})=\varphi^*(\mc{SQ}^1(L_{\rm IR}))
\eeq
Below we will use this equation to check the emergibility of various SLs. We remark that to check Eq. \eqref{eq: emergibility}, one may attempt to check if $L_{\rm UV}=\varphi^*(L_{\rm IR})$. However, since $\tilde i$ is not injective, this is just a sufficient but unnecessary condition of Eq. \eqref{eq: emergibility}. As we have checked, $L_{\rm UV}\neq\varphi^*(L_{\rm IR})$ in many examples where Eq. \eqref{eq: emergibility simplify} holds.

\subsection{Example: anomaly matching for DQCP}\label{subsec: pullback calculation DQCP}

To make this discussion more concrete, we showcase this method in a concrete example in detail (see Appendix \ref{app: pullback example} for more examples, including an example in $(1+1)$-d).

Consider the classic realization of DQCP (SL$^{(5)}$) on a square lattice \cite{Senthil2003, Senthil2003a, Wang2017}. For DQCP, $G_{\rm IR}=O(5)^T$ and Eq. \eqref{eq: SL anomaly} becomes
\beq\label{eq: anomaly DQCP IR}
\Omega_\text{IR} \equiv \exp(i\pi L_\text{IR}) = \exp\left(i \pi w_4^{O(5)}\right).
\eeq
In this realization, $G_{\rm UV}=p4m\times O(3)^T$ and the SEP $\varphi$ reads \cite{Wang2017, Zou2021},
\beq\label{eq: embedding DQCP Square}
\begin{split}
&T_1\rightarrow\left(\begin{array}{ccc}
     -I_3 &  &   \\
      & -1 & \\  
      &  & 1\\  
\end{array} \right),~~~
T_2\rightarrow\left(\begin{array}{ccc}
     -I_3 &  &  \\
      & 1 & \\  
      &  & -1\\  
\end{array} \right),~~~~~~\\
&C_4\rightarrow \left(\begin{array}{ccc}
     I_3 &  &  \\
      &  & 1\\  
      & -1 & \\  
\end{array} \right),~~~~~
M\rightarrow \left(\begin{array}{ccc}
     I_3 &  &  \\
      & -1 & \\  
      &  & 1\\  
\end{array} \right),~~~~~~\\
&O(3)^T\rightarrow \left(\begin{array}{cc}
      O(3)^T&  \\
       & I_2   \\
\end{array} \right),
\end{split}
\eeq
where $I_k$ denotes the $k\times k$ identity matrix. Note that the locking between the spacetime orientation reversals and improper rotations of $O(5)$ is satisfied above. The LSM anomaly Eqs. \eqref{eq: master cocycle} or \eqref{eq: master partition function} in this case can be written as
\beq\label{eq: anomaly DQCP UV}
\Omega_\text{UV} \equiv \exp(i\pi L_\text{UV}) = \exp\left(i \lambda_1 w_2^{O(3)^T}\right)
\eeq
where $\lambda_1\in H^2(p4m, \z_2)$ triggers $\alpha_1^{p4m}$ in Eq. \eqref{eq: p4m LSM SFP}, \ie define $\omega_1\equiv\tilde i(\lambda_1)=e^{i\pi\lambda_1}$, then $\alpha_1^{p4m}[\omega_1]=-1$ while $\alpha_{i}^{p4m}[\omega_1]=1$ for $i=2,\dots,6$. As a concrete realization of the DQCP, Eq. \eqref{eq: emergibility} must hold. Below we check it by checking its equivalent form, Eq. \eqref{eq: emergibility simplify}.

According to Appendix~\ref{subapp: map},
\beq
\mc{SQ}^1(L_\text{IR}) = w_5^{O(5)},
\eeq
and
\beq\label{eq: SQL DQCP}
\mc{SQ}^1(L_\text{UV}) = \lambda_1 w_3^{O(3)^T},
\eeq
where $w_3^{O(3)^T} = w_3^{SO(3)} + t w_2^{SO(3)} + t^3$ and $t\in H^1(\mb{Z}_2^T, \mb{Z}_2)$ corresponds to the gauge field of time reversal symmetry $\z_2^T$, when pulled back to the spacetime manifold $\mc{M}_4$.

It remains to calculate the pullback $\varphi^*(\mc{SQ}^1(L_{\rm IR}))$. Since the embedding $\varphi$ is block-diagonal with a $3\times 3$ block and a $2\times 2$ block, invoking the Whitney product formula, $w_5^{O(5)} = w_3^{O(3)}w_2^{O(2)}$ \footnote{Technically speaking, what we are doing is factorizing $\varphi$ into an embedding $\tilde \varphi: G_\text{UV}\rightarrow O(3)\times O(2)$ composed with an embedding $\varphi_0: O(3)\times O(2)\rightarrow O(5)$. Then this equation should be thought of as the pullback of $w_5^{O(5)}$ under $\varphi_0$, which can be proven by considering the diagonal $\mb{Z}_2^5$ symmetry. In this paper we will omit this fine detail for simplicity.},
we get
\beq \label{eq: DQCP showcase 1}
\varphi^*\left(\mc{SQ}^1(L_{\rm IR})\right)=\varphi^*\left(w_3^{O(3)}\right)\cup\varphi^*\left(w_2^{O(2)}\right)
\eeq
Hence we just need to calculate $\varphi^*(w_3^{O(3)})$ and $\varphi^*(w_2^{O(2)})$. The calculation of $\varphi^*(w_3^{O(3)})$ is straightforward,
\beq \label{eq: DQCP showcase 2}
\begin{split}
&\qquad\qquad\varphi^*\left(w_3^{O(3)}\right)= \\
&w_3^{SO(3)} + (t+A_{x+y}) w_2^{SO(3)} + (t+A_{x+y})^3,
\end{split}
\eeq
where $A_{x+y}\in H^1(p4m, \mb{Z}_2)$ corresponds to the sum of gauge fields of $T_1$ and $T_2$, when pulled back to the spacetime manifold $\mc{M}_4$ (see Appendix \ref{app: cohomology ring}).

The pullback of $w_2^{O(2)}$ needs more consideration. As $\varphi^*\left(w_2^{O(2)}\right)\in H^2(p4m, \mb{Z}_2)$, it is completely determined by its action on the 6 topological invariants identified in Eqs. \eqref{eq: p4m LSM SFP} and \eqref{eq: p4m non-LSM SFP}, \ie $\alpha_i^{p4m}[\omega]$ with $\omega=\tilde i(\varphi^*(w_2^{O(2)}))$, for $i=1, \cdots 6$. To obtain $\alpha_i^{p4m}[\omega]$, consider the six $\z_2$ subgroups, denoted by $\z_2^{(i)}$ with $i=1,\cdots, 6$, generated by $C_2$, $T_1 C_2$, $T_1 T_2 C_2$, $M$, $T_1 M$ and $C_4M$, respectively. Their embedding into $O(2)$ reads:
\beq
\begin{split}\scriptstyle
&C_2\rightarrow
\left(\begin{array}{cc}
  -1 & \\
  & -1
\end{array}\right),~~
T_1T_2C_2\rightarrow
\left(\begin{array}{cc} 
1 & \\ 
& 1
\end{array}\right)\\
&T_1C_2\rightarrow
\left(\begin{array}{cc}
  1 &  \\
   & -1
\end{array}\right),~
M\rightarrow \left(\begin{array}{cc}
  -1 &  \\
   & 1
\end{array}\right)\\ 
&T_1 M\rightarrow
\left(\begin{array}{cc}
  1 &  \\
   & 1
\end{array}\right),~~~~
C_4M\rightarrow\left(\begin{array}{cc}
   & 1 \\
  1 & 
\end{array}\right).
\end{split}
\nonumber
\eeq
The pullback under the embedding $\z_2^{(i)}\rightarrow O(2)$ results in an element in $H^2(\z_2^{(i)}, \z_2)=\z_2$, which is precisely detected by 
the topological invariant $\alpha_i^{p4m}[\omega]$.
Calculating these six pullbacks via the Whitney product formula, we find $\alpha_1^{p4m}[\omega] = -1$,
while other topological invariants are +1. Hence, we establish that  \footnote{In practice, to obtain this result, it suffices to only consider the $pmm$ subgroup and $\z_2$ subgroup generated by $C_4 M$, as argued in Section~\ref{subsubsec: p4m}. Since the embedding of $pmm$ is also in the diagonal form, the calculation is as straightforward.} 
\beq \label{eq: DQCP showcase 3}
\varphi^*(w_2^{O(2)})=\lambda_1
\eeq

Finally, combining Eqs. (\ref{eq: SQL DQCP}-\ref{eq: DQCP showcase 3}) and $\lambda_1 A_{x+y} = 0$, a relation among the cohomology generators in $H^*(p4m, \z_2)$ (see Appendix \ref{app: cohomology ring}),
we establish that Eq. \eqref{eq: emergibility simplify} indeed holds, as expected.

We mention that some previous works have performed anomaly-matching for this example, but some of them only did it by restricting both $G_{\rm UV}$ and $G_{\rm IR}$ to a few subgroups \cite{Metlitski2017}, and some used non-rigorous method \cite{Zou2021}. To the best of our knowledge, the analysis above is the first that performs this anomaly-matching via a rigorous method, while keeping track of the full $G_{\rm UV}$ and $G_{\rm IR}$. When checking emergibility below, we always maintain such completeness and rigor.

\section{Deconfined quantum critical point and quantum critical spin liquids}\label{sec: spin liquids}

With the formalism developed in the previous sections, we perform an exhaustive search of realizations of SL$^{(N=5,6,7)}$ that can match certain LSM constraints on lattice spin systems with $p6m\times O(3)^T$ or $p4m\times O(3)^T$ symmetry, if this realization is adjacent to a magnetic state and a non-magnetic state (this means that the $SO(3)$ symmetry acts on some but not all entries of $n$, the $N\times(N-4)$ matrix representing the DOF of SL$^{(N)}$). This search can be efficiently done using a computer, and the complete results can be found in the attached codes \cite{code} with the help of Appendix \ref{app: exhaustive search}. The numbers of different types of realizations are in Table \ref{tab: realization number}, where each row represents a distinct LSM constraint, or lattice homotopy class, labeled by the IWP that hosts half-integer spins  (see Figs. \ref{fig:p6m} and \ref{fig:p4m} for the symbols of IWP), and $0$ means there is no nontrivial LSM constraint, which applies to systems with integer-spin moments or honeycomb lattice half-integer spin systems. Note that for $p4m$, situations $a$ and $b$ always have the same number of realizations in each case, since they both correspond to square lattice half-integer spin systems and they are related to each other via a redefinition of the $C_4$ center. However, these two situations should still be viewed as distinct, because they cannot be smoothly deformed into each other once the $p4m$ symmetry is specified, which means, in particular, the $C_4$ centers are fixed. The same holds for situations $a\&c$ and $b\&c$. In terms of symmetry-enriched quantum criticality, we have found 12 different $p6m\times O(3)^T$ symmetry-enriched DQCP, $105+1=106$ different $p6m\times O(3)^T$ symmetry-enriched DSL, $705+14=719$ different $p6m\times O(3)^T$ symmetry-enriched SL$^{(7)}$, 26 different $p4m\times O(3)^T$ symmetry-enriched DQCP, $372+1=373$ different $p4m\times O(3)^T$ symmetry-enriched DSL, and $3819-27+29=3821$ different $p4m\times O(3)^T$ symmetry-enriched SL$^{(7)}$. The reason for subtracting 27 in the last case is explained at the end of this section. Many of these realizations are unstable, in the sense that they require fine-tuning due to the existence of one or more microscopic symmetry allowed relevant operators (see Appendix \ref{app: realizations in familiar lattices} for all stable realizations on various systems).

\begin{table}
\centering
\begin{tabular}{ c || c | c | c || c | c }
\multicolumn{6}{c}{$G_s=p6m$} \\
\hline
\makecell[c]{spin-1/2\\ position} & DQCP & DSL & SL$^{(7)}$ & DSL$_{\rm quad}$ & SL$^{(7)}_{\rm quad}$ \\\hline
 0 & 10(2) & 76(1) & 453(0) & 1(1) & 12(2) \\
 $a$ & 0 & 3(3) & 41(8) & 0 & 0 \\
 $c$ & 0 & 3(3) & 35(9) & 0 & 0 \\  
 $a\&c$ & 2(1) & 23(5) & 176(2) & 0 & 2(0) \\
 \hline
 total & \makecell[c]{12\\(3)} & \makecell[c]{105\\(12)} & \makecell[c]{705\\(19)} & \makecell[c]{1\\(1)} & \makecell[c]{14\\(2)}  
\end{tabular}
\vspace{1em}
\\
\begin{tabular}{ c || c | c | c | c | | c | c}
\multicolumn{7}{c}{$G_s=p4m$} \\
\hline
\makecell[c]{spin-1/2\\ position} & DQCP & DSL & SL$^{(7)}$ & SL$^{(7)}_{\rm incom}$ & DSL$_{\rm quad}$ & SL$^{(7)}_{\rm quad}$ \\\hline
 0 & 19(0) & 217(0) & 1849(0) & 2(0) & 1(1) & 22(4) \\
 $a$ & 1(1) & 23(3) & 299(2) & 3(2,1) & 0 & 1(1) \\
 $b$ & 1(1) & 23(3) & 299(2) & 3(2,1) & 0 & 1(1) \\ 
 $c$ & 3(0) & 56(4) & 632(0) & 2(2) & 0 & 3(1) \\
 $a\&b$ & 1(1) & 22(0) & 279(0) & 11(11) & 0 & 1(0) \\ 
 $a\&c$ & 0 & 6(6) & 117(6) & 0 & 0 & 0 \\ 
 $b\&c$ & 0 & 6(6) & 117(6) & 0 & 0 & 0 \\
 $a\&b\&c$ & 1(1) & 19(2) & 227(0) & 6(6) & 0 & 1(0)\\
 \hline
 total & \makecell[c]{26\\(4)} & \makecell[c]{372\\(24)} & \makecell[c]{3819\\(16)} & \makecell[c]{27\\(23,2)} & \makecell[c]{1\\(1)} & \makecell[c]{29\\(7)}
\end{tabular}
\caption{Numbers of realizations for DQCP, DSL and SL$^{(7)}$ in spin systems with a $p6m$ (upper) or $p4m$ (lower) lattice symmetry. Two realizations with symmetry actions related by a similarity transformation are considered as a single realization. The columns without (with) subscript ``quad" represent realizations where the most relevant spinful excitations, \ie the $n$ modes that transform nontrivially under the $SO(3)$ spin rotational symmetry, carry spin-1 (spin-2). No realization of DQCP has the $n$ modes carrying spin-2. 
The numbers in parenthesis are the numbers of {\it stable} realizations.
Here a stable DQCP means a realization that has a single relevant perturbation allowed by the microscopic symmetry, and a stable DSL, SL$^{(7)}$, DSL$_{\rm quad}$ and SL$^{(7)}_{\rm quad}$ means a realization that has no relevant perturbation allowed by the microscopic symmetry.
For all columns except SL$^{(7)}_{{\rm incom}}$, the $n$ modes are at high-symmetry momenta in the Brillouin zone. For SL$^{(7)}$ realized on $p4m$ symmetric lattices, there are realizations with some $n$ modes at incommensurate momenta, and the column SL$^{(7)}_{{\rm incom}}$ documents the numbers of families of these realizations, where each family includes infinitely many realizations labeled by a momentum, which continuously interpolate between two realizations in the column SL$^{(7)}$. Two continuous families of realizations may share a common high-symmetry momentum, at which these two realizations turn out to be always distinct, in that symmetries other than translation are implemented distinctly. $(23,2)$ means that there are 23 families of realizations, such that as long as a given realization is in the ``interiors" of the family (\ie not all $n$ modes are at high-symmetry momenta), the only symmetric relevant perturbation is the one that shifts the momenta of $n$ modes, and there are 2 other families, such that this is still the case except at two exceptional points in the interior, where there is an additional symmetric relevant perturbation that changes the emergent order. The symmetry actions of the stable realizations are explicitly listed in {\it ReadMe.nb}.}
\label{tab: realization number}
\end{table}

Below we present some interesting examples. To the best of our knowledge, none of these examples has been discussed before. When we discuss a realization of a SL, we will also comment on its nearby phases, which are often (but not always) some simple ordered states and relatively easy to detect. This provides useful guide for the search of an SL, since if such an ordered state can be found in a material or model, perturbing this ordered state may result in an SL. A smoking-gun signature of the SLs is their large emergent symmetries, which can manifest themselves in a set of singular correlation functions with the same critical exponent. Moreover, for all classical regular magnetic orders \cite{Messio2011}, \ie classical magnetic orders in which any broken lattice symmetry can be compensated by a spin operation (see Appendix \ref{app: regular magnetic orders} for their spin configurations), we identify the numbers of realizations of SLs adjacent to them (see Table \ref{tab: realization number magnetic order}).

\begin{table}[h!]
\centering
\begin{tabular}{ c | c | c | c }
\hline\hline
Lattice & Colinear order & spin-1/2 & spin-1 \\\hline\hline
Triangular & F & 0 &2(1) \\ \hline\hline
Kagome & F & 0 & 2(1)\\\hline \hline
  \multirow{2}{2cm}{~Honeycomb} & F & 2(1) & 2(1) \\\cline{2-4}
& AF & 2(1) & 2(1) \\ \hline\hline
 \multirow{2}{2cm}{~~~~Square} & F & 0 & 2(0) \\\cline{2-4}
 & AF (Neel) & 1(1) & 2(0)\\\hline \hline
\end{tabular}

\vspace{1em}

\begin{tabular}{ c | c | c | c }
\hline\hline
Lattice & Coplanar order & spin-1/2 & spin-1 \\\hline\hline
Triangular & 120$^\circ$ & 1(1) & 3(0) \\ \hline\hline
  \multirow{2}{2cm}{~~~Kagome} & $\vec q=0$ & 1(1) & 2(0)\\\cline{2-4}
& $\sqrt{3}\times\sqrt{3}$ & 0 & 3(0) \\ \hline\hline
Honeycomb & V & 3(0) & 3(0) \\ \hline\hline
  \multirow{2}{2cm}{~~~~Square} & V & 1(0) & 3(0)\\\cline{2-4}
& Orthogonal & 1(0) & 1(0) \\ \hline\hline
\end{tabular}

\vspace{1em}

\begin{tabular}{ c | c | c | c }
\hline\hline
Lattice & Non-Coplanar order & spin-1/2 & spin-1 \\\hline\hline
 \multirow{2}{2cm}{~~Triangular} & Tetrahedral & 8(4) & 2(0) \\\cline{2-4}
& F umbrella & 1(0) & 4(0)\\ \hline\hline
 \multirow{5}{2cm}{~~~Kagome} & Octahedral & 0 & 2(0)\\\cline{2-4}
 & Cuboc1 & 3(2) & 1(0)\\\cline{2-4}
 & Cuboc2 & 4(3) & 1(0)\\\cline{2-4}
  & $\vec q=0$ umbrella & 2(1) & 3(0)\\\cline{2-4}
 & $\sqrt{3}\times \sqrt{3}$ umbrella & 1(1) & 4(0)\\\hline \hline
  \multirow{2}{2cm}{~Honeycomb} & Tetrahedral & 2(0) & 2(0) \\\cline{2-4}
& Cubic & 1(0) & 1(0) \\ \hline\hline
 \multirow{2}{2cm}{~~~~Square} & Tetrahedral umbrella & 1 Incom & 2(0) \\\cline{2-4}
 & F umbrella & 2(0) & 2(0)\\\hline \hline
\end{tabular}

\caption{Numbers of realizations for DQCP (top), DSL (middle) and SL$^{(7)}$ (bottom) adjacent to some colinear, coplanar and non-coplanar magnetic orders, respectively, of triangular, kagome, honeycomb and square lattice spin-1/2 (or general half-integer-spin) systems (third column) and spin-1 (or general integer-spin) systems (fourth column). The numbers in parenthesis are the numbers of stable realizations (defined in the same way as in Table \ref{tab: realization number}). F stands for Ferromagnetic while AF stands for Anti-ferromagnetic. ``1 Incom'' means that realizations of SL$^{(7)}$ adjacent to tetrahedral umbrella order on the square lattice spin-1/2 systems belong to a continuous family of realizations, where the non-magnetic components of $n$ can have continuously changing momenta. See Appendix \ref{app: regular magnetic orders} for the spin configurations of these magnetic orders, and the attached code {\it ReadMe.nb} for the explicit symmetry actions.}
\label{tab: realization number magnetic order}
\end{table}

In this section, we focus on realizations where the most relevant spinful excitations have spin-1. In particular, we describe examples of realizations of DQCP as a (pseudo-)critical point, which has a single relevant perturbation allowed by the microscopic symmetries, and stable realizations of DSL, which has no relevant perturbation allowed by the microscopic symmetries. For SL$^{(7)}$, we discuss a realization without symmetry-allowed relevant perturbation, and another example with a single symmetry-allowed relevant perturbation that nevertheless does not change the state. We view both realizations of SL$^{(7)}$ as stable.

\subsection{DQCP}

It is known that there are two types of DQCPs proximate to classical regular magnetic orders~\cite{Senthil2003a}, both are transitions from an anti-ferromagnetic state to a VBS state, i.e., the columnar VBS for square spin-1/2 systems~\cite{Sandvik2006,Sandvik2010} and the Kekule VBS for honeycomb spin-1/2 systems \cite{Pujari2013}. Interestingly, we find another realization of DQCP on a honeycomb lattice spin-1/2 system, as a transition between a ferromagnetic state and a staggered VBS state. {\footnote{Due to the fact that a fully polarized ferromagnetic state is always an exact eigenstate of any $SO(3)$ symmetric Hamiltonian, the ferromagnetic state immediately adjacent to this DQCP, which is partially polarized, must be separated from a fully polarized one by a level crossing, \ie first-order transition.}} The symmetries are realized as
\beq \label{eq: ferro DQCP}
\begin{split}
    &T_{1,2}: n\rightarrow n\\
    &C_6: n\rightarrow
    \left(
    \begin{array}{ccc}
    I_3 & & \\
    & -\frac{1}{2} & \frac{\sqrt{3}}{2} \\
    &-\frac{\sqrt{3}}{2} &-\frac{1}{2}
    \end{array}
    \right) n\\
    &M: n\rightarrow
    \left(
    \begin{array}{ccc}
    I_3 & & \\
    & -1 &  \\
    &  & 1
    \end{array}
    \right) n\\
    &O(3)^T: n\rightarrow
    \left(
    \begin{array}{cc}
    O(3)^T & \\
    & I_2
    \end{array}
    \right) n
\end{split}
\eeq
The components of $n$ can be identified with microscopic operators that transform identically under the above symmetries. Denote the microscopic spin-$1/2$ operator on the $A$ and $B$ sublattices as $\vec S_{A}(\vec r)\equiv \vec S\left(\vec r+\frac{2\vec T_1+\vec T_2}{3}\right)$ and $\vec S_{B}(\vec r)\equiv \vec S\left(\vec r+\frac{\vec T_1-\vec T_2}{3}\right)$, respectively, where $\vec r$ is the position of the $C_6$ center of each unit cell, and $\vec T_{1,2}$ is the translation vector of $T_{1,2}$. Then $\vec S_{A, i}(\vec r)=\vec S_{B, i}(\vec r)\sim n_i$ for $i=1,2,3$. Denote the dimer operators as $D_{x}(\vec r)\equiv\vec S(\vec r+\frac{-\vec T_1+\vec T_2}{3})\cdot\vec S\left(\vec r+\frac{\vec T_1+2\vec T_2}{3}\right)$, $D_y(\vec r)\equiv\vec S\left(\vec r+\frac{\vec T_1+2\vec T_2}{3}\right)\cdot\vec S(\vec r+\frac{2\vec T_1+\vec T_2}{3})$, and $D_z(\vec r)\equiv\vec S\left(\vec r+\frac{2\vec T_1+\vec T_2}{3}\right)\cdot\vec S\left(\vec r+\frac{\vec T_1-\vec T_2}{3}\right)$. Then $D_x(\vec r)+e^{i\frac{2\pi}{3}}D_y(\vec r)+e^{i\frac{4\pi}{3}}D_z(\vec r)\sim e^{-i\frac{5\pi}{6}}(n_4-in_5)$. So $n_{1,2,3}$ and $n_{4,5}$ can be identified as the order parameters of a ferromagnet and a stacked VBS, respectively. For examples below, one can perform similar analysis to identify components of $n$ with microscopic operators, but we will not explicitly showcase them.

Since this ferromagnetic DQCP is the simplest example of new states discovered using our approach, it will be reassuring to also have a traditional parton-based construction \cite{wen2004quantum}. Indeed this DQCP can be constructed using Schwinger bosons $\vec{S}=\frac{1}{2}b^{\dagger}_{\alpha}\vec{\sigma}_{\alpha\beta}b_{\beta}$, where the bosonic spinons $b_{\alpha}$ couple to a dynamicsl $U(1)$ gauge field $a_{\mu}$. To realize the staggered VBS, we put the Schwinger bosons into the ``featureless Mott insulator" discussed in Ref.~\cite{Kimchi2012} -- effectively this state is constructed by putting a spin-singlet, gauge-charge $Q=2$ spinon ``Cooper pair" at each $C_6$ center. When coupled to the dynamical $U(1)$ gauge field, the monopole operator acquires nontrivial lattice symmetry quantum numbers due to the charged insulating background. For example, the gauge charge $Q=2$ at each $C_6$ rotation center gives the monopole a $C_6$ angular momentum $e^{i2\pi/3}$ from the Aharanov-Bohm effect. Other lattice symmetry quantum-numbers can be analyzed in a similar fashion, following methods develped in Ref.~\cite{Song2018}. It turns out that the monopole carries exactly the symmetry quantum numbers of the staggered VBS. At low energies the monopole will spontaneously condense and confine the gauge theory, resulting in the staggered VBS phase. To access the magnetically ordered phase, we drive the spinons $b_{\alpha}$ through an insulator-superfluid transition and Higgs the $U(1)$ gauge field. The fact that $b_{\alpha}$ do not carry any nontrival projective representation in this construction means that they can be condensed without breaking any lattice symmetry, which means that the magnetically ordered phase obtained this way is a ferromagnet. The effective field theory at the phase transition is the standard (non-compact) CP$^1$ theory for DQCP \cite{Senthil2003}, described by an $SU(2)$-fundamental complex Wilson-Fisher boson coupled to a dynamical $U(1)$ gauge field $a_{\mu}$.

We remark that, compared to the standard DQCP realization where the magnetic side is anti-ferromagnetic, in this realization there is one more perturbation that is likely irrelevant at the transition, but relevant in the ferromagnetic phase and responsible for making the dispersion of the magnon quadratic. In the CP$^{1}$ formulation of the DQCP with Schwinger bosons $b$ \cite{Senthil2003, Senthil2003a}, this operator is $(ib^\dag\vec\sigma \partial_t b)\cdot(b^\dag\vec\sigma b)$. In the CP$^N$ generalization of this theory, this operator is indeed dangerously irrelevant in the large-$N$ limit.

The simple nature of the magnetic and VBS phases here suggests that this DQCP may be realizable in relatively simple spin models. It will be interesting to find a sign-problem-free lattice model and simulate this transition with the quantum Monte Carlo approach.

\subsection{DSL}

DSLs have been constructed for various lattices using the parton construction. 
There are two widely studied DSLs: one is on the kagome lattice spin-$1/2$ system proximate to the $\vec q=0$ coplanar magnetic order~\cite{Hastings2000,Ran2007,Hermele2008,Iqbal2013,He2017}; the other is on the triangular spin-$1/2$ lattice proximate to the 120$^\circ$ coplanar order~\cite{Zhou2002, Lu2015, Iqbal2016, Hu2019, Jian2017a, Song2018, Song2018a}.
On the other hand, the previously constructed DSLs on the honeycomb and square lattices are unstable due to the presence of $G_{\rm UV}$-symmetric monopole (\ie the $n$ modes)~\cite{Alicea2008,Ran2008,Song2018, Song2018a}.
Interestingly, we find new stable DSLs on square and honeycomb lattices.
Our complete classification also shows that there is no DSL proximate to the $\sqrt 3 \times \sqrt 3$ coplanar order on the kagome spin-$1/2$ system.

For the honeycomb lattice spin-1/2 system, the symmetries act as:
\beq\label{eq: embedding DSL Honeycomb}
\begin{split}
&T_{1,2}: n\rightarrow 
\left(
\begin{array}{cccc}
I_3 & & & \\
& -\frac{1}{2} & \frac{\sqrt{3}}{2} & \\
& -\frac{\sqrt{3}}{2} & -\frac{1}{2} & \\
& & & 1
\end{array}
\right)n\\
&C_6: n\rightarrow \left(\begin{array}{cccc}
I_3 & & & \\
& 1& & \\
& & -1 &\\
& & & -1\\
\end{array} \right)n
\left(\begin{array}{cc}
     \frac{1}{2} & -\frac{\sqrt{3}}{2} \\
     \frac{\sqrt{3}}{2} & \frac{1}{2} \\  
\end{array} \right)\\
&M: n\rightarrow \left(\begin{array}{cccc}
I_3 & & & \\
& -1 & &\\
& & -1 &\\
& & & 1\\
\end{array}\right)n
\left(
\begin{array}{cc}
     -1 &  \\
      & 1 \\  
\end{array}
\right)\\
&O(3)^T: n\rightarrow \left(\begin{array}{cc}
     O(3)^T & \\
     & I_3
\end{array} \right)n
\end{split}
\eeq
The magnetic order adjacent to this DSL is a regular magnetic order, \ie a magnetic order in which any broken lattice symmetry can be compensated by a spin operation \cite{Messio2011}. However, the magnetic order here appears missing in the classification in Ref. \cite{Messio2011}, which is possibly because all magnetic orders in Ref. \cite{Messio2011} are assumed to be realizable by product states. It is known that some SRE states in a honeycomb lattice spin-1/2 system cannot be realized by product states, so we do not make this assumption \cite{Kimchi2012, Jian2015, Kim2015, Latimer2020}. This DSL should also be emergible in a triangular or kagome lattice integer-spin system. In these cases, the adjacent magnetic orders are also regular but not realizable by product states. See Ref. \cite{Lin2022} for a recent study of these {\it entanglement-enabled symmetry-breaking orders}.

For the square lattice spin-1/2 system, the symmetries act as:
\beq\label{eq: embedding DSL Square}
\begin{split}
&T_1: n\rightarrow\left(\begin{array}{cccc}
     I_3 & & &    \\
     & -1 &  &   \\
     &  & 1 &  \\  
     &  &  & -1 \\
\end{array} \right)n
\left(\begin{array}{cc}
     -1 &  \\
      &  -1\\  
\end{array} \right),~~~~~~\\
&T_2: n\rightarrow\left(\begin{array}{cccc}
     I_3 & & &    \\
     & 1 &  &   \\
     &  & -1 &  \\  
     &  &  & -1 \\
\end{array} \right)n
\left(\begin{array}{cc}
     -1 &  \\
      & -1\\  
\end{array} \right),~~~~~~\\
&C_4: n\rightarrow \left(\begin{array}{cccc}
     I_3 & & &    \\
     &  & 1 &   \\
     & -1 &  &  \\  
     &  &  & 1 \\
\end{array} \right)n
\left(\begin{array}{cc}
     -1 &  \\
      & -1 \\  
\end{array} \right),~~~~~~\\
&M: n\rightarrow \left(\begin{array}{cccc}
     I_3 & & &    \\
     & 1 &  &   \\
     &  & -1 &  \\  
     &  &  & -1 \\
\end{array} \right)n
\left(
\begin{array}{cc}
-1 & \\
& 1
\end{array}
\right),~~~~~~\\
&O(3)^T:n\rightarrow\left(\begin{array}{cc}
     O(3)^T &     \\
     & I_3
\end{array} \right)n,~~~~~~
\end{split}
\eeq
The magnetic order adjacent to this DSL is also an entanglement-enabled regular magnetic order.

One interesting aspect of these realizations is that all perturbations proportional to the entries of $n$ are forbidden by symmetries, and the lack of this property is the reason why the previous constructions on these systems are unstable \cite{Alicea2008,Ran2008,Song2018, Song2018a}. This property implies that these realizations cannot be obtained as a descendent state of an $SU(2)$ DSL \cite{Song2018}, which is described by 2 flavors of Dirac fermions coupled to an emergent $SU(2)$ gauge field (including the 2 colors, there are in total 4 Dirac fermions). To see it, note that the emergent symmetry of the $SU(2)$ DSL is just $O(5)^T$, so if a $U(1)$ DSL is its descendent, all microscopic symmetries will be embedded into the $O(5)^T$ symmetry, which necessarily leaves some components of $n$ symmetry-allowed. The previous constructions of the $U(1)$ DSL on a square and honeycomb lattice spin-1/2 systems are indeed descendents of an $SU(2)$ DSL, and it would be interesting to find a parton construction of our new realizations.

\subsection{SL$^{(7)}$}

Two realizations of the conjectured non-Lagrangian state SL$^{(7)}$ are given in Ref. \cite{Zou2021}. Here we describe some other interesting realizations.

On a kagome lattice spin-1/2 system, there is a realization with the following symmetry actions:
\beq\label{eq: embedding SL Kagome more}
\begin{split}
&T_1: n\rightarrow\left(\begin{array}{ccccc}
    I_3 & & &   & \\
     & -\frac{1}{2} & \frac{\sqrt{3}}{2} &  &   \\
    & -\frac{\sqrt{3}}{2} &  -\frac{1}{2} & &  \\  
    &  &  & 1 & \\
     &  & &   & 1    \\
\end{array} \right)n
\left(\begin{array}{ccc}
     1 &  &  \\
      & -1 &  \\
      &  & -1 \\
\end{array} \right)\\
&T_2: n\rightarrow\left(\begin{array}{ccccc}
    I_3 & & &   & \\
     & -\frac{1}{2} & \frac{\sqrt{3}}{2} &  &   \\
    & -\frac{\sqrt{3}}{2} &  -\frac{1}{2} & &  \\  
    &  &  & 1 & \\
     &  & &   & 1    
\end{array} \right)n
\left(\begin{array}{ccc}
     -1 &  &  \\
      & 1 &  \\
      &  & -1 \\
\end{array} \right)\\
&C_6: n\rightarrow \left(\begin{array}{ccccc}
    I_3 & & &   & \\
     &1 &  & &  \\
     & & -1 & &\\  
     & &  & -1&\\  
     & &  & &-1\\
\end{array} \right)n
\left(\begin{array}{ccc}
      &  & -1 \\
     1 &  &  \\
      & 1 &  \\
\end{array} \right)\\
&M: n\rightarrow \left(\begin{array}{ccccc}
    I_3 & & &   & \\
     &-1 &  & & \\
     & & -1 & &\\  
     & &  & 1&\\  
     & &  & &1\\
\end{array} \right)n
\left(\begin{array}{ccc}
      & -1 &  \\
     -1 &  &  \\
      &  & 1 \\
\end{array} \right)\\
&O(3)^T: n\rightarrow \left(\begin{array}{cc}
    O(3)^T & \\
     & I_4\\
\end{array} \right)n
\end{split}
\eeq
The magnetic order adjacent to this SL$^{(7)}$ is the cuboc1 order, a good classical ground state for Heisenberg like models~\cite{Messio2011}, and was found in a $J_1$-$J_2$-$J_3$ model \cite{Gong2015}.

We also note that, in contrast to the DQCP and DSL, where all realizations are proximate to some commensurate states, \ie $n$ have commensurate momenta in those realizations, SL$^{(7)}$ can have realizations with $n$ at incommensurate momenta. For example, on a square lattice spin-1/2 system, there is a family of realization with the following symmetry actions:
\beq\label{eq: embedding SL Square Incommensurate}
\begin{split}
&T_1: n\rightarrow\left(\begin{array}{ccc}
    I_3 & &     \\
     & \exp(- i\sigma_y k) &   \\  
    & &    -I_2
\end{array} \right)n
\left(\begin{array}{ccc}
     -1 &  &  \\
      & 1 &  \\
      &  & -1 \\
\end{array} \right)\\
&T_2: n\rightarrow\left(\begin{array}{ccc}
    I_3 & &   \\
    & -I_2 &    \\
    & & \exp(i \sigma_y k)
\end{array} \right)n
\left(\begin{array}{ccc}
     1 &  &  \\
      & -1 &  \\
      &  & -1 \\
\end{array} \right)\\
&C_4: n\rightarrow \left(\begin{array}{ccccc}
    I_3 & & &   & \\
     &  & & 1 &   \\
     & &  & &1\\  
     &  &1 &  &\\  
     &1 &  & & \\
\end{array} \right)n
\left(\begin{array}{ccc}
      & 1 &  \\
     1 &  &  \\
      &  & 1 \\
\end{array} \right)\\
&M: n\rightarrow \left(\begin{array}{ccccc}
    I_3 & & &   & \\
     & & 1 & & \\
     &1 &  & &\\  
     & &  & 1&\\  
     & &  & &1\\
\end{array} \right)n\\
&O(3)^T: n\rightarrow \left(\begin{array}{cc}
    O(3)^T & \\
     & I_4\\
\end{array} \right)n
\end{split}
\eeq
where $k\in[-\pi, \pi)$ is a generic momentum. The magnetic order adjacent to this realization is the tetrahedral umbrella order \cite{Messio2011}.

The above represents an infinite family of realizations, where the momenta of some $n$ modes continuously change in the Brillouin zone. Among the relevant operators discussed in Sec. \ref{subsec: review of SLs}, there is only a single one allowed by the microscopic symmetries in this family of realizations, \ie the $SO(7)$ current $\sim (n
_{4i}\partial_xn_{5i}+n_{6i}\partial_yn_{7i})$. We believe all these realizations can actually be smoothly connected without encountering a phase transition, so they all represent the same symmetry-enriched SL. This also imposes some constraints on the low-energy dynamics of SL$^{(7)}$, \ie although the above $SO(7)$ conserved current is relevant, it can merely shift the ``zero momentum", but not really change the state (see Appendix \ref{app: SLs} for more discussions).

\section{Quantum critical spin-quadrupolar liquids}\label{sec: spin-nematic liquids}

Besides the previous case, we also find realizations where the most relevant spinful excitations carry spin-2. We dub these states quantum critical spin-quadrupolar liquids.

We have identified an interesting realization of the DSL as a quantum critical spin-quadrupolar liquid.
This realization can actually be realized on any lattice that has no nontrivial LSM constraint, including spin-1 systems on any lattice, spin-1/2 systems on honeycomb lattice, etc. If the lattice has a $p6m$ or $p4m$ symmetry, this is the only spin-quadrupolar realization of DSL. The lattice translation and rotation symmetries leave $n$ invariant, and
$SO(3)$, time reversal $\mc{T}$ and lattice reflection $M$ (if any) act as
\beq\label{eq: embedding DSL 5d rep}
\begin{split}
&SO(3): n\rightarrow \left(\begin{array}{cc}
    \varphi_5(SO(3)) &   \\
     & 1   \\
\end{array} \right)n,~~~~~~\\
&\mc{T}: n\rightarrow \left(\begin{array}{cc}
     I_5 &   \\
     & -1   \\
\end{array} \right)n~~~~~~\\
&M: n\rightarrow
\left(\begin{array}{cc}
     I_5 &   \\
     & -1   \\
\end{array} \right)n
\end{split}
\eeq
where $\varphi_5(SO(3))$ represents the spin-2 representation of $SO(3)$. For this realization, if $SO(3)\times\z_2^T$ and an arbitrary lattice rotational symmetry are preserved, all relevant perturbations listed in Sec. \ref{subsec: review of SLs} are forbidden. Even if only $SO(3)\times\z_2^T$ is preserved while all lattice symmetries are broken, the only symmetry-allowed relevant perturbations are the spatial components of the conserved current associated with the $SO(2)$ emergent symmetry, which are expected to retain the emergent order (see Appendix \ref{app: SLs}). So this realization represents a rare example of quantum critical liquid that requires only internal symmetry (but not lattice symmetry) to be stable. The magnetic state adjacent to this DSL is a spin-quadrupolar order where the Goldstond modes are at the $\Gamma$ point of the Brillouin zone. For the non-magnetic state, it is possible to have $\la n_{61}\ra\neq 0$ while all other entries of $n$ have zero expectation value. This is a spin-quadrupolar realization of the DQCP, and the only possible relevant perturbation is an $SO(5)$ singlet that breaks time reversal, which may drive the system to forming a chiral spin liquid.

Usually, a DSL is constructed by fermionic partons that have a non-interacting mean field with 4 Dirac cones, which are coupled to an emergent $U(1)$ gauge field. Below we show that the realization above cannot be constructed in this way, which may be its most interesting property.

To see it, let us consider how the Dirac fermions transform under the $SO(3)$ spin rotational symmetry. Denote the Dirac fermion operator as $\psi_i$ with $i=1,\cdots 4$, which transforms in the fundamental representation of the emergent $SU(4)$ flavor symmetry. It is known that $\bar\psi_i\psi_j-\frac{1}{4}\bar\psi\psi\delta_{ij}$, which is the fermion mass in the $SU(4)$ adjoint representation, is identified with $A_{i_1i_2}\epsilon_{j_1j_2}n_{i_1j_1}n_{i_2j_2}$, with $A$ and $\epsilon$ an anti-symmetric $6\times 6$ and $2\times 2$ real matrix, respectively \cite{Song2018, Song2018a, Zou2021}. Because under $SO(3)$ spin rotational symmetry, part of the latter operators transforms in the spin-3 representation, the former operator must also contain components in the spin-3 representation, which implies that the Dirac fermions must transform in the spin-$3/2$ representation of the $SO(3)$ symmetry, \ie all 4 flavors of Dirac fermions together form this spin-$3/2$ object.

Now suppose this state can be realized by a {\it non-interacting} parton mean field with 4 Dirac cones (coupled to an emergent $U(1)$ gauge field), the mean-field Hamiltonian of the partons must have an {\it on-site} $U(4)$ symmetry. In the presence of this $U(4)$ and time reversal symmetries, there must be at least 8 Dirac cones in the mean field. To see it, it suffices to consider one of the 4 flavors, whose mean field has on-site $U(1)$ and time reversal symmetries. To avoid the parity anomaly, there are necessarily an even number of Dirac cones. So taken 4 flavors together, there are at least 8 Dirac cones, which contradicts our starting point, \ie the mean field has only 4 Dirac cones.

The above argument shows that this realization is beyond the simplest parton mean fields. However, it is still possible to realize it if the partons are strongly interacting (even without considering their coupling to the emergent gauge field), so that at low energies 4 flavors of Dirac fermions emerge out of the strong interactions. This might be theoretically described, say, by a further parton decomposition of the partons themselves. This is possible because if besides time reversal the on-site symmetry is only $SO(3)$ but not $U(4)$, there is no anomaly, and hence no contradiction with having 4 Dirac cones while realizing these symmetries in an on-site fashion.{\footnote{One can in principle also try to implement some of these symmetries on the partons in a non-on-site fashion, but then it is challenging to have all on-site symmetries acting on the physical operators in an on-site fashion.}} It is an interesting challenge to find such a concrete construction in the future. This situation is similar to the Standard Model in particle physics: the Standard Model cannot be realized through lattice free fermions coupled to gauge fields due to fermion doubling, but it is believed to be realizable using strongly interacting fermions since all the quantum anomalies vanish \cite{Eichten1986,Narayanan1993, RANDJBARDAEMI1995, Neuberger1999, Wen2013a,  Kikukawa2017,  Wang2018a, Razamat2020}.

Finally, we give an interesting realization of SL$^{(7)}$ as a quantum critical spin-quadrupolar liquid, on a honeycomb lattice half-integer-spin system or any integer-spin system with $p6m$ symmetry. The symmetries act as follows:
\beq\label{eq: embedding SL Honeycomb}
\begin{split}
&SO(3): n\rightarrow \left(\begin{array}{cc}
     \varphi_5\left(SO(3)\right) & \\
     & I_2
\end{array} \right)n\\
&\mc{T}: n\rightarrow \left(\begin{array}{ccc}
     I_5 & &     \\
     & -1 &    \\
     &  & 1  \\
\end{array} \right)n\\
&T_1: n\rightarrow n\left(\begin{array}{ccc}
      -\frac{1}{2} & -\frac{\sqrt{3}}{2}  &   \\
      \frac{\sqrt{3}}{2} & -\frac{1}{2}   & \\
      & & 1
\end{array} \right)\\
&T_2: n\rightarrow n\left(\begin{array}{ccc}
      -\frac{1}{2} & -\frac{\sqrt{3}}{2}  &   \\
      \frac{\sqrt{3}}{2} & -\frac{1}{2}   & \\
      & & 1
\end{array} \right)\\
&C_6: n\rightarrow \left(\begin{array}{ccc}
     I_5 & &     \\
     & 1 &    \\
     &  & -1  \\
\end{array} \right)n
\left(\begin{array}{ccc}
     1 &  &  \\
      & -1&   \\  
      &  & 1 \\
\end{array} \right)\\
&M: n\rightarrow \left(\begin{array}{ccc}
     I_5 & &     \\
     & -1 &    \\
     &  & 1  \\
\end{array} \right)n
\end{split}
\eeq

The nearby phases of this SL$^{(7)}$ can be very interesting. It is possible to have $\la n_{73}\ra\neq 0$ while all other entries of $n$ have zero expectation value. This results in the spin-quadrupolar DSL (see Eq. \eqref{eq: embedding DSL 5d rep}), except that the $C_2\equiv C_6^3$ symmetry is broken, while all other symmetries (including $C_3\equiv C_6^2$) are intact. We can also view the above realization of SL$^{(7)}$ as an unnecessary phase transition in a $p31m\times O(3)^T$ symmetric DSL phase. This DSL is still stable, but the $n$ modes are at the $\pm K$ points. It is also possible to have $\la n_{13}\ra\neq 0$ while all other entries of $n$ have zero expectation value, where our choice of basis is such that this condensation pattern breaks the $SO(3)$ symmetry to $U(1)$. This results in a stable $p6m\times\z_2^T\times U(1)$ symmetric DSL that simultaneously has a spin-quadrupolar order. Again, the above realization of SL$^{(7)}$ can be regarded as an unnecessary phase transition in a $p6m\times\z_2^T\times U(1)$ symmetric DSL phase.{\footnote{Strictly speaking, the DSL states on the two sides of this SL$^{(7)}$ are slightly different, since they have different quantum anomalies if the entire emergent symmetry is taken into account. In Ref. \cite{Zou2021}, these two DSLs are denoted by SL$^{(6, 1)}$ and SL$^{(6, -1)}$, respectively. However, if we only look at the remaining exact symmetries, there is no difference between them. Furthermore, even if the entire emergent symmetry is considered, all correlation functions in these two cases are simply related by a unitary transformation (which is not a symmetry of the DSL), so practically the DSL in the two sides can be viewed as in the same phase \cite{Zou2021}. The same is true for the $p31m\times O(3)^T$ symmetric DSL.}}

\section{Stability under symmetry breaking} \label{sec: stability}

In this section we demonstrate how to use the SEP to analyse the stability of these realizations under symmetry-breaking perturbations. As a concrete example, we focus on a realization of DSL on a triangular lattice spin-1/2 system that is perturbed by spin-orbit coupling (SOC), which may be relevant to NaYbO$_2$. In Appendix \ref{app: stability}, we give a few other examples of such analysis, which may be relevant to twisted bilayer WSe$_2$, a recently realized quantum simulator for triangular lattice spin-1/2 models \cite{Wang2020, Pan2020, Zhang2020}.

Without considering the SOC, the triangular lattice spin-1/2 system has a $p6m\times O(3)^T$ symmetry, which acts on this DSL as:
\beq \label{eq: DSL standard}
\begin{split}
&T_1: n\rightarrow\left(
\begin{array}{cccc}
I_3 & & &\\ 
& 1 & &  \\
& & -1 &  \\
& & & -1  \\
\end{array}
\right)n
\left(
\begin{array}{cc}
    -\frac{1}{2} & -\frac{\sqrt{3}}{2}\\
    \frac{\sqrt{3}}{2} &-\frac{1}{2}
\end{array}
\right)\\
&T_2: n\rightarrow\left(
\begin{array}{cccc}
I_3 & & &\\ 
&-1 & & \\
& & 1 & \\
& & & -1 \\
\end{array}
\right)n
\left(
\begin{array}{cc}
    -\frac{1}{2} & -\frac{\sqrt{3}}{2}\\
    \frac{\sqrt{3}}{2} &-\frac{1}{2}
\end{array}
\right)\\
&C_6: n\rightarrow
\left(
\begin{array}{cccc}
I_3 & & & \\
& & 1 &  \\
& & & 1 \\
& -1 & & \\
\end{array}
\right)n
\left(
\begin{array}{cc}
1& \\
 &-1\\
\end{array}
\right)\\
&M: n\rightarrow
\left(
\begin{array}{cccc}
I_3 & & &\\
& & -1 & \\
&-1 & & \\
& & & 1 \\
\end{array}
\right)n\\
&O(3)^T: n\rightarrow
\left(
\begin{array}{cc}
O(3)^T & \\
& I_3  \\
\end{array}
\right)n
\end{split}
\eeq
This realization was discussed in Refs. \cite{Zhou2002, Lu2015, Iqbal2016, Hu2019, Jian2017a, Song2018, Song2018a, Zou2021}, and it is shown in Appendix \ref{subapp: pullback example DSL} that the anomaly-matching condition Eq. \eqref{eq: emergibility} is indeed satisfied. From this symmetry action, it is straightforward to check that all the relevant operators listed in Sec. \ref{subsec: review of SLs} are symmetry-forbidden, so this realization is expected to be stable if the full $p6m\times O(3)^T$ symmetry is preserved.

Recently, a quantum disordered liquid was reported in NaYbO$_2$ \cite{Zhang2018, Ranjith2019, Ding2019, Bordelon2019, Bordelon2020} (similar phenomena were reported in related materials including NaYbS$_2$ and NaYbSe$_2$ \cite{Baenitz2018, Sichelschmidt2018, Ranjith2019a, Chen2020, Dai2020}). In particular, there is evidence that this state is gapless with a low-temperature specific heat scaling as temperature squared, and that it has a critical mode located at the $\pm K$ points in the Brillouin zone, which are consistent with the above DSL realization. So it was proposed that a DSL may be realized in NaYbO$_2$. However, due to SOC, the symmetry of NaYbO$_2$ is smaller than $p6m\times O(3)^T$, and an important question is whether there is symmetry-allowed relevant perturbation that would destabilize a DSL in NaYbO$_2$.

NaYbO$_2$ is a layered material with space group symmetry $R\bar{3}m$. Restricted to a single layer, the remaining symmetries are \cite{Iaconis2017}
\beq \label{eq: symmetry NaYbO2}
T_{1, 2},\ C_6^*\equiv S_3\cdot C_6,\ M^*\equiv S_M\cdot M,\ \mc{T}
\eeq
where $S_3$ and $S_M$ act in the spin space:
\beq
\begin{split}
&S_3: 
\left(
\begin{array}{c}
S_x\\
S_y\\
S_z
\end{array}
\right)\rightarrow
\left(
\begin{array}{ccc}
-\frac{1}{2} & \frac{\sqrt{3}}{2} & \\
-\frac{\sqrt{3}}{2} & -\frac{1}{2} & \\
& & 1
\end{array}
\right)
\left(
\begin{array}{c}
S_x\\
S_y\\
S_z
\end{array}
\right)\\
&S_M: 
\left(
\begin{array}{c}
S_x\\
S_y\\
S_z
\end{array}
\right)\rightarrow
\left(
\begin{array}{ccc}
-\frac{1}{2} & \frac{\sqrt{3}}{2} & \\
\frac{\sqrt{3}}{2} & \frac{1}{2} & \\
& & -1
\end{array}
\right)
\left(
\begin{array}{c}
S_x\\
S_y\\
S_z
\end{array}
\right)
\end{split}
\eeq
with $S_{x, y, z}$ the microscopic (effective) spin-1/2 operators.

Using Eq. \eqref{eq: DSL standard}, it is straightforward to extract the actions of the remaining symmetry Eq. \eqref{eq: symmetry NaYbO2}, from which one can see that all relevant operators in Sec. \ref{subsec: review of SLs} are still symmetry-forbidden. This means that the DSL can be stably realized on NaYbO$_2$. Of course, whether NaYbO$_2$ actually realizes a DSL requires futher investigation.

\section{Discussion} \label{sec: discussion}

In this paper we have achieved two major goals: i) deriving the topological partition functions corresponding to the LSM constraints in a large class of systems relevant to the study of quantum magnetism, and ii) studying the emergibility of various Stiefel liquids (SLs) in lattice spin systems. The former has wide applicability and can be applied to constrain the emergibility of any state on the relevant lattice spin systems, and the latter paves the way to further understand the elusive strongly-interacting quantum critical states.

The SLs discussed in this paper are the simplest members of their entire family, \ie SL$^{(N)}$ with $N=5,6,7$. In fact, our results have indications on the emergibility of more complicated SLs, \ie SL$^{(N, m)}$ with $N=5,6,7$ and $m>1$ \cite{Zou2021}. Just like SL$^{(N)}$, the degrees of freedom of SL$^{(N, m>1)}$ are also characterized by an $N\times(N-4)$ matrix with orthonormal columns, and this state can be obtained by coupling together $m$ copies of SL$^{(N)}$ in a specifc way. Interestingly, SL$^{(5, m)}$ can be viewed as a $USp(2m)$ gauge theory with 2 flavors of gapless Dirac fermions, and SL$^{(6, m)}$ can be viewed as 4 flavors of gapless Dirac fermions coupled to a $U(m)$ gauge field. It is expected that for a given $N$, there is an $m_c(N)$ such that SL$^{(N, m)}$ is a CFT if and only if $m<m_c(N)$, and $m_c(N\geqslant 6)>1$. To discuss the emergibility of SL$^{(N, m>1)}$ in lattice spin systems with $p6m\times O(3)^T$ or $p4m\times O(3)^T$ symmetry, we can think that all SL$^{(N, m)}$ have the same IR anomaly as SL$^{(N)}$ if $m$ is odd, and all SL$^{(N, m)}$ have no IR anomaly if $m$ is even. Furthermore, because the degrees of freedom of SL$^{(N, m>1)}$ are represented in the same way as those in SL$^{(N)}$, a given symmetry embedding pattern for SL$^{(N)}$ is also a valid one for SL$^{(N, m>1)}$, and vice versa. So our results imply: i) For SL$^{(N, m>1)}$ with an odd $m$ and a given symmetry embedding pattern, it can emerge in a lattice spin system if and only if SL$^{(N)}$ with the same symmetry embedding pattern can emerge in this system. ii) SL$^{(N, m)}$ with an even $m$ can only emerge in lattice spin systems with a vanishing LSM anomaly, and in such a system any symmetry embedding pattern defined by Eq. \eqref{eq: homomorphism} satisfies the emergibility condition Eq. \eqref{eq: emergibility}, and is expected to describe a physical realization of SL$^{(N, m)}$ in this system.

We remark that our philosophy to study the emergibility of a quantum phase or phase transition is different from the conventional one. Our strategy is based on anomaly-matching, while the conventional one is based on explicit constructions of this phase or phase transition, often in terms of a mean field (including parton gauge mean field) or a wave function. We believe that the anomaly-based strategy captures the intrinsic essence of emergibility. After all, any mean-field construction is also a way of doing anomaly-matching in disguise, and such a construction by itself cannot rigorously prove the emergibility. On the other hand, although a microscopic wave function can guarantee the emergibility, it is generically difficult to read off the universal physics encoded in a wave function, and there is no guarantee that a proposed wave function indeed describes the quantum phase or phase transition of interest - in fact, in general there is no guarantee that such a wave function could be realized as the ground state of any local Hamiltonian. So the significance of this work is not only reflected by the specific results, but also by the fact that it demonstrates the feasibility of the anomaly-based framework of emergibility, and the fact that this framework can yield interesting results not envisioned before.

This anomaly-based framework of emergibility is established for lattice spin systems in this paper. An interesting and important open problem is to generalize the topological characterizations of LSM constraints to other systems, and apply the results to study the emergibility of other quantum phases and phase transitions. Systems of particular relevance are those in $(3+1)$-d, those with spin-orbit coupling, those with a filling constraint due to a $U(1)$ symmetry, those with long-range interactions, those with a constrained Hilbert space, fermionic systems, etc. We leave these for future work.

We have assumed that the hypothesis of emergibility is a necessary and sufficient condition of emergibility. As mentioned before, its necessity has been established, while the sufficiency is a reasonable conjecture. It is important to further justify or disprove (the sufficiency of) this hypothesis. If it is disproved, it will be extremely interesting and valuable to identify a correct necessary and sufficient condition of emergibility.

The realizations of symmetry-enriched SLs discussed here give useful guidance for the search of these states in real materials and models. Because the ordering patterns of the nearby phases of the SLs can be read off from the implementations of the microsopic symmetries, a practical strategy is to identify materials and models that host these ordered states, and to explore the vicinity of the phase diagram in order to find SLs. A smoking-gun signature of the SLs is their large emergent symmetries, which can manifest themselves in a set of singular correlation functions with the same critical exponent.

Our results rule out many realizations of symmetry-enriched SLs because their IR anomalies do not match with the LSM anomalies. However, variants of these realizations are still possible if there is a sector of anomalous topological order in the system (in additional to the gapless degrees of freedom from the SLs), whose anomaly precisely compensates the mismatch between the IR anomaly of the SL and the LSM anomaly. Although it may be unnatural in a realistic material or model, this is a valid theoretical possibility. It may be interesting to study such realizations in the future.

Finally, we further comment on our characterization of the symmetry enrichment pattern of a quantum critical state with a given emergent order. Our characterization is based on how the microscopic symmetries act on the the local, low-energy degrees of freedom. As reviewed in Introduction, in the literature the symmetry enrichment pattern of an emergent gauge theory is usually specified by how the symmetries act on various ``fractionalized degrees of freedom", represented by gauge non-invariant operators \cite{wen2004quantum}. This usual approach is appropriate for emergent gauge theories with well-defined fractionalized quasi-particles, where symmetry fractionalization on these fractionalized quasi-particles can be sharply defined based on symmetry localization \cite{Essin2012}. However, the quantum critical states discussed here are not expected to have any well-defined quasiparticle, and all degrees of freedom are strongly coupled, which makes the notion of symmetry localization ill-defined. So it is more appropriate to directly characterize these states using symmetry actions on local operators. More formally, the former type of theories have emergent higher-form symmetries, and symmetry actions on fractionalized quasi-particles can be viewed as the interplay between the ordinary symmetries and higher-form symmetries, captured by, \eg topological terms involving both types of symmetries (\eg see Ref. \cite{Hsin2019}). The critical states discussed here are believed to have no emergent higher-form symmetry, and no such topological term exists. So it is appropriate to directly discuss the symmetry actions on local operators.

On the other hand, we also note that even for a quantum critical state with a given emergent order and given symmetry actions on local, low-energy degrees of freedom, there may be multiple different symmetry-enriched quantum critical states that are distinguished by the symmetry actions on some {\it non-local} and/or {\it gapped} degrees of freedom, which may manifest themselves by distinct boundary critical behavior \cite{Keselman2015,Scaffidi2017, Jiang2017,Verresen2017, Parker2018,Verresen2019,Thorgren2020,Ma2021}. A detailed study of this phenomenon is left for future work.

\begin{acknowledgements}

We thank Chenjie Wang for helpful discussion. WY acknowledges  supports  from  the  Natural Sciences and Engineering Research Council of Canada(NSERC)  through  Discovery  Grants. Research at Perimeter Institute is supported in part by the Government of Canada through the Department of Innovation, Science and Industry Canada and by the Province of Ontario through the Ministry of Colleges and Universities.

\end{acknowledgements}

\onecolumngrid
\appendix

\section{Review of mathematical background} \label{app: math review}

In this appendix, we briefly review various mathematical concepts used in this paper. We also define some new concepts that will be useful in the paper.

\subsection{Group cohomology} \label{subapp: group cohomology}  

In this sub-appendix, we provide a brief review of the fundamentals of group cohomology. See Refs. \cite{brown2012cohomology,Chen2013,weibel1995introduction} for more details.

Given a (discrete) group $G$, let $X$ be an Abelian group equipped with a $G$ action $\rho: G \times X \rightarrow X$, which is compatible with group multiplication, \ie for any $g,h\in G$, $e$ the identity element in $G$ and $a,b \in X$, we have
\beq\label{eq: Group Action}
\begin{split}
\text{Identity of Group Action}:~&\rho_e(a) = a,\\
\text{Compatibility of Group Action}:~&\rho_g \left( \rho_h(a) \right) = \rho_{gh}(a),
\\
\text{Compatibility of Module}:~&\rho_g(ab) = \rho_g(a) \rho_g(b).
\end{split}
\eeq
We leave the group multiplication symbols implicit in the above. Such an Abelian group $X$ with $G$ action $\rho$ is called a $G$-module, denoted by $X_\rho$. In this paper, we will mainly consider three different cases of $X$, \ie $\mathbb{Z}_2$, $U(1)$ and $\mathbb{Z}$. In particular, when $X=\mathbb{Z}_2$, the action $\rho_g$ is always trivial for any $g\in G$. When $X=U(1)$ ($X=\mathbb{Z}$), the action $\rho_g$ is either trivial or complex conjugation (multiplication by $-1$), \ie a $\mathbb{Z}_2$ action. Therefore, $\rho$ can be defined by a homomorphism $\tilde \rho: G\rightarrow \mathbb{Z}_2$, and whether $\tilde \rho(g)$ equals +1 or $-1$ determines whether the action of $\rho_g$ on $U(1)$ and $\z$ is trivial or non-trivial.

Let $\omega(g_1, \dots,g_n)\in X$ be a function of $n$ group elements with $g_i \in G$ for $i=1,\dots,n$. Such a function is called an $n$-cochain, and the set of all $n$-cochains is denoted by $C^n(G, X_\rho)$. They naturally form an Abelian group under multiplication,
\begin{equation}\label{eq: cochain group multiplication}
  (\omega\cdot\omega')(g_1, \dots, g_n)=\omega(g_1, \dots, g_n)\omega'(g_1, \dots, g_n),
\end{equation}
and the identity element is the trivial cochain $\omega(g_1,\dots,g_n)=1$ for every $(g_1,\dots,g_n)$, where $1$ is the identity element in $X$.

We now define the coboundary map $d: C^n(G, X_\rho) \rightarrow C^{n+1}(G, X_\rho)$ acting on cochains to be
\begin{equation}\label{eq: master derivative}
(d \omega) (g_1,\dots,g_{n+1})= \rho_{g_1}\left(\omega(g_2,\dots,g_{n+1})\right)  \prod_{j=1}^n \left( \omega(g_1,\dots,g_{j-1},g_jg_{j+1},g_{j+2},\dots,g_{n+1})\right)^{(-1)^j}\left(\omega(g_1,\dots,g_{n})\right)^{(-1)^{n+1}}
.
\end{equation}
One can directly verify that $d (d \omega)=1$ for any $\omega \in C^n(G, X_\rho)$, where $1$ denotes the trivial cochain in $C^{n+2}(G, X_\rho)$. With the coboundary map, we next define $\omega\in C^n(G, X_\rho)$ to be an $n$-cocycle if it satisfies the condition $d\omega=1$, and all $n$-cocycles naturally form an Abelian group
\begin{equation}
Z^n(G, X_\rho) = \text{ker}[d: C^n(G, X_\rho) \rightarrow C^{n+1}(G, X_\rho)]   = \{ \, \omega\in C^n(G, X_\rho) \,\, | \,\, d\omega=1 \, \}.
\end{equation}
We also define $\omega\in C^n(G, X_\rho)$ to be an $n$-coboundary if it satisfies the condition $\omega= d \mu $ for some $(n-1)$-cochain $\mu \in C^{n-1}(G, X_\rho)$, and all $n$-coboundaries naturally form an Abelian group
\begin{equation}
B^n(G, X_\rho)  = \text{im}[ d: C^{n-1}(G, X_\rho) \rightarrow C^{n}(G, X_\rho) ] =\{ \, \omega\in C^n(G, X_\rho) \,\, | \,\, \exists \mu \in C^{n-1}(G, X_\rho) : \omega = d\mu \, \}
.
\end{equation}
Clearly, $B^n(G, X_\rho) \subseteq Z^n(G, X_\rho) \subseteq C^n(G, X_\rho)$, and we define the $n$-th group cohomology of $G$ to be the quotient group
\begin{equation}
H^n(G, X_\rho)=\frac{Z^n(G, X_\rho)}{B^n(G, X_\rho)}.
\end{equation}
In other words, $H^n(G, X_\rho)$ collects the equivalence classes of $n$-cocycles, where two $n$-cocycles are considered equivalent if they differ by an $n$-coboundary.

It is instructive to look at the lowest cohomology groups. Let us first consider $H^1(G, X_\rho)$:
\begin{equation}
	\begin{split}
Z^1(G, X_\rho) &= \{\, \omega \, \,| \, \, \omega(g_1)\rho_{g_1}\left(\omega(g_2)\right)=\omega(g_1g_2) \,\} \\
B^1(G, X_\rho) &= \{\, \omega \,\, | \,\, \omega(g)=\rho_g(\mu)\mu^{-1} \, \}
.
\end{split}
\end{equation}
If the $G$-action on $X$ is trivial, then $B^1(G, X_\rho) = \{ 1 \}$ and $Z^1(G, X_\rho)$ consists of group homomorphisms from $G$ to $X$, which, in particular, map elements in the same conjugacy class to the same image, \ie
\beq\label{eq: H1 relation}
\omega(g_2^{-1} g_1 g_2) = \omega(g_1).
\eeq
for any $g_{1,2}\in G$.

For the second cohomology, we have
\beq\label{eq: second cohomology}
\begin{split}
& Z^2(G, X_\rho)= \{\, \omega \, \,| \, \,
\rho_{g_1}\left(\omega(g_2,g_3)\right)\omega(g_1,g_2g_3) = \omega(g_1,g_2)\omega(g_1g_2,g_3) \,\}\\
& B^2(G, X_\rho)= \{\, \omega \, \,| \, \,
\omega(g_1,g_2) =\rho_{g_1}\left(\mu(g_2)\right)\left(\mu(g_1g_2)\right)^{-1} \mu(g_1) \, \}.
\end{split}
\eeq
In particular, $H^2(G, U(1)_\rho)$ classifies all inequivalent complex projective representations of $G$, while $H^2(G, \z_2)$ classifies all inequivalent real orthogonal projective representations of $G$, which will be most useful throughout the paper.

\subsection{Maps of group Cohomology}\label{subapp: map}

In this sub-appendix, we review various maps of group cohomology, which will be used throughout the paper.

The first map we consider is the pullback of group cohomology. Consider a map between two groups $\varphi: G\rightarrow H$ compatible with their respective group action $\rho_G$ and $\rho_H$ on $X$, in the sense that $\rho_{\varphi(g)}(a)=\rho_g(a)$ for any $a\in X$ and any $g\in G$ or, in the case of $X=U(1),\mb{Z}$, $\tilde \rho_H\circ \varphi = \tilde \rho_G$. Given such a map, we can define the pullback from $H^n(H, X_\rho)$ to $H^n(G, X_\rho)$, which can be defined on the representative cochain $\omega\in C^n(H, X_\rho)$ as follows
\beq
(\varphi^*(\omega))(g_1,\dots,g_n) \equiv \omega(\varphi(g_1),\dots,\varphi(g_n)).
\eeq
It is straightforward to check that it maps cocycles to cocycles, and coboundaries to coboundaries, so it gives a well-defined map from $H^n(H, X_\rho)$ to $H^n(G, X_\rho)$,
\beq
\varphi^*: H^n(H, X_\rho)\rightarrow H^n(G, X_\rho).
\eeq

The second map we consider is the map of group cohomology induced by a map of $G$-modules $i:X\rightarrow Y$. Here $i$ is any map from $G$-module $X$ to $G$-module $Y$ that preserves the action of $G$, \ie for any $a\in X$ and $g\in G$ we have $\rho_g(i(a)) = i(\rho_g(a))$. Then for any $n$-cochain $\omega(g_1,\dots,g_n)\in C^n(G, X_\rho)$, we can map it to another $n$-cochain $\tilde i(\omega)$ such that 
\beq
(\tilde i(\omega))(g_1,\dots,g_n) \equiv i(\omega(g_1, \dots, g_n)).
\eeq
It is straightforward to check that it maps cocycles to cocycles, and coboundaries to coboundaries, so it gives a well-defined map from $H^n(G, X_\rho)$ to $H^n(G, Y_\rho)$,
\beq \label{eq: inclusion map}
\tilde i: H^n(G, X_\rho)\rightarrow H^n(G, Y_\rho).
\eeq
We will frequently use this map to convert cohomology elements in $H^n(G, \z_2)$ to elements in $H^n(G, U(1)_\rho)$, induced by the inclusion $i$ of $\z_2=\{\pm 1\}$ into $U(1)$. Note that the representative cochains $\omega$ and $\tilde i(\omega)$ as a function from $G^n$ to $\z_2$ and $U(1)$ are manifestly the same, but a function representing a nontrivial element in $H^n(G, \z_2)$ can represent a trivial element in $H^n(G, U(1)_\rho)$, because the module $U(1)_\rho$ in general yields more coboundaries compared to the module $\z_2$. We also consider the map of group cohomology $\tilde p$ induced by the projection $p$ of $\z$ onto $\z_2=\{0, 1\}$

The third map which will be useful in the analysis of anomaly/anomaly-matching is the Bockstein homomorphism \cite{kirk2001lecture,hatcher2002algebraic}. Consider a short exact sequence of $G$-modules,
\beq
\begin{tikzcd}
1 \arrow{r} & X  \arrow{r}{i} & Z  \arrow{r}{p} & Y  \arrow{r} & 1 
\end{tikzcd}
\eeq
with the map $i: X\rightarrow Z$ injective, the map $p: Z\rightarrow Y$ surjective and $\text{ker}[p] = \text{im}[i]$. There is a long exact sequence of the cohomology of $G$ associated to this short exact sequence, such that $\text{ker} = \text{im}$ at any place of the following chain of maps,
\beq\label{eq: long exact sequence}
\begin{tikzcd}
\dots \arrow{r} & H^n(G, X_\rho)  \arrow{r}{\tilde i} & H^n(G, Z_\rho)  \arrow{r}{\tilde p} & H^n(G, Y_\rho)  \arrow{r}{\beta} & H^{n+1}(G, X_\rho) \arrow{r}{\tilde i} &\dots 
\end{tikzcd}
\eeq
The map $\beta$, called the Bockstein homomorphism, is defined as follows. For $[\omega]\in H^n(G, Y_\rho)$ and a representative cochain $\omega$, choose a function $\tilde \omega$ from $G^n$ to $Z_\rho$ such that
\beq
p((\tilde \omega)(g_1, \dots, g_n)) = \omega(g_1, \dots, g_n). 
\eeq
Because $p$ is surjective, $\tilde \omega$ always exists. For any choice of $\tilde \omega$, it is straightforward to see that $p((d\tilde \omega)(g_1,\dots, g_n))=0$ and as a result $(d\tilde \omega)(g_1,\dots,g_n)$ is in the image of $i$. Then we define this (unique) preimage to be the image of $\omega$ under the Bockstein homomorphism, \ie we have
\beq\label{eq: Bochstein definition}
\beta(\omega)\equiv {\tilde i}^{-1}(d\tilde \omega).
\eeq

There are several short exact sequences that we should pay special attention to. The first one is \beq\label{eq: short exact sequence 1}
\begin{tikzcd}
1 \arrow{r} & \z  \arrow{r}{\times 2\pi } & \mb{R}  \arrow{r}{{\rm mod}~2\pi} & U(1)  \arrow{r} & 1.
\end{tikzcd}
\eeq
When $H^n(G, \z_\rho)$ and $H^{n+1}(G, \z_\rho)$ contain torsion elements only, $H^n(G, \mb{R}_\rho) = H^{n+1}(G, \mb{R}_\rho) = 0$, and from Eq.
\eqref{eq: long exact sequence} we see that the associated Bockstein homomorphism $\beta: H^n(G, U(1)_\rho) \rightarrow H^{n+1}(G, \z_\rho)$ is an isomorphism. For most discussions in this paper, especially when $G$ is a finite group (and $n>0$), this Bockstein homormorphism is indeed an isomorphism, and only in the example in Appendix \ref{subapp: pullback example SU21} it is not, on which we will comment explicitly.

The second short exact sequence that is important to us is 
\beq\label{eq: short exact sequence 2}
\begin{tikzcd}
1 \arrow{r} & \z  \arrow{r}{\times 2 } & \z  \arrow{r}{{\rm mod}~2} & \mathbb{Z}_2  \arrow{r} & 1
\end{tikzcd}
\eeq
For $x\in H^n(G, \z_2)$, the Bockstein homomorphism $\beta_2$ is sometimes written as
\beq
\beta_2(x) = \frac{1}{2}dx.
\eeq
When $H^n(G,\z_\rho) = (\z_2)^k$ with some non-negative integer $k$, $\tilde i$ maps $H^n(G, \z_\rho)$ to 0 in $H^n(G, \z_\rho)$. Therefore, from Eq. \eqref{eq: long exact sequence}, we see that $\tilde p$ is injective while $\beta_2$ is surjective. 

We can also consider the natural map from Eq. \eqref{eq: short exact sequence 2} to Eq. \eqref{eq: short exact sequence 1}, which is inclusion for every factor as follows,
\beq
\begin{tikzcd}
1 \arrow{r} & \z  \arrow{r}{\times 2\pi }   & \mb{R}  \arrow{r}{{\rm mod}~2\pi}  & U(1)  \arrow{r} & 1 \\
1 \arrow{r} & \z  \arrow{r}{\times 2} \arrow{u}{\cong}& \z  \arrow{r}{{\rm mod}~2} \arrow{u}{\times \pi}& \mathbb{Z}_2  \arrow{r} \arrow{u}{i} & 1
\end{tikzcd}
\eeq
where $i$ is again the inclusion of $\z_2=\{\pm 1\}$ into $U(1)$. As a result, we have a map of long exact sequences,
\beq
\begin{tikzcd}
\dots \arrow{r}  & H^n(G, \mathbb{Z}_2) \arrow{d}{\tilde i} \arrow{r}{\beta_2} & H^{n+1}(G, \mathbb{Z}_\rho) \ar[d, "\cong"]\arrow{r} & H^{n+1}(G, \z_\rho) \ar[d] \ar[r] & \dots \\
\dots \arrow{r} &  H^n(G, U(1)_\rho)   \arrow{r}{\beta} &   H^{n+1}(G, \mathbb{Z}_\rho)\arrow{r} & H^{n+1}(G, \mb{R}_\rho) \ar[r] & \dots
\end{tikzcd}
\eeq
Here we distinguish the first Bockstein homomorphism by denoting it by $\beta_2$, and $\tilde i$ denotes the map induced by $i:\mb{Z}_2\rightarrow U(1)$ specifically. Hence, we have $\beta_2 = \beta\circ \tilde i$. When $H^n(G,\z_\rho) = (\z_2)^k$, since $\beta$ is an isomorphism while $\beta_2$ is surjective , $\tilde i$ is surjective as well. It suggests that in this case every element $\Omega\in H^n(G, U(1)_\rho)$ can be written as $\tilde i(L)$ or $e^{i \pi L}$ for some $L\in H^n(G, \z_2)$. In fact, every element $\Omega\in H^n(G, U(1)_\rho)$ whose inverse is itself can be written as $e^{i \pi L}$ for some $L\in H^n(G, \z_2)$. We use this fact throughout the paper.

\subsection{Cup product and $\z_2$ cohomology ring}\label{subapp: product}

In this sub-appendix, we review cup product and $\z_2$ cohomology ring in group cohomology that we will use \cite{brown2012cohomology,weibel1995introduction,hatcher2002algebraic}. We will specialize to the case where the module is $\mb{Z}_2 = \{0, 1\}$ and the group action $\rho$ is trivial. The special feature of $\mb{Z}_2$, countrary to e.g. $U(1)$, is the fact that $\mb{Z}_2$ is a ring. Note that here addition in $\mb{Z}_2$ is regarded as the group multiplication used in Eq. \eqref{eq: cochain group multiplication}, and we will use $+$ to denote this addition in this sub-appendix. There is another ring multiplication that will be important later, which should be distinguished with the group multiplication used in Appendix \ref{subapp: group cohomology}.

The cross product is defined as the following operation on group cohomology, 
\beq\label{eq: cross product def}
\times:~~~H^m(G, \mb{Z}_2) \otimes H^n(H, \mb{Z}_2) \rightarrow H^{m+n}(G\times H, \mb{Z}_2),
\eeq
such that for $x\in H^m(G, \mb{Z}_2)$ and $y\in H^n(H, \mb{Z}_2)$, after choosing cochain representatives $\tilde x$ and $\tilde y$, we have the cochain representative of $x\times y$ as follows,
\beq
\widetilde{x\times y}\left((g_{1}, h_{1}),\dots, (g_{m+n}, h_{m+n}) \right)\equiv \tilde{x}(g_{1},\dots, g_{m})\cdot \tilde{y}(h_{m+1},\dots,h_{m+n}),
\eeq
where $g_i\in G, h_i\in H, i=1,\dots,m+n$.

The cup product is defined as the following operation on group cohomology, 
\beq
\cup:~~~
\begin{tikzcd}
H^m(G, \z_2) \otimes H^n(G, \z_2)  \arrow{r}{\times} & H^{m+n}(G\times G, \z_2)  \arrow{r}{\Delta^*} & H^{m+n}(G, \z_2),  
\end{tikzcd}
\eeq
where $\Delta: G\rightarrow G\times G$ is the diagonal embedding $g\rightarrow (g, g)$. We can also define it at the cochain level, \ie for $x\in H^m(G, \mb{Z}_2)$ and $y\in H^n(G, \mb{Z}_2)$, after choosing cochain representatives $\tilde x$ and $\tilde y$, we have the cochain representative of $x\cup y$ as follows,
\beq
\widetilde{x\cup y}\left(g_1, \dots, g_{m+n}\right)
\equiv \tilde{x}(g_1,\dots, g_m)\cdot \tilde{y}(g_{m+1},\dots,g_{m+n}).
\eeq
We can prove that cup product is commutative, \ie $x\cup y = y\cup x$.

The cup product $\cup$ gives a multiplication on the direct sum of cohomology groups
\beq
H^*(G, \mathbb{Z}_2)=\bigoplus_{k\in \mb{N}}H^k(G, \mathbb{Z}_2),
\eeq
Together with the fact that $1\cup x = x$ where $x$ is any element in $H^*(G, \z_2)$ and $1$ here denotes the nontrivial element in $H^0(G, \mathbb{Z}_2) = \z_2$, the cup product $\cup$ turns $H^*(G, \mb{Z}_2)$ into a ring that is naturally $\mb{N}$ graded and commutative. We call this ring {\it the $\mb{Z}_2$ cohomology ring} of $G$.

Moreover, $H^*(G, \z_2)$ is also a $\z_2$ algebra, and therefore can be presented by generators and relations, \ie all elements in $H^n(G, \mathbb{Z}_2)$ for any $n>0$ are either generators or can be expressed as sum of (cup) products of generators, and generators satisfy some relations which dictate that certain sums of (cup) products actually yield a trivial cohomology element. We will call a generator in $H^n(G, \mb{Z}_2)$ a degree $n$ generator. Hence, the $\z_2$ cohomology ring of $G$, \ie $H^*(G, \z_2)$, can be written as follows,
\beq
H^*(G, \mb{Z}_2)=\mb{Z}_2[A_\bullet, \cdots, B_\bullet, \cdots]/{\rm relations}
\eeq
with $A_\bullet$($B_\bullet$) generators in degree 1(2) belonging to $H^1(G, \mb{Z}_2)$($H^2(G, \mb{Z}_2)$), and $\bullet$ the name of the generator. Together with potential higher order generators, \eg $C_\bullet$ in degree 3, they are supposed to form a complete list of generators of the entire cohomology ring.

For example, the $\mb{Z}_2$ cohomology ring of the group $\mb{Z}_2$ is 
\beq
\mathbb{Z}_2[A_c],
\eeq
where $A_c$ is the nontrivial element in $H^1(\z_2, \z_2)$ and can be thought of as nothing but the gauge field of \eg $C_2$ rotation when pulled back to the spacetime manifold. In other words, for the $\z_2$ cohomology ring of $\z_2$, there is a single generator $A_c$ in degree 1 and no relation. Accordingly, we can see that $H^n(G, \z_2) = \z_2$ for $n\in \mb{N}$, with the nontrivial element given by $A_c^n\equiv A_c\cup A_c\cup\cdots\cup A_c$, the cup product of $n$ $A_c$'s.

As another example, the $\mb{Z}_2$ cohomology ring of $\mb{Z}_4$ is 
\beq
\mathbb{Z}_2[A_{c}, B_{c^2}]/\left(A_{c}^2=0\right),
\eeq
where here $A_c$ is the nontrivial element in $H^1(\z_4, \z_2)$ and can be thought of as (the $\z_2$ reduction of) the gauge field of $C_4$ rotation when pulled back to the spacetime manifold, while $B_{c^2}$ is the nontrivial element in $H^2(\z_4, \z_2)$, which corresponds to the fractionalization pattern of the $\z_4$ symmetry on an $SO(3)$ monopole, with $C_4^4 = C_2^2 = -1$. That is to say, for the $\z_2$ cohomology ring of $\z_4$, there are two generators at degree 1 and 2 respectively, with the square of degree 1 generator $A_{c}$ equal to 0. Then we see that $H^n(\z_4, \z_2) = \z_2$ for $n\in \mb{N}$ as well, and the nontrivial element is given by $B_{c^2}^k$ when $n=2k$ and $A_c B_{c^2}^k$ when $n=2k+1$ ($k\in \mb{N}$). Note that for both $G=\z_2$ and $G=\z_4$, $H^n(G, \mb{Z}_2)=\mb{Z}_2$ for any $n\in \mb{N}$, but the $\z_2$ cohomology rings give more information that differentiates the two groups.

For any two groups $G_1$ and $G_2$, we have $H^*(G_1\times G_2, \z_2) = H^*(G_1, \z_2) \otimes H^*(G_2, \z_2)$. Moreover, if $G$ can be written as $G_1 \rtimes G_2$, where $G_1$ is a normal subgroup of $G$ and $G_2$ acts on $G_1$ by conjugation, the calculation of the $\z_2$ cohomology ring of $G$ can be achieved with the help of 
Lyndon–Hochschild–Serre spectral sequence \cite{brown2012cohomology,weibel1995introduction} that also connects $H^*(G, \z_2)$ with $H^*(G_1, \z_2)$ and $H^*(G_2, \z_2)$ which we possibly already know. The general strategy for calculating the $\z_2$ cohomology ring of wallpaper groups $G$ is as follows:
\begin{enumerate}
    \item Identify all generators and elements in the $\z_2$ cohomology ring of $G$ through Lyndon–Hochschild–Serre spectral sequence.
    \item If there is no relation, given generators $A_1, A_2, B_1, C_1\dots$, all elements of the form $A_1^mA_2^nB_1^pC_1^q\dots,m,n,p,q\dots\in \mathbb{N}$ will appear explicitly as different elements in the $\z_2$ cohomology ring. Therefore, when \eg some $A_1^2$ is missing, we should identify some relation that relates $A_1^2$ to elements that appear explicitly, which can be achieved through pulling back to (enough) subgroups of $G$.
\end{enumerate}

To illustrate the strategy, in the following we calculate the $\z_2$ cohomology ring of three space groups in one or two spatial dimensions, including the generators and relations.

\begin{itemize}
\item $p1$: $\z_2[x]/(x^2=0)$. 

Consider the line group $p1$, generated by a single translation $T$. The cohomology of $p1$ is $H^1(p1, \z_2)\cong \z_2$ while $H^n(p1, \z_2)\cong 0,~~n>1$. Denote the nontrivial element in $H^1(p1, \z_2)$ as $x$, which corresponds to (the $\z_2$ reduction of) the gauge field of translation, the $\z_2$ cohomology ring of $p1$ is given by $\z_2[x]/(x^2=0)$.

\item $p1m$: $\z_2[x, m]/(x^2=xm)$.

Consider the line group $p1m$, generated by translation $T$ and mirror symmetry $M$ with relation $MTM = T^{-1}$. The cohomology of $p1m$ is $H^n(p1m, \z_2)\cong \z_2^2,~~n\geqslant 1$. Since $p1m\cong \z\rtimes\z_2$, with the help of the corresponding Serre spectral sequence, we know that $H^n(p1m, \z_2),~n\geqslant 1$ is spanned by 2 elements, \ie $m^n$ and $m^{n-1} x$. where $m,x\in H^1(p1m, \z_2)$ are two generators that correspond to the gauge field of mirror symmetry and (the $\z_2$ reduction of) the gauge field of translation, respectively. 

The next thing to do is to identify $x^2$, which does not explicitly appear as elements of the $\z_2$ cohomology ring. Write $x^2$ as $a_1xm+a_2m^2,a_{1,2}\in \{0,1\}$. By restricting to $\z_2$ subgroup generated by $M$, whose $\z_2$ cohomology ring can be denoted by $\z_2[m']$, we see that $x$ becomes 0 while $m$ becomes $m'$, and thus $a_2=0$. By restricting to the $\z_2$ subgroup generated by $TM$, whose $\z_2$ cohomology ring can be denoted by $\z_2[m'']$, we see that both $x$ and $m$ become $m''$, and thus $a_1=1$. Therefore, we have $x^2=xm$.

Therefore, the $\z_2$ cohomology ring of $p1m$ is $\z_2[x, m]/(x^2=xm)$.

\item $cm$: $\z_2[A_{x+y}, A_m, B_{xy}]/(A_{x+y}A_m=0, A_{x+y}^2=0, B_{xy} A_{x+y} = 0, B_{xy}^2 = 0)$.

Consider 2$d$ wallpaper group $cm$, generated by two translation symmetries $T_1,T_2$ as well as mirror symmetry $M$ that interchanges the two translations, \ie $M T_1 M=T_2$ and $M T_2 M=T_1$. The cohomology of $cm$ is $H^n(cm, \z_2)\cong (\z_2)^2,~~n\geqslant 1$. Since $cm\cong (\z\times\z)\rtimes \z_2$, with the help of the corresponding Serre spectral sequence, we know that $H^1(cm, \z_2)$ is spanned by $A_{x+y}$ and $A_m$, while $H^n(cm, \z_2), n\geqslant 2$ is spanned by $B_{xy}A_m^{n-2}$ and $A_m^n$. Here $A_m, A_{x+y}\in H^1(cm, \z_2)$ correspond to the gauge field of mirror symmetry and (the $\z_2$ reduction of) the sum of gauge fields of $T_1$ and $T_2$, respectively. Note that since $T_1$ and $T_2$ map to each other under conjugation by $M$, the gauge field of the two translations $x$ and $y$ individually is not invariant under conjugation by $M$, yet their sum that we denote by $A_{x+y}$ is invariant under conjugation by $M$, which is a necessary condition for it to be a cohomology element, as required by Eq. \eqref{eq: H1 relation}. To conform to the notation, we also denote the gauge field of mirror symmetry by $A_m$ when considering wallpaper groups. Moreover, there is an extra degree-2 generator $B_{xy}$, \ie an element belonging to $H^2(cm, \z_2)$ that cannot be written as sum of cup product of elements in $H^1(cm, \z_2)$. The name $xy$ comes from the fact that its restriction to subgroup $p1$ generated by $T_1,T_2$ is $A_xA_y$ (see Appendix \ref{app: cohomology ring}). 

To identify the relations, we note that there are now 4 missing elements: $A_{x+y}A_m, A_{x+y}^2, B_{xy}A_{x+y}, B_{xy}^2$. By restricting to the subgroup $p1$ generated by $T_1,T_2$ as well as the subgroup $\z_2$ generated by $M$, we see that $A_{x+y}A_m = A_{x+y}^2 = 0$. By restricting to the subgroup $pm$ generated by $T_1 T_2^{-1},T_1 T_2,M$, we see that $B_{xy}A_{x+y}=0$ as well as $B_{xy}^2=0$. Note that the pullback of $A_{x+y}$, $A_m$ and $B_{xy}$ to the subgroup $pm$ is 0, $A_m$ and $A_y A_m$, respectively.

Therefore, the $\z_2$ cohomology ring of $cm$ is 
$$\z_2[A_{x+y}, A_m, B_{xy}]/\left(A_{x+y}A_m=0, A_{x+y}^2=0, B_{xy} A_{x+y} = 0, B_{xy}^2 = 0\right)$$

\end{itemize}

\subsection{$\mc{SQ}^1$}\label{subapp: SQ1}

In this sub-appendix, we define a new map we call $\mathcal{SQ}^1$, reminiscent of $Sq^1$ in regular Steenrod algebra, as follows
\beq
\begin{tikzcd}
\mathcal{SQ}^1:H^n(G, \z_2)  \arrow{r}{\tilde i} & H^{n}(G, U(1)_\rho)  \arrow{r}{\beta} & H^{n+1}(G, \z_\rho)  \arrow{r}{\tilde p} & H^{n+1}(G, \z_2),  
\end{tikzcd}~~~\mathcal{SQ}^1 \equiv \tilde p \circ \beta \circ \tilde i,
\eeq
where $\tilde i$ and $\tilde p$ are the map of group cohomology induced by the homomorphism of modules $i: \mb{Z}_2\rightarrow U(1)$ and $p: \mb{Z}\rightarrow \mb{Z}_2$, and $\beta$ is the Bockstein homomorphism associated with the short exact sequence $1\rightarrow\mb{Z}\rightarrow\mb{R}\rightarrow U(1)\rightarrow 1$. Note that $\beta\circ \tilde i$ is the Bockstein homomorphism $\beta_2$ associated with the short exact sequence $1\rightarrow\mb{Z}\rightarrow\mb{Z}\rightarrow \mb{Z}_2\rightarrow 1$, and therefore when the action $\rho$ is trivial, $\mc{SQ}^1$ is exactly $Sq^1$ in regular Steenrod algebra.

Moreover, $\mc{SQ}^1$ is related to $Sq^1$ via the following simple fact
\begin{lemma}\label{lemma: main lemma SQ}
For $x\in H^n(G, \mathbb{Z}_2)$, we have 
\beq
\mc{SQ}^1(x) = \mc{SQ}^1(1)\cup x + Sq^1(x).
\eeq
\end{lemma}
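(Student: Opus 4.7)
The plan is to work at the cochain level and track how the twisted Bockstein $\beta$ (built from $1\to\z\to\mathbb{R}\to U(1)\to 1$ with the nontrivial $G$-action $\rho$) differs from the ordinary Bockstein $\beta_2$ (built from $1\to\z\to\z\to\z_2\to 1$ with trivial action). By the commuting square in Appendix \ref{subapp: map}, one has $Sq^1 = \beta_2 = \beta\circ\tilde i$ when $\rho$ is trivial, so any deviation of $\mc{SQ}^1$ from $Sq^1$ must come from the twisting.

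The first step is to compute $\mc{SQ}^1(1)$ explicitly. The class $1\in H^0(G,\z_2)$ is represented by the constant cochain $1$; its image $\tilde i(1)$ is the constant $U(1)$-cochain $-1 = e^{i\pi}$, which lifts to the $\mathbb{R}$-valued cochain $\pi$. Applying the twisted coboundary $d_\rho$ gives $(d_\rho \pi)(g) = \rho_g(\pi) - \pi = (\tilde\rho(g)-1)\pi$, where $\tilde\rho: G\to\{\pm 1\}$ encodes the action. Dividing by $2\pi$ and reducing mod $2$, the result is the homomorphism $w:G\to\z_2$ with $(-1)^{w(g)} = \tilde\rho(g)$. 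So $\mc{SQ}^1(1) = w\in H^1(G,\z_2)$; concretely, this is the class that encodes which symmetries act as orientation reversal.

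Next I would do the same computation for a general $x\in H^n(G,\z_2)$, represented by a $\z_2$-valued cocycle which I lift to a $\z$-valued cochain (still called $x$). Then $\tilde i(x)$ is represented by the $\mathbb{R}$-valued cochain $\pi x$, and a direct computation using the definition of the twisted coboundary in Eq.\eqref{eq: master derivative} gives
\begin{equation}
(d_\rho(\pi x))(g_1,\ldots,g_{n+1}) = \pi\,(dx)(g_1,\ldots,g_{n+1}) + \pi\,(\tilde\rho(g_1)-1)\,x(g_2,\ldots,g_{n+1}),
\end{equation}
where $dx$ is the untwisted coboundary. Dividing by $2\pi$ and reducing mod $2$ yields $\tfrac{1}{2}dx + w(g_1)\,x(g_2,\ldots,g_{n+1})$. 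The first term is by definition $Sq^1(x)$, and the second term is precisely the cochain-level cup product $(w\cup x)(g_1,\ldots,g_{n+1})$ under the convention of Appendix \ref{subapp: product}. Combining with the identification $w = \mc{SQ}^1(1)$ from the previous step gives $\mc{SQ}^1(x) = \mc{SQ}^1(1)\cup x + Sq^1(x)$, as claimed.

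The main thing to be careful about is bookkeeping: one must check that the chosen lift of $x$ to $\z$ is legal (any lift works since the answer lands in cohomology), that the $\mathbb{R}$-lift of $\tilde i(x)$ is indeed $\pi x$ rather than something shifted, and that the sign $(\tilde\rho(g_1)-1)/2$ reduces to $w(g_1)$ mod $2$ rather than $-w(g_1)$. Once these sign and normalization conventions are pinned down, the identity drops out of a single line of cochain arithmetic; no spectral-sequence machinery is required. The hardest conceptual point is recognizing that the discrepancy between the twisted and untwisted Bocksteins is itself a cup product with the universal class $w=\mc{SQ}^1(1)$, which is what makes the formula so clean.
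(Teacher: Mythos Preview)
Your proof is correct and takes essentially the same approach as the paper: both split the twisted coboundary of a $\z$-lift of $x$ into the untwisted coboundary (yielding $Sq^1(x)$) plus the correction term $\tfrac{1}{2}(\tilde\rho(g_1)-1)\,x(g_2,\ldots,g_{n+1})$, which is identified with $\mc{SQ}^1(1)\cup x$. Your version routes the computation through the $\mathbb{R}$-lift $\pi x$ and separately establishes $\mc{SQ}^1(1)=w$ first, while the paper uses the equivalent $\z\to\z\to\z_2$ Bockstein directly and identifies the term at the end, but the arithmetic is identical.
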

\begin{proof}
According to Eq. \eqref{eq: Bochstein definition}, choosing a cochain $\tilde x\in C^n(G, \z)$ such that the $\z_2$ reduction of $\tilde x$ is $x$, we have
\beq
\begin{split}
\mc{SQ}^1(x) &= \frac{1}{2}
\left((-1)^{\tilde \rho(g_1)}\tilde{x}(g_2,\dots,g_{n+1}) +\sum_{j=1}^n (-1)^j \tilde x(g_1,\dots,g_jg_{j+1},\dots,g_{n+1}) +(-1)^{n+1} \tilde x(g_1,\dots,g_{n})\right)\\
&=\frac{1}{2}
\left((-1)^{\tilde \rho(g_1)}-1\right)\tilde{x}(g_2,\dots,g_{n+1}) + Sq^1(x)\\
&=\mc{SQ}^1(1)\cup x + Sq^1(x)\mod 2.
\end{split}
\eeq
\end{proof}

For example, for $\mb{Z}_2^T$ with nontrivial action on $U(1)$ or $\mb{Z}$, we have,
\beq\label{eq: SQ1 base}
\mc{SQ}^1(t^{2n+1}) = 0,~~~~~~\mc{SQ}^1(t^{2n}) = t^{2n+1},
\eeq
where $t\in H^1(\z_2^T, \z_2)$ is the generator of the $\z_2$ cohomology ring of $\z_2^T$. We see that in the presence of nontrivial $\rho$, the operation $\mc{SQ}^1$ is not distributive with respect to the cup product. Note that $\mc{SQ}^1(1)$ is nonzero and equals $t$, which when pulled back to the spacetime manifold $\mc{M}$ equals $w_1$ as well, \ie the first Stiefel-Whitney class of $\mc{M}$. In contrast, for $\z_2$ with trivial action on $U(1)$ or $\z$, we have
\beq\label{eq: Sq1 base}
\mc{SQ}^1(A_c^{2n}) = 0,~~~~~~\mc{SQ}^1(A_c^{2n+1}) = A_c^{2n+2},
\eeq
where $A_c\in H^1(\z_2, \z_2)$ is the generator of the $\z_2$ cohomology ring of $\z_2$ as well.

As another example, consider $O(5)$ with $\tilde \rho: O(5)\rightarrow \z_2$ the determinant, \ie an $O(5)$ element complex conjugates an $U(1)$ element or multiplies a $\z$ element by $-1$ if and only if the determinant of the $O(5)$ element is $-1$. From Lemma \ref{lemma: main lemma SQ} we immediately have,
\beq\label{eq: SQ1 DQCP intro}
\mc{SQ}^1\left(w_4^{O(5)}\right) = w_5^{O(5)},
\eeq
as suggested by the calculation in the context of DQCP in Refs. \cite{Wang2017,Hason2020}. 

Moreover, even if $\mc{SQ}^1$ is not distriutive with respect to the cup product, from Lemma \ref{lemma: main lemma SQ} $\mc{SQ}^1$ is still distributive with respect to the cross product involving two different groups, \ie we have
\begin{lemma}\label{lemma: cross product SQ}
For $x\in H^m(G, \mathbb{Z}_2)$ and $y\in H^n(H, \mathbb{Z}_2)$, we have $x\times y\in H^{m+n}(G\times H, \mathbb{Z}_2)$ and 
\beq
\mathcal{SQ}^1(x\times y) = \mathcal{SQ}^1(x)\times y + x \times \mathcal{SQ}^1(y).
\eeq
\end{lemma}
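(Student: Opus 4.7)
The plan is to reduce the claim to Lemma \ref{lemma: main lemma SQ} together with the Cartan-type formula for the classical Steenrod operation $Sq^1$ and a compatibility property of $\mc{SQ}^1(1)$ under products of groups. Concretely, by Lemma \ref{lemma: main lemma SQ}, for any $z\in H^*(K, \z_2)$ with the action $\rho_K$ we have
\[
\mc{SQ}^1(z) \;=\; \mc{SQ}^1_K(1)\cup z \;+\; Sq^1(z),
\]
so it suffices to show that both pieces on the right-hand side satisfy the Leibniz/Cartan-style identity with respect to the cross product $\times$.

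The first step is to treat the $Sq^1$ term. Since $Sq^1$ is the classical mod-2 Bockstein (independent of $\rho$) and satisfies the Cartan formula for the cup product, and since the cross product is by definition $x\times y = \pi_G^*(x)\cup \pi_H^*(y)$ where $\pi_{G}, \pi_H$ are the projections from $G\times H$, naturality of $Sq^1$ under pullback immediately gives
\[
Sq^1(x\times y) \;=\; Sq^1(x)\times y \;+\; x \times Sq^1(y)
\]
in $\z_2$ coefficients (signs are irrelevant).

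The second step is to handle the $\mc{SQ}^1(1)\cup(\cdot)$ term. Inspecting the proof of Lemma \ref{lemma: main lemma SQ}, the class $\mc{SQ}^1(1)\in H^1(K,\z_2)$ is represented by the homomorphism $\tilde\rho_K: K\to \z_2$ that encodes the $\rho$-action. For $K=G\times H$ the compatibility requirement $\tilde\rho_{G\times H}(g,h)=\tilde\rho_G(g)+\tilde\rho_H(h) \pmod{2}$ translates into the cohomological identity
\[
\mc{SQ}^1_{G\times H}(1) \;=\; \mc{SQ}^1_G(1)\times 1 \;+\; 1\times \mc{SQ}^1_H(1).
\]
Combined with the standard interaction $(a\times b)\cup(c\times d) = (a\cup c)\times (b\cup d)$ (mod 2) between the cross and cup products, this yields
\[
\mc{SQ}^1_{G\times H}(1)\cup (x\times y) \;=\; \bigl(\mc{SQ}^1_G(1)\cup x\bigr)\times y \;+\; x\times \bigl(\mc{SQ}^1_H(1)\cup y\bigr).
\]
Adding this equation to the $Sq^1$ identity from step one and regrouping using Lemma \ref{lemma: main lemma SQ} in reverse on each factor then gives exactly $\mc{SQ}^1(x\times y)=\mc{SQ}^1(x)\times y + x\times \mc{SQ}^1(y)$, as desired.

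The only mildly delicate point is verifying the decomposition $\mc{SQ}^1_{G\times H}(1)=\mc{SQ}^1_G(1)\times 1+1\times \mc{SQ}^1_H(1)$ at the cochain level; this is not an obstacle but should be checked by writing out $\tilde\rho_{G\times H}$ explicitly on a pair $(g,h)\in G\times H$ and comparing to the representative cochain of $\mc{SQ}^1(1)$ extracted from the proof of Lemma \ref{lemma: main lemma SQ}. Everything else is naturality of cohomology operations and the standard mod-2 cross/cup-product calculus, so no further ingredients are needed.
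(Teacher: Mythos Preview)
Your proposal is correct and follows exactly the approach the paper indicates: the paper does not spell out a proof but simply remarks that the lemma follows ``from Lemma \ref{lemma: main lemma SQ}'', and your argument makes this precise by splitting $\mc{SQ}^1$ into the $\mc{SQ}^1(1)\cup(\cdot)$ piece and the $Sq^1$ piece and verifying the Leibniz rule for each. The only point worth noting is that the decomposition $\mc{SQ}^1_{G\times H}(1)=\mc{SQ}^1_G(1)\times 1+1\times \mc{SQ}^1_H(1)$ is immediate from the identification $\mc{SQ}^1(1)=\tilde\rho\in H^1(K,\z_2)$ visible in the proof of Lemma~\ref{lemma: main lemma SQ}, so your ``mildly delicate point'' is in fact routine.
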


This lemma is also important when calculating $\mc{SQ}^1$ because it decomposes the calculation into different pieces corresponding to different groups. For example, with the help of Eqs. \eqref{eq: SQ1 base} and \eqref{eq: Sq1 base}, the lemma tells us how to calculate $\mc{SQ}^1$ for the group $(\z_2)^k$ with every $\z_2$ piece acting trivially or nontrivially on $U(1)$ or $\z$.

Finally, from the fact that
\beq
A_{\rm LH}\cong \text{ker}\left[\tilde i: H^2(G_s, \mathbb{Z}_2)\rightarrow H^2(G_s, U(1)_\rho)\right],
\eeq
as argued in Section~\ref{subsec: topological invariants}, for LSM anomaly written as $\exp\left(i \pi \lambda \eta\right)$ where $\lambda\in H^2(G_s, \mb{Z}_2)$ and $\eta \in H^2(G_{int}, \mb{Z}_2)$, we have 
\beq
\mc{SQ}^1(\lambda) = 0.
\eeq
This can also be mathematically checked by considering different representations $\rho: G_s\rightarrow O(n)$. For example, consider $G_s = p4m$. Then $A_{\rm LH}$ is spanned by $\lambda_1 = B_{xy} + A_{x+y}(A_{x+y} + A_m) + B_{c^2}$, $\lambda_2 = B_{xy}$, $\lambda_3 = A_{x+y}(A_{x+y} + A_m)$ in $H^2(p4m, \z_2)$ (see Appendix~\ref{app: cohomology ring}), corresponding to LSM anomaly associated to DOF at the site $a$, plaquette center $b$, and bond center $c$ as in Fig.~\ref{fig:p4m}, repectively. Consider the following three representations of $p4m$. The first one is
\beq\label{eq: embedding 1}
T_x\rightarrow\left(\begin{array}{ccc}
    -1 & 0 & 0  \\
     0 & -1& 0  \\
     0 & 0 & 1 
\end{array} \right),~~~
T_y\rightarrow\left(\begin{array}{ccc}
    -1 & 0 & 0  \\
     0 & 1& 0  \\
     0 & 0 & -1   \\
\end{array} \right),~~~
C_4\rightarrow \left(\begin{array}{ccc}
     1 & 0 & 0   \\
     0 & 0& 1  \\
     0 & -1 & 0\\
\end{array} \right),~~~
M\rightarrow \left(\begin{array}{ccc}
    1 & 0 & 0   \\
     0 & 1 & 0  \\
     0 & 0 & -1  \\
\end{array} \right).
\eeq
The pullback of $w_2^{O(3)}$ equals $B_{xy} + B_{c^2}$ while the pullback of $w_3^{O(3)}$ is zero (see sub-Section \ref{subsec: pullback calculation} and especially Eq. \eqref{eq: DQCP showcase 3}). From  $\mathcal{SQ}^1\left(w_2^{O(3)}\right) = w_3^{O(3)}$, we establish that 
\beq
\mathcal{SQ}^1(B_{xy}+ B_{c^2}) = 0.
\eeq
The second representation is
\beq\label{eq: embedding 2}
T_x\rightarrow\left(\begin{array}{cc}
      1& 0  \\
     0 & 1 
\end{array} \right),~~~~~~
T_y\rightarrow\left(\begin{array}{cc}
     1& 0  \\
     0 & 1   \\
\end{array} \right),~~~~~~
C_4\rightarrow \left(\begin{array}{cc}
     0& 1  \\
     -1 & 0\\
\end{array} \right),~~~~~~~~
M\rightarrow \left(\begin{array}{cc}
     1 & 0  \\
     0 & -1  \\
\end{array} \right),
\eeq
The pullback of $w_2^{O(2)}$ equals $B_{c^2}$, and from $\mathcal{SQ}^1\left(w_2^{O(2)}\right) = 0$, we establish that 
\beq
\mathcal{SQ}^1(B_{c^2}) = 0
\eeq
The third representation is
\beq\label{eq: embedding 3}
T_x\rightarrow\left(\begin{array}{cc}
      -1& 0  \\
     0 & -1 
\end{array} \right),~~~~~~
T_y\rightarrow\left(\begin{array}{cc}
     -1& 0  \\
     0 & -1   \\
\end{array} \right),~~~~~~
C_4\rightarrow \left(\begin{array}{cc}
     1& 0  \\
     0 & 1\\
\end{array} \right),~~~~~~~~
M\rightarrow \left(\begin{array}{cc}
     1 & 0  \\
     0 & -1  \\
\end{array} \right),
\eeq
and we have 
\beq
\mathcal{SQ}^1(A_{x+y}(A_{x+y} + A_m)) = 0.
\eeq
Since $B_{xy}+B_{c^2}, B_{c^2}, A_{x+y}(A_{x+y} + A_m)$ span $A_{\rm LH}$ as well, indeed we mathematically show that $\mc{SQ}^1(\lambda) = 0$ for $\lambda\in A_{\rm LH}$ in $p4m$. Then according to Lemma \ref{lemma: cross product SQ} we also have
\beq
\mc{SQ}^1(\lambda \eta) = \mc{SQ}^1(\lambda) \times \eta + \lambda \times \mc{SQ}^1(\eta) = \lambda\times \mc{SQ}^1(\eta),
\eeq
This equation will be very useful in the analysis of anomaly-matching. Note that this equation holds for LSM constraints on lattices with any wallpaper group.

\section{Topological partition function corresponding to LSM} \label{app: LSM partition function}

In this appendix, we provide a more rigorous argument that the cocycle corresponding to the topological partition function (TPF) of the $(3+1)$-d $G_s\times G_{int}$ SPT, whose boundary has some LSM constraint, can indeed be written in the form of Eq. \eqref{eq: master cocycle}. {\footnote{Since the $(3+1)$-d $G_s\times G_{int}$ SPT is captured by an element in $H^4(G_s\times G_{int}, U(1)_\rho)$, one may attempt to show the validity of Eq. \eqref{eq: master cocycle} by combining the Kunneth decomposition $H^4(G_s\times G_{int}, U(1)_\rho)\cong \oplus_{i=0}^4 H^i(G_s, H^{4-i}(G_{int}, U(1)_\rho))$ and the fact that the relevant $(1+1)$-d $G_{int}$ SPT is captured by $H^2(G_{int}, U(1)_\rho)$, which suggests that in the Kunneth decomposition only the term $H^2(G_s, H^2(G_{int}, U(1)_\rho))$ is relevant to the LSM constraints. Although intuitively appealing, this argument is flawed, because there is generically no unambiguous way to determine whether an element in $H^4(G_s\times G_{int}, U(1)_\rho)$ is in $H^2(G_s, H^2(G_{int}, U(1)_\rho))$. Our argument below does not suffer from this ambiguity. Furthermore, even if we know that the relevant cocycle is in $H^2(G_s, H^2(G_{int}, U(1)_\rho))$, it requires an explanation why its representative cochain can necessarily be written as Eq. \eqref{eq: master cocycle}.}}

To start, first recall that the lattice homotopy picture indicates that all LSM constraints for a given wallpaper group $G_s$ are classified by a group $A_{\rm LH}=\mb{Z}_2^k$ with some integer $k$. This means that the sought-for cocycle in $H^4(G_s\times G_{int}, U(1)_\rho)$ can be written as 
\beq
\Omega(g_1, g_2, g_3, g_4)=e^{i\pi \kappa(g_1, g_2, g_3, g_4)}
\eeq
with $\kappa$ taking values in $\{0,1\}$. This allows us to view $\kappa(g_1, g_2, g_3, g_4)$ as a representative cochain in $H^4(G_s\times G_{int}, \mb{Z}_2)$, where the multiplication between two elements is implemented by the mod 2 addition of their corresponding representative cochains. Since $H^4(G_s\times G_{int}, \mb{Z}_2)\simeq\oplus_{i=0}^4H^i(G_s, \mb{Z}_2)\otimes H^{4-i}(G_{int}, \mb{Z}_2)$, we can always write $\Omega$ as
\beq \label{eq: decompose}
\Omega(g_1, g_2, g_3, g_4)=\prod_{i=0}^4 e^{i\pi \lambda_i(l_1, \cdots, l_i)\eta_{4-i}(a_{i+1}, \cdots, a_4)}
\eeq
where each $g_i\in G_s\times G_{int}$ is again written as $g_i=l_i\otimes a_i$, with $l_i\in G_s$ and $a_i\in G_{int}$. Both $\lambda_i$ and $\eta_{4-i}$ take values in $\{0, 1\}$, and they can be viewed as representative cochains in $H^i(G_s, \mb{Z}_2)$ and $H^{4-i}(G_{int}, \mb{Z}_2)$, respectively. Furthermore, we can view $e^{i\pi \eta_{4-i}(a_{i+1}, \cdots, a_4)}$ as a representative cochain in $H^{4-i}(G_{int}, U(1)_\rho)$, which can be physically interpreted as a $G_{int}$ SPT living in $3-i$ spatial dimensions. 

Previous studies of $G_s\times G_{int}$ SPTs indicate that all these SPTs have a real-space construction, in which various lower dimensional SPTs (or invertible states) are decorated into various submanifolds of the entire crystal \cite{Huang2017, Song2018b, Zhang2020a}. Indeed, the SPT relevant to LSM constraints can be constructed by putting copies of $(1+1)$-d $G_{int}$ SPTs at various IWP of the wallpaper group $G_s$. Combining these two observations together, we conclude that in Eq. \eqref{eq: decompose} only the factor with $i=2$ can possibly be related to LSM constraints, because only that factor can possibly be related to putting $(1+1)$-d $G_{int}$ SPTs at various positions, while other factors involve SPTs living in the wrong dimension (\eg the $i=1$ term means that some $(2+1)$-d $G_{int}$ SPT is decorated into the system in some way). Moreover, for a given PR type of the system, $e^{i\pi\eta_2(a_3, a_4)}$ should be the cocycle corresponding to the $(1+1)$-d $G_{int}$ SPT whose boundary hosts this particular PR.

Therefore, the cocycle related to LSM constraints can always be written in a form given by Eq. \eqref{eq: master cocycle}, and $\lambda(l_1, l_2)$, which is written as $\lambda_2(l_1, l_2)$ in Eq. \eqref{eq: decompose}, can be viewed as a representative cochain in $H^2(G_s, \mb{Z}_2)$. Furthermore, according to the lattice homotopy picture, $\lambda$ or $\lambda_2$ should just encode the information of which IWP host $(1+1)$-d $G_{int}$ SPTs, so it should be completely determined by $G_s$ and the lattice homotopy class corresponding to each LSM constraint, and be the same for all $G_{int}$ and all PR types of the system.

We remark that the above argument does not show that all cocycles in the form of Eq. \eqref{eq: master cocycle} must be related to LSM constraints. In fact, in Sec. \ref{subsec: topological invariants} we have found that some of them are not. Those SPTs can be constructed by inserting a $(2+1)$-d $Z_2\times G_{int}$ SPT on the mirror plane, such that the $Z_2$ domain wall is decorated with a $(1+1)$-d $G_{int}$ SPT. See Appendix \ref{app: non-LSM} for more detail.

We also remark that although we have assumed that the projective representations of $G_{int}$ are $\z_2^k$-classified with $k$ some integer in the above argument, we expect that the topological partition functions corresponding to LSM constraints can always be written in a form similar to Eq. \eqref{eq: master cocycle}, for any $G_{int}$. Specially, if a PR type of $G_{int}$ has order $n$, then the LSM-related cocycle takes the form
\beq
\Omega(g_1, g_2, g_3, g_4)=e^{i\frac{2\pi}{n}\lambda(l_1, l_2)\eta(a_3, a_4)}
\eeq
where $\lambda$ and $\eta$ take integral values, and $e^{\frac{2\pi i}{n}\eta(a_3, a_4)}$ is the cocycle corresponding to the relevant $(1+1)$-d $G_{int}$ SPT. Moreover, this statement, including its special form Eq. \eqref{eq: master cocycle}, has been derived in the special cases where $G_s$ contains only translation or only point group, using equivariant homology \cite{Else2020}, and we expect that the method in Ref. \cite{Else2020} can be generalized to an arbitrary lattice symmetry group $G_s$. A systematic proof of this statement is beyond the scope of this paper and we leave it for future work.

\section{Fractionalization pattern involving both translation and glide symmetries} \label{app: pg}

Among all 17 wallpaper groups, there is only one group, $pg$, in which the fractionalization pattern has to be specified in a way that necessarily invokes the glide symmetry. In this appendix, we present its corresponding physical picture.

The group $pg$ is generated by $T_1$ and $G$, a translation and a glide reflection. The translation vector of $T_1$ is flipped under $G$, and $G^2$ is another translation along a direction perpendicular to the translation vector of $T_1$. These generators satisfy $G^{-1}T_1GT_1=1$.

\begin{figure}[h]
    \centering
    \includegraphics[width=0.25\textwidth]{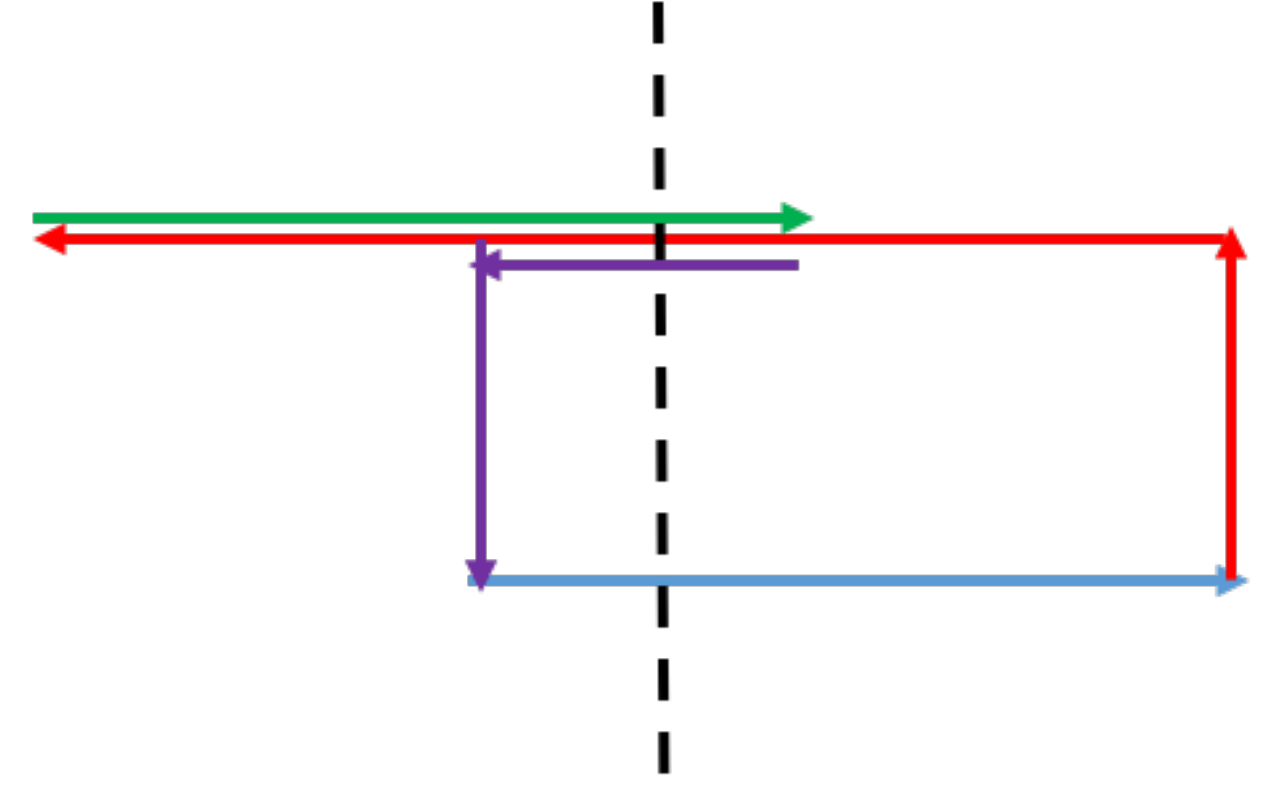}
    \caption{Acting $G^{-1}T_1GT_1$ on an $SO(3)$ monopole. The first $T_1$ action is marked in blue, the following $G$ action is marked in red, the next $T_1$ action is marked in green, and the last $G^{-1}$ action is marked in purple. The dashed line is the reflection axis of $G$. This figure shows that the operation $G^{-1}T_1GT_1$ moves an $SO((3)$ monopole along a trajectory that encloses a fundamental domain.}
    \label{fig:pg}
\end{figure}

Consider the case where $G_{int}=SO(3)$. Just as in the main text, we gauge the $SO(3)$ symmetry and examine the fractionalization pattern of $pg$ on the $SO(3)$ monopole by applying the operation $G^{-1}T_1GT_1$ to an $SO(3)$ monopole, which moves the monopole around the fundamental domain (see Fig. \ref{fig:pg}). If the fundamental domain contains an odd (even) number of Haldane chains, this process results in a $-1$ (1) phase factor, which is a signature of nontrivial (trivial) symmetry fractionalization pattern of the $pg$ symmetry carried by the $SO(3)$ monopole. Because $H^2(pg, \mb{Z}_2)=\mb{Z}_2$, there is only one nontrivial symmetry fractionalization pattern. A topological invariant detecting the nontrivial element in $H^2(pg, \mb{Z}_2)$ is given in Eq. \eqref{eqapp: top inv for pg}, so this must be the topological invariant that diagnoses the fractionalization pattern of the $pg$ symmetry on an $SO(3)$ monopole.

\section{Non-LSM fractionalization patterns} \label{app: non-LSM}

In this appendix, we discuss in more detail the $M\times SO(3)$ SPT corresponding to $\alpha_{\rm non-LSM}=\frac{\omega(M, M)}{\omega(1, 1)}=-1$, which has TPF $\exp(i\pi\int (w_1^{TM})^2 w_2^{SO(3)})$. In particular, we will show that this SPT can be constructed by inserting into the mirror plane of $M$ a $(2+1)$-d $\z_2\times SO(3)$ SPT, whose $\z_2$ domain walls are decorated with Haldane chains. Moreover, we will show that for any $G_{int}$ with $\mb{Z}_2^k$-classified PR, the $(3+1)$-d $M\times G_{int}$ SPTs with $\alpha_{\rm non-LSM}=-1$ can always be constructed by inserting into the mirror plane of $M$ a $(2+1)$-d $\z_2\times G_{int}$ SPT, whose $\z_2$ domain walls are decorated with the relevant nontrivial $(1+1)$-d $G_{int}$ SPT.

Focusing on the case with $G_{int}=SO(3)$, let us enumerate all $(3+1)$-d $M\times SO(3)$ SPTs. According to the crystalline equivalence principle \cite{Thorngren2016}, the classification of these SPTs is the same as the classification of $(3+1)$-d $\z_2^T\times SO(3)$ SPTs, where $\z_2^T$ is a time reversal symmetry. It is known that the latter are classified by $\mb{Z}_2^4$ (\eg see Appendix F of Ref. \cite{Zou2017} for the descriptions of the physical properties of these SPTs). So $(3+1)$-d $M\times SO(3)$ SPTs are $\mb{Z}_2^4$-classified. According to Ref. \cite{Song2017}, these SPTs can all be constructed by putting on the mirror plane of $M$ some $(2+1)$-d invertible states that have at most a $\z_2\times SO(3)$ symmetry (note that this $\z_2$ symmetry does not reverse the spacetime orientation). The $(2+1)$-d $\z_2\times SO(3)$ SPTs are classified by $H^3(\z_2\times SO(3), U(1))=\mb{Z}_2\times\mb{Z}\times\mb{Z}_2$, where the Kunneth formula is used in calculating this classification. It is easy to read off the physical meaning of the root states of these $(2+1)$-d SPTs: one $\mb{Z}_2$ factor represents SPTs protected purely by the $Z_2$ symmetry, the $\mb{Z}$ factor represents spin quantum Hall states \cite{Senthil1999}, which are SPTs protected purely by the $SO(3)$ symmetry and has a TPF given by $SO(3)$ Chern-Simons theories, and the other $\mb{Z}_2$ factor must represent an SPT protected by both $Z_2$ and $SO(3)$. The decorated-domain-wall method \cite{Chen2013a} allows us to construct the SPT by decorating the $\z_2$ domain walls with a Haldane chain.

All the $(2+1)$-d $\z_2\times SO(3)$ SPTs can be inserted into the mirror plane of $M$ to construct a $(3+1)$-d $M\times SO(3)$ SPT, and one can also insert an $E_8$ state to the mirror plane. In total, these give the $\mb{Z}_2^4$ classification of $(3+1)$-d $M\times SO(3)$ SPTs. Note inserting a $\z_2$ SPT or an $E_8$ state to the mirror plane results in a $(3+1)$-d SPT protected by $M$ only, so these states will not have a TPF $\exp(i\pi\int (w_1^{TM})^2 w_2^{SO(3)})$, which shows that this SPT requires both $M$ and $SO(3)$ for protection. Now we are only left with the cases where the bosonic spin quantum Hall state and/or the state constructed from decorated domain wall is inserted into the mirror plane. To understand the physical properties of these states, we can refer to the corresponding $(3+1)$-d $\z_2^T\times SO(3)$ SPTs. If a spin quantum Hall state is decorated into the time reversal domain wall, the resulting state will have fermionic $SO(3)$ monopoles. Using the correspondence between $\z_2^T\times SO(3)$ SPTs and $M\times SO(3)$ SPTs, this indicates that if the spin quantum Hall state is inserted into the mirror plane, the $SO(3)$ monopole will also be fermionic. However, the TPF $\exp(i\pi\int (w_1^{TM})^2 w_2^{SO(3)})$ means that the $SO(3)$ monopole is a boson (but carries nontrivial fractionalization pattern of the $M$ symmetry). This means that the $(3+1)$-d SPT of interest must be obtained from inserting to the mirror plane the $(2+1)$-d SPT constructed from decorated domain wall. 

In fact, one can explicitly demonstrate that a $(3+1)$-d $M\times SO(3)$ SPT constructed in this way indeed has an $SO(3)$ monopole carring the nontrivial fractionalization pattern of the $M$ symmetry. To this end, it suffices to show a simpler version of this statement: suppose we break the $SO(3)$ symmetry in this SPT to $U(1)$, the $U(1)$ monopole in the resulting state will carry the nontrivial fractionalization pattern of $M$. This statement can be explicitly shown using the method in Ref. \cite{Zou2017a} (see Appendix B therein). This also means that upon this symmetry breaking, the resulting $M\times U(1)$ symmetric state is a nontrivial SPT. According to the general discussion in Sec. \ref{subsec: topological invariants}, this implies that $\alpha_{\rm non-LSM}$ is unrelated to LSM constraints of interest.

The above discussion concerns about the case where $G_{int}=SO(3)$. Now we argue that for any $G_{int}$ with $\z_2^k$-classified PR, $\alpha_{\rm non-LSM}$ can be triggered in a $(3+1)$-d $M\times G_{int}$ SPT constructed in a way similar to the above, and all we need to modify is to replace the Haldane chain decorated into the $\z_2$ domain wall by a $(1+1)$-d $G_{int}$ SPT. To this end, it suffices to show that the TPF of this $(3+1)$-d $M\times G_{int}$ SPT is $e^{i\pi\int (w_1^{TM})^2\eta}$, where $\eta\in H^2(G_{int}, \z_2)$ and $e^{i\pi\int \eta}$ is the TPF of the $(1+1)$-d $G_{int}$ SPT. This can be shown by noting i) this SPT is its own inverse, and ii) this construction works for all such $G_{int}$. Then an argument very similar to that in Appendix \ref{app: LSM partition function} suggests that the TPF of this SPT can indeed be written as $e^{i\pi\int (w_1^{TM})^2\eta}$.

\section{Group Cohomology and $\z_2$ Cohomology ring of wallpaper groups}\label{app: cohomology ring}

In this appendix, we list the $\z_2$ cohomology rings of all 17 wallpaper groups. The calculation is done with the help of spectral sequence. See Appendix \ref{subapp: product} for some brief mathematical introduction of the relevant concepts, and Refs. \cite{brown2012cohomology,weibel1995introduction,kirk2001lecture} for more details. 
It turns out that for all wallpaper groups $G_s$ except $p4g$, the cohomology ring can be written as
\beq
H^*(G_s, \mb{Z}_2)=\mb{Z}_2[A_\bullet, \cdots, B_\bullet, \cdots]/{\rm relations}
\eeq
with $A_\bullet$ and $B_\bullet$ the generators belonging to $H^1(G_s, \mb{Z}_2)$ and $H^2(G_s, \mb{Z}_2)$, respectively. Subscripts ``$\bullet$" are the names of generators which differ for different $G_s$, and their meanings will often be clear in the context. As a result, all elements of the cohomology ring $H^*(G_s, \mb{Z}_2)$ can be written as cup product of the generators $A_\bullet$ and $B_\bullet$, but there are some relations that dictate that certain sums of cup products actually yield a trivial cohomology element. We will present all elements of $H^1(G_s, \mb{Z}_2)$ and $H^2(G_s, \mb{Z}_2)$ together with their representative cochains, as well as the complete set of the relations. This encodes the full information of the cohomology ring $H^*(G_s, \mb{Z}_2)$. The situation for $p4g$ is similar, but we need an extra degree-3 generator $C\in H^3(p4g, \z_2)$, which, together with the generators in $H^1(p4g, \z_2)$ and $H^2(p4g, \z_2)$, forms a complete set of generators of $H^*(p4g, \z_2)$.

For later usage, we define a set of functions that take integers as their arguments:
\beq
\begin{split}
&P(x)=
\left\{
\begin{array}{lr}
1, & x{\rm\ is\ odd}\\
0, & x{\rm\ is\ even}
\end{array}
\right.,
\  
P_c(x)=1-P(x),
\
Q(x)=(-1)^x,\\
& [x]_a=\{y=x\ ({\rm mod\ }a)|0\leqslant y<a\},
\   
P_{ab}(x)=
\left\{
\begin{array}{lr}
1, & x=b\ ({\rm mod\ }a)\\
0, & {\rm otherwise}
\end{array}
\right.
\end{split}
\eeq

\begin{itemize}
\item Wallpaper group 1: $p1$

This group is generated by $T_1$ and $T_2$, two independent translations which are commutative,
\beq
T_1 T_2 = T_2 T_1.
\eeq
An arbitrary group element in $p1$ can be written as $g=T_1^x T_2^y$, with $x, y\in\mb{Z}$. For $g_1=T_1^{x_1} T_2^{y_1}$ and $g_2=T_1^{x_2} T_2^{y_2}$, the group multiplication rule gives 
\beq
g_1 g_2=T_1^{x_1+x_2}T_2^{y_1+y_2}.
\eeq

The $\z_2$ cohomology ring of $p1$ is 
\beq
\z_2[A_x, A_y]/(A_x^2=0,~A_y^2=0).
\eeq
Here $H^1(p1, \z_2) = \z_2^2$, with generators $\xi_1=A_x,\quad\xi_2=A_y$, which have representative cochains,
\beq
\xi_1(g) = x,\quad\quad\xi_2(g)=y.
\eeq
$H^2(p1, \z_2) = \z_2$, with generators $\lambda_1=A_x A_y$, which have representative cochains,
\beq
\lambda_1(g_1, g_2)=y_1x_2
\eeq 
Indeed $\lambda_1$ generates the LSM constraint.

\item Wallpaper group 2: $p2$

This group is generated by $T_1$, $T_2$ and $C_2$, two independent translations and a $C_2$ rotational symmetry, with the following relations among generators
\beq
C_2^2=1,~~~~~~C_2 T_1 C_2 = T_1^{-1},~~~~~~C_2 T_2 C_2 = T_2^{-1},~~~~~~T_1 T_2 = T_2 T_1.
\eeq 
An arbitrary group element in $p2$ can be written as $g=T_1^x T_2^y C_2^c$, with $x, y\in\mb{Z}$ and $c\in\{0, 1\}$. For $g_1=T_1^{x_1} T_2^{y_1} C_2^{c_1}$ and $g_2=T_1^{x_2} T_2^{y_2} C_2^{c_2}$, the group multiplication rule gives 
\beq
g_1g_2=T_1^{x_1+Q(c_1)x_2}T_2^{y_1+Q(c_1)y_2}C_2^{P(c_1+c_2)}.
\eeq 

The $\z_2$ cohomology ring of $p2$ is 
\beq
\z_2[A_x, A_y, A_c]/(A_x^2=A_x A_c,~A_y^2=A_y A_c)
\eeq
Here $H^1(p2, \z_2) = \z_2^3$, with generators $\xi_1=A_x,\quad\xi_2=A_y,\quad\xi_3=A_c$, which have representative cochains,
\beq
\xi_1(g) = x,\quad\quad\xi_2(g)=y,\quad\quad\xi_3(g)=c.
\eeq
$H^2(p2, \z_2) = \z_2^4$, with generators $\lambda_1 = (A_x+A_c)(A_y+A_c), \quad\lambda_2 = A_x(A_y+A_c), \quad\lambda_3 = (A_x+A_c)A_y, \quad\lambda_4 = A_x A_y$, which have representative cochains,
\beq
\begin{split}
&\lambda_1(g_1, g_2)=y_1 x_2 + c_1(x_2+y_2+c_2)\\
&\lambda_2(g_1, g_2)=(y_1+c_1)x_2\\
&\lambda_3(g_1, g_2)=y_1 x_2+c_1 y_2\\
&\lambda_4(g_1, g_2)=y_1x_2
\end{split}
\eeq
The generators are chosen so that they have a 1-1 correspondence with topological invariants presented in Appendix \ref{app: top invariants}. There we will also see that all of them, $\lambda_1,\lambda_2,\lambda_3,\lambda_4$, generate LSM constraints.

\item Wallpaper group 3: $pm$

This group is generated by $T_1$, $T_2$ and $M$, where $T_1$ and $T_2$ are translations with perpendicular translation vectors, and $M$ is a mirror symmetry such that 
\beq
M^2=1,~~~~~~MT_1M=T_1^{-1},~~~~~~MT_2M=T_2,~~~~~~T_1 T_2 = T_2 T_1.
\eeq
An arbitrary element in $pm$ can be written as $g=T_1^x T_2^y M^m$, with $x, y\in\mb{Z}$ and $m\in\{0, 1\}$. For $g_1=T_1^{x_1} T_2^{y_1} M^{m_1}$ and $g_2=T_1^{x_2} T_2^{y_2} M^{m_2}$, the group multiplication rule gives 
\beq
g_1g_2=T_1^{x_1+Q(m_1)x_2}T_2^{y_1+y_2}M^{P(m_1+m_2)}.
\eeq

The $\z_2$ cohomology ring of $pm$ is 
\beq
\z_2[A_x, A_y, A_m]/(A_x^2=A_x A_m, ~A_y^2=0)
\eeq
Here $H^1(pm, \z_2) = \z_2^3$, with generators $\xi_1=A_x,\quad\xi_2=A_y,\quad\xi_3=A_m$, which have representative cochains,
\beq
\xi_1(g) = x,\quad\quad\xi_2(g)=y,\quad\quad\xi_3(g)=m.
\eeq
$H^2(pm, \z_2) = \z_2^4$, with generators $\lambda_1 = (A_x+A_m)A_y, \quad\lambda_2 = A_x A_y, \quad\lambda_3 = (A_x+A_m)A_m, \quad\lambda_4 = A_x A_m$, which have representative cochains,
\beq
\begin{split}
    &\lambda_1(g_1, g_2)=y_1 x_2+m_1y_2\\
    &\lambda_2(g_1, g_2)=y_1 x_2\\
    &\lambda_3(g_1, g_2)=m_1(x_2+m_2)\\
    &\lambda_4(g_1, g_2)=m_1x_2
\end{split}
\eeq
In Appendix \ref{app: top invariants}, we will see that $\lambda_1,\lambda_2$ generate LSM constraints, while $\lambda_3,\lambda_4$ correspond to non-LSM fractionalization patterns.

\item Wallpaper group 4: $pg$

This group is generated by $T_1$ and $G$, where $T_1$ is a translation and $G$ is a glide reflection, such that 
\beq
G^{-1}T_1G=T_1^{-1}.
\eeq
Note that $G^2$ is a translation along the direction perpendicular to the translation vector of $T_1$. An arbitrary element in $pg$ can be written as $g=T_1^xG^{s}$, with $x, s\in\mb{Z}$. For $g_1=T_1^{x_1}G^{s_1}$ and $g_2=T_1^{x_2}G^{s_2}$, the group multiplication rule gives 
\beq
g_1g_2=T_1^{x_1+Q(s_1)x_2}G^{s_1+s_2}.
\eeq

The $\z_2$ cohomology ring of $pg$ is 
\beq
\z_2[A_x, A_s]/(A_x^2=A_x A_s, ~A_s^2=0)
\eeq
Here $H^1(pg, \z_2) = \z_2^2$, with generators $\xi_1=A_x,\quad\xi_2=A_s$, which have representative cochains,
\beq
\xi_1(g) = x,\quad\quad\xi_2(g)=s.
\eeq
$H^2(pg, \z_2) = \z_2$, with generators $\lambda_1 = A_xA_s$, which have representative cochains,
\beq
\lambda_1(g_1, g_2)=s_1x_2
\eeq
In Appendix \ref{app: top invariants}, we will see that $\lambda_1$ generates LSM constraints.

\item Wallpaper group 5: $cm$

This group is generated by $T_1$, $T_2$ and $M$, two independent translations and a mirror symmetry whose mirror axis bisects the translation vectors of $T_1$ and $T_2$. They satisfy 
\beq
M^2=1,~~~~~~MT_1M=T_2,~~~~~~MT_2M=T_1,~~~~~~T_1 T_2 = T_2 T_1.
\eeq
An arbitrary element of $cm$ can be written as $g=T_1^x T_2^y M^m$, with $x, y\in\mb{Z}$ and $m\in\{0, 1\}$. For $g_1=T_1^{x_1} T_2^{y_1} M^{m_1}$ and $g_2=T_1^{x_2} T_2^{y_2} M^{m_2}$, the group multiplication rule gives 
\beq
g_1g_2=T_1^{x_1+P_c(m_1)x_2+P(m_1)y_2}T_2^{y_1+P_c(m_1)y_2+P(m_1)x_2}M^{P(m_1+m_2)}.\eeq

The $\z_2$ cohomology ring of $cm$ is 
\beq
\z_2[A_{x+y}, A_m, B_{xy}]/(A_{x+y}^2=0, ~A_{x+y} A_m = 0, ~ B_{xy} A_{x+y}=0, ~B_{xy}^2=0)
\eeq
Here $H^1(cm, \z_2)=\z_2^2$, with two generators $\xi_1=A_{x+y},\quad\xi_2=A_m$, which have representative cochains
\beq
\xi_1(g)=x+y, \quad\quad\xi_2(g)=m.
\eeq
$H^2(cm, \z_2)=\z_2^2$, with generators $\lambda_1=B_{xy}, \quad\lambda_2=A_m^2$, which have representative cochains
\beq
\begin{split}
\lambda_1(g_1, g_2)&=P_c(m_1) y_1 x_2+P(m_1) y_2(x_2+y_1)\\
\lambda_2(g_1, g_2)&=m_1m_2
\end{split}    
\eeq
In Appendix \ref{app: top invariants}, we will see that $\lambda_1$ generates LSM constraints, while $\lambda_2$ corresponds to non-LSM fractionalization patterns.

\item Wallpaper group 6: $pmm$

This group is generated by $T_1$, $T_2$, $C_2$ and $M$, two translations with perpendicular translation vectors, a $C_2$ rotation and a mirror symmetry such that
\beq
\begin{split}
&M^2=1,~~~~~~M C_2 M = C_2,~~~~~~M T_1 M = T_1^{-1},~~~~~~M T_2 M = T_2\\
&C_2^2=1,~~~~~~C_2 T_1 C_2=T_1^{-1},~~~~~~C_2 T_2 C_2=T_2^{-1},~~~~~~T_1T_2=T_2 T_1.
\end{split}
\eeq
Note that $C_2 M$ is another mirror symmetry that flips the translation vector of $T_2$. An arbitrary element in $pmm$ can be written as $g=T_1^x T_2^y C_2^{c} M^{m}$, with $x, y\in\mb{Z}$ and $c, m\in\{0, 1\}$. For $g_1=T_1^{x_1} T_2^{y_1} C_2^{c_2} M^{m_1}$ and $g_2=T_1^{x_2} T_2^{y_2} C_2^{c_2} M^{m_2}$, the group multiplication rule gives
\beq
g_1g_2=T_1^{x_1+Q(c_1+m_1)x_2}T_2^{y_1+Q(c_1)y_2}C_2^{P(c_1+c_2)}M^{P(m_1+m_2)}.
\eeq

The $\z_2$ cohomology ring of $pmm$ is 
\beq
\z_2[A_x, A_y, A_c, A_m]/(A_x^2=A_x (A_m + A_c), ~A_y^2=A_y A_c)
\eeq
Here $H^1(pmm, \z_2) = \z_2^4$, with generators $\xi_1=A_x,\quad\xi_2=A_y,\quad\xi_3=A_c,\quad\xi_4=A_m$, which have representative cochains,
\beq
\xi_1(g) = x,\quad\quad\xi_2(g)=y,\quad\quad\xi_3(g)=c,\quad\quad\xi_4(g)=m.
\eeq
$H^2(pmm, \z_2) = \z_2^8$, with generators    $\lambda_1 = (A_x+A_c+A_m)(A_y+A_c),\quad\lambda_2 = A_x A_y,\quad\lambda_3 = A_x(A_y+A_c),\quad\lambda_4 = (A_x+A_c+A_m)A_y ,\quad\lambda_5 = (A_x+A_c+A_m)A_m,\quad\lambda_6=(A_y+A_c)A_m,\quad\lambda_7 = A_x A_m,\quad\lambda_8 = A_y A_m$, which have representative cochains,
\beq
\begin{split}
&\lambda_1(g_1, g_2)=(y_1+c_1)x_2 + (c_1+m_1)(y_2+c_2)\\
&\lambda_2(g_1, g_2)=y_1 x_2\\
&\lambda_3(g_1, g_2)=(y_1+c_1)x_2\\
&\lambda_4(g_1, g_2)=y_1 x_2 + (c_1 + m_1)y_2\\
&\lambda_5(g_1, g_2)=m_1(x_2+c_2+m_2)\\
&\lambda_6(g_1, g_2)=m_1(y_2 + c_2)\\
&\lambda_7(g_1, g_2)=m_1x_2\\
&\lambda_8(g_1, g_2)=m_1y_2 
\end{split}
\eeq
In Appendix \ref{app: top invariants}, we will see that $\lambda_1,\lambda_2,\lambda_3,\lambda_4$ generate LSM constraints, while $\lambda_5,\lambda_6,\lambda_7,\lambda_8$ correspond to non-LSM fractionalization patterns.

\item Wallpaper group 7: $pmg$

This group is generated by $T_1$, $T_2$, $C_2$ and $M$, two translations with perpendicular translation vectors, a 2-fold rotation and a mirror symmetry with mirror axis parallel to the translation vector of $T_2$, and displaced from the $C_2$ rotation center by a quarter of the unit translation vector of $T_1$. They satisfy
\beq
\begin{split}
&M^2=1,~~~~~~MC_2M=T_1 C_2,~~~~~~MT_1M=T_1^{-1},~~~~~~MT_2M=T_2,\\
&C_2^2=1,~~~~~~C_2 T_1 C_2 = T_1^{-1},~~~~~~C_2 T_2 C_2 = T_2^{-1},~~~~~~T_1 T_2 = T_2 T_1.
\end{split}
\eeq
An arbitrary element in $pmg$ can be written as $g=T_1^x T_2^y C_2^c M^m$, with $x, y\in\mb{Z}$ and $c, m\in\{0, 1\}$.
For $g_1=T_1^{x_1} T_2^{y_1} C_2^{c_1} M^{m_1}$ and $g_2=T_1^{x_2} T_2^{y_2} C_2^{c_2} M^{m_2}$, the group multiplication rule gives 
\beq
g_1g_2=T_1^{x_1+Q(c_1+m_1)x_2+ Q(c_1)c_2m_1} T_2^{y_1+Q(c_1)y_2} C_2^{P(c_1+c_2)} M^{P(m_1+m_2)}.
\eeq

The $\z_2$ cohomology ring of $pmg$ is 
\beq
\z_2[A_y, A_c, A_m]/(A_y^2=A_c A_y, ~A_c A_m=0)
\eeq
Here $H^1(pmg, \z_2) = \z_2^3$, with generators $\xi_1=A_y,\quad\xi_2=A_c,\quad\xi_3=A_m$, which have representative cochains,
\beq
\xi_1(g) = y,\quad\quad\xi_2(g)=c,\quad\quad\xi_3(g)=m.
\eeq
$H^2(pmg, \z_2) = \z_2^4$, with generators $\lambda_1 = A_c(A_y+A_c), \quad\lambda_2 = A_c A_y, \quad\lambda_3 = A_yA_m, \quad\lambda_4 = A_m^2$, which have representative cochains,
\beq
\begin{split}
    &\lambda_1(g_1, g_2)=c_1(y_2+c_2)\\
    &\lambda_2(g_1, g_2)=c_1y_2\\
    &\lambda_3(g_1, g_2)=m_1y_2\\
    &\lambda_4(g_1, g_2)=m_1m_2
\end{split}
\eeq
In Appendix \ref{app: top invariants}, we will see that $\lambda_1,\lambda_2$ generate LSM constraints, while $\lambda_3,\lambda_4$ correspond to non-LSM fractionalization patterns.

\item Wallpaper group 8: $pgg$

This group is generated by $T_1$, $T_2$, $C_2$ and $G_1$, two translations with perpendicular translation vectors, a $C_2$ rotation, and a glide reflection whose reflection axis is parallel to the translation vector of $T_2$, and displaced from the $C_2$ center by a quarter of the unit translation vector of $T_1$. They satisfy
\beq
C_2^2=1,~~~~~~G_1 C_2 G_1^{-1} = T_1 T_2 C_2,~~~~~~C_2 T_1 C_2 = T_1^{-1},~~~~~~G_1 T_1 G_1^{-1} = T_1^{-1},~~~~~~G_1^2 = T_2.
\eeq
An arbitrary element in $pgg$ can be written as $g=T_1^x T_2^y C_2^c G_1^{s}$, with $x, y\in\mb{Z}$ and $c, s\in\{0, 1\}$. For $g_1=T_1^{x_1} T_2^{y_1} C_2^{c_1} G_1^{s_1}$ and $g_2=T_1^{x_2} T_2^{y_2} C_2^{c_2} G_1^{s_2}$, the group multiplication rule gives
\beq
g_1g_2=T_1^{x_1+Q(c_1+s_1)x_2+Q(c_1)c_2s_1} T_2^{y_1+Q(c_1)y_2+Q(c_1)c_2s_1+Q(c_1+c_2)s_1s_2}C_2^{P(c_1+c_2)}G_1^{P(s_1+s_2)}.
\eeq

The $\z_2$ cohomology ring of $pgg$ is 
\beq
\z_2[A_c, A_s, B_{c(x+y)}]/(A_s^2=0,~A_s A_c=0,~A_s B_{c(x+y)}=0,~B_{c(x+y)}^2 = A_c^2 B_{c(x+y)})
\eeq
Here $H^1(pgg, \z_2) = \z_2^2$, with generators $\xi_1=A_c,\quad\xi_2=A_s$, which have representative cochains,
\beq
\xi_1(g) = c,\quad\quad\xi_2(g)=s.
\eeq
$H^2(pgg, \z_2) = \z_2^2$, with generators $\lambda_1=B_{c(x+y)}+A_c^2, \quad \omega_2 = B_{c(x+y)}$, which have representative cochains,
\beq
\begin{split}
&\lambda_1(g_1, g_2)=c_1(x_2+y_2+c_2)+(c_1+c_2)s_1 s_2+s_1x_2\\
&\lambda_2(g_1, g_2)=c_1(x_2+y_2)+(c_1+c_2)s_1 s_2+s_1x_2
\end{split}
\eeq
In Appendix \ref{app: top invariants}, we will see that $\lambda_1,\lambda_2$ generate LSM constraints.

\item Wallpaper group 9: $cmm$

This group is generated by $T_1$, $T_2$, $C_2$ and $M$, two translations with translation vectors not perpendicular to each other, a $C_2$ rotation, and a mirror symmetry whose mirror axis bisects the translation vectors of $T_1$ and $T_2$. They satisfy 
\beq
\begin{split}
&M^2=1,~~~~~~M C_2 M = C_2,~~~~~~M T_1 M=T_2,~~~~~~M T_2 M=T_1, \\
&C_2^2=1,~~~~~~C_2 T_1 C_2 = T_1^{-1},~~~~~~C_2 T_2 C_2 = T_2^{-1},~~~~~~T_1 T_2 = T_2 T_1.
\end{split}
\eeq
Note that $C_2M$ is another mirror symmetry whose mirror axis bisects the translation vectors of $T_1$ and $T_2^{-1}$. An arbitrary element in $cmm$ can be written as $g=T_1^x T_2^y C_2^{c} M^{m}$, with $x, y\in\mb{Z}$ and $c, m\in\{0, 1\}$. For $g_1=T_1^{x_1} T_2^{y_1} C_2^{c_{1}} M^{m_{1}}$ and $g_2=T_1^{x_2} T_2^{y_2} C_2^{c_{2}} M^{m_{2}}$, the group multplication rule gives
\beq
g_1g_2=T_1^{x_1+Q(c_1)X} T_2^{y_1+Q(c_1)Y} C_2^{P(c_{1}+c_{2})} M^{P(m_{1}+m_{2})},
\eeq
where $X$ and $Y$ are defined as
\beq \label{eq: XY in cmm}
\begin{split}
&X=P_c(m_1)x_2+P(m_1)y_2,\\
&Y=P_c(m_1)y_2+P(m_1)x_2.
\end{split}
\eeq

The $\z_2$ cohomology ring of $cmm$ is 
\beq
\z_2[A_{x+y}, A_c, A_m, B_{xy}]/(A_{x+y}^2=A_c A_{x+y}, ~A_{x+y} A_m = 0, ~ B_{xy} A_{x+y}=0, ~B_{xy}^2=(A_c^2+A_c A_m)B_{xy})
\eeq
Here $H^1(cmm, \z_2) = \z_2^3$, with generators $\xi_1=A_{x+y},\quad\xi_2=A_c,\quad\xi_3=A_m$, which have representative cochains,
\beq
\xi_1(g) = x+y,\quad\quad\xi_2(g)=c,\quad\quad\xi_3(g)=m.
\eeq
$H^2(cmm, \z_2) = \z_2^5$, with generators $\lambda_1 = B_{xy} + A_c A_{x+y} + A_c^2 + A_m A_c, \quad \lambda_2 = B_{xy}, \quad \lambda_3 = A_c A_{x+y}, \quad \lambda_4 = (A_c + A_m) A_m, \quad \lambda_5 = A_c A_m$, which have representative cochains,
\beq
\begin{split}
&\lambda_1(g_1, g_2)=P_c(m_1)y_1x_2+P(m_1)y_2(x_2+y_1) + c_1(x_2+y_2+c_2+m_2)\\
&\lambda_2(g_1, g_2)=P_c(m_1)y_1x_2+P(m_1)y_2(x_2+y_1)\\
&\lambda_3(g_1, g_2)=c_1(x_2+y_2)\\
&\lambda_4(g_1, g_2)=m_1(c_2+m_2)\\
&\lambda_5(g_1, g_2)=m_1c_2
\end{split}
\eeq
In Appendix \ref{app: top invariants}, we will that $\lambda_1,\lambda_2,\lambda_3$ generate LSM constraints, while $\lambda_4,\lambda_5$ correspond to non-LSM fractionalization pattern.

\item Wallpaper group 10: $p4$

This group is generated by $T_1$, $T_2$ and $C_4$, two translations with perpendicular translation vectors that have equal length, and a 4-fold rotational symmetry, such that \beq
C_4^4=1,~~~~~~C_4T_1C_4^{-1}=T_2,~~~~~~C_4T_2C_4^{-1}=T_1^{-1},~~~~~~T_1 T_2 = T_2 T_1.
\eeq
An arbitrary element in $p4$ can be written as $g=T_1^x T_2^y C_4^c$, with $x, y\in\mb{Z}$ and $c\in\{0, 1, 2, 3\}$. For $g_1=T_1^{x_1} T_2^{y_1} C_4^{c_1}$ and $g_2=T_1^{x_2} T_2^{y_2} C_4^{c_2}$, the group multiplication rule gives
\beq
g_1g_2=T_1^{x_1+\Delta x(x_2, y_2, c_1)} T_2^{y_1+\Delta y(x_2, y_2, c_1)}C_4^{[c_1+c_2]_4}
\eeq
where
\beq \label{eq: Delta XY for p4}
\Delta x(x, y, c)=
\left\{
\begin{array}{lr}
x, & c=0\\
-y, & c=1\\
-x, & c=2\\
y, & c=3
\end{array}
\right.,
\quad
\Delta y(x, y, c)=
\left\{
\begin{array}{lr}
y, & c=0\\
x, & c=1\\
-y, & c=2\\
-x, & c=3
\end{array}
\right.
\eeq

The $\z_2$ cohomology ring of $p4$ is 
\beq
\begin{split}
\mathbb{Z}_2[A_{c}, A_{x+y}, B_{c^2}, B_{xy}]/\big(&A_{c}^2=0,~ A_{c}A_{x+y} = 0, \\
&B_{xy}A_{x+y} = B_{xy}A_{c}, ~B_{c^2} A_{x+y} = A_{x+y}^3 + B_{xy}A_{x+y}, ~B_{xy}^2 = B_{c^2}B_{xy}\big)
\end{split}
\eeq
Here $H^1(p4, \z_2) = \z_2^2$, with generators $\xi_1=A_{x+y},\quad\xi_2=A_c$, which have representative cochains,
\beq
\xi_1(g) = x+y,\quad\quad\xi_2(g)=c.
\eeq
$H^2(p4, \z_2) = \z_2^3$, with generators $\lambda_1 = B_{xy} + A_{x+y}^2 + B_{c^2}, \quad \lambda_2 = B_{xy}, \quad \lambda_3 = A_{x+y}^2$, which have representative cochains,
\beq
\begin{split}
&\lambda_1(g_1, g_2)=\lambda_2(g_1, g_2) + \lambda_3(g_1, g_2) +\frac{[c_1]_4+[c_2]_4-[c_1+c_2]_4}{4}\\
&\lambda_2(g_1, g_2)=P_c(c_1)y_1x_2+P(c_1)y_2(x_2+y_1)\\
&\lambda_3(g_1, g_2)=P_{41}(c_1) x_2 + P_{42}(c_1)(x_2+y_2) + P_{43}(c_1)y_2
\end{split}
\eeq
In Appendix \ref{app: top invariants}, we will see that $\lambda_1,\lambda_2,\lambda_3$ generate LSM constraints.

\item Wallpaper group 11: $p4m$

This group is generated by $T_1$, $T_2$, $C_4$ and $M$, where the first three generators have the same properties as those in $p4$, and the last generator $M$ is a mirror symmetry that flips the translation vector of $T_1$, such that 
\beq
\begin{split}
&M^2=1,~~~~~~MC_4 M =C_4^{-1},~~~~~~MT_1M=T_1^{-1},~~~~~~MT_2M=T_2,\\
&C_4^4=1,~~~~~~C_4T_1C_4^{-1}=T_2,~~~~~~C_4T_2C_4^{-1}=T_1^{-1},~~~~~~T_1 T_2 = T_2 T_1.
\end{split}
\eeq
An arbitrary element in $p4m$ can be written as $g=T_1^xT_2^yC_4^cM^m$, with $x, y\in\mb{Z}$, $c\in\{0, 1, 2, 3\}$ and $m\in\{0, 1\}$. For $g_1=T_1^{x_1} T_2^{y_1} C_4^{c_1}M^{m_1}$ and $g_2=T_1^{x_2} T_2^{y_2} C_4^{c_2}M^{m_2}$, the group multiplication rule gives
\beq
g_1g_2=T_1^{x_1+\Delta x(Q(m_1)x_2, y_2, c_1)} T_2^{y_1+\Delta y(Q(m_1)x_2, y_2, c_1)}C_4^{[c_1+Q(m_1)c_2]_4}M^{P(m_1+m_2)}
\eeq
where $\Delta x(x, y, c)$ and $\Delta y(x, y, c)$ are defined in Eq. \eqref{eq: Delta XY for p4}. 

The $\z_2$ cohomology ring of $p4m$ is 
\beq
\begin{split}
\mathbb{Z}_2[A_{c}, A_{x+y}, A_m, B_{c^2}, B_{xy}]/\big(&A_{c}(A_c+A_m) = 0,~ A_{c}A_{x+y} = 0, \\
&B_{xy}A_{x+y} = B_{xy}(A_{c} + A_m), ~B_{c^2} A_{x+y} = A_{x+y}^3 + A_m A_{x+y}^2 + B_{xy}A_{x+y} , \\
&B_{xy}^2 = B_{c^2}B_{xy}\big)
\end{split}
\eeq
Here $H^1(p4m, \z_2) = \z_2^3$, with generators $\xi_1=A_{x+y},\quad\xi_2=A_c,\quad\xi_3=A_m$, which have representative cochains,
\beq
\xi_1(g) = x+y,\quad\quad\xi_2(g)=c,\quad\quad\xi_3(g)=m.
\eeq
$H^2(p4m, \z_2) = \z_2^6$, with generators $\lambda_1 = B_{xy} + A_{x+y}(A_{x+y} + A_m) + B_{c^2}, \quad \lambda_2 = B_{xy}, \quad \lambda_3 = A_{x+y}(A_{x+y} + A_m),\quad\lambda_4 = A_m(A_m+A_{x+y} + A_c), \quad \lambda_5 = A_m A_{x+y}, \quad \lambda_6 = A_m A_c$, which have representative cochains,
\beq
\begin{split}
&\lambda_1(g_1, g_2)=\lambda_2(g_1, g_2) + \lambda_3(g_1, g_2) + \frac{[c_1]_4+Q(m_1)[c_2]_4-[c_1+Q(m_1)c_2]_4}{4}\\
&\lambda_2(g_1, g_2)=P_c(c_1)y_1x_2+P(c_1)y_2(x_2+y_1)\\
&\lambda_3(g_1, g_2)=P_{41}(c_1) x_2 + P_{42}(c_1)(x_2+y_2) + P_{43}(c_1)y_2+m_1 y_2\\
&\lambda_4(g_1, g_2)=m_1(x_2+ y_2 +c_2 + m_2)\\
&\lambda_5(g_1, g_2)=m_1 (x_2+y_2)\\
&\lambda_6(g_1, g_2)=m_1 c_2
\end{split}
\eeq
In Appendix \ref{app: top invariants}, we will see that $\lambda_1,\lambda_2,\lambda_3$ generate LSM constraints, while $\lambda_4,\lambda_5,\lambda_6$ correspond to non-LSM fractionalization patterns.

\item Wallpaper group 12: $p4g$

This group is generated by $T_1$, $T_2$, $C_4$ and $G$. The first three generators have the same properties as those in $p4$, and the last generator $G$ is a glide reflection whose reflection axis passes through the rotation center of $C_4$ and bisects the translation vectors of $T_1$ and $T_2$, such that
\beq
&C_4^4=1,~~~~~~C_4T_1C_4^{-1}=T_2,~~~~~~C_4T_2C_4^{-1}=T_1^{-1},~~~~~~T_1 T_2 = T_2 T_1,\\
&G^2=T_1T_2,~~~~~~GT_1G^{-1}=T_2, ~~~~~~GT_2G^{-1}=T_1,~~~~~~GC_4G^{-1}=T_2C_4^{-1}.
\eeq
Note that there is also a mirror symmetry $M=T_1^{-1}G$. An arbitrary element in $p4g$ can be written as $g=T_1^x T_2^y C_4^c G^{s}$, with $x, y\in\mb{Z}$, $c\in\{0, 1, 2, 3\}$ and $s\in\{0, 1\}$. For $g_1=T_1^{x_1} T_2^{y_1} C_4^{c_1} G^{s_1}$ and $g_2=T_1^{x_2} T_2^{y_2} C_4^{c_2} G^{s_2}$, the group multiplication rule gives
\beq
g_1g_2=T_1^{x_1+\Delta x(X, Y, c_1)} T_2^{y_1+\Delta y(X, Y, c_1)} C_4^{[c_1+Q(s_1)c_2]_4} G^{P(s_1+s_2)}
\eeq
where $\Delta x(x, y, c)$ and $\Delta y(x, y, c)$ are defined in Eq. \eqref{eq: Delta XY for p4}, and
\beq
\begin{split}
    &X=P_c(s_1)x_2+P(s_1)y_2+\left(P_{42}(c_2)+P_{43}(c_2)\right)s_1+\Delta x(s_1 s_2, s_1 s_2, [Q(s_1)c_2]_4)\\
    &Y=P_c(s_1)y_2+P(s_1)x_2+\left(P_{41}(c_2)+P_{42}(c_2)\right)s_1+\Delta y(s_1 s_2, s_1 s_2, [Q(s_1)c_2]_4)
\end{split}
\eeq

The $\z_2$ cohomology ring of $p4g$ is 
\beq
\begin{split}
\mathbb{Z}_2[A_{c}, A_s, B_{c^2}, B_{c(x+y)}, C_{c^2(x+y)}]/\big(&A_{c}^2 = 0,~ A_{c}A_s = 0,~A_c B_{c(x+y)}=0,~A_s B_{c(x+y)}=A_s B_{c^2}, \\
&B_{c(x+y)}^2 = B_{c^2} B_{c(x+y)}, ~A_c C_{c^2(x+y)} = 0,\\
&B_{c(x+y)} C_{c^2(x+y)}  = B_{c^2} C_{c^2(x+y)},\\
&C_{c^2(x+y)}^2 = B_{c(x+y)}^3 + A_s B_{c^2} C_{c^2(x+y)} \big)
\end{split}
\eeq
Here $H^1(p4g, \z_2) = \z_2^2$, with generators $\xi_1=A_c,\quad\xi_2=A_s$, which have representative cochains,
\beq
\xi_1(g) = c,\quad\quad\xi_2(g)=s.
\eeq
$H^2(p4g, \z_2) = \z_2^3$, with generators $\lambda_1 = B_{c(x+y)} + B_{c^2}, \quad \lambda_2 = B_{c(x+y)}, \quad \lambda_3 = A_s^2$, which have representative cochains,
\beq
\begin{split}
&\lambda_1(g_1, g_2)=\lambda_2(g_1, g_2) + \frac{[c_1]_4+Q(s_1)[c_2]_4-[c_1+Q(s_1)c_2]_4}{4}\\
&\lambda_2(g_1, g_2)= P_{40}(c_1)s_1(P_{43}(c_2)+(c_1-c_2)s_2)+P_{41}(c_1)[P_c(s_1)y_2+s_1(x_2+1-P_{40}(c_2)+(c_1-c_2)s_2)]\\
&\qquad\qquad\ 
+P_{42}(c_1)[P_c(s_1)(x_2+y_2)+s_1(x_2+y_2+P_{41}(c_2)+(c_1-c_2)s_2)]\\
&\qquad\qquad\ 
+P_{43}(c_1)[P_c(s_1)x_2+s_1(y_2+P_{42}(c_2)+(c_1-c_2)s_2)]\\
&\lambda_3(g_1, g_2)=s_1s_2
\end{split}
\eeq
In Appendix \ref{app: top invariants}, we will see that $\lambda_1,\lambda_2$ generate LSM constraints, while $\lambda_3$ corresponds to non-LSM fractionalization patterns.

Pay attention that there is a degree-3 generator $C_{c^2(x+y)}\in H^3(p4g, \z_2)$. We do not have the explicit form of its representative cochain, but it can be determined by its pullback to the subgroup $p4$ generated by $T_1$, $T_2$, $C_4$, which is $A_{x+y}^3$, as well as its pullback to the subgroup $cmm$ generated by $T_1$, $T_2$, $T_2 C_2$, $T_1^{-1}G$, which is $A_{x+y}^3+A_c^3 + B_{xy} A_m + A_c^2A_m$.

\item Wallpaper group 13: $p3$

This group is generated by $T_1$, $T_2$ and $C_3$, two translations with translation vectors that have the same length and an angle of $2\pi/3$, and a 3-fold rotational symmetry, such that
\beq
C_3^3=1,~~~~~~C_3T_1C_3^{-1}=T_2,~~~~~~C_3T_2C_3^{-1}=T_1^{-1}T_2^{-1},~~~~~~T_1 T_2 = T_2 T_1.
\eeq
An arbitrary element in $p3$ can be written as $g=T_1^x T_2^y C_3^c$, with $x, y\in\mb{Z}$ and $c\in\{0, 1, 2\}$. For $g_1=T_1^{x_1} T_2^{y_1} C_3^{c_1}$ and $g_2=T_1^{x_2} T_2^{y_2} C_3^{c_2}$, the group multiplication rule gives
\beq
g_1g_2=T_1^{x_1+\Delta x(x_2, y_2, c_1)} T_2^{y_1+\Delta y(x_2, y_2, c_1)} C_3^{[c_1+c_2]_3}
\eeq
where
\beq \label{eq: Delta XY for p3}
\Delta x(x, y, c)=
\left\{
\begin{array}{lr}
x, & c=0\\
-y, & c=1\\
-x+y, & c=2
\end{array}
\right.,
\quad
\Delta y(x, y, c)=
\left\{
\begin{array}{lr}
y, & c=0\\
x-y, & c=1\\
-x, & c=2
\end{array}
\right.
\eeq

The $\z_2$ cohomology ring of $p3$ is 
\beq
\z_2[B_{xy}]/(B_{xy}^2=0)
\eeq
Here $H^1(p3, \z_2) = 0$, while $H^2(p3, \z_2) = \z_2$, with generator $\lambda_1 = B_{xy}$, which have representative cochain,
\beq
\begin{split}
\lambda_1(g_1, g_2)=&P_{30}(c_1)y_1x_2+P_{31}(c_1)\left(\frac{y_2(y_2-1)}{2}+y_2(x_2+y_1)\right)\\
&+P_{32}(c_1)\left(\frac{x_2(x_2-1)}{2}+ y_1x_2 + y_2(x_2+y_1)\right)
\end{split}
\eeq
In Appendix \ref{app: top invariants}, we will see that $\lambda_1$ generates LSM constraints.

\item Wallpaper group 14: $p3m1$

This group is generated by $T_1$, $T_2$, $C_3$ and $M$, where the first three generators have the same properties as those in $p3$, and the last one is a mirror symmetry whose mirror axis passes through the $C_3$ center, and is perpendicular to the angle that bisects the two translation vectors of $T_1$ and $T_2$, such that 
\beq
\begin{split}
&M^2=1,~~~~~~MC_3M = C_3^{-1},~~~~~~MT_1M=T_2^{-1},~~~~~~MT_2M=T_1^{-1},\\
&C_3^3=1,~~~~~~C_3T_1C_3^{-1}=T_2,~~~~~~C_3T_2C_3^{-1}=T_1^{-1}T_2^{-1},~~~~~~T_1 T_2 = T_2 T_1.
\end{split}
\eeq
An arbitrary element in $p3m1$ can be written as $g=T_1^x T_2^y C_3^c M^m$, with $x, y\in\mb{Z}$, $c\in\{0, 1, 2\}$ and $M\in\{0, 1\}$. For $g_1=T_1^{x_1} T_2^{y_1} C_3^{c_1} M^{m_1}$ and $g_2=T_1^{x_2} T_2^{y_2} C_3^{c_2} M^{m_2}$, the group multiplication rule gives
\beq
g_1g_2=T_1^{x_1+\Delta x(X, Y, c_1)} T_2^{y_1+\Delta y(X, Y, c_1)}C_3^{[c_1+Q(m_1)c_2]_3}M^{P(m_1+m_2)}
\eeq
where $\Delta x(x, y, c)$ and $\Delta y(x, y, c)$ are defined in Eq. \eqref{eq: Delta XY for p3}, and
\beq
\begin{split}
    &X=P_c(m_1)x_2-P(m_1)y_2\\
    &Y=P_c(m_1)y_2-P(m_1)x_2
\end{split}
\eeq

The $\z_2$ cohomology ring of $p3m1$ is 
\beq
\z_2[A_m, B_{xy}]/(B_{xy}^2=0)
\eeq
Here $H^1(p3m1, \z_2) = 0$, with generator $\xi_1=A_m$, which have representative cochain,
\beq
\xi_1(g) = m.
\eeq
$H^2(p3m1, \z_2) = \z_2^2$, with generators $\lambda_1 = B_{xy},\quad\lambda_2=A_m^2$, which have representative cochains,
\beq
\begin{split}
    &\lambda_1(g_1, g_2)=
    P_{30}(c_1)[P_c(m_1)y_1x_2+m_1y_2(x_2+y_1)]\\
    &\qquad\qquad\ 
    +P_{31}(c_1)\left[P_c(m_1)\left(\frac{y_2(y_2-1)}{2}+y_2(x_2+y_1)\right)+m_1\left(\frac{x_2(x_2+1)}{2}+y_1x_2\right)\right]\\
    &\qquad\qquad\ 
    +P_{32}(c_1)\left[P_c(m_1)\left(\frac{x_2(x_2-1)}{2}+y_1x_2+y_2(x_2+y_1)\right)+m_1\left(\frac{y_2(y_2+1)}{2}+y_1(x_2+y_2)\right)\right]\\
    &\lambda_2(g_1, g_2)=m_1m_2
\end{split}
\eeq
In Appendix \ref{app: top invariants}, we will see that $\lambda_1$ generates LSM constraints, while $\lambda_2$ corresponds to non-LSM fractionalization pattern.

\item Wallpaper group 15: $p31m$

This group is generated by $T_1$, $T_2$, $C_3$ and $M$, where the first three generators have the same properties as those in $p3$ and $p3m1$, and the last one is a mirror symmetry whose mirror axis passes through the $C_3$ center and bisects the translation vectors of $T_1$ and $T_2$, such that 
\beq
\begin{split}
&M^2=1,~~~~~~MC_3M = C_3^{-1},~~~~~~MT_1M=T_2,~~~~~~MT_2M=T_1,\\
&C_3^3=1,~~~~~~C_3T_1C_3^{-1}=T_2,~~~~~~C_3T_2C_3^{-1}=T_1^{-1}T_2^{-1},~~~~~~T_1 T_2 = T_2 T_1.
\end{split}
\eeq
An arbitrary element in $p31m$ can be written as $g=T_1^x T_2^y C_3^c M^m$, with $x, y\in\mb{Z}$, $c\in\{0, 1, 2\}$ and $M\in\{0, 1\}$. For $g_1=T_1^{x_1} T_2^{y_1} C_3^{c_1} M^{m_1}$ and $g_2=T_1^{x_2} T_2^{y_2} C_3^{c_2} M^{m_2}$, the group multiplication rule gives
\beq
g_1g_2=T_1^{x_1+\Delta x(X, Y, c_1)} T_2^{y_1+\Delta y(X, Y, c_1)}C_3^{[c_1+Q(m_1)c_2]_3}M^{P(m_1+m_2)}
\eeq
where $\Delta x(x, y, c)$ and $\Delta y(x, y, c)$ are defined in Eq. \eqref{eq: Delta XY for p3}, and
\beq \label{eq: XY in p3m1}
\begin{split}
    &X=P_c(m_1)x_2+P(m_1)y_2\\
    &Y=P_c(m_1)y_2+P(m_1)x_2
\end{split}
\eeq

The $\z_2$ cohomology ring of $p31m$ is 
\beq
\z_2[A_m, B_{xy}]/(B_{xy}^2=0)
\eeq
Here $H^1(p31m, \z_2) = 0$, with generator $\xi_1=A_m$, which have representative cochain,
\beq
\xi_1(g) = m.
\eeq
$H^2(p31m, \z_2) = \z_2^2$, with generator $\lambda_1 = B_{xy},\quad\lambda_2=A_m^2$, which have representative cochains,
\beq
\begin{split}
    &\lambda_1(g_1, g_2)=P_{30}(c_1)\left[P_c(m_1)y_1x_2+m_1y_2(x_2+y_1)\right]\\
    &\qquad\qquad\ 
    +P_{31}(c_1)\left[P_c(m_1)\left(\frac{y_2(y_2+1)}{2}+x_2+y_2(x_2+y_1)\right)+m_1\left(\frac{x_2(x_2+1)}{2}+y_2+y_1x_2\right)\right]\\
    &\qquad\qquad\ 
    +P_{32}(c_1)\left[P_c(m_1)\left(\frac{x_2(x_2-1)}{2}+y_2+y_1x_2+y_2(x_2+y_1)\right)+m_1\left(\frac{y_2(y_2-1)}{2}+x_2+y_1(x_2+y_2)\right)\right]\\
    &\lambda_2(g_1, g_2)=m_1m_2
\end{split}
\eeq
In Appendix \ref{app: top invariants}, we will see that $\lambda_1$ generates LSM constraints while $\lambda_2$ corresponds to non-LSM fractionalization pattern.

\item Wallpaper group 16: $p6$

This group is generated by $T_1$, $T_2$ and $C_6$, two translations with translation vectors that have the same length and an angle of $2\pi/3$, and a 6-fold rotational symmetry, such that 
\beq
C_6^6=1,~~~~~~C_6T_1C_6^{-1}=T_1T_2,~~~~~~C_6T_2C_6^{-1}=T_1^{-1},~~~~~~T_1 T_2 = T_2 T_1.
\eeq
An arbitrary element in $p6$ can be written as $g=T_1^x T_2^y C_6^c$, with $x, y\in\mb{Z}$ and $c\in\{0, 1, 2, 3, 4, 5\}$. For $g_1=T_1^{x_1} T_2^{y_1} C_6^{c_1}$ and $g_2=T_1^{x_2} T_2^{y_2} C_6^{c_2}$, the group multiplication rule gives
\beq
g_1g_2=T_1^{x_1+\Delta x(x_2, y_2, c_1)}T_2^{y_1+\Delta y(x_2, y_2, c_1)}C_6^{[c_1+c_2]_6}
\eeq
where
\beq \label{eq: Delta XY in p6}
\Delta x(x, y, c)=
\left\{
\begin{array}{lr}
x, & c=0\\
x-y, & c=1\\
-y, & c=2\\
-x, & c=3\\
-x+y, & c=4\\
y, & c=5
\end{array}
\right.,
\quad
\Delta y(x, y, c)=
\left\{
\begin{array}{lr}
y, & c=0\\
x, & c=1\\
x-y, & c=2\\
-y, & c=3\\
-x, & c=4\\
-x+y, & c=5
\end{array}
\right.
\eeq

The $\z_2$ cohomology ring of $p6$ is 
\beq
\z_2[A_c, B_{xy}]/(B_{xy}^2=A_c^2 B_{xy})
\eeq
Here $H^1(p6, \z_2) = \z_2$, with generator $\xi_1=A_c$, which have representative cochain,
\beq
\xi_1(g) = c.
\eeq
$H^2(p6, \z_2) = \z_2^2$, with generators $\lambda_1 = B_{xy} + A_{c}^2, \quad \lambda_2 = B_{xy}$, which have representative cochains,
\beq
\begin{split}
&\lambda_1(g_1, g_2)=\lambda_2(g_1, g_2) + +\frac{[c_1]_6+[c_2]_6-[c_1+c_2]_6}{6}\\
&\lambda_2(g_1, g_2)=    P_{60}(c_1)y_1x_2+P_{61}(c_1)\left(\frac{x_2(x_2-1)}{2}+y_1x_2+y_2(x_2+y_1)\right)\\
&\qquad\qquad\ 
+P_{62}(c_1)\left(\frac{y_2(y_2+1)}{2}+x_2+y_2(x_2+y_1)\right)+P_{63}(c_1)(x_2+y_2+y_1x_2)\\
&\qquad\qquad\ 
+P_{64}(c_1)\left(\frac{x_2(x_2-1)}{2}+y_2+y_1x_2+y_2(x_2+y_1)\right)+P_{65}(c_1)\left(\frac{y_2(y_2+1)}{2}+y_2(x_2+y_1)\right)
\end{split}
\eeq
In Appendix \ref{app: top invariants}, we will see that $\lambda_1,\lambda_2$ generate LSM constraints.

\item Wallpaper group 17: $p6m$

This group is generated by $T_1$, $T_2$, $C_6$ and $M$, where the first three generators have the same properties as those in $p6$, and the last one is a mirror symmetry whose mirror axis passes through the $C_6$ center and bisects $T_1$ and $T_2$, such that 
\beq
\begin{split}
&M^2=1,~~~~~~MC_6M = C_6^{-1},~~~~~~MT_1M=T_2,~~~~~~MT_2M=T_1,\\
&C_6^6=1,~~~~~~C_6T_1C_6^{-1}=T_1T_2,~~~~~~C_6T_2C_6^{-1}=T_1^{-1},~~~~~~T_1 T_2 = T_2 T_1.
\end{split}
\eeq
An arbitrary element in $p6m$ can be written as $g=T_1^x T_2^y C_6^c M^m$, with $x, y\in\mb{Z}$, $c\in\{0, 1, 2, 3, 4, 5\}$ and $m\in\{0, 1\}$. For $g_1=T_1^{x_1} T_2^{y_1} C_6^{c_1} M^{m_1}$ and $g_2=T_1^{x_2} T_2^{y_2} C_6^{c_2} M^{m_2}$, the group multiplication rule gives
\beq
g_1g_2=T_1^{x_1+\Delta x(X, Y, c_1)} T_2^{y_1+\Delta y(X, Y, c_1)} C_6^{[c_1+Q(m_1)c_2]_6}M^{P(m_1+m_2)}
\eeq
where $\Delta x(x, y, c)$ and $\Delta y(x, y, c)$ are defined in Eq. \eqref{eq: Delta XY in p6}, and $X$ and $Y$ are defined in Eq. \eqref{eq: XY in p3m1}. 

The $\z_2$ cohomology ring of $p6m$ is 
\beq
\z_2[A_c, A_m, B_{xy}]/\left(B_{xy}^2=(A_c^2+A_c A_m) B_{xy}\right)
\eeq
Here $H^1(p6m, \z_2) = \z_2^2$, with generator $\xi_1=A_c,\quad\xi_2=A_m$, which have representative cochains,
\beq
\xi_1(g) = c,\quad\quad\xi_2(g)=m.
\eeq
$H^2(p6m, \z_2) = \z_2^4$, with generators $\lambda_1 = B_{xy} + A_{c}^2 +A_c A_m, \quad \lambda_2 = B_{xy}, \quad\lambda_3 = (A_c + A_m) A_m, \quad\lambda_4 = A_c A_m$, which have representative cochains,
\beq
\begin{split}
    &\lambda_1(g_1, g_2)=\lambda_2(g_1, g_2) + \frac{[c_1]_6+Q(m_1)[c_2]_6-[c_1+Q(m_1)c_2]_6}{6}\\
    &\lambda_2(g_1, g_2) = P_{60}(c_1)\left[P_c(m_1)y_1x_2+m_1y_2(x_2+y_1)\right]\\
    &\qquad\qquad\ 
    +P_{61}(c_1)\left[P_c(m_1)\left(\frac{x_2(x_2-1)}{2}+y_1x_2+y_2(x_2+y_1)\right)+m_1\left(\frac{y_2(y_2-1)}{2}+y_1(x_2+y_2)\right)\right]\\
    &\qquad\qquad\ 
    +P_{62}(c_1)\left[P_c(m_1)\left(\frac{y_2(y_2+1)}{2}+x_2+y_2(x_2+y_1)\right)+m_1\left(\frac{x_2(x_2+1)}{2}+y_2+y_1x_2\right)\right]\\
    &\qquad\qquad\ 
    +P_{63}(c_1)\left[P_c(m_1)(x_2+y_2+y_1x_2)+m_1(x_2+y_2+y_2(x_2+y_1)\right]\\
    &\qquad\qquad\ 
    +P_{64}(c_1)\left[P_c(m_1)\left(\frac{x_2(x_2-1)}{2}+y_2+y_1x_2+y_2(x_2+y_1)\right)+m_1\left(\frac{y_2(y_2-1)}{2}+x_2+y_1(x_2+y_2)\right)\right]\\
    &\qquad\qquad\ 
    +P_{65}(c_1)\left[P_c(m_1)\left(\frac{y_2(y_2+1)}{2}+y_2(x_2+y_1)\right)+m_1\left(\frac{x_2(x_2+1)}{2}+y_1x_2\right)\right]\\
    &\lambda_3(g_1, g_2)=m_1(c_2+m_2)\\
    &\lambda_4(g_1, g_2)=m_1 c_2
\end{split}
\eeq
In Appendix \ref{app: top invariants}, we will see that $\lambda_1,\lambda_2$ generate LSM constraints, while $\lambda_3,\lambda_4$ correspond to non-LSM fractionalization patterns.

\end{itemize}

\section{Topological invariants for all LSM constraints} \label{app: top invariants}

In this appendix, for each of the 17 wallpaper groups, we present the topological invariants for all LSM constraints and topological invariants for all non-LSM fractionalization patterns. These topological invariants can all be written down by simply inspecting the IWP and/or the mirror axes of the relevant wallpaper groups, and they correspond to various $(3+1)$-d $G_s\times G_{int}$ SPTs that can be constructed in a manner described in Sec. \ref{subsec: topological invariants}. This physics-based reasoning implies that the topological invariants we present here are complete and independent. In Appendix \ref{app: cohomology ring}, we also provide explicit expressions of the representative cochains that correspond to each of the topological invariants, which show mathematically that the topological invariants here are indeed complete and independent.

\begin{itemize}
    \item Wallpaper group 1: $p1$

All points in space correspond to the same IWP for $p1$. The fractionalization patterns of $p1$ are classified by $H^2(p1, \mb{Z}_2)=\mb{Z}_2$. There is only one nontrivial fractionalization pattern $T_1 T_2 = -T_2 T_1$, detected by the topological invariant
\beq
\alpha_1[\omega]=\frac{\omega(T_1, T_2)}{\omega(T_2, T_1)}.
\eeq
This fractionalization pattern is related to the first of the 3 basic no-go theorems in Sec. \ref{subsec: review}, so it corresponds to an LSM constraint, and the classification of LSM constraints is $\mb{Z}_2$. It is straightforward to check that $\alpha_1[(-1)^{\lambda_1}] = -1$, where $\lambda_1$ is defined in Appendix \ref{app: cohomology ring}.

    \item Wallpaper group 2: $p2$

There are 4 different IWP for $p2$, which are rotation centers for $C_2$, $T_1C_2$, $T_2C_2$ and $T_1T_2C_2$, respectively. The fractionalization patterns of $p2$ are classified by $H^2(p2, \mb{Z}_2)=\mb{Z}_2^4$. All fractionalization patterns are generated by 4 root patterns, detected by the topological invariants,
\beq
\begin{split}
    &\alpha_1[\omega]=\frac{\omega(C_2, C_2)}{\omega(1, 1)}\\
    &\alpha_2[\omega]=\frac{\omega(T_1C_2, T_1C_2)}{\omega(1, 1)}\\
    &\alpha_3[\omega]=\frac{\omega(T_2C_2, T_2C_2)}{\omega(1, 1)}\\
    &\alpha_4[\omega]=\frac{\omega(T_1T_2C_2, T_1T_2C_2)}{\omega(1, 1)}\\
\end{split}
\eeq
corresponding to $C_2^2=-1$, $(T_1 C_2)^2=-1$, $(T_2 C_2)^2=-1$ and $(T_1 T_2 C_2)^2=-1$ respectively. All these topological invariants are related to the third of the 3 basic no-go theorems in Sec. \ref{subsec: review}, so they all correspond to LSM constraints, and the classification of LSM constraints is $\mb{Z}_2^4$. It is straightforward to check that $\alpha_i[(-1)^{\lambda_j}] = (-1)^{\delta_{ij}}$ for $i,j=1,\dots,4$, where $\lambda_i$ is defined in Appendix \ref{app: cohomology ring}.

    \item Wallpaper group 3: $pm$

There are 2 different IWP for $pm$, which are the mirror axes for $M$ and $T_1M$, respectively. The fractionalization patterns of $pm$ are classified by $H^2(pm, \mb{Z}_2)=\mb{Z}_2^4$. All fractionalization patterns are generated by 4 root patterns, detected by the topological invariants
\beq
\begin{split}
    &\alpha_1[\omega]=\frac{\omega(T_2, M)}{\omega(M, T_2)}\\
    &\alpha_2[\omega]=\frac{\omega(T_2, T_1M)}{\omega(T_1M, T_2)}\\
    &\alpha_3[\omega]=\frac{\omega(M, M)}{\omega(1, 1)}\\
    &\alpha_4[\omega]=\frac{\omega(T_1M, T_1M)}{\omega(1, 1)}
\end{split}
\eeq
The first two topological invariants are related to the second of the 3 basic no-go theorems, so they correspond to LSM constraints, and the classification of LSM constraints is $\mb{Z}_2^2$. The last two are non-LSM fractionalization patterns. It is straightforward to check that $\alpha_i[(-1)^{\lambda_j}] = (-1)^{\delta_{ij}}$ for $i,j=1,\dots,4$, where $\lambda_i$ is defined in Appendix \ref{app: cohomology ring}.

    \item Wallpaper group 4: $pg$

All points in space belong to one IWP of $pg$. The fractionalization patterns of $pg$ are classified by $H^2(pg, \mb{Z}_2)=\mb{Z}_2$. There is only one nontrivial fractionalization pattern, detected by the topological invariant
\beq \label{eqapp: top inv for pg}
\alpha_1[\omega]=\frac{\omega(T_1G^{-1}, T_1G)\omega(T_1, G)}{\omega(G^{-1}, G)\omega(T_1, G^{-1})}
\eeq
corresponding to the symmetry fractionalization pattern $G^{-1} T_1 G T_1= -1$. This topological invariant is related to the first of the 3 basic no-go theorems, so it corresponds to an LSM constraint, and the classification of LSM constraints is $\mb{Z}_2$. It is straightforward to check that $\alpha_1[(-1)^{\lambda_1}] = -1$, where $\lambda_1$ is defined in Appendix \ref{app: cohomology ring}.

    \item Wallpaper group 5: $cm$

The group $cm$ has one IWP, which includes points along the mirror axis of $M$. The fractionalization patterns of $cm$ are classified by $H^2(cm, \mb{Z}_2)=\mb{Z}_2^2$. All fractionalization patterns are generated by 2 root patterns, detected by the topological invariants
\beq
\begin{split}
    &\alpha_1[\omega]=\frac{\omega(T_1T_2, M)}{\omega(M, T_1T_2)}\\
    &\alpha_2[\omega]=\frac{\omega(M, M)}{\omega(1, 1)}
\end{split}
\eeq
The first topological invariant is related to the second of the 3 basic no-go theorems, so it corresponds to an LSM constraint, and the classification of LSM constraints is $\mb{Z}_2$. The second one is a non-LSM fractionalization pattern. It is straightforward to check that $\alpha_i[(-1)^{\lambda_j}] = (-1)^{\delta_{ij}}$ for $i,j=1,2$, where $\lambda_i$ is defined in Appendix \ref{app: cohomology ring}.

    \item Wallpaper group 6: $pmm$

There are 4 different IWP for $pmm$, which are the intersecting point of $M$ and $C_2M$, the intersecting point of $T_1M$ and $C_2M$, the intersecting point of $M$ and $T_2C_2M$, and the intersecting point of $T_1M$ and $T_2C_2M$. Note that these 4 IWP can also be respectively viewed as the rotation centers for the following four $C_2$ rotations: $C_2$, $T_1C_2$, $T_2C_2$ and $T_1T_2C_2$. The fractionalization patterns of $pmm$ are classified by $H^2(pmm, \mb{Z}_2)=\mb{Z}_2^8$. All fractionalization patterns are generated by 8 root patterns, detected by the topological invariants
\beq
\begin{split}
    &\alpha_1[\omega]=\frac{\omega(C_2, C_2)}{\omega(1, 1)}\\
    &\alpha_2[\omega]=\frac{\omega(T_1T_2C_2, T_1T_2C_2)}{\omega(1, 1)}\\
    &\alpha_3[\omega]=\frac{\omega(T_1C_2, T_1C_2)}{\omega(1, 1)}\\
    &\alpha_4[\omega]=\frac{\omega(T_2C_2, T_2C_2)}{\omega(1, 1)}\\
    &\alpha_5[\omega]=\frac{\omega(M, M)}{\omega(1, 1)}\\
    &\alpha_6[\omega]=\frac{\omega(C_2M, C_2M)}{\omega(1, 1)}\\
    &\alpha_7[\omega]=\frac{\omega(T_1M, T_1M)}{\omega(1, 1)}\\
    &\alpha_8[\omega]=\frac{\omega(T_2C_2M, T_2C_2M)}{\omega(1, 1)}
\end{split}
\eeq
The first four topological invariants are related to the third of the 3 basic no-go theorems, so they correspond to LSM constraints, and the classification of LSM constraints is $\mb{Z}_2^4$. The last four are non-LSM fractionalization patterns. It is straightforward to check that $\alpha_i[(-1)^{\lambda_j}] = (-1)^{\delta_{ij}}$ for $i,j=1,\dots,8$, where $\lambda_i$ is defined in Appendix \ref{app: cohomology ring}.

    \item Wallpaper group 7: $pmg$

There are 3 different IWP for $pmg$. The first includes the rotation centers of $C_2$ and $T_1C_2$, the second includes the rotation centers of $T_2C_2$ and $T_1T_2C_2$, and the third includes the mirror axes of $M$ and $T_1M$. The fractionalization patterns of $pmg$ are classified by $H^2(pmg, \mb{Z}_2)=\mb{Z}_2^4$. All fractionalization patterns are generated by 4 root patterns, detected by the topological invariants
\beq
\begin{split}
    &\alpha_1[\omega]=\frac{\omega(C_2, C_2)}{\omega(1, 1)}\\
    &\alpha_2[\omega]=\frac{\omega(T_1T_2C_2, T_1T_2C_2)}{\omega(1, 1)}\\
    &\alpha_3[\omega]=\frac{\omega(T_2, M)}{\omega(M, T_2)}\\
    &\alpha_4[\omega]=\frac{\omega(M, M)}{\omega(1, 1)}
\end{split}
\eeq
The first two topological invariants are related to the third of the 3 basic no-go theorems, and the third is related to the second of the 3 basic no-go theorems, so they correspond to LSM constraints, and the classification of LSM constraints is $\mb{Z}_2^3$. The fourth topological invariant is a non-LSM fractionalization pattern. It is straightforward to check that $\alpha_i[(-1)^{\lambda_j}] = (-1)^{\delta_{ij}}$ for $i,j=1,\dots,4$, where $\lambda_i$ is defined in Appendix \ref{app: cohomology ring}.

    \item Wallpaper group 8: $pgg$

There are 2 different IWP for $pgg$. The first includes the rotation centers of $C_2$ and $T_1T_2C_2$, and the second includes the rotation centers of $T_1C_2$ and $T_2C_2$. The fractionalization patterns of $pgg$ are classified by $H^2(pgg, \mb{Z}_2)=\mb{Z}_2^2$. All fractionalization patterns are generated by 2 root patterns, detected by the topological invariants
\beq
\begin{split}
    &\alpha_1[\omega]=\frac{\omega(C_2, C_2)}{\omega(1, 1)}\\
    &\alpha_2[\omega]=\frac{\omega(T_1C_2, T_1C_2)}{\omega(1, 1)}
\end{split}
\eeq
Both topological invariants are related to the third of the 3 basic no-go theorems, so they both correspond to LSM constraints, and the classification of LSM constraints is $\mb{Z}_2^2$. It is straightforward to check that $\alpha_i[(-1)^{\lambda_j}] = (-1)^{\delta_{ij}}$ for $i,j=1,2$, where $\lambda_i$ is defined in Appendix \ref{app: cohomology ring}.

    \item Wallpaper group 9: $cmm$

There are 3 different IWP for $cmm$. The first is the 2-fold rotation center of $C_2$, the second is the 2-fold rotation center of $T_1T_2C_2$, and the third inlcudes the 2-fold rotation centers of $T_1C_2$ and $T_2C_2$.

The fractionalization patterns of $cmm$ are classified by $H^2(cmm, \mb{Z}_2)=\mb{Z}_2^5$. All fractionalization patterns are generated by 5 root patterns, detected by the topological invariants
\beq
\begin{split}
    &\alpha_1[\omega]=\frac{\omega(C_2, C_2)}{\omega(1, 1)}\\
    &\alpha_2[\omega]=\frac{\omega(T_1T_2C_2, T_1T_2C_2)}{\omega(1, 1)}\\
    &\alpha_3[\omega]=\frac{\omega(T_1C_2, T_1C_2)}{\omega(1, 1)}\\
    &\alpha_4[\omega]=\frac{\omega(M, M)}{\omega(1, 1)}\\
    &\alpha_5[\omega]=\frac{\omega(C_2M, C_2M)}{\omega(1, 1)}
\end{split}
\eeq
The first three topological invariants are related to the third of the 3 basic no-go theorems, so they correspond to LSM constraints, and the classification of LSM constraints is $\mb{Z}_2^3$. The last two are non-LSM fractionalization patterns. It is straightforward to check that $\alpha_i[(-1)^{\lambda_j}] = (-1)^{\delta_{ij}}$ for $i,j=1,\dots,5$, where $\lambda_i$ is defined in Appendix \ref{app: cohomology ring}.

    \item Wallpaper group 10: $p4$

There are 3 different IWP for $p4$. The first is the 2-fold rotation center of $C_4^2$, the second is the 2-fold rotation center of $T_1T_2C_4^2$, and the third includes the 2-fold rotation centers of $T_1C_4^2$ and $T_2C_4^2$. Note that the first two IWP are also 4-fold rotation centers. The fractionalization patterns of $p4$ are classfied by $H^2(p4, \mb{Z}_2)=\mb{Z}_2^3$. All fractionalization patterns are generated by 3 root patterns, detected by the topological invariants
\beq
\begin{split}
    &\alpha_1[\omega]=\frac{\omega(C_4^2, C_4^2)}{\omega(1, 1)}\\
    &\alpha_2[\omega]=\frac{\omega(T_1T_2C_4^2, T_1T_2C_4^2)}{\omega(1, 1)}\\
    &\alpha_3[\omega]=\frac{\omega(T_1C_4^2, T_1C_4^2)}{\omega(1, 1)}
\end{split}
\eeq
All these topological invariants are related to the third of the 3 basic no-go theorems, so they all correspond to LSM constraints, and the classification of LSM constraints is $\mb{Z}_2^3$. It is straightforward to check that $\alpha_i[(-1)^{\lambda_j}] = (-1)^{\delta_{ij}}$ for $i,j=1,\dots,3$, where $\lambda_i$ is defined in Appendix \ref{app: cohomology ring}.

    \item Wallpaper group 11: $p4m$

There are 3 different IWP for $p4m$, just like $p4$. The first is the 2-fold rotation center of $C_4^2$, the second is the 2-fold rotation center of $T_1T_2C_4^2$, and the third includes the 2-fold rotation centers for $T_1C_4^2$ and $T_2C_4^2$. All these IWP are also on some mirror axes, and the first two are also 4-fold rotation centers. The fractionalization patterns of $p4m$ are classified by $H^2(p4m, \mb{Z}_2)=\mb{Z}_2^6$. All fractionalization patterns are genereated by 6 root patterns, detected by the topological invariants
\beq
\begin{split}
    &\alpha_1[\omega]=\frac{\omega(C_4^2, C_4^2)}{\omega(1, 1)}\\
    &\alpha_2[\omega]=\frac{\omega(T_1T_2C_4^2, T_1T_2C_4^2)}{\omega(1, 1)}\\
    &\alpha_3[\omega]=\frac{\omega(T_1C_4^2, T_1C_4^2)}{\omega(1, 1)}\\
    &\alpha_4[\omega]=\frac{\omega(M, M)}{\omega(1, 1)}\\
    &\alpha_5[\omega]=\frac{\omega(T_1M, T_1M)}{\omega(1, 1)}\\
    &\alpha_6[\omega]=\frac{\omega(C_4M, C_4M)}{\omega(1, 1)}
\end{split}
\eeq
The first three topological invariants are related to the third of the 3 basic no-go theorems, so they correspond to LSM constraints, and the classification of LSM constraints is $\mb{Z}_2^3$. The last three are non-LSM constraints. It is straightforward to check that $\alpha_i[(-1)^{\lambda_j}] = (-1)^{\delta_{ij}}$ for $i,j=1,\dots,6$, where $\lambda_i$ is defined in Appendix \ref{app: cohomology ring}.

    \item Wallpaper group 12: $p4g$

There are 2 different IWP for $p4g$. The first includes the 2-fold rotation centers of $C_4^2$ and $T_1T_2C_4^2$, and the second includes the 2-fold rotation centers of $T_1C_4^2$ and $T_2C_4^2$. Note that the first IWP are also 4-fold rotation centers, and they do not lie on any mirror axis. The second IWP lies on some mirror axes. The fractionalization patterns of $p4g$ are classified by $H^2(p4g, \mb{Z}_2)=\mb{Z}_2^3$. All fractionalization patterns are generated by 3 root patterns, detected by the topological invariants
\beq
\begin{split}
    &\alpha_1[\omega]=\frac{\omega(C_4^2, C_4^2)}{\omega(1, 1)}\\
    &\alpha_2[\omega]=\frac{\omega(T_1C_4^2, T_1C_4^2)}{\omega(1, 1)}\\
    &\alpha_3[\omega]=\frac{\omega(T_1^{-1}G, T_1^{-1}G)}{\omega(1, 1)}
\end{split}
\eeq
The first two topological invariants are related to the third of the 3 basic no-go theorems, so they correspond to LSM constraints, and the classification of LSM constraints is $\mb{Z}_2^2$. The last one is a non-LSM fractionalization pattern. It is straightforward to check that $\alpha_i[(-1)^{\lambda_j}] = (-1)^{\delta_{ij}}$ for $i,j=1,\dots,3$, where $\lambda_i$ is defined in Appendix \ref{app: cohomology ring}.

    \item Wallpaper group 13: $p3$

There are 3 IWP for $p3$, and they are all 3-fold rotation centers. The fractionalization patterns of $p3$ are classified by $H^2(p3, \mb{Z}_2)=\mb{Z}_2$. All fractionalization patterns are generated by a root pattern, detected by the topological invariant
\beq
\alpha[\omega]=\frac{\omega(T_1, T_2)}{\omega(T_2, T_1)}
\eeq
This topological invariant is related to the first of 3 basic no-go theorems, so it corresponds to an LSM constraint, and the classification of LSM constraints is $\mb{Z}_2$. It is straightforward to check that $\alpha_1[(-1)^{\lambda_1}] = -1$ for $i,j=1,\dots,8$, where $\lambda_1$ is defined in Appendix \ref{app: cohomology ring}.
    
    \item Wallpaper group 14: $p3m1$

There are 3 different IWP for $p3m1$, and they are all 3-fold rotation centers, just as in $p3$, but they also lie on the mirror axes. The fractionalization patterns of $p3m1$ are classified by $H^2(p3m1, \mb{Z}_2)=\mb{Z}_2^2$. All fractionalization patterns are generated by 2 root patterns, detected by the topological invariants
\beq
\begin{split}
    &\alpha_1[\omega]=\frac{\omega(T_1, T_2)}{\omega(T_2, T_1)}\\
    &\alpha_2[\omega]=\frac{\omega(M, M)}{\omega(1, 1)}
\end{split}
\eeq
The first topological invariant is related to the first of the 3 basic no-go theorems, so it corresponds to an LSM constraint, and the classification of LSM constraints is $\mb{Z}_2$. The second one is a non-LSM fractionalization pattern. It is straightforward to check that $\alpha_i[(-1)^{\lambda_j}] = (-1)^{\delta_{ij}}$ for $i,j=1,2$, where $\lambda_i$ is defined in Appendix \ref{app: cohomology ring}.

    \item Wallpaper group 15: $p31m$

There are 3 different IWP for $p31m$, and they are all 3-fold rotation centers, just as in $p3$, but only one of them also lies on the mirror axes. The fractionalization patterns of $p31m$ are classified by $H^2(p31m, \mb{Z}_2)=\mb{Z}_2^2$. All fractionalization patterns are generated by 2 root patterns, detected by the topological invariants
\beq
\begin{split}
    &\alpha_1[\omega]=\frac{\omega(T_1T_2, M)}{\omega(M, T_1T_2)}\\
    &\alpha_2[\omega]=\frac{\omega(M, M)}{\omega( 1, 1)}
\end{split}
\eeq
The first topological invariant is related to the second of the 3 basic no-go theorems, so it corresponds to an LSM constraint, and the classification of LSM constraints is $\mb{Z}_2$. The second is a non-LSM fractionalization pattern. It is straightforward to check that $\alpha_i[(-1)^{\lambda_j}] = (-1)^{\delta_{ij}}$ for $i,j=1,2$, where $\lambda_i$ is defined in Appendix \ref{app: cohomology ring}.

    \item Wallpaper group 16: $p6$

There are 3 different IWP for $p6$, and they are centers of 6-fold, 3-fold and 2-fold rotations, respectively. The fractionalization patterns of $p6$ are classified by $H^2(p6, \mb{Z}_2)=\mb{Z}_2^2$. All fractionalization patterns are generated by 2 root patterns, detected by the topological invariants
\beq
\begin{split}
    &\alpha_1[\omega]=\frac{\omega(C_6^3, C_6^3)}{\omega(1, 1)}\\
    &\alpha_2[\omega]=\frac{\omega(T_1C_6^3, T_1C_6^3)}{\omega(1, 1)}
\end{split}
\eeq
Both topological invariants are related to the third of the 3 basic no-go theorems, so they both correspond to LSM constraints, and the classification of LSM constraints is $\mb{Z}_2^2$. It is straightforward to check that $\alpha_i[(-1)^{\lambda_j}] = (-1)^{\delta_{ij}}$ for $i,j=1,2$, where $\lambda_i$ is defined in Appendix \ref{app: cohomology ring}.

    \item Wallpaper group 17: $p6m$

There are 3 different IWP for $p6m$. Just as $p6$, they are 6-fold, 3-fold and 2-fold rotation centers, respectively. Here all IWP also lie on some mirror axes. The fractionalization patterns of $p6m$ are classified by $H^2(p6m, \mb{Z}_2)=\mb{Z}_2^4$. All fractionalization patterns are generated by 4 root patterns, detected by topological invariants
\beq
\begin{split}
    &\alpha_1[\omega]=\frac{\omega(C_6^3, C_6^3)}{\omega(1, 1)}\\
    &\alpha_2[\omega]=\frac{\omega(T_1C_6^3, T_1C_6^3)}{\omega(1, 1)}\\
    &\alpha_3[\omega]=\frac{\omega(M ,M)}{\omega(1, 1)}\\
    &\alpha_4[\omega]=\frac{\omega(C_6^3M, C_6^3M)}{\omega(1, 1)}
\end{split}
\eeq
The first two topological invariants are related to the third of the 3 basic no-go theorems, and they correspond to LSM constraints, and the classification of LSM constraints is $\mb{Z}_2^2$. The last two are non-LSM fractionalization patterns. It is straightforward to check that $\alpha_i[(-1)^{\lambda_j}] = (-1)^{\delta_{ij}}$ for $i,j=1,\dots,4$, where $\lambda_i$ is defined in Appendix \ref{app: cohomology ring}. 
    
\end{itemize}

\section{Topological characterization of LSM constraints in $(1+1)$-d} \label{app: 1D LSM}

In this appendix, we present the derivation of the topological characterization of the LSM constraints for $(1+1)$-d $G_s\times G_{int}$ symmetric spin systems, where the results are already given in Sec. \ref{subsubsec: 1D LSM}.

First, we note that an argument similar to the one in Appendix \ref{app: LSM partition function} for the $(2+1)$-d case shows that in this case the relevant cocycle can be written as
\beq
\Omega(g_1, g_2, g_3)=e^{i\pi\lambda(l_1)\eta(a_2, a_3)}
\eeq
where $g_i\in G_s\times G_{int}$ is written as $g_i=l_i\otimes a_i$, with $l_i\in G_s$ and $a_i\in G_{int}$. The cocycle for the nontrivial $(1+1)$-d $G_{int}$ SPT is precisely $e^{i\pi \eta(a_1, a_2)}$, and $\lambda$ can be viewed as a cocycle in $H^1(G_s, \mb{Z}_2)$. Furthermore, $\lambda$ is determined completely by $G_s$ and the lattice homotopy class, and it is the same for all $G_{int}$ with $\mb{Z}_2^k$-classified PR and for all PR type of the system.

When $G_s=p1$, the line group that only contains translation generated by $T$, the lattice homotopy picture implies that the LSM constraints in this case are classified by $\mb{Z}_2$, and the only nontrivial LSM constraint corresponds to the case where the total PR inside each translation unit cell is nontrivial. On the other hand, $H^1(p1, \mb{Z}_2)=\mb{Z}_2$, so there is also only one nontrivial cocycle. Writing an elment in $p1$ as $T^x$ with $x\in\z$, $\lambda(T^x)=[x]_2$ is a representative cochain of the nontrivial element in $H^1(p1, \z_2)$. So we can identify the cocycle corresponding to the nontrivial LSM constraint as
\beq \label{eq: 1D p1 LSM}
\Omega(g_1, g_2, g_3)=e^{i\pi x_1\eta(a_2, a_3)}
\eeq

When $G_s=p1m$, the line group that contains a translation generated by $T$ and a mirror symmetry generated by $M$, with commutation relation $MTM = T^{-1}$, there are two IWP in each translation unit cell, which are the mirror centers of $M$ and $TM$, respectively. The lattice homotopy picture implies that the LSM constraints in this case are classified by $\mb{Z}_2^2$, and the two root LSM constraints can be taken to correspond to the cases where the total PR at one of the two IWP is nontrivial. On the other hand, $H^1(p1m, \mb{Z}_2)=\mb{Z}_2^2$, so all nontrivial cocycles in $H^1(p1m, \mb{Z}_2)$ must correspond to some nontrivial LSM constraint. These cocycles can be generated by two roots represented by $\lambda_1(T^xM^m)=x+m$ and $\lambda_2(T^xM^m)=x$, with $x\in\mb{Z}$ and $m\in\{0, 1\}$. So the cocycles corresponding to the LSM constraints can also be generated by
\beq
\begin{split}
&\Omega_1(g_1, g_2, g_3)=e^{i\pi(x_1+m_1)\eta(a_2, a_3)}\\
&\Omega_2(g_1, g_2, g_3)=e^{i\pi x_1\eta(a_2, a_3)}
\end{split}
\eeq
where $g_i\in G_s\times G_{int}$ is written as $g_i=T^{x_i}M^{m_i}\otimes a_i$, with $a_i\in G_{int}$.

Now our task is just to identify $\Omega_1$ and $\Omega_2$ with the distributions of DOF that trigger the LSM constraint. To this end, first note that if $M$ is broken while $T$ is preserved, both $\Omega_1$ and $\Omega_2$ reduces to Eq. \eqref{eq: 1D p1 LSM}, which implies that both of them correspond to a distribution of DOF with a net nontrivial PR inside each translation unit cell. So one of them must correspond to the case where the mirror center of $M$ hosts a nontrivial PR, while the other corresponds to the case where the mirror center of $TM$ hosts a nontrivial PR. Suppose the mirror center of $TM$ hosts a nontrivial PR, then after breaking the translation symmetry while keeping $M$ unbroken, the system should have no LSM constraint. Only $\Omega_2$ satisfies this condition, so this distribution of DOF is identified with $\Omega_2$, and $\Omega_1$ corresponds to the case where the mirror center of $M$ hosts a nontrivial PR.

\section{More details of the Stiefel liquids} \label{app: SLs}

In this appendix, we discuss more details of Stiefel liquids, some of which do not appear in Ref. \cite{Zou2021}.

First we present some suggestive argument, but not rigorous proof, supporting that SL$^{(N>6)}$ are non-Lagrangian, \ie they cannot be described by any weakly-coupled renormalizable Lagrangian in the UV. The key observation is that it appears unlikely for such Lagrangians to realize the $SO(N)$, $SO(N-4)$ and reflection symmetries of SL$^{(N)}$. To see it, let us start with even $N$. Usually in such a Lagrangian, symmetries like $SO(N)$ and $SO(N-4)$ are flavor symmetries, and there is a reflection symmetry that commutes with flavor symmetries. However, due to the locking between spacetime orientation reversals and improper rotations of $O(N)$ and $O(N-4)$, SL$^{(N)}$ has no such a reflection symmetry. This suggests that $SO(N)$ and $SO(N-4)$ cannot be simultaneously flavor symmetries. In the special case of $N=6$, which does have a renormalizable Lagrangian description, indeed only $SO(6)$ but not $SO(2)$ can be identified as a flavor symmetry. In this example, the $SO(2)$ is realized as the flux conservation symmetry in the gauge theoretic formulation. For $N>6$, there is no known generalization of the flux conservation symmetry that can give rise to symmetries like $SO(N-4)$. This indicates SL$^{(N>6)}$ with an even $N$ may be non-Lagrangian. Due to the cascade structure of SLs \cite{Zou2021}, it also suggests all SL$^{(N>6)}$ are non-Lagrangian.

We emphasize that the above is just a suggestive argument, but not a rigorous proof. There can be ways to get around the above obstruction, by, \eg implementing some symmetries via dualities, considering Lagrangians in very complicated forms, showing that Lagrangians with smaller symmetries can have emergent symmetries of the SLs, etc. After finding a Lagrangian that can realize the symmetries of a SL, one still needs to make sure that its anomaly and low-energy dynamics match with the SL, which appears also challenging. If all these nontrivial challenges can be overcome and a renormalizable Lagrangian can be found to describe the SL at the end, we believe this process can generate new insights and teach us some valuable general lessons of quantum field theories.

Next we discuss the anomalies of SLs, which should be captured by $\Omega_{\rm IR}$, an element in $H^4(G_{\rm IR}, U(1)_\rho)$, where $G_{\rm IR}=(O(N)^T\times O(N-4)^T)/\z_2$. Consider the projection: $p_{\rm SL}: \tilde G_{\rm IR}\equiv O(N)^T\times O(N-4)^T\rightarrow G_{\rm IR}$, which induces a pullback $p^*_{\rm SL}: H^4(G_{\rm IR}, U(1)_\rho)\rightarrow H^4(\tilde G_{\rm IR}, U(1)_\rho)$. The pullback of $\Omega_{\rm IR}$, $\tilde \Omega_{\rm IR}\equiv p^*_{\rm SL}\Omega_{\rm IR}$, is given by Eq. \eqref{eq: SL anomaly}, in a form $\tilde \Omega_{\rm IR}=e^{i\pi \tilde L_{\rm IR}}$, with $\tilde L_{\rm IR}\in H^4(\tilde G_{\rm IR}, \z_2)$ \cite{Zou2021}. For even $N$, the structure of $\Omega_{\rm IR}$ is still not completely understood, but it is known that $\tilde \Omega_{\rm IR}$ misses some important information. In particular, the form of $\tilde \Omega_{\rm IR}$ suggests that two copies of SL$^{(N)}$ would be anomaly-free. However, only four copies of SL$^{(N)}$ is anomaly-free, while two copies is still anomalous \cite{Zou2021}.

For odd $N$, because $O(N)^T=SO(N)\times\z_2^T$, $H^4(G_{\rm IR}, U(1)_\rho)$ has the structure of $\z_2^k$ with some $k\in\mb{N}$, and there exists $L_{\rm IR}\in H^4(G_{\rm IR}, \z_2)$ such that $\Omega_{\rm IR}=e^{i\pi L_{\rm IR}}$. Now notice that the pullback from $H^4(G_{\rm IR}, \z_2)$ to $H^4(\tilde G_{\rm IR}, \z_2)$ induced by $p_{\rm SL}$ is injective, and hence we can uniquely identify $L_{\rm IR}$ from $\tilde L_{\rm IR}$. The result is
\beq\label{eq: SL anomaly precise}
L_{\rm IR}=w_4^{SO(N)}+w_4^{SO(N-4)}+ \left(w_2^{SO(N)} + w_2^{SO(N-4)}\right)w_2^{SO(N-4)}+
\left\{
\begin{array}{lr}
w_1^2w_2^{SO(N)}, & N=1\ ({\rm\ mod\ }8)\\
w_1^2w_2^{SO(N-4)}, & N=3\ ({\rm\ mod\ }8)\\
w_1^2(w_2^{SO(N)}+w_1^2), & N=5\ ({\rm\ mod\ }8)\\
w_1^2(w_2^{SO(N-4)}+w_1^2), & N=7\ ({\rm\ mod\ }8)
\end{array}
\right.
\eeq
where $w_i^{SO(N)}$ and $w_i^{SO(N-4)}$ are the $i$-th Stiefel-Whitney class of the $SO(N)$ and $SO(N-4)$ gauge bundles. Considering enlarging $SO(N)$ and $SO(N-4)$ to $O(N)$ and $O(N-4)$, $w_1$ is sum of the first Stiefel-Whitney classes of the $O(N)$ and $O(N-4)$ gauge bundles. Due to the locking between spacetime orientation reversals and improper rotations of $O(N)$ and $O(N-4)$, $w_1$ can also be viewed as the first Stiefel-Whitney class of the tangent bundle of the spacetime manifold.

Finally, we discuss the effects of relevant operators on the DQCP (SL$^{(5)}$), DSL (SL$^{(6)}$) and SL$^{(7)}$. Because the low-energy dynamics of these states are not fully settled down, this discussion is also conjectural, and it is important to study these issues in a more rigorous manner in the future. However, given our understanding of these states, we believe the expectations below are reasonable.

For all SLs, the $(V_L, V_R)$ operator (or the $SO(5)$ vector for DQCP) should change the emergent order of the state. Due to the cascade structure among SLs \cite{Zou2021}, it is natural that this operator will just drive SL$^{(N)}$ to SL$^{(N-1)}$ (for DQCP, it simply gaps out the state). The time-reversal breaking operator that is a flavor singlet is likely to drive the state into a semion topological order, and this expectation is supported by the gauge-theoretic formulations of DQCP and DSL, as well as the fact that the semion topological order can match the anomaly of SL$^{(N)}$ if time reversal is broken (for all $N\geqslant 5$) \cite{Zou2021}. The $(A_L, A_R)$ operator (for all $N\geqslant 6$) is expected to convert SL$^{(N)}$ into certain spontaneous-symmetry-breaking state, as supported from the gauge-theoretic formulation of DSL \cite{Senthil2005, Song2018, Song2018a}. For DQCP, the traceless symmetric rank-2 tensor of $SO(5)$ drives the state into a spontaneous-symmetry-breaking state, and this operator is the tuning operator of the Neel-VBS transition in the standard realization of DQCP \cite{Senthil2003, Senthil2003a}. All these operators change the emergent order of the states.

The remaining relevant operators to be discussed are the conserved current operators, whose effects on various states are complicated. It turns out that some of them can change the emergent order of the states, while others only shift the ``zero momenta".

The simplest way to discuss it is perhaps to start from DSL, which has a relatively simple gauge-theoretic formulation in terms of $N_f=4$ QED$_3$:
\beq
\mc{L}=\sum_{i=1}^4\bar\psi_i i\slashed{D}_a\psi_i-\frac{1}{4e^2}f_{\mu\nu}f^{\mu\nu}
\eeq
There are conserved currents due to the $SO(6)$ and $SO(2)$ symmetries, where a natural basis of the $SO(6)$ currents is $\bar\psi\gamma_\mu T^{su(4)}\psi$, with $\gamma_\mu$ the Dirac matrices and $T^{(su(4))}$ the generators of $su(4)$ in its fundamental representation, and the $SO(2)$ currents are $\epsilon_{\mu\nu\lambda}\partial^\nu a^\lambda/(2\pi)$, with $a$ the emergent $U(1)$ gauge field. If the time component of the $SO(6)$ currents is added as a pertubation to the DSL, the Dirac fermions will be doped and acquire a finite Fermi surface, so the emergent order of the state changes. If the time component of the $SO(2)$ currents is added, the Dirac fermions will experience magnetic fields, Landau levels will form, and the emergent order of the state also changes. Below we discuss the effect of the spatial components of the currents.

For the $SO(6)$ spatial currents, depending on the choice of $T^{(su(4))}$ and $\gamma_\mu$, various effects can be triggered. For example, the current $\bar\psi\gamma_x\sigma_{30}\psi$ merely shifts the positions of the Dirac cones in the momentum space in a flavor-dependent way, which does not really change the emergent order of DSL (here $\sigma_{ij}\equiv\sigma_i\otimes\sigma_j$, where $\sigma_{i=0,1,2,3}$ are the identity and standard Pauli matrices). The same is true for $\bar\psi(\gamma_x\sigma_{30}+\gamma_y\sigma_{03})\psi$. However, as another example, $\bar\psi(\gamma_x\sigma_{23}+\gamma_y\sigma_{33})\psi$ actually converts the 4 Dirac cones into 2 pairs of quadratic band touching (and another 2 pairs of gapped bands), which does change the emergent order of the state. By examining the effect of different spatial currents, one can see more complicated patterns. Although a systematic description of the effects of these spatial currents is lacking, it can be analyzed in a case-by-case manner. These spatial currents can all be converted into the language of SL$^{(6)}$, in terms of the $6\times 2$ matrix $n$. For example, using Appendix E of Ref. \cite{Zou2021}, we see that $\bar\psi\gamma_x\sigma_{30}\psi\sim n_{3i}\partial_xn_{4i}$, $\bar\psi(\gamma_x\sigma_{30}+\gamma_y\sigma_{03})\psi\sim n_{3i}\partial_xn_{4i}+n_{1i}\partial_yn_{2i}$, and $\bar\psi(\gamma_x\sigma_{23}+\gamma_y\sigma_{33})\psi\sim n_{4i}\partial_xn_{6i}+n_{5i}\partial_yn_{6i}$.

Next we turn to the $SO(2)$ spatial current, which in the language of SL$^{(6)}$ is $n_{i1}\partial_{x,y}n_{i2}$, and in the gauge theory is the electric field of the emergent $U(1)$ gauge field. It is not obvious what this perturbation does to the DSL. However, we argue that its effect is also to shift the zero momenta. To see it, we consider $N_f=2$ QED$_3$, with Lagrangian
\beq
\mc{L}=\sum_{i=1}^2\bar\psi_i i\slashed{D}_a\psi_i-\frac{1}{4e^2}f_{\mu\nu}f^{\mu\nu}
\eeq
This theory is argued to describe the easy-plane DQCP, which has an emergent $O(4)$ unitary symmetry (not to be confused with the DQCP we have been discussing, which has an emergent $SO(5)$ unitary symmetry) \cite{Wang2017}. In this theory, $\bar\psi\gamma_{x,y}\sigma_3\psi$ clearly only shifts the zero momenta without changing the emergent order of the state. On the other hand, the improper $\z_2$ rotation of the $O(4)$ symmetry maps this operator into the electric fields of the emergent $U(1)$ gauge fields \cite{Wang2017}, which means that the electric fields also play the role of shifting the zero momenta without changing the emergent order. So we propose that in DSL (\ie SL$^{(6)}$), the $SO(2)$ spatial currents also only shift the zero momenta, but maintain the emergent order.

Now we turn to DQCP (SL$^{(5)}$), which has a couple of gauge-theoretic formulations \cite{Wang2017}. From any of these formulations, one can see that the time component of the $SO(5)$
currents changes the emergent order of the state. The formulation that has a manifest $O(5)^T$ symmetry is an $SU(2)$ gauge theory with 2 flavors of Dirac fermions, where the $SO(5)$ symmetry is the flavor symmetry of these Dirac fermions. Under similar consideration of the $SO(6)$ spatial currents in DSL, we see that the effects of the $SO(5)$ spatial currents in DQCP are also complicated and need to be analysed in a case-by-case manner: some of them changes the emergent order of the states, while others only shift the zero momenta without changing the emergent order.

We remark that the effects of the spatial currents actually impose very strong constraints on the possible results of our anomaly-based framework of emergibility. Within this framework, it is easy to see that all realizations of DQCP and DSL on $p6m\times O(3)^T$ and $p4m\times O(3)^T$ symmetric lattice spin systems must have all entries of $n$ locating at some high-symmetry momenta in the Brillouin zone, because all possile symmetry embedding patterns satisfy this condition. This means that in all realizations, it is impossible to have a spatial current operator that is allowed by the microscopic symmetries and can shift the zero momenta. As we have explicitly checked, this is indeed true for all realizations obtained in our anomaly-based framework, which can be viewed as a highly nontrivial sanity check of this framework -- It nicely corroborates the validity of the hypothesis of emergibility, the proposal that DSL can indeed be described by SL$^{(6)}$, and the dynamics of DSL.

Finally, we turn to SL$^{(7)}$, whose low-energy dynamics is poorly understood so far. It is still likely that the time component of the $SO(7)$ and $SO(3)$ currents will change the emergent order. For the spatial currents, we propose the following rule. Writing an $SO(7)$ spatial current operator as a sum of terms of the form $~n_{i_1j}\partial_{x,y}n_{i_2j}$, then we consider the effect of the same operator in DSL (it turns out that all such operators allowed by our microscopic symmetries only involve at most 4 rows of $n$, so its corresponding operator in DSL can always be found). If this operator changes the emergent order of DSL, then it also changes the emergent order of SL$^{(7)}$, and if it only shifts the zero momenta of DSL, it also only shifts the zero momenta of SL$^{(7)}$. For the $SO(3)$ spatial currents, it can be expanded as a sum as $\sim a_1n_{i1}\partial_xn_{i2}+a_2n_{i1}\partial_xn_{i3}+a_3n_{i2}\partial_xn_{i3}+b_1n_{i1}\partial_yn_{i2}+b_2n_{i1}\partial_yn_{i3}+b_3n_{i2}\partial_yn_{i3}$. We propose to first convert it into an $SO(7)$ spatial current $\sim a_1n_{1i}\partial_xn_{2i}+a_2n_{1i}\partial_xn_{3i}+a_3n_{2i}\partial_xn_{3i}+b_1n_{1i}\partial_yn_{2i}+b_2n_{1i}\partial_yn_{3i}+b_3n_{2i}\partial_yn_{3i}$. If this $SO(7)$ spatial current changes the emergent order (only shifts zero momenta) using the the above criterion, then the original $SO(3)$ spatial current also changes the emergent order (only shifts zero momenta).

The above proposal is of course conjectural, and more rigorous work is needed to fully settle it down. However, this proposal is supported by our results of anomaly-matching. We have checked all realizations of SL$^{(7)}$ obtained from the anomaly-based framework of emergibility, and found that the current operators that can shift zero momenta (according to the above proposal) are allowed by microscopic symmetries in a realization if and only if this realization belongs to a family where the momenta of some entries of $n$ can change continuously.

\section{More examples of the calculation of pullback}\label{app: pullback example}

In this appendix, we give three more examples of the analysis of anomaly matching, for $SU(2)_1$, DSL and SL$^{(7)}$. In Appendix \ref{subapp: pullback 5d rep}, we also provide relevant formula for the calculation of pullback involving 5-dimensional representation of $SO(3)$.

\subsection{$SU(2)_1$ and emergent anomaly}\label{subapp: pullback example SU21}

First let us consider a representative $(1+1)$-d quantum critical state, \ie the $(1+1)$-d  $SU(2)_1$ conformal field theory, which describes the spin-1/2 antiferromagnetic Heisenberg chain at low energies \cite{Senechal1999,Witten1983bosonization,Ian2017}. The IR symmetry of the theory is $\frac{SU(2)\times SU(2)}{\z_2}\rtimes \z_2^T\cong O(4)$.

Ref. \cite{Hason2020} works out the anomaly term of $SU(2)_1$ after gauging the $SO(4)$ part of $G_{\rm IR} = O(4)$, which corresponds to the interger Euler class of $SO(4)$, $e \in H^4(SO(4), \z)$. The bulk topological partition function capturing this anomaly is the Chern-Simons theory at level $(+1,-1)$ for the two $su(2)$ factors of $so(4)\cong su(2)\times su(2)$, which can be written in terms of two $su(2)$ gauge fields $A^{(1)},A^{(2)}$ as follows
\beq\label{eq: SU(2)1 Chern-Simons}
S=\frac{i}{4\pi}\int\tr\left(A^{1}\wedge dA^{(1)}+\frac{2}{3}A^{(1)}\wedge A^{(1)}\wedge A^{(1)}\right)-\tr\left(A^{(2)}\wedge dA^{(2)}+\frac{2}{3}A^{(2)}\wedge A^{(2)}\wedge A^{(2)}\right).
\eeq
It is straightforward to inspect that after gauging the $\z_2^T$ part of $G_{\rm IR}$, the anomaly term should correspond to the twisted Euler class of $O(4)$, and we denote it by $\tilde e\in H^4(O(4), \z_\rho)$. Note that this anomaly does not correspond to any element in $H^3(G_{\rm IR}, U(1)_\rho)$, \ie the group cohomology (not the Borel cohomology in Ref. \cite{Chen2013}) of $G_{\rm IR}$ acting nontrivially on the $U(1)$ coeffficient -- this is the only example in this paper where the Bockstein homomorphism in Eq. \eqref{eq: short exact sequence 1} is not an isomorphism. Hence we need some special care to write down the TPF of the bulk SPT theory.
\footnote{In this footnote we briefly review how to write down the TPF worked out in Ref. \cite{Dijkgraaf1990}. Suppose a (2+1)-d IR theory has gauge symmmetry $G$ and is defined on the manifold $\mc{M}_3$, which serves as the base space of some principal bundle of $G$. Given an element $\omega \in  H^4(G, \mathbb{Z})$, it is possible to define a 3d topological gauge theory of $G$ as follows
\beq\label{eq: TPF SU(2)1}
S=\frac{1}{n}\left(\int_{\mc{B}_4}\Omega -\langle\gamma^*\omega, [\mc{B}_4]\rangle\right)\mod 1,
\eeq
where $\Omega$ is the de Rham representative of the image of $\omega$ in $H^4(BG, \mathbb{R})$, $[\mc{B}_4]\in H_4(\mc{B}_4, \mathbb{Z})$ is the fundamental class of the manifold $\mc{B}_4$ that bounds $n$ copies of the manifold $\mc{M}_3$ with some extension of the principle bundle of $G$, and $\gamma$ is the classifying map of the extension. When $\omega$ is a torsion element, $\Omega=0$, and we retrieve the more familiar form of TPF
\beq
S = \langle \gamma^*(\beta^{-1}(\omega)), [\mc{M}_3]\rangle,
\eeq
where $\beta$ is the Bockstein homomorphism associated to the short exact sequence $1\rightarrow\mathbb{Z}\rightarrow\mathbb{R}\rightarrow U(1)\rightarrow 1$. In particular, when $G=SO(4)$ and $\omega$ corresponds to the Euler class $e$, $\Omega$ can be explicitly written as follows,
\beq
\Omega=\frac{1}{8\pi^2}\left(\tr\left(F^{(1)}\wedge F^{(1)}\right)-\tr\left(F^{(2)}\wedge F^{(2)}\right)\right).
\eeq
In the presence of anti-unitary symmetries, the manifold $\mc{M}_3$ is assumed to be non-orientable. Then we have to choose $\mc{B}_4$ to be non-orientable as well, and demand $[\mc{B}_4]\in H^4(\mc{B}_4, \z_w)$ to be the fundamental class of the non-orientable manifold $\mc{B}_4$ twisted by the orientation character $w$ \cite{kirk2001lecture}.}

Consider the following homomorphism $\varphi$ from $G_{\text{UV}}=p1m\times O(3)$ to $G_{\rm IR}=O(4)$,
\beq\label{eq: embedding SU(2)1}
T\rightarrow\left(\begin{array}{cc}
    -I_3 &   \\
        & -1  
\end{array} \right),~~~~~~
M\rightarrow \left(\begin{array}{cc}
    I_3 &   \\
        & -1  
\end{array} \right),~~~~~~
O(3)\rightarrow \left(\begin{array}{ccccc}
    O(3)^T &   \\
        & 1  
\end{array} \right)
\eeq
The LSM anomaly of a 1$D$ chain has been worked out in Appendix~\ref{app: 1D LSM}, \ie
\beq\label{eq: UV anomaly SU21}
\Omega_{\rm UV} \equiv \exp(i\pi L_{\rm UV})=\exp\left(i\pi (x+m)w_2^{O(3)^T}\right).
\eeq
We aim to prove that under the homomorphism $\varphi$, the pullback of the IR theory is the UV theory. Specifically, we need to prove that
\footnote{There are two terms in Eq. \eqref{eq: TPF SU(2)1}. The first term will become 0 when  pulled back to $G_{\rm UV}$, which can be explicitly checked by considering the diagonal embedding of the Lie-algebra of $so(3)\cong su(2)$ into $so(4)\cong su(2)\times su(2)$. Then we just need to consider the pullback of the second term.}
\beq
\beta(\Omega_{\rm UV}) = \varphi^*(\tilde e),
\eeq
where $\beta$ is the Bockstein homomorphism associated to the short exact sequence $1\rightarrow\mathbb{Z}\rightarrow\mathbb{R}\rightarrow U(1)\rightarrow 1$.

From the commutativity of the square in the diagram below \beq\label{eq: commuting diagram SU21}
\begin{tikzcd}
&  & H^{4}(G_\text{IR}, \mb{Z}_\rho) \arrow{d}{\varphi^*}  \arrow{r}{\tilde p} & H^{4}(G_\text{IR}, \mathbb{Z}_2) \arrow{d}{\varphi^*} \\
H^{3}(G_\text{UV}, \mathbb{Z}_2)  \arrow{r}{\tilde i} &  H^{3}(G_\text{UV}, U(1)_\rho)  \arrow{r}{\beta} & H^{4}(G_\text{UV}, \z_\rho)  \arrow{r}{\tilde p} & H^{4}(G_\text{UV}, \mathbb{Z}_2)
\end{tikzcd}
\eeq
we just need to prove that
\beq
\mc{SQ}^1(L_{\rm UV}) = \varphi^*(\tilde p(\tilde e)).
\eeq
In particular, on the left hand side we have
\beq
\mc{SQ}^1(L_{\rm UV}) = (x+m)w_3^{O(3)^T}
\eeq
according to Appendix~\ref{subapp: SQ1}, where $w_3^{O(3)^T} = w_3^{SO(3)} + t w_2^{SO(3)} + t^3$ and $t\in H^1(\mb{Z}_2^T, \mb{Z}_2)$ corresponds to the gauge field of time-reversal symmetry when pulled back to the spacetime manifold $\mc{M}_3$. On the right hand side we have $\tilde p(\tilde e) = w_4^{O(4)}$, and
\beq
\varphi^*\left(w_4^{O(4)}\right) = (x+m)\left(w_3^{SO(3)} + (t+x)w_2^{SO(3)} + (t+x)^3 \right)
\eeq
Finally, using the cohomology relation $x^2=xm$, we see that both sides are equal to each other. Hence we establish that the pullback of the anomaly of IR CFT $SU(2)_1$ under the homomorphism $\varphi$ as in Eq. \eqref{eq: embedding SU(2)1} is the LSM anomaly of (1+1)-d spin chain.

Below we discuss the phenomenon of emergent anomalies. Following Ref. \cite{Metlitski2017}, by imposing an extra constraint $T^2=1$, we can factorize $\varphi$ acting on $p1m$ into two pieces, \ie a projection $p$ on $\z_2\times \z_2$ generated by $\tilde T$ or $M$, where $\tilde T$ acts trivially on $U(1)$ or $\z$ while $M$ acts nontrivially on $U(1)$ or $\z$, composed with an embedding $\tilde \varphi$ of the $\z_2\times \z_2$ into $O(4)$.
\beq
\begin{tikzcd}
\varphi = \tilde \varphi \circ p:~~~~~~p1m = \z\rtimes \z_2 \ar[r, "p"] & \z_2\times \z_2 \ar[r, "\tilde\varphi"] & O(4)
\end{tikzcd}
\eeq
With slight abuse of notation, we denote the gauge field of $\tilde T$ as $x$ as well. Then we have
\beq
\tilde \varphi^*\left(w_4^{O(4)}\right) = (x+m)\left(w_3^{SO(3)} + (t+x)w_2^{SO(3)} + (t+x)^3 \right) =\mc{SQ}^1\left((x+m)w_2^{O(3)^T} + (x+m)x^2\right)
\eeq
in $H^4(\z_2\times \z_2\times O(3)^T, \z_2)$. According to the terminology in Ref. \cite{Metlitski2017}, the first term $(x+m)w_2^{O(3)^T}$ as in Eq. \eqref{eq: UV anomaly SU21} is the intrinsic anomaly, while the second term $(x+m)x^2$ is identified as the emergent anomaly. The emergent anomaly should be absent when pulled back to $p1m$, which is guaranteed by the relation $(x+m)x=0$ present in $p1m$. As a sanity check, in the absence of mirror symmetry, \ie in the line group $p1$, the intrinsic anomaly becomes $x w_2^{O(3)^T}$ and the emergent anomaly becomes $x^3$, consistent with the example in Ref. \cite{Metlitski2017}. 

We envision that similar emergent anomaly will be present in IR theories emerging from a 2d lattice system with wallpaper group $G_s$, because a lot of cohomology relations of $G_s$ will be absent when projected to a finite group by imposing $T_1^n=T_2^n=1$ for some integer $n$. More precisely, write $G_s = (\z\times \z)\rtimes O_s$, if we can find an integer $n$ such that $\varphi: G_s\rightarrow G_{\rm IR}$ factorizes as the composition of projection and another embedding
\beq
\begin{tikzcd}
\varphi = \tilde \varphi \circ p:~~~~~~G_s = (\z\times \z)\rtimes O_s \ar[r, "p"] & \tilde G_s \equiv(\z_n \times \z_n) \rtimes O_s \ar[r, "\tilde\varphi"] & G_{\rm IR},
\end{tikzcd}
\eeq
then $\tilde \varphi^*(\Omega_{\rm IR}) \in H^4(\tilde G_s \times G_{int}, U(1)_\rho)$ will generically not be in the form of $\exp(i\pi \lambda\eta)$ with $\lambda\in H^2(\tilde G_s, \z_2)$ and $\eta\in H^2(G_{int}, \z_2)$, but contains a nonzero piece that nevertheless vanishes when pulled back to $G_s$, using certain cohomology relations of $G_s$ that is not present in $\tilde G_s$. Specifically, when $n=2$, of the 3 important relations displayed in Appendix \ref{app: cohomology ring}, the first two relations, \ie $x^2=0$ in $p1$ and $x^2=xm$ in $p1m$, will be absent when projected to $\tilde G_s$, while the third relation, \ie $A_{x+y} A_m = 0$ in $cm$, will still be present when projected to $\tilde G_s$.  

For example, when the IR effective theory is the DQCP emergent from a square lattice spin-1/2 system with wallpaper group $p4m$, we can choose $n=2$ and $\tilde G_s = (\z_2\times \z_2)\rtimes D_4$. The $\z_2$ cohomology ring of $\tilde G_s$ is 
\beq
\begin{split}
\z_2[A_{x+y}, A_m, A_c, B_{xy}, B_{c^2},& B_{c(x+y)}]/\big(A_{x+y} A_c = 0,~ (A_m+A_c) A_c=0,~B_{c(x+y)} A_c = 0,\\
&B_{c(x+y)}\left(B_{c(x+y)} + A_{x+y}(A_m + A_c)\right) = (A_m^2 + A_c^2) B_{xy} + A_{x+y}^2 B_{c^2}\big),
\end{split}
\eeq
with the pullback of $B_{c(x+y)}$ equal to $A_{x+y}(A_{x+y} + A_m)$ in $H^*(p4m, \z_2)$, and the pullback of $A_{x+y}, A_m, A_c, B_{xy}, B_{c^2}$ their namesake. Then from the fact that the IR anomaly of DQCP corresponds to $w_5^{O(5)}\in H^5(O(5), \z_2)$, we have
\beq
\begin{split}
\tilde \varphi^*\left(w_5^{O(5)}\right) &= \left(B_{xy} + B_{c(x+y)} + B_{c^2}\right)\left(w_3^{SO(3)} + (t+A_{x+y})w_2^{SO(3)} + (t+A_{x+y})^3 \right) \\
&= \mc{SQ}^1\left(\left(B_{xy} + B_{c(x+y)} + B_{c^2}\right)w_2^{O(3)^T} + \left(B_{xy} + B_{c(x+y)} + B_{c^2}\right) A_{x+y}^2\right).
\end{split}
\eeq
The first term $\left(B_{xy} + B_{c(x+y)} + B_{c^2}\right)w_2^{O(3)^T}$ is again the intrinsic anomaly, while the second term $\left(B_{xy} + B_{c(x+y)} + B_{c^2}\right) A_{x+y}^2$ is the emergent anomaly that vanishes when pulled back to $G_s=p4m$. This is a slight generalization of the result in Ref. \cite{Metlitski2017} to the whole group $p4m$.

\subsection{DSL}\label{subapp: pullback example DSL}

Next consider DSL \cite{Song2018,Calvera2021,Zou2021}, whose IR symmetry $G_\text{IR}$ is $\frac{O(6)\times O(2)}{\z_2}$, where an improper rotation of either $O(6)$ or $O(2)$ complex conjugates the $U(1)$ coefficient of $H^4(G_{\rm IR}, U(1)_\rho)$. The precise form of the anomaly term for $G_\text{IR}$ is unknown, yet it is possible to write down its pullback to $O(6)\times O(2)$
under the projection $p: O(6)\times O(2)\rightarrow \frac{O(6)\times O(2)}{\mathbb{Z}_2}$ \cite{Zou2021}
\beq
\begin{split}
\tilde \Omega_\text{IR} &\equiv \exp(i\pi \tilde L_\text{IR})\\
&= \exp\left[i \pi\left( w_4^{O(6)} + w_2^{O(6)}\left(w_2^{O(2)} + (w_1^{O(2)})^2\right) + \left((w_2^{O(2)})^2 + w_2^{O(2)}(w_1^{O(2)})^2 + (w_1^{O(2)})^4\right)\right)\right]
\end{split}
\eeq
where $\tilde L_\text{IR}\in H^4(O(6)\times O(2),\mb{Z}_2)$. On a triangular lattice, we consider the following example embedding $\varphi$ of $G_{\rm UV} = p6m \times O(3)^T$ into $G_{\rm IR}$,
\beq\label{eq: embedding DSL Canonical}
\begin{split}
&T_1:n\rightarrow\left(\begin{array}{cccc}
     I_3 & & &   \\
     &1 &  &   \\
     & & -1 &  \\  
     &  &  & -1  \\
\end{array} \right)n
\left(\begin{array}{cc}
     -\frac{1}{2} & -\frac{\sqrt{3}}{2} \\
     \frac{\sqrt{3}}{2} &  -\frac{1}{2}\\  
\end{array} \right)~~~~~~\\
&T_2:n\rightarrow\left(\begin{array}{cccc}
     I_3 & & &   \\
     &-1 &  &   \\
     & & -1 &  \\  
     &  &  & -1  \\
\end{array} \right)n
\left(\begin{array}{cc}
     -\frac{1}{2} & -\frac{\sqrt{3}}{2} \\
     \frac{\sqrt{3}}{2} &-\frac{1}{2}\\  
\end{array} \right)~~~~~~\\
&C_6:n\rightarrow \left(\begin{array}{cccc}
     I_3 & & &    \\
    &  & 1 &  \\
    &  &  & 1 \\  
    & -1 &  & \\
\end{array} \right)n
\left(\begin{array}{cc}
     1 &  \\
      & -1 \\  
\end{array} \right)~~~~~~\\
&M:n\rightarrow \left(\begin{array}{cccc}
     I_3 & & &  \\
    &  & -1 & \\
    & -1 &  & \\  
    &  &  & 1 \\  
\end{array} \right)n~~~~~~\\
&O(3)^T:n\rightarrow \left(\begin{array}{cc}
     O(3)^T &     \\
      & I_3 \\
\end{array} \right)n~~~~~~
\end{split}
\eeq
Note that $\varphi$ factorizes into an embedding $\tilde \varphi$ into $O(6)\times O(2)$ composed with the projection $p$, \ie $\varphi = p\circ \tilde \varphi$. In fact, for $G_{\rm UV} = G_s\times O(3)^T$ with any $G_s$, if $\varphi$ satisfies the condition that some but not all entries of $n$ are left invariant under $SO(3)$, then $\varphi$ can always factorize into $p\circ \tilde \varphi$, where $\tilde \varphi$ is a homomorphism from $G_{\rm UV}$ to $O(6)\times O(2)$. Therefore, we can think of the IR symmetry as $O(6)\times O(2)$ for simplicity in the calculation of pullback. Moreover, we can always choose $\tilde \varphi$ such that $G_s$ acts as identity and $\z_2^T$ acts as minus identity in the block where $SO(3)$ acts.

The LSM anomaly of a triangular lattice spin-1/2 system has been obtained in Appendix~\ref{app: cohomology ring}, and we repeat it here
\beq
\Omega_\text{UV} \equiv \exp(i\pi L_\text{UV})= \exp\left(i \pi \left(B_{xy} + A_c(A_c + A_m)\right)w_2^{O(3)^T}\right),
\eeq
where $L_\text{UV}\in H^4(G_\text{UV}, \mb{Z}_2)$. We wish to prove that $\Omega_\text{UV}=\varphi^*\Omega_\text{IR}$, which amounts to proving $\Omega_\text{UV}=\tilde \varphi^*\tilde \Omega_\text{IR}$. Again, from the commuting diagram Eq. \eqref{eq: master commuting diagram} (with $G_{\rm IR}$ changed to $O(6)\times O(2)$ and $\varphi$ changed to $\tilde\varphi$), we just need to prove that 
\beq \label{eq: DSL anomaly matching app}
\mc{SQ}^1(L_\text{UV}) = \tilde \varphi^*\left(\mc{SQ}^1(\tilde L_\text{IR})\right).
\eeq

According to Lemma \ref{lemma: main lemma SQ}, we have
\beq\label{eq: SQ1 DSL}
\begin{split}
\mathcal{SQ}^1(\tilde L_{\rm IR}) =&  w_5^{O(6)} + w_4^{O(6)}w_1^{O(2)} + w_3^{O(6)}\left(w_2^{O(2)} + (w_1^{O(2)})^2\right) + w_2^{O(6)}(w_1^{O(2)})^3 \\
&+ w_1^{O(6)}\left((w_2^{O(2)})^2 + w_2^{O(2)}(w_1^{O(2)})^2 + (w_1^{O(2)})^4\right) + \left((w_2^{O(2)})^2 w_1^{O(2)} + (w_1^{O(2)})^5\right).
\end{split}
\eeq
On the other hand,
\beq\label{eq: SQ1 Triangular for DSL}
\mc{SQ}^1(L_\text{UV}) = \left((B_{xy} + A_c(A_c + A_m)\right)w_3^{O(3)^T},
\eeq
where $w_3^{O(3)^T} = w_3^{SO(3)} + t w_2^{SO(3)} + t^3$ and $t\in H^1(\mb{Z}_2^T, \mb{Z}_2)$ corresponds to the gauge field of time-reversal symmetry when pulled back to the spacetime manifold $\mc{M}_4$.

What remains is the calculation of the pullback $\tilde \varphi^*\left(\mc{SQ}^1(\tilde L_\text{IR})\right)$, which is a straightforward application of the Whitney product formula. In particular, considering the $O(2)$ block, the pullback gives
\beq\label{eq: pullback block O(2)}
\begin{split}
\tilde \varphi^*\left(w_1^{O(2)}\right) &= A_c \\   
\tilde \varphi^*\left(w_2^{O(2)}\right) &= 0   
\end{split}
\eeq
On the other hand, $O(6)$ factorizes into two $3\times 3$ blocks, and for the lower $3\times 3$ block we have
\beq\label{eq: pullback block O(3) first}
\begin{split}
\tilde \varphi^*\left(w_1^{O(3)}\right) &= A_c + A_m \\
\tilde \varphi^*\left(w_2^{O(3)}\right) &= B_{xy} + A_c^2 \\    
\tilde \varphi^*\left(w_3^{O(3)}\right) &= A_cB_{xy} + A_c^2(A_c + A_m)   
\end{split}
\eeq
Assembling the Stiefel-Whitney class of the lower $O(3)$ and upper $O(3)^T$ into the Stiefel-Whitney class of $O(6)$, we have
\beq\label{eq: pullback block O(6)}
\begin{split}
\tilde \varphi^*\left(w_5^{O(6)}\right) &= w_3^{O(3)^T}(B_{xy} + A_c^2) + w_2^{O(3)^T}(A_cB_{xy} + A_c^2(A_c + A_m)) \\
\tilde \varphi^*\left(w_4^{O(6)}\right) &= w_3^{O(3)^T}(A_c + A_m) + w_2^{O(3)^T}(B_{xy} + A_c^2) + t(A_cB_{xy} + A_c^2(A_c + A_m)) \\   
\tilde \varphi^*\left(w_3^{O(6)}\right) &= w_3^{O(3)^T} + w_2^{O(3)^T}(A_c+A_m) + t(B_{xy} + A_c^2) + (A_cB_{xy} + A_c^2(A_c + A_m)) \\  
\tilde \varphi^*\left(w_2^{O(6)}\right) &= w_2^{O(3)^T} + t(A_c+A_m) + (A_cB_{xy} + A_c^2(A_c + A_m)) \\  
\tilde \varphi^*\left(w_1^{O(6)}\right) &= t+(A_c+A_m)   
\end{split}
\eeq
Combining Eqs. \eqref{eq: SQ1 DSL}, \eqref{eq: SQ1 Triangular for DSL}, \eqref{eq: pullback block O(2)} and \eqref{eq: pullback block O(6)}, indeed we get Eq. \eqref{eq: DSL anomaly matching app}. Hence we establish that $\Omega_\text{UV}=\varphi^*\Omega_\text{IR}$.

\subsection{SL$^{(7)}$}

The next examples we want to consider are two realizations of $N=7$ Stiefel liquid, i.e. SL$^{(7)}$, proposed in Ref. \cite{Zou2021} (see Sec. VII D therein). The IR symmetry $G_{\rm IR}$ of the theory is $\frac{O(7)\times O(3)}{\z_2}$, and the precise form of the anomaly is given in Eq. \eqref{eq: SL anomaly precise} for $N=7$. However, following the example in Appendix \ref{subapp: pullback example DSL}, for the sake of the analysis of anomaly-matching, we can again think of the IR symmetry as $O(7)\times O(3)$ and consider the pullback of the anomaly under the projection $p: O(7)\times O(3)\rightarrow\frac{O(7)\times O(3)}{\z_2}$,
\beq
\begin{split}
\tilde \Omega_\text{IR} &\equiv \exp(i\pi \tilde L_\text{IR}) \\
& = \exp\left(i \pi\left( w_4^{O(7)} + w_2^{O(7)}\left(w_2^{O(3)} + (w_1^{O(3)})^2\right) + \left((w_2^{O(3)})^2 + w_2^{O(3)}(w_1^{O(3)})^2 + (w_1^{O(3)})^4\right)\right)\right)
\end{split}
\eeq
where $\tilde L_\text{IR}\in H^4(O(7)\times O(3),\mb{Z}_2)$. We will omit the tilde symbol in the following calculation.

On a triangular lattice, we consider the following embedding $\varphi$ of $G_{\rm UV} = p6m \times O(3)^T$ into $O(7)\times O(3)$,
\beq\label{eq: embedding SL Triangle}
\begin{split}
&T_1:n\rightarrow\left(\begin{array}{ccccc}
    I_3 & & &   & \\
     & -\frac{1}{2} & \frac{\sqrt{3}}{2} &  &   \\
    & -\frac{\sqrt{3}}{2} & -\frac{1}{2} & &  \\  
    &  &  & -\frac{1}{2} & \frac{\sqrt{3}}{2}\\
     &  & & -\frac{\sqrt{3}}{2} & -\frac{1}{2}    \\
\end{array} \right)n
\left(\begin{array}{ccc}
     1 &  &  \\
      & -1 &  \\
      &  & -1 \\
\end{array} \right)\\
&T_2:n\rightarrow\left(\begin{array}{ccccc}
    I_3 & & &   & \\
     & -\frac{1}{2} & \frac{\sqrt{3}}{2} &  &   \\
    & -\frac{\sqrt{3}}{2} & -\frac{1}{2} & &  \\  
    &  &  & -\frac{1}{2} & \frac{\sqrt{3}}{2}\\
     &  & & -\frac{\sqrt{3}}{2} & -\frac{1}{2}    \\
\end{array} \right)n
\left(\begin{array}{ccc}
     -1 &  &  \\
      & 1 &  \\
      &  & -1 \\
\end{array} \right)\\
&C_6:n\rightarrow \left(\begin{array}{ccccc}
    I_3 & & &   & \\
     &1 &  & & \\
     & & -1 & &\\  
     & &  & 1&\\  
     & &  & &-1\\
\end{array} \right)n
\left(\begin{array}{ccc}
      &  & 1 \\
     1 &  &  \\
      & 1 &  \\
\end{array} \right)\\
&M:n\rightarrow \left(\begin{array}{ccccc}
    I_3 & & &   & \\
     & -1 &  & & \\
     & & -1 & &\\  
     & &  & 1&\\  
     & &  & &1\\
\end{array} \right)n
\left(\begin{array}{ccc}
      & 1 &  \\
     1 &  &  \\
      &  & 1 \\
\end{array} \right)\\
&O(3)^T:n\rightarrow \left(\begin{array}{cc}
    O(3)^T & \\
     & I_4
\end{array} \right)n
\end{split}
\eeq
Again, the LSM anomaly of a triangular lattice spin-1/2 system is
\beq
\Omega_\text{UV} \equiv \exp(i\pi L_\text{UV})= \exp\left(i \pi \left(B_{xy} + A_c(A_c + A_m)\right)w_2^{O(3)^T}\right),
\eeq
where $L_\text{UV}\in H^4(G_\text{UV}, \mb{Z}_2)$. We wish to prove that $\Omega_\text{UV}=\varphi^*\Omega_\text{IR}$. From the commuting diagram Eq. \eqref{eq: master commuting diagram}, we just need to prove that 
\beq
\mc{SQ}^1(L_\text{UV}) = \varphi^*\left(\mc{SQ}^1(L_\text{IR})\right).
\eeq

According to Lemma \ref{lemma: main lemma SQ}, we have
\beq\label{eq: SQL SL}
\begin{split}
\mathcal{SQ}^1(L_{\rm IR}) =&  w_5^{O(7)} + w_4^{O(7)}w_1^{O(3)} + w_3^{O(7)}\left(w_2^{O(3)} + (w_1^{O(3)})^2\right) + w_2^{O(7)}\left(w_3^{O(3)} + (w_1^{O(3)})^3\right) \\
&+ w_1^{O(7)}\left((w_2^{O(3)})^2 + w_2^{O(3)}(w_1^{O(3)})^2 + (w_1^{O(3)})^4\right) + \left(w_3^{O(3)}(w_1^{O(3)})^2 + (w_2^{O(3)})^2 w_1^{O(3)} + (w_1^{O(3)})^5\right).
\end{split}
\eeq
Also,
\beq\label{eq: SQL Triangular}
\mc{SQ}^1(L_\text{UV}) = \left((B_{xy} + A_c(A_c + A_m)\right)w_3^{O(3)^T},
\eeq
where $w_3^{O(3)^T} = w_3^{SO(3)} + t w_2^{SO(3)} + t^3$ and $t\in H^1(\mb{Z}_2^T, \mb{Z}_2)$ corresponds to the gauge field of time-reversal symmetry when pulled back to the spacetime manifold $\mc{M}_4$.

What remains is the calculation of the pullback $\varphi^*\left(\mc{SQ}^1(L_\text{IR})\right)$, which is a straightforward application of the Whitney product formula. In particular, $O(7)$ factorizes into one $3\times 3$ block and two $2\times 2$ block, and for the $O(3)$ part and the $O(7)$ part seperately, the pullback gives
\beq
\begin{split}
\varphi^*\left(w_1^{O(3)}\right)&= A_m, \\
\varphi^*\left(w_2^{O(3)}\right)&= B_{xy},\\
\varphi^*\left(w_3^{O(3)}\right)&=0,\\
\varphi^*\left(w_1^{O(7)}\right) &= t, \\
\varphi^*\left(w_2^{O(7)}\right) &= w_2^{SO(3)}+t^2+A_c^2+A_m^2+A_m A_c, \\
\varphi^*\left(w_3^{O(7)}\right)&= \left(w_3^{SO(3)}+t w_2^{SO(3)} + t^3\right)+t(A_c^2+A_m^2+A_m A_c)+A_c A_m(A_c+A_m), \\
\varphi^*\left(w_4^{O(7)}\right)&= (w_2^{SO(3)}+t^2)(A_c^2+A_m^2+A_m A_c)+tA_c A_m(A_c+A_m),
\\
\varphi^*\left(w_5^{O(7)}\right)&= \left(w_3^{SO(3)}+t w_2^{SO(3)} + t^3\right)(A_c^2+A_m^2+A_m A_c)+(w_2^{SO(3)}+t^2)A_c A_m(A_c+A_m),
\end{split}
\eeq
Substituting them back into Eq. $\eqref{eq: SQL SL}$, and using the cohomology relation $B_{xy}^2 = B_{c^2} B_{xy}$, indeed we get Eq. \eqref{eq: SQL Triangular} as promised. Hence we establish that $\Omega_\text{UV}=\varphi^*\Omega_\text{IR}$.

On a Kagome lattice spin-1/2 system, we consider the following embedding $\varphi$ of $G_{\rm UV} = p6m \times O(3)^T$ into $O(7)\times O(3)$,
\beq\label{eq: embedding SL Kagome Canonical}
\begin{split}
&T_1:n\rightarrow n
\left(\begin{array}{ccc}
     1 &  &  \\
      & -1 &  \\
      &  & -1 \\
\end{array} \right)\\
&T_2:n\rightarrow n
\left(\begin{array}{ccc}
     -1 &  &  \\
      & 1 &  \\
      &  & -1 \\
\end{array} \right)\\
&C_6:n\rightarrow \left(\begin{array}{ccccc}
    I_3 & & &   & \\
     &-1 &  & & \\
     & & 1 & &\\  
     & &  & -1&\\  
     & &  & &-1\\
\end{array} \right)n
\left(\begin{array}{ccc}
      &  & -1 \\
     1 &  &  \\
      & 1 &  \\
\end{array} \right)\\
&M:n\rightarrow \left(\begin{array}{ccccc}
    I_3 & & &   & \\
     &-1 &  & & \\
     & & -1 & &\\  
     & &  & 1&\\  
     & &  & &1\\
\end{array} \right)n
\left(\begin{array}{ccc}
      & -1 &  \\
     -1 &  &  \\
      &  & 1 \\
\end{array} \right)\\
&O(3)^T:n\rightarrow \left(\begin{array}{cc}
    O(3)^T &  \\
     & I_4
\end{array} \right)n
\end{split}
\eeq
The LSM anomaly of a Kagome lattice spin-1/2 system is 
\beq
\Omega_\text{UV} \equiv \exp(i\pi L_\text{UV})= \exp\left(i \pi B_{xy}w_2^{O(3)^T}\right).
\eeq
Again, we wish to prove that $\Omega_{\rm UV} = \varphi^* \Omega_{\rm IR}$ by proving $\mc{SQ}^1(L_{\rm UV}) = \varphi^*\left(\mc{SQ}^1(L_{\rm IR})\right)$. $\mc{SQ}^1(L_{\rm IR})$ is given in Eq. \eqref{eq: SQL SL}, while for $\mc{SQ}^1(L_{\rm UV})$ we have
\beq\label{eq: SQL Kagome}
\mc{SQ}^1(L_{\rm UV}) = B_{xy} w_3^{O(3)^T}.
\eeq
It is now straightforward to calculate the pullback of various Stiefel-Whitney classes in Eq. \eqref{eq: SQL SL},
\beq
\begin{split}
\varphi^*\left(w_1^{O(3)}\right) &= A_m+A_c \\   
\varphi^*\left(w_2^{O(3)}\right) &= B_{xy}+A_c^2 \\    
\varphi^*\left(w_3^{O(3)}\right) &= A_c^3+A_c^2A_m + A_cB_{xy} \\
\varphi^*\left(w_1^{O(7)}\right) &= t + A_c, \\
\varphi^*\left(w_2^{O(7)}\right) &=w_2^{SO(3)}+t^2 + tA_c + A_c^2 +  A_cA_m + A_m^2, \\
\varphi^*\left(w_3^{O(7)}\right)&= \left(w_3^{SO(3)}+t w_2^{SO(3)} + t^3\right) + \left(w_2^{SO(3)}+t^2\right)A_c + t\left(A_c^2 +  A_cA_m + A_m^2\right) + A_c^3, \\
\varphi^*\left(w_4^{O(7)}\right)&= \left(w_3^{SO(3)}+ tw_2^{SO(3)} + t^3\right)A_c + \left(w_2^{SO(3)}+t^2\right)\left(A_c^2 +  A_cA_m + A_m^2\right) + t A_c^3 + A_c^2 A_m(A_c+A_m), \\
\varphi^*\left(w_5^{O(7)}\right)&= \left(w_3^{SO(3)}+t w_2^{SO(3)} + t ^3\right)\left(A_c^2 +  A_cA_m + A_m^2\right) + \left(w_2^{SO(3)}+t^2\right)A_c^3 + t A_c^2 A_m(A_c+A_m). \\
\end{split}
\eeq
Substituting them into $\eqref{eq: SQL SL}$ and using the cohomology relation $B_{xy}^2 = B_{c^2} B_{xy}$, indeed we get Eq. \eqref{eq: SQL Kagome}, and thus establish that $\mc{SQ}^1(L_{\rm UV})=\varphi^*\left(\mc{SQ}^1(L_{\rm IR})\right)$.

\subsection{Five dimensional representation of $SO(3)$}\label{subapp: pullback 5d rep}

In all previous examples presented in this appendix, the $SO(3)$ spin rotation symmetry is embedded into the IR symmetry $G_{\rm IR}$ as a 3 dimensional representation. It is natural to consider embedding involving other representations of $SO(3)$, whose physical relevance is illustrated in Section \ref{sec: spin-nematic liquids}. In this sub-appendix, we present formula relevant to mapping $SO(3)$ into $G_{\rm IR}$ as a 5 dimensional representation of $SO(3)$.

First consider the 5 dimensional representation $\varphi_5: SO(3)\rightarrow O(5)$ of $SO(3)$ alone, which can be thought of as a symmetric traceless tensor $V_5$, whose 5 basis are
\beq
\begin{split}
&\frac{1}{\sqrt{2}}\left(n_1\otimes n_2+n_2\otimes n_1\right),~ 
\frac{1}{\sqrt{2}}\left(n_2\otimes n_3+n_3\otimes n_2\right),~
\frac{1}{\sqrt{2}}\left(n_3\otimes n_1+n_1\otimes n_3\right),\\
&\frac{1}{\sqrt{2}}\left(n_2\otimes n_2-n_3\otimes n_3\right),~
\frac{1}{\sqrt{6}}\left(2n_1\otimes n_1-n_2\otimes n_2-n_3\otimes n_3\right),
\end{split}
\eeq
where $n_{1,2,3}$ form an $SO(3)$ vector. Consider the $\z_2^2$ subgroup of $SO(3)$, generated by $\pi$-rotations around the $x$- and $y$-axes, respectively. Using the above 5 basis, these two $\pi$-rotations are mapped into ${\rm diag}(-1, 1, -1, 1, 1)$ and ${\rm diag}(-1, -1, 1, 1, 1)$, respectively, from which (or from the splitting principle \cite{bott2013differential}) we see that
\beq
\varphi_5^*\left(w_2^{O(5)}\right) = w_2^{SO(3)},~~~\varphi_5^*\left(w_3^{O(5)}\right) = w_3^{SO(3)},~~~\varphi_5^*\left(w_1^{O(5)}\right) = 0,~\varphi_5^*\left(w_4^{O(5)}\right) = 0,~\varphi_5^*\left(w_5^{O(5)}\right) = 0
\eeq

Now go back to $G_{\rm UV} = SO(3)\times\tilde G$ and consider a 5 dimensional representation $\varphi_5: G_{\rm UV}\rightarrow O(5)$ that can be written as $V_5\otimes V_1$, where $V_5$ denotes the 5 dimensional representation of $SO(3)$ while $V_1$ denotes a 1 dimensional real representation of $\tilde G$ correponding to $x\in H^1(\tilde G, \z_2)$. Again from inspecting the action of the diagonal $\z_2^2$ subgroup, we have
\beq
\begin{split}
&\varphi_5^*\left(w_1^{O(5)}\right) = x,\\
&\varphi_5^*\left(w_2^{O(5)}\right) = w_2^{SO(3)},\\
&\varphi_5^*\left(w_3^{O(5)}\right) = w_3^{SO(3)} + x w_2^{SO(3)},\\
&\varphi_5^*\left(w_4^{O(5)}\right) = x^2 w_2^{SO(3)} + x^4,\\
&\varphi_5^*\left(w_5^{O(5)}\right) = x^2 w_3^{SO(3)} + x^3 w_2^{SO(3)} + x^5.
\end{split}
\eeq

\section{Strategy of exhaustive search of SEP and results}\label{app: exhaustive search}

In this appendix, we briefly review our strategy of the exhaustive search of SEP. We also illustrate how to check all the SEPs from the {\it csv} data files we provide in the {\it Data\_and\_Codes} folder and the mathematica file {\it embedding.m}, which transforms the data in {\it csv} files into matrices representing generators $C_6/C_4$, $M$, $T_1$, $T_2$ and $\mc{T}$. Some interesting realizations have been shown in Sections \ref{sec: spin liquids} and \ref{sec: spin-nematic liquids}.

In order to enumerate all SEPs that match LSM constraints with IR anomaly, we just need to enumerate all embeddings from $G_{\rm UV}$ to $G_{\rm IR}$ and, following Section \ref{sec: applications} and Appendix \ref{app: pullback example}, calculate the pullback $\varphi^*(\Omega_{\rm IR})$ to see if it is identical to $\Omega_{\rm UV}$ corresponding to a particular LSM constraint. Motivated by quantum magnetism, we assume that the IR theory will emerge as a consequence of the competition between a magnetic state and a non-magnetic state. Therefore, we only consider embeddings such that, in terms of the $N\times (N-4)$ matrix $n$ for SL$^{(N)}$, some but not all entries of $n$ transform under the $SO(3)$ symmetry.

For DQCP, since the IR symmetry is $O(5)$, all embeddings are just composed of representations of $G_{\rm UV}$. For DSL and SL$^{(7)}$, even though the IR symmetry is $\frac{O(6)\times O(2)}{\z_2}$ and $\frac{O(7)\times O(3)}{\z_2}$, respectively, because of the constraints on the embeddings, it suffices to only consider embeddings into $\frac{O(6)\times O(2)}{\z_2}$ and $\frac{O(7)\times O(3)}{\z_2}$ which can be respectively lifted to an embedding into $O(6)\times O(2)$ or $O(7)\times O(3)$, as discussed below Eq. \eqref{eq: embedding DSL Canonical}. Therefore, all embeddings we consider are just composed of real representations of $G_{\rm UV}$. In other words, our task to specify an embedding becomes finding appropriate irreducible representations of $G_{\rm UV}$, and fill them into the $O(N)$ and $O(N-4)$ matrices in a block diagonal form.

Hence let us make a detour and discuss representations of $G_{\rm UV} = G_s\times SO(3)\times \z_2^T$. For any two groups $G_{1,2}$, an irreducible representation $V$ of $G_1\times G_2$ is $V_1\otimes V_2$, where $V_{1,2}$ is an irreducible representation of $G_{1,2}$, respectively. So any irreducible representation $V$ of $G_{\rm UV}$ takes the form of $V=V^{2n+1}_{SO(3)}\otimes V_s\otimes V_T$, where $V^{2n+1}_{SO(3)}$ is a $(2n+1)$-dimensional irreducible representation of $SO(3)$ with $n\in \mb{N}$, $V_s$ is an irreducible representation of $G_s$, and $V_T=\pm 1$ is an irreducible representation of $\z_2^T$. The complete list of irreducible representations $V_s$ of $G_s$ can be found using the method of induced representations \cite{bradley2010mathematical,weinberg2005quantum,Etingof2011introduction}, and we provide complete lists of irreducible representations of $p4m$ and $p6m$ in the Mathematica file {\it Representation.nb}.

To figure out which representations of $G_{\rm UV}$ are relevant to our discussions, it is useful to analyze in which blocks the $SO(3)$ can act nontrivially, with the assumption that some but not all entries of $n$ transform under the $SO(3)$ symmetry. For DQCP, $SO(3)$ must act nontrivially in a 3-d block, while the rest 2-d block should be a reducible or irreducible representation of $G_s\times \z_2^T$. That is, the relevant representation $V$ of $G_{\rm UV}$ schematically takes the form 
\beq
V_{\rm DQCP}=\left(\begin{array}{cc}
      \left(V^3_{SO(3)}\otimes V^1_s\otimes V^1_T\right)^{3\times 3}&  \\
       & \left(V^1_{SO(3)}\otimes V^2_{G_s\times\z_2^T}\right)^{2\times 2}   \\
\end{array} \right),
\eeq
where $V^1_s$ and $V^1_T$ are 1-d representations of $G_s$ and $\z_2^T$ respectively, and $V^2_{G_s\times\z_2^T}$ is a 2-d (reducible or irreducible) representation of $G_s\times\z_2^T$. 

For DSL, the block involving nontrivial $SO(3)$ actions can be 3-d or 5-d, and it has to lie in $O(6)$. For SL$^{(7)}$, the block involving $SO(3)$ should embed into $O(7)$ and can be 3-d, 5-d or 6-d. The 6-d representation takes the form of $V_{SO(3)}\otimes V^2_{G_s\times\z_2^T}$, where $V_{SO(3)}$ is the 3-d representation of $SO(3)$, and $V^2_{G_s\times\z_2^T}$ involves either two 1-d representations of $G_s\times \z_2^T$ or one irreducible 2-d representation of $G_s\times \z_2^T$. However, it turns out that it is impossible to match the anomaly with any LSM constraint in the presence of some 6-d block involving $SO(3)$. Therefore, for DSL and SL$^{(7)}$, we consider two cases, \ie either $SO(3)$ embeds as a 3-d representation, corresponding to deconfined quantum criticle points or quantum critical spin liquids in Section \ref{sec: spin liquids}, or as a 5-d representation, corresponding to quantum critical spin-quadrupolar liquids in Section \ref{sec: spin-nematic liquids}. 

For DSL and SL$^{(7)}$, we still have freedom to choose the lifting to $O(6)\times O(2)$ or $O(7)\times O(3)$, and different embeddings into $O(6)\times O(2)$ or $O(7)\times O(3)$ may correspond to the same embedding into $\frac{O(6)\times O(2)}{\z_2}$ or $\frac{O(7)\times O(3)}{\z_2}$. For embeddings involving 3-d representation of $SO(3)$, we choose such that only $\mc{T}$ acts in the $3\times 3$ block as $-I_3$, while $G_s$ acts trivially in that block. That is, for DSL and SL$^{(7)}$, the relevant representations $V$ of $G_{\rm UV}$ schematically take the form  as follows
\beq
V_{\rm DSL}=\left(\begin{array}{cc}
      \left(V^3_{SO(3)}\otimes 1_s\otimes (-1_T)\right)^{3\times 3}&  \\
       & \left(V^1_{SO(3)}\otimes V^3_{G_s\times\z_2^T}\right)^{3\times 3}   \\
\end{array} \right) \times 
\left(V^1_{SO(3)}\otimes V^2_{G_s\times\z_2^T}\right)^{2\times 2} 
\eeq
and
\beq
V_{{\rm SL}^{(7)}}=\left(\begin{array}{cc}
      \left(V^3_{SO(3)}\otimes 1_s\otimes (-1_T)\right)^{3\times 3}&  \\
       & \left(V^1_{SO(3)}\otimes V^4_{G_s\times\z_2^T}\right)^{4\times 4}   \\
\end{array} \right) \times 
\left(V^1_{SO(3)}\otimes V^3_{G_s\times\z_2^T}\right)^{3\times 3} 
\eeq
where $1_s$ is the 1-d trivial representation of $G_s$, and $-1_T$ is the 1-d non-trivial representation of $\z_2^T$. For embeddings involving 5-d representation of $SO(3)$, we choose such that both $\z_2^T$ and $G_s$ act trivially in the $5\times 5$ block. That is, the relevant representations $V$ of $G_{\rm UV}$ schematically take the form
\beq
V_{\rm DSL}=\left(\begin{array}{cc}
      \left(V^5_{SO(3)}\otimes 1_s\otimes 1_T\right)^{5\times 5}&  \\
       & \left(V^1_{SO(3)}\otimes V^1_{G_s\times\z_2^T}\right)^{1\times 1}   \\
\end{array} \right) \times 
\left(V^1_{SO(3)}\otimes V^2_{G_s\times\z_2^T}\right)^{2\times 2} 
\eeq
and
\beq
V_{{\rm SL}^{(7)}}=\left(\begin{array}{cc}
      \left(V^5_{SO(3)}\otimes 1_s\otimes 1_T\right)^{5\times 5}&  \\
       & \left(V^1_{SO(3)}\otimes V^2_{G_s\times\z_2^T}\right)^{2\times 2}   \\
\end{array} \right) \times 
\left(V^1_{SO(3)}\otimes V^3_{G_s\times\z_2^T}\right)^{3\times 3} 
\eeq
where $1_s$ and $1_T$ are 1-d trivial representations of $G_s$ and $\z_2^T$, respectively.

Having identified all possible embeddings, it is a striaghtforward exercise to calculate the pullback in each case following Section \ref{sec: applications} and Appendix \ref{app: pullback example}. We use {\it Mathematica} to automate the computation and store results in {\it csv} files in the ancillary folder \cite{code}. For example, {\it data.csv} contains data for matching LSM constraints with IR anomaly of SL$^{(N=5,6,7)}$ when $SO(3)$ embeds into $O(6)$ as a 3-d representaion, while {\it dataSL5Rep.csv} contains data for matching LSM constraints of $p4m$ with IR anomaly of SL$^{(7)}$ when $SO(3)$ embeds into $O(7)$ as a 5-d representaion. Moreover, for both $p4m$ and $p6m$, there is a single embedding involving 5-d representation of $SO(3)$ that can match IR anomaly of DSL, shown in  Eq. \eqref{eq: embedding DSL 5d rep}, which actually matches IR anomaly with zero LSM constraint.

To read the embeddings, \ie to read the explicit image of the generators $C_4/C_6$, $M$, $T_1$, $T_2$ and $\mc{T}$ in $G_{\rm IR}$, we provide a wrapper file {\it Embedding.m}. When $SO(3)$ embeds into $G_{\rm IR}$ as a 3-d representation, it provides two functions
$$p4mPrintEmbedding[n\_Integer, lsm\_Integer, p\_Integer]$$
$$p6mPrintEmbedding[n\_Integer, lsm\_Integer, p\_Integer]$$
The arguments are $n=5,6,7$ corresponding to DQCP, DSL and SL$^{(7)}$ respectively, $lsm=1,\dots,8$ in $p4m$ or $lsm=1,\dots,4$ corresponding to a particular LSM constraint with the order shown in Table \ref{tab: realization number}, and $p$ corresponding to a position in the array for a particular embedding/realization.
When $SO(3)$ embeds into $G_{\rm IR}$ as a 5-d representation and the IR theory is SL$^{(7)}$, it also provides two functions
$$p4m5dPrintEmbedding[lsm\_Integer, p\_Integer]$$
$$p6m5dPrintEmbedding[lsm\_Integer, p\_Integer]$$
with similar arguments and output. Note that in this scenario for DQCP there is no realization, and for DSL there is a single realization in $p4m$ or $p6m$ shown in Eq. \eqref{eq: embedding DSL 5d rep}. For $p4m$ and SL$^{(7)}$, it also provides a function
$$IncommensuratePrintEmbedding[lsm\_Integer, p\_Integer]$$
to check whether some embedding corresponds to an incommensurate order, and if it does, output the corresponding incommensurate embedding. An illustration of how to use these functions is provided in {\it ReadMe.nb}.

\section{Stable realizations on various lattice spin systems} \label{app: realizations in familiar lattices}

In this appendix, we list all stable realizations of DQCP, DSL and SL$^{(7)}$ on triangular, kagome, and square lattice half-integer spin systems, as well as those on $p6m$-anomaly-free systems (including honeycomb lattice half-integer spin systems and all integer-spin systems with $p6m$ lattice symmetry) and $p4m$-anomaly-free systems (including all integer-spin systems with $p4m$ lattice symmetry). For square lattice, we only list the realizations in lattice homotopy class with PR at the type-a IWP, from which the ones with PR at the type-b IWP can be obtained by redefining the $C_4$ center. As in the main text, here a stable DQCP means a realization with only a single relevant perturbation allowed by microscopic symmetries, so that it can be realized as a generic (pseudo-)critical point. A stable DSL means a realization with no relevant perturbation allowed by microscopic symmetries, so that it can be realized as a stable phase. A stable SL$^{(7)}$ means a realization with either no relevant perturbation allowed by microscopic symmetries, or a single symmetry-allowed relevant perturbation that does not change the emergent order but only shifts the ``zero momenta", so that this realization can still be viewed as a stable phase. All stable realizations of these states, including those on lattice systems discussed here and also those on other lattice systems, are explicitly documented in {\it ReadMe.nb}.

\subsection{Stable realizations of DQCP}

On all these systems, there is a single new stable realization of DQCP, given by Eq. \eqref{eq: ferro DQCP}, adjacent to ferromagnetic order on triangular lattice, kagome lattice integer spin systems or honeycomb lattice half-integer/integer spin systems. There is a known stable realization of DQCP on the square lattice half-integer spin system, given by Eq. \eqref{eq: embedding DQCP Square}, adjacent to anti-ferromagnetic (Neel) order. There is another known stable realization of DQCP on $p6m$-anomaly-free system \cite{Senthil2003a}, adjacent to anti-ferromagnetic order on honeycomb lattice half-integer/integer spin systems, given by
\beq
\begin{split}
&T_{1,2}:n\rightarrow\left(
\begin{array}{ccc}
 I_3 & & \\
 & -\frac{1}{2} & \frac{\sqrt{3}}{2} \\
 & -\frac{\sqrt{3}}{2} & -\frac{1}{2} \\
\end{array}
\right)n,
\quad
C_6: n\rightarrow\left(
\begin{array}{ccc}
 -I_3 & & \\
 & 1 & \\
 &  & -1 \\
\end{array}
\right)n,
\\
&M:n\rightarrow\left(
\begin{array}{ccc}
 -I_3 & & \\
 & 1 & \\
 &  & 1 \\
\end{array}
\right)n,
\quad
O(3)^T: n\rightarrow
\left(
\begin{array}{cc}
O(3)^T & \\
& I_2
\end{array}
\right)n
\end{split}
\eeq
These three are all stable realizations of DQCP.

\subsection{Stable realizations of DSL}

On $p6m$-anomaly-free systems, there is a single stable realization of DSL where the most relevant spin fluctuations carry spin-1, given by Eq. \eqref{eq: embedding DSL Honeycomb}. On both $p6m$-anomaly-free systems and $p4m$-anomaly-free systems, there is also a single stable realization of DSL where the most relevant spinful fluctuations carry spin-2, given by Eq. \eqref{eq: embedding DSL 5d rep}. Below we discuss the other systems.

\subsubsection*{Triangular lattice half-integer spin systems}

On triangular lattice half-integer spin systems, there are 3 stable realizations of DSL. One of them is known \cite{Zhou2002, Lu2015, Iqbal2016, Hu2019, Jian2017a, Song2018, Song2018a, Zou2021}, given by Eq. \eqref{eq: DSL standard}, adjacent to $120^\circ$ order. The other two have identical actions of $T_{1, 2}$, $C_6$ and $O(3)^T$:
\beq \label{eq: DSL triangular others 1}
\begin{split}
&T_1: n\rightarrow
\left(
\begin{array}{cccc}
I_3 & & &\\ 
& 1 & &  \\
& & -1 &  \\
& & & -1  \\
\end{array}
\right)n,
\quad
T_2: n\rightarrow
\left(
\begin{array}{cccc}
I_3 & & &\\ 
& -1 & & \\
& & 1 &  \\
& & & -1 \\
\end{array}
\right)n\\
&C_6: n\rightarrow
\left(
\begin{array}{cccc}
I_3 & & &\\ 
& & 1 &  \\
& & & 1  \\
& -1 & & \\
\end{array}
\right)n
\left(
\begin{array}{cc}
-1 & \\
& 1
\end{array}
\right),
\quad
O(3)^T: n\rightarrow
\left(
\begin{array}{cc}
O(3)^T & \\
& I_3    \\
\end{array}
\right)n
\end{split}
\eeq
The action of the mirror symmetry $M$ in these two realizations are respectively
\beq \label{eq: DSL triangular others 2}
M: n\rightarrow
\left(
\begin{array}{cccc}
I_3 & & &\\ 
& & -1 & \\
&-1 & &  \\
& & & 1  \\
\end{array}
\right)n
\eeq
and
\beq
M: n\rightarrow
\left(
\begin{array}{cccc}
I_3 & & &\\ 
& & 1 &  \\
& 1 & &  \\
& & & -1 \\
\end{array}
\right)n
\left(
\begin{array}{cc}
    -1 &  \\
     & 1
\end{array}
\right)
\eeq

\subsubsection*{Kagome lattice half-integer spin systems}

On kagome lattice half-integer spin systems, there are 3 stable realizations of DSL. One of them is known \cite{Hastings2000,Ran2007,Hermele2008,Iqbal2013,He2017, Song2018, Song2018a, Zou2021}, adjacent to $q=0$ order, given by
\beq
\begin{split}
&T_1: n\rightarrow
\left(
\begin{array}{cccc}
I_3 & & & \\
& 1 & & \\
& & -1 &\\
& & & -1
\end{array}
\right)n,
\quad
T_2: n\rightarrow
\left(
\begin{array}{cccc}
I_3 & & & \\
& -1 & & \\
& & 1 &\\
& & & -1
\end{array}
\right)n,
\quad
C_6:n\rightarrow\left(
\begin{array}{cccc}
 I_3 & & \\
  & & 1 &  \\
  &  & & 1 \\
  & 1 &  &  \\
\end{array}
\right)n\left(
\begin{array}{cc}
 -\frac{1}{2} & -\frac{\sqrt{3}}{2} \\
 \frac{\sqrt{3}}{2} & -\frac{1}{2} \\
\end{array}
\right),
\\
&M: n\rightarrow
\left(
\begin{array}{cccc}
I_3 & & & \\
& & -1 & \\
& -1 & & \\
& & & -1
\end{array}
\right)n
\left(
\begin{array}{cc}
-1 & \\
& 1
\end{array}
\right),
O(3)^T: n\rightarrow
\left(
\begin{array}{cc}
O(3)^T & \\
& I_3
\end{array}
\right)n
\end{split}
\eeq

The other two have the same actions of $T_{1, 2}$, $C_6$ and $O(3)^T$:
\beq
\begin{split}
    &C_6: n\rightarrow
    \left(
    \begin{array}{cccc}
    I_3 & & & \\
    & & 1 & \\
    & & & 1 \\
    & 1 & &
    \end{array}
    \right)n,
    \quad
    T_1: n\rightarrow
    \left(
    \begin{array}{cccc}
    I_3 & & & \\
    & 1 & & \\
    & & -1 & \\
    & & & -1
    \end{array}
    \right)n,\\
    &T_2: n\rightarrow
    \left(
    \begin{array}{cccc}
    I_3 & & & \\
    & -1 & & \\
    & & 1 & \\
    & & & -1
    \end{array}
    \right)n,
    \quad
    O(3)^T: n\rightarrow
    \left(
    \begin{array}{cc}
    O(3)^T & \\
    & I_3
    \end{array}
    \right)n
\end{split}
\eeq
And the action of $M$ in the two realizations are respectively
\beq
M: n\rightarrow
\left(
\begin{array}{cccc}
I_3 & & & \\
& & 1 & \\
& 1 & & \\
& & & 1
\end{array}
\right)n
\eeq
and
\beq
M: n\rightarrow
\left(
\begin{array}{cccc}
I_3 & & & \\
& & -1 & \\
& -1 & & \\
& & & -1
\end{array}
\right)n
\left(
\begin{array}{cc}
-1 & \\
& 1
\end{array}
\right)
\eeq

\subsubsection*{Square lattice half-integer spin systems}

On square lattice half-integer spin systems, there are 3 stable realizations of DSLs. One of them is given by Eq. \eqref{eq: embedding DSL Square}. The other two have the same actions of $T_{1,2}$, $C_4$ and $O(3)^T$:
\beq
\begin{split}
    &T_1: n\rightarrow
    \left(
    \begin{array}{cccc}
    I_3 & & & \\
    & -1 & & \\
    & & 1 & \\
    & & & 1
    \end{array}
    \right)n
    \left(
    \begin{array}{cc}
    -1 & \\
    & 1
    \end{array}
    \right),
    \quad
    T_2: n\rightarrow
    \left(
    \begin{array}{cccc}
    I_3 & & & \\
    & 1 & & \\
    & & -1 & \\
    & & & 1
    \end{array}
    \right)n
    \left(
    \begin{array}{cc}
    -1 & \\
    & 1
    \end{array}
    \right)
    \\
    &C_4: n\rightarrow
    \left(
    \begin{array}{cccc}
    I_3 & & & \\
    & & 1 & \\
    & -1 & & \\
    & & & -1
    \end{array}
    \right)n
    \left(
    \begin{array}{cc}
    -1 & \\
    & 1
    \end{array}
    \right),
    \quad
    O(3)^T: n\rightarrow
    \left(
    \begin{array}{cc}
    O(3)^T & \\
    & I_3
    \end{array}
    \right)n
\end{split}
\eeq
The action of $M$ on these two realizations are respectively
\beq
M: n\rightarrow 
\left(
\begin{array}{cccc}
I_3 & & & \\
& 1 & & \\
& & -1 & \\
& & & 1
\end{array}
\right)n
\eeq
and
\beq
M: n\rightarrow 
\left(
\begin{array}{cccc}
I_3 & & & \\
& -1 & & \\
& & 1 & \\
& & & -1
\end{array}
\right)n
\left(
\begin{array}{cc}
-1 & \\
& 1
\end{array}
\right)
\eeq

\subsection{Stable realizations of SL$^{(7)}$}

Below we list the stable realizations of SL$^{(7)}$ on various systems.

\subsubsection*{$p6m$-anomaly-free systems}

On $p6m$-anomaly-free systems, there are two stable realizations of SL$^{(7)}$, both of which have the most relevant spinful fluctuations carrying spin-2. The symmetry actions of one of them is given by Eq. \eqref{eq: embedding SL Honeycomb}. The other one has symmetry actions:
\beq
\begin{split}
    &SO(3): n\rightarrow
    \left(
    \begin{array}{cc}
    \varphi_5(SO(3)) & \\
    & I_2
    \end{array}
    \right)n,
    \quad
    \mc{T}: n\rightarrow
    \left(
    \begin{array}{ccc}
    I_5 & & \\
    & -1 & \\
    & & -1
    \end{array}
    \right)n
    \left(
    \begin{array}{ccc}
    -1 & & \\
    & 1 & \\
    & & 1
    \end{array}
    \right),
    \quad
    T_{1,2}: n\rightarrow
    \left(
    \begin{array}{ccc}
    I_5 & & \\
    & -\frac{1}{2} & \frac{\sqrt{3}}{2} \\
    & -\frac{\sqrt{3}}{2} & -\frac{1}{2}
    \end{array}
    \right)n,\\
    &C_6: n\rightarrow
    \left(
    \begin{array}{ccc}
    I_5 & & \\
    & 1 & \\
    & & -1 \\
    \end{array}
    \right)n
    \left(
    \begin{array}{ccc}
    -1 & & \\
    & 1 & \\
    & & 1
    \end{array}
    \right),
    \quad
    M: n\rightarrow
    \left(
    \begin{array}{ccc}
    I_5 & & \\
    & -1 & \\
    & & -1
    \end{array}
    \right)n
    \left(
    \begin{array}{ccc}
    -1 & & \\
    & 1 & \\
    & & 1
    \end{array}
    \right),
\end{split}
\eeq

\subsubsection*{Triangular lattice half-integer spin systems}

On triangular lattice half-integer spin systems, there are 8 stable realizations of SL$^{(7)}$. The first has appeared in Ref. \cite{Zou2021}, given by Eq. \eqref{eq: embedding SL Triangle}.

The second has symmetry actions:
\beq
\begin{split}
   &T_1: n\rightarrow
    n
    \left(
    \begin{array}{ccc}
    1 & & \\
    & -1 & \\
    & & -1
    \end{array}
    \right),
    \quad
    T_2: n\rightarrow
    n
    \left(
    \begin{array}{ccc}
    -1 & & \\
    & 1 & \\
    & & -1
    \end{array}
    \right),\\
    &C_6: n\rightarrow
    \left(
    \begin{array}{ccccc}
    I_3 & & & & \\
    & -1 & & & \\
    & & 1 & & \\
    & & & -1 & \\
    & & & & 1
    \end{array}
    \right)n
    \left(
    \begin{array}{ccc}
    & & 1 \\
    1 & & \\
    & 1 &
    \end{array}
    \right),
    \quad
    M: n\rightarrow
    \left(
    \begin{array}{ccccc}
    I_3 & & & & \\
    & -1 & & & \\
    & & -1 & & \\
    & & & 1 & \\
    & & & & 1
    \end{array}
    \right)n
    \left(
    \begin{array}{ccc}
    & 1 & \\
    1 & & \\
    & & 1
    \end{array}
    \right),\\
    &O(3)^T: n\rightarrow
    \left(
    \begin{array}{cc}
    O(3)^T & \\
    & I_4
    \end{array}
    \right)n
\end{split}
\eeq

The third has symmetry actions:
\beq
\begin{split}
&T_1: n\rightarrow 
    \left(
    \begin{array}{ccccc}
    I_3 & & & & \\
    & -\frac{1}{2} & \frac{\sqrt{3}}{2} & & \\
    & -\frac{\sqrt{3}}{2} & -\frac{1}{2} & & \\
    & & & 1 &\\
    & & & & 1\\
    \end{array}
    \right)
    n
    \left(
    \begin{array}{ccc}
    1 & & \\
    & -1 & \\
    & & -1
    \end{array}
    \right),
    \quad
    T_2: n\rightarrow
    \left(
    \begin{array}{ccccc}
    I_3  & & & & \\
    & -\frac{1}{2} & \frac{\sqrt{3}}{2} & & \\
    & -\frac{\sqrt{3}}{2} & -\frac{1}{2} & & \\
    & & & 1 &\\
    & & & & 1\\
    \end{array}
    \right)
    n
    \left(
    \begin{array}{ccc}
    -1 & & \\
    & 1 & \\
    & & -1
    \end{array}
    \right),\\
    &C_6: n\rightarrow
    \left(
    \begin{array}{ccccc}
    I_3 & & & & \\
    & 1 & & & \\
    & & -1 & & \\
    & & & -1 & \\
    & & & & 1
    \end{array}
    \right)n
    \left(
    \begin{array}{ccc}
    & & 1 \\
    1 & & \\
    & 1 &
    \end{array}
    \right),
    \quad
    M: n\rightarrow
    \left(
    \begin{array}{ccccc}
    I_3 & & & &\\
    & -1 & & & \\
    & & -1 & &\\
    & & & 1 &\\
    & & & & 1 \\
    \end{array}
    \right)n
    \left(
    \begin{array}{ccc}
    & 1 & \\
    1 & & \\
    & & 1
    \end{array}
    \right),\\
    &O(3)^T: n\rightarrow
    \left(
    \begin{array}{cc}
    O(3)^T & \\
    & I_4
    \end{array}
    \right)n
\end{split}
\eeq

The fourth has symmetry actions:
\beq
\begin{split}
&T_1: n\rightarrow 
    \left(
    \begin{array}{ccccc}
    I_3 & & & & \\
    & -\frac{1}{2} & \frac{\sqrt{3}}{2} & & \\
    & -\frac{\sqrt{3}}{2} & -\frac{1}{2} & & \\
    & & & 1 &\\
    & & & & 1\\
    \end{array}
    \right)
    n
    \left(
    \begin{array}{ccc}
    1 & & \\
    & -1 & \\
    & & -1
    \end{array}
    \right),
    \quad
    T_2: n\rightarrow
    \left(
    \begin{array}{ccccc}
    I_3  & & & & \\
    & -\frac{1}{2} & \frac{\sqrt{3}}{2} & & \\
    & -\frac{\sqrt{3}}{2} & -\frac{1}{2} & & \\
    & & & 1 &\\
    & & & & 1\\
    \end{array}
    \right)
    n
    \left(
    \begin{array}{ccc}
    -1 & & \\
    & 1 & \\
    & & -1
    \end{array}
    \right),\\
    &C_6: n\rightarrow
    \left(
    \begin{array}{ccccc}
    I_3 & & & & \\
    & 1 & & & \\
    & & -1 & & \\
    & & & -1 & \\
    & & & & 1
    \end{array}
    \right)n
    \left(
    \begin{array}{ccc}
    & & 1 \\
    1 & & \\
    & 1 &
    \end{array}
    \right),
    \quad
    M: n\rightarrow
    \left(
    \begin{array}{ccccc}
    I_3 & & & &\\
    & 1 & & & \\
    & & 1 & &\\
    & & & -1 &\\
    & & & & -1 \\
    \end{array}
    \right)n
    \left(
    \begin{array}{ccc}
    & 1 & \\
    1 & & \\
    & & 1
    \end{array}
    \right),\\
    &O(3)^T: n\rightarrow
    \left(
    \begin{array}{cc}
    O(3)^T & \\
    & I_4
    \end{array}
    \right)n
\end{split}
\eeq
The first to the fourth realization are all adjacent to tetrahedral order.

The fifth has symmetry actions:
\beq
\begin{split}
&T_1: n\rightarrow n
    \left(
    \begin{array}{ccc}
    1 & & \\
    & -1 & \\
    & & -1
    \end{array}
    \right),
    \quad
    T_2: n\rightarrow n
    \left(
    \begin{array}{ccc}
    -1 & & \\
    & 1 & \\
    & & -1
    \end{array}
    \right),\\
    &C_6: n\rightarrow
    \left(
    \begin{array}{ccccc}
    I_3 & & & & \\
    & -1 & & & \\
    & & 1 & & \\
    & & & 1 & \\
    & & & & -1
    \end{array}
    \right)n
    \left(
    \begin{array}{ccc}
    & & 1 \\
    1 & & \\
    & 1 &
    \end{array}
    \right),
    \quad
    M: n\rightarrow
        \left(
    \begin{array}{ccccc}
    I_3 & & & & \\
    & -1 & & & \\
    & & -1 & & \\
    & & & -1 & \\
    & & & & 1
    \end{array}
    \right)n
    \left(
    \begin{array}{ccc}
    & -1 & \\
    -1 & & \\
    & & -1
    \end{array}
    \right),\\
    &O(3)^T: n\rightarrow
    \left(
    \begin{array}{cc}
    O(3)^T & \\
    & I_4
    \end{array}
    \right)n
\end{split}
\eeq

The sixth has symmetry actions:
\beq
\begin{split}
 &T_1: n\rightarrow 
    \left(
    \begin{array}{ccccc}
    I_3 & & & \\
    & -\frac{1}{2} & \frac{\sqrt{3}}{2} & &\\
    & -\frac{\sqrt{3}}{2} & -\frac{1}{2} & &\\
    & & & 1 & \\
    & & & & 1 \\
    \end{array}
    \right)n
    \left(
    \begin{array}{ccc}
    1 & & \\
    & -1 & \\
    & & -1
    \end{array}
    \right),
    \quad
    T_2: n\rightarrow
    \left(
    \begin{array}{ccccc}
    I_3 & & & \\
    & -\frac{1}{2} & \frac{\sqrt{3}}{2} & & \\
    & -\frac{\sqrt{3}}{2} & -\frac{1}{2} & &\\
    & & & 1 & \\
    & & & & 1\\
    \end{array}
    \right)n
    \left(
    \begin{array}{ccc}
    -1 & & \\
    & 1 & \\
    & & -1
    \end{array}
    \right),\\
    &C_6: n\rightarrow
    \left(
    \begin{array}{ccccc}
    I_3 & & & & \\
    & 1 & & & \\
    & & -1 & & \\
    & & & 1 & \\
    & & & & -1
    \end{array}
    \right)n
    \left(
    \begin{array}{ccc}
    & & 1 \\
    1 & & \\
    & 1 &
    \end{array}
    \right),
    \quad
    M: n\rightarrow
    \left(
    \begin{array}{ccccc}
    I_3 & & & & \\
    & -1 & & & \\
    & & -1 & & \\
    & & & -1 & \\
    & & & & 1
    \end{array}
    \right)n
    \left(
    \begin{array}{ccc}
    & -1 & \\
    -1 & & \\
    & & -1
    \end{array}
    \right),\\
    &O(3)^T: n\rightarrow
    \left(
    \begin{array}{cc}
    O(3)^T & \\
    & I_4
    \end{array}
    \right)n
\end{split}
\eeq

The seventh has symmetry actions:
\beq
\begin{split}
    &T_1: n\rightarrow
    \left(
    \begin{array}{ccccc}
    I_3 & & & & \\
    & 1 & & & \\
    & & 1 & & \\
    & & & -1 & \\
    & & & & -1
    \end{array}
    \right)n
    \left(
    \begin{array}{ccc}
    -\frac{1}{2} & -\frac{\sqrt{3}}{2} & \\
    \frac{\sqrt{3}}{2} & -\frac{1}{2} & \\
    & & 1
    \end{array}
    \right),
    \quad
    T_2: n\rightarrow
    \left(
    \begin{array}{ccccc}
    I_3 & & & & \\
    & 1 & & & \\
    & & -1 & & \\
    & & & 1 & \\
    & & & & -1
    \end{array}
    \right)n
    \left(
    \begin{array}{ccc}
    -\frac{1}{2} & -\frac{\sqrt{3}}{2} & \\
    \frac{\sqrt{3}}{2} & -\frac{1}{2} & \\
    & & 1
    \end{array}
    \right),\\
    &C_6: n\rightarrow
    \left(
    \begin{array}{ccccc}
    I_3 & & & \\
    & -1 & & & \\
    & & & 1 & \\
    & & & & 1 \\
    & & -1 & &
    \end{array}
    \right)n
    \left(
    \begin{array}{ccc}
    1 & &\\
    & -1 &\\
    & & -1 \\
    \end{array}
    \right),
    \quad
    M: n\rightarrow
    \left(
    \begin{array}{ccccc}
    I_3 & & & & \\
    & 1 & & & \\
    & & & 1 & \\
    & & 1 & & \\
    & & & & -1
    \end{array}
    \right)n
    \left(
    \begin{array}{ccc}
    1 & & \\
    & 1 &\\
    & & -1
    \end{array}
    \right),\\
    &O(3)^T: n\rightarrow
    \left(
    \begin{array}{cc}
    O(3)^T & \\
    & I_4
    \end{array}
    \right)n
\end{split}
\eeq

The eighth has symmetry actions:
\beq
\begin{split}
    &T_1: n\rightarrow
    \left(
    \begin{array}{ccccc}
    I_3 & & & & \\
    & 1 & & & \\
    & & 1 & & \\
    & & & -1 & \\
    & & & & -1
    \end{array}
    \right)n
    \left(
    \begin{array}{ccc}
    -\frac{1}{2} & -\frac{\sqrt{3}}{2} & \\
    \frac{\sqrt{3}}{2} & -\frac{1}{2} & \\
    & & 1
    \end{array}
    \right),
    \quad
    T_2: n\rightarrow
    \left(
    \begin{array}{ccccc}
    I_3 & & & & \\
    & 1 & & & \\
    & & -1 & & \\
    & & & 1 & \\
    & & & & -1
    \end{array}
    \right)n
    \left(
    \begin{array}{ccc}
    -\frac{1}{2} & -\frac{\sqrt{3}}{2} & \\
    \frac{\sqrt{3}}{2} & -\frac{1}{2} & \\
    & & 1
    \end{array}
    \right),\\
    &C_6: n\rightarrow
    \left(
    \begin{array}{ccccc}
    I_3 & & & \\
    & 1 & & & \\
    & & & 1 & \\
    & & & & 1 \\
    & & -1 & &
    \end{array}
    \right)n
    \left(
    \begin{array}{ccc}
    1 & & \\
    & -1 & \\
    & & 1
    \end{array}
    \right),
    \quad
    M: n\rightarrow
    \left(
    \begin{array}{ccccc}
    I_3 & & & & \\
    & -1 & & & \\
    & & & 1 & \\
    & & 1 & & \\
    & & & & -1
    \end{array}
    \right)n
    \left(
    \begin{array}{ccc}
    -1 & & \\
    & -1 &\\
    & & 1
    \end{array}
    \right),\\
    &O(3)^T: n\rightarrow
    \left(
    \begin{array}{cc}
    O(3)^T & \\
    & I_4
    \end{array}
    \right)n
\end{split}
\eeq

\subsubsection*{Kagome lattice half-integer spin systems}

On kagome lattice half-integer spin systems, there are 9 stable realizations of SL$^{(7)}$. The first has appeared in Ref. \cite{Zou2021} and is given by Eq. \eqref{eq: embedding SL Kagome Canonical}. The second is given by Eq. \eqref{eq: embedding SL Kagome more}. Both realizations are adjacent to cuboc1 order.

The third has symmetry actions
\beq
\begin{split}
    &T_1: n\rightarrow
    n
    \left(
    \begin{array}{ccc}
    1 & & \\
    & -1 & \\
    & & -1
    \end{array}
    \right),
    \quad
    T_2: n\rightarrow
    n
    \left(
    \begin{array}{ccc}
    -1 & & \\
    & 1 & \\
    & & -1
    \end{array}
    \right),\\
    &C_6: n\rightarrow
    \left(
    \begin{array}{ccccc}
    I_3 & & & & \\
    & -1 & & &\\
    & & -1 & & \\
    & & & 1 & \\
    & & & & -1 
    \end{array}
    \right)n
    \left(
    \begin{array}{ccc}
    & & -1 \\
    1 & & \\
    & 1 & \\
    \end{array}
    \right),
    \quad
    M: n\rightarrow
    \left(
    \begin{array}{ccccc}
    I_3 & & & &\\
    & -1 & & &  \\
    & & -1 & & \\
    & & & -1 & \\
    & & & & 1\\
    \end{array}
    \right)n
    \left(
    \begin{array}{ccc}
    & 1 & \\
    1 & & \\
    & & -1
    \end{array}
    \right),\\
    &O(3)^T: n\rightarrow
    \left(
    \begin{array}{cc}
    O(3)^T & \\
    & I_4 \\
    \end{array}
    \right)
\end{split}
\eeq

The fourth has symmetry actions
\beq
\begin{split}
    & T_1: n\rightarrow
    \left(
    \begin{array}{ccccc}
    I_3 & & & \\
    & -\frac{1}{2} & \frac{\sqrt{3}}{2} & & \\
    & -\frac{\sqrt{3}}{2} & -\frac{1}{2} & &\\
    & & & 1 &\\
    & & & & 1\\
    \end{array}
    \right)n
    \left(
    \begin{array}{ccc}
    1 & & \\
    & -1 & \\
    & & -1
    \end{array}
    \right),
    \quad
    T_2: n\rightarrow
    \left(
    \begin{array}{ccccc}
    I_3 & & & \\
    & -\frac{1}{2} & \frac{\sqrt{3}}{2} & & \\
    & -\frac{\sqrt{3}}{2} & -\frac{1}{2} & &\\
    & & & 1 &\\
    & & & & 1\\    \end{array}
    \right)n
    \left(
    \begin{array}{ccc}
    -1 & & \\
    & 1 & \\
    & & -1
    \end{array}
    \right),\\
    &C_6: n\rightarrow
    \left(
    \begin{array}{ccccc}
    I_3 & & & &\\
    & 1 & & & \\
    & & -1 & & \\
    & & & -1 & \\
    & & & & -1\\
    \end{array}
    \right)n
    \left(
    \begin{array}{ccc}
    & & -1 \\
    1 & & \\
    & 1 &
    \end{array}
    \right),
    \quad
    M: n\rightarrow
    \left(
    \begin{array}{ccccc}
    I_3 & & & &\\
    & -1 & & & \\
    & & -1 & & \\
    & & & -1 & \\
    & & & & 1\\
    \end{array}
    \right)n
    \left(
    \begin{array}{ccc}
    & 1 & \\
    1 & & \\
    & & -1
    \end{array}
    \right),\\
    &O(3)^T: n\rightarrow
    \left(
    \begin{array}{cc}
        O(3)^T & \\
        & I_4
    \end{array}
    \right)n,\\
\end{split}
\eeq

The fifth has symmetry actions
\beq
\begin{split}
    &T_1: n\rightarrow
    \left(
    \begin{array}{ccccc}
    I_3 & & & & \\
    & 1 & & & \\
    & & 1 & & \\
    & & & -\frac{1}{2} & \frac{\sqrt{3}}{2} \\
    & & & -\frac{\sqrt{3}}{2} & -\frac{1}{2}
    \end{array}
    \right)n
    \left(
    \begin{array}{ccc}
    1 & & \\
    & -1 & \\
    & & -1
    \end{array}
    \right),
    \quad
    T_2: n\rightarrow
    \left(
    \begin{array}{ccccc}
    I_3 & & & & \\
    & 1 & & & \\
    & & 1 & & \\
    & & & -\frac{1}{2} & \frac{\sqrt{3}}{2} \\
    & & & -\frac{\sqrt{3}}{2} & -\frac{1}{2}
    \end{array}
    \right)n
    \left(
    \begin{array}{ccc}
    -1 & & \\
    & 1 & \\
    & & -1
    \end{array}
    \right),\\
    &C_6: n\rightarrow
    \left(
    \begin{array}{ccccc}
    I_3 & & & & \\
    & \frac{1}{2} & \frac{\sqrt{3}}{2} & & \\
    & -\frac{\sqrt{3}}{2} & \frac{1}{2} & & \\
    & & & 1 & \\
    & & & & -1
    \end{array}
    \right)n
    \left(
    \begin{array}{ccc}
    & & -1 \\
    1 & & \\
    & 1 &
    \end{array}
    \right),
    \quad
    M: n\rightarrow
    \left(
    \begin{array}{ccccc}
    I_3 & & & &\\
    & -1 & & &\\
    & & 1 & &\\
    & & & -1 &\\
    & & & & -1\\
    \end{array}
    \right)n
    \left(
    \begin{array}{ccc}
    & 1 & \\
    1 & & \\
    & & -1
    \end{array}
    \right),\\
    &O(3)^T: n\rightarrow
    \left(
    \begin{array}{cc}
    O(3)^T & \\
    & I_4
    \end{array}
    \right)n
\end{split}
\eeq
The third to the fifth realization are all adjacent to cuboc2 order.

The sixth has symmetry actions:
\beq
\begin{split}
    &T_1: n\rightarrow
    \left(
    \begin{array}{ccccc}
    I_3 & & & & \\
    & 1 & & & \\
    & & 1 & & \\
    & & & -1 & \\
    & & & & -1
    \end{array}
    \right)n,
    \quad
    T_2: n\rightarrow
    \left(
    \begin{array}{ccccc}
    I_3 & & & & \\
    & 1 & & & \\
    & & -1 & & \\
    & & & 1 & \\
    & & & & -1
    \end{array}
    \right)n,\\
    &C_6: n\rightarrow
    \left(
    \begin{array}{ccccc}
    I_3 & & 
    & & \\
    & 1 & & & \\
    & & & 1 & \\
    & & & & 1 \\
    & & 1 & & 
    \end{array}
    \right)n
    \left(
    \begin{array}{ccc}
    -\frac{1}{2} & -\frac{\sqrt{3}}{2} & \\
    \frac{\sqrt{3}}{2} & -\frac{1}{2} & \\
    & & 1
    \end{array}
    \right),
    \quad
    M: n\rightarrow
    \left(
    \begin{array}{ccccc}
    I_3 & & & & \\
    & -1 & & & \\
    & & & 1 & \\
    & & 1 & & \\
    & & & & 1
    \end{array}
    \right)n
    \left(
    \begin{array}{ccc}
    -1 & & \\
    & 1 & \\
    & & 1
    \end{array}
    \right),\\
    &O(3)^T:
    n\rightarrow
    \left(
    \begin{array}{cc}
    O(3)^T & \\
    & I_4
    \end{array}
    \right)n
\end{split}
\eeq
The sixth realization is adjacent to $q=0$ umbrella order.

The seventh has symmetry actions:
\beq
\begin{split}
    &T_1: n\rightarrow \left(
    \begin{array}{ccccc}
    I_3 & & & & \\
    & 1 & & & \\
    & & 1 & & \\
    & & & -1 & \\
    & & & & -1
    \end{array}
    \right)n
    \left(
    \begin{array}{ccc}
    -\frac{1}{2} & -\frac{\sqrt{3}}{2} & \\
    \frac{\sqrt{3}}{2} & -\frac{1}{2} & \\
    & & 1
    \end{array}
    \right),
    \quad
    T_2: n\rightarrow \left(
    \begin{array}{ccccc}
    I_3 & & & & \\
    & 1 & & & \\
    & & -1 & & \\
    & & & 1 & \\
    & & & & -1
    \end{array}
    \right)n
    \left(
    \begin{array}{ccc}
    -\frac{1}{2} & -\frac{\sqrt{3}}{2} & \\
    \frac{\sqrt{3}}{2} & -\frac{1}{2} & \\
    & & 1
    \end{array}
    \right),\\
    &C_6: n\rightarrow
    \left(
    \begin{array}{ccccc}
    I_3 & & & & \\
    & -1 & & & \\
    & & & 1 & \\
    & & & & 1 \\
    & & 1 & &
    \end{array}
    \right)n
    \left(
    \begin{array}{ccc}
    1 & & \\
    & -1 & \\
    & & 1
    \end{array}
    \right),
    \quad
    M: n\rightarrow
    \left(
    \begin{array}{ccccc}
    I_3 & & & & \\
    & 1 & & & \\
    & & & 1 & \\
    & & 1 & & \\
    & & & & 1
    \end{array}
    \right)n,\\
    &O(3)^T: n\rightarrow
    \left(
    \begin{array}{cc}
    O(3)^T & \\
    & I_4
    \end{array}
    \right)n\\
\end{split}
\eeq
The seventh realization is adjacent to $q=\sqrt{3}\times \sqrt{3}$ umbrella order.

The eighth has symmetry actions:
\beq
\begin{split}
    &T_1: n\rightarrow
    \left(
    \begin{array}{ccccc}
    I_3 & & & & \\
    & 1 & & & \\
    & & 1 & & \\
    & & & -1 & \\
    & & & & -1
    \end{array}
    \right)n
    \left(
    \begin{array}{ccc}
    -\frac{1}{2} & -\frac{\sqrt{3}}{2} & \\
    \frac{\sqrt{3}}{2} & -\frac{1}{2} & \\
    & & 1
    \end{array}
    \right),
    \quad
    T_2: n\rightarrow
    \left(
    \begin{array}{ccccc}
    I_3 & & & & \\
    & 1 & & & \\
    & & -1 & & \\
    & & & 1 & \\
    & & & & -1
    \end{array}
    \right)n
    \left(
    \begin{array}{ccc}
    -\frac{1}{2} & -\frac{\sqrt{3}}{2} & \\
    \frac{\sqrt{3}}{2} & -\frac{1}{2} & \\
    & & 1
    \end{array}
    \right),\\
    &C_6: n\rightarrow
    \left(
    \begin{array}{ccccc}
    I_3 & & 
    & & \\
    & -1 & & & \\
    & & & 1 & \\
    & & & & 1 \\
    & & 1 & & 
    \end{array}
    \right)n
    \left(
    \begin{array}{ccc}
    1 & & \\
    & -1 & \\
    & & 1
    \end{array}
    \right),
    \quad
    M: n\rightarrow
    \left(
    \begin{array}{ccccc}
    I_3 & & & & \\
    & 1 & & & \\
    & & & -1 & \\
    & & -1 & & \\
    & & & & -1
    \end{array}
    \right)n
    \left(
    \begin{array}{ccc}
    1 & & \\
    & 1 & \\
    & & -1
    \end{array}
    \right),\\
    &O(3)^T:
    n\rightarrow
    \left(
    \begin{array}{cc}
    O(3)^T & \\
    & I_4
    \end{array}
    \right)n
\end{split}
\eeq

The ninth has symmetry actions:
\beq
\begin{split}
    &T_1: n\rightarrow
    \left(
    \begin{array}{ccccc}
    I_3 & & & & \\
    & 1 & & & \\
    & & 1 & & \\
    & & & -1 & \\
    & & & & -1
    \end{array}
    \right)n
    \left(
    \begin{array}{ccc}
    -\frac{1}{2} & -\frac{\sqrt{3}}{2} & \\
    \frac{\sqrt{3}}{2} & -\frac{1}{2} & \\
    & & 1
    \end{array}
    \right),
    \quad
    T_2: n\rightarrow
    \left(
    \begin{array}{ccccc}
    I_3 & & & & \\
    & 1 & & & \\
    & & -1 & & \\
    & & & 1 & \\
    & & & & -1
    \end{array}
    \right)n
    \left(
    \begin{array}{ccc}
    -\frac{1}{2} & -\frac{\sqrt{3}}{2} & \\
    \frac{\sqrt{3}}{2} & -\frac{1}{2} & \\
    & & 1
    \end{array}
    \right),\\
    &C_6: n\rightarrow
    \left(
    \begin{array}{ccccc}
    I_3 & & 
    & & \\
    & -1 & & & \\
    & & & 1 & \\
    & & & & 1 \\
    & & 1 & & 
    \end{array}
    \right)n
    \left(
    \begin{array}{ccc}
    1 & & \\
    & -1 & \\
    & & 1
    \end{array}
    \right),
    \quad
    M: n\rightarrow
    \left(
    \begin{array}{ccccc}
    I_3 & & & & \\
    & -1 & & & \\
    & & & -1 & \\
    & & -1 & & \\
    & & & & -1
    \end{array}
    \right)n
    \left(
    \begin{array}{ccc}
    -1 & & \\
    & -1 & \\
    & & 1
    \end{array}
    \right),\\
    &O(3)^T:
    n\rightarrow
    \left(
    \begin{array}{cc}
    O(3)^T & \\
    & I_4
    \end{array}
    \right)n
\end{split}
\eeq

\subsubsection*{$p4m$-anomaly-free systems}

On $p4m$-anomaly-free systems, there are 4 stable realizations of SL$^{(7)}$, all of which has the most relevant spinful fluctuations carrying spin-2.

The first has symmetry actions:
\beq
\begin{split}
    &SO(3): n\rightarrow
    \left(
    \begin{array}{cc}
    \varphi_5(SO(3)) & \\
    & I_2
    \end{array}
    \right)n,
    \quad
    \mc{T}: n\rightarrow
    \left(
    \begin{array}{ccc}
    I_5 & & \\
    & -1 & \\
    & & 1
    \end{array}
    \right)n,
    \quad
    T_{1, 2}: n\rightarrow
    \left(
    \begin{array}{ccc}
    I_5 & & \\
    & 1 &\\
    & & -1
    \end{array}
    \right)n
    \left(
    \begin{array}{ccc}
    -1 & & \\
    & 1 & \\
    & & 1
    \end{array}
    \right),\\
    &C_4: n\rightarrow
    \left(
    \begin{array}{ccc}
    I_5 & & \\
    & 1 & \\
    & & -1
    \end{array}
    \right)n
    \left(
    \begin{array}{ccc}
    1 & & \\
    & -1 & \\
    & & 1
    \end{array}
    \right),
    \quad
    M: n\rightarrow
    \left(
    \begin{array}{ccc}
    I_5 & & \\
    & -1 & \\
    & & 1
    \end{array}
    \right)n
    \end{split}
\eeq

The second has symmetry actions:
\beq
\begin{split}
    &SO(3): n\rightarrow
    \left(
    \begin{array}{cc}
    \varphi_5(SO(3)) & \\
    & I_2
    \end{array}
    \right)n,
    \quad
        \mc{T}: n\rightarrow
    \left(
    \begin{array}{ccc}
    I_5 & & \\
    & -1 & \\
    & & 1
    \end{array}
    \right)n,
    \quad
    T_{1, 2}: n\rightarrow
    \left(
    \begin{array}{ccc}
    I_5 & & \\
    & 1 & \\
    & & -1
    \end{array}
    \right)n
    \left(
    \begin{array}{ccc}
    -1 & & \\
    & 1 & \\
    & & 1
    \end{array}
    \right),\\
    &C_4: n\rightarrow
    n
    \left(
    \begin{array}{ccc}
    -1 & & \\
    & -1 & \\
    & & 1
    \end{array}
    \right),
    \quad
    M: n\rightarrow
    \left(
    \begin{array}{ccc}
    I_5 & & \\
    & -1 & \\
    & & -1
    \end{array}
    \right)n
    \left(
    \begin{array}{ccc}
    -1 & & \\
    & 1 & \\
    & & 1
    \end{array}
    \right)
\end{split}
\eeq

The third has symmetry actions:
\beq
\begin{split}
    &SO(3):n\rightarrow
    \left(
    \begin{array}{cc}
    \varphi_5(SO(3)) & \\
    & I_2
    \end{array}
    \right)n,
    \quad
    \mc{T}: n\rightarrow
    \left(
    \begin{array}{ccc}
    I_5 & & \\
    & -1 & \\
    & & -1
    \end{array}
    \right)n
    \left(
    \begin{array}{ccc}
    -1 & & \\
    & 1 & \\
    & & 1
    \end{array}
    \right),\\
    &T_{1,2}: n\rightarrow
    \left(
    \begin{array}{ccc}
    I_5 & & \\
    & -1 & \\
    & & 1
    \end{array}
    \right)n
    \left(
    \begin{array}{ccc}
    -1 & & \\
    & 1 & \\
    & & 1
    \end{array}
    \right),
    \quad
    C_4: n\rightarrow
    \left(
    \begin{array}{ccc}
    I_5 & & \\
    & -1 & \\
    & & -1
    \end{array}
    \right)n,
    \quad
    M: n\rightarrow
    \left(
    \begin{array}{ccc}
    I_5 & & \\
    & 1 & \\
    & & -1
    \end{array}
    \right)n
    \end{split}
\eeq

The fourth has symmetry actions:
\beq
\begin{split}
    &SO(3):n\rightarrow
    \left(
    \begin{array}{cc}
    \varphi_5(SO(3)) & \\
    & I_2
    \end{array}
    \right)n,
    \quad
    \mc{T}: n\rightarrow
    \left(
    \begin{array}{ccc}
    I_5 & & \\
    & -1 & \\
    & & -1
    \end{array}
    \right)n
    \left(
    \begin{array}{ccc}
    -1 & & \\
    & 1 & \\
    & & 1
    \end{array}
    \right),\\
    &T_{1,2}: n\rightarrow
    \left(
    \begin{array}{ccc}
    I_5 & & \\
    & -1 & \\
    & & 1
    \end{array}
    \right)n
    \left(
    \begin{array}{ccc}
    -1 & & \\
    & 1 & \\
    & & 1
    \end{array}
    \right),\\
    &C_4: n\rightarrow
    \left(
    \begin{array}{ccc}
    I_5 & & \\
    & 1 & \\
    & & -1
    \end{array}
    \right)n
    \left(
    \begin{array}{ccc}
    -1 & & \\
    & 1 & \\
    & & 1
    \end{array}
    \right),
    \quad
    M: n\rightarrow
    \left(
    \begin{array}{ccc}
    I_5 & & \\
    & -1 & \\
    & & -1
    \end{array}
    \right)n
    \left(
    \begin{array}{ccc}
    -1 & & \\
    & 1 & \\
    & & 1
    \end{array}
    \right)
\end{split}
\eeq

\subsubsection*{Square lattice half-integer spin systems}

On square lattice half-integer spin systems, there are 2 stable realizations of SL$^{(7)}$ where the most relevant spinful fluctuations have spin-1 and all $n$ modes are at high-symmetry momenta in the Brillouin zone. There are also three realizations where some $n$ modes can have continuously changing momenta, among which two of them have only a single symmetric relevant perturbation that shifts the momenta of the $n$ modes, as long as these momenta are not tuned to high-symmetry point. This is the case for the third family of realizations for most non-high-symmetry momenta, except at two special momentum points (see below). Furthermore, there is also one stable realization where the most relevant spinful fluctuations have spin-2 and all $n$ modes are at high-symmetry momenta.

We start with the 2 realizations with the most relevant spinful fluctuations carrying spin-1 and all $n$ modes locating at high-symmetry momenta. The first has symmetry actions:
\beq
\begin{split}
    &T_1: n\rightarrow
    \left(
    \begin{array}{ccccc}
    I_3 & & & & \\
    & -1 & & & \\
    & & 1 & & \\
    & & & 1 & \\
    & & & & 1
    \end{array}
    \right)n
    \left(
    \begin{array}{ccc}
    -1 & &\\
    & 1 &\\
    & & 1
    \end{array}
    \right),
    \quad
    T_2: n\rightarrow
    \left(
    \begin{array}{ccccc}
    I_3 & & & & \\
    & 1 & & & \\
    & & -1 & & \\
    & & & 1 & \\
    & & & & 1
    \end{array}
    \right)n
    \left(
    \begin{array}{ccc}
    -1 & &\\
    & 1 &\\
    & & 1
    \end{array}
    \right),\\
    &C_4: n\rightarrow
    \left(
    \begin{array}{ccccc}
    I_3 & & & & \\
    & & 1 & & \\
    & -1 & & & \\
    & & & 1 & \\
    & & & & 1
    \end{array}
    \right)n
    \left(
    \begin{array}{ccc}
    1 & & \\
    & -1 & \\
    & & -1
    \end{array}
    \right),
    M: n\rightarrow
    \left(
    \begin{array}{ccccc}
    I_3 & & & & \\
    & -1 & & & \\
    & & 1 & & \\
    & & & -1 & \\
    & & & & -1
    \end{array}
    \right)n
    \left(
    \begin{array}{ccc}
    -1 & & \\
    & -1 &\\
    & & 1
    \end{array}
    \right),\\
    &O(3)^T:
    n\rightarrow
    \left(
    \begin{array}{cc}
    O(3)^T & \\
    & I_4
    \end{array}
    \right)n
\end{split}
\eeq

The second has symmetry actions:
\beq
\begin{split}
    &T_1: n\rightarrow
    \left(
    \begin{array}{ccccc}
    I_3 & & & & \\
    & -1 & & & \\
    & & 1 & & \\
    & & & -1 & \\
    & & & & 1
    \end{array}
    \right)n
    \left(
    \begin{array}{ccc}
    -1 & & \\
    & -1 &\\
    & & 1
    \end{array}
    \right),
    \quad
    T_2: n\rightarrow
    \left(
    \begin{array}{ccccc}
    I_3 & & & & \\
    & 1 & & & \\
    & & -1 & & \\
    & & & -1 & \\
    & & & & 1
    \end{array}
    \right)n
    \left(
    \begin{array}{ccc}
    -1 & & \\
    & -1 &\\
    & & 1
    \end{array}
    \right),\\
    &C_4: n\rightarrow
    \left(
    \begin{array}{ccccc}
    I_3 & & & & \\
    & & 1 & & \\
    & -1 & & & \\
    & & & 1 & \\
    & & & & 1
    \end{array}
    \right)n
    \left(
    \begin{array}{ccc}
    -1 & & \\
    & -1 &\\
    & & 1
    \end{array}
    \right),
    M: n\rightarrow
    \left(
    \begin{array}{ccccc}
    I_3 & & & & \\
    & -1 & & & \\
    & & 1 & & \\
    & & & 1 & \\
    & & & & -1
    \end{array}
    \right)n
    \left(
    \begin{array}{ccc}
    -1 & & \\
    & 1 &\\
    & & 1
    \end{array}
    \right),\\
    &O(3)^T:
    n\rightarrow
    \left(
    \begin{array}{cc}
    O(3)^T & \\
    & I_4
    \end{array}
    \right)n
\end{split}
\eeq

Next, we turn to the three realizations with some $n$ modes at continuously changing momenta. The first has symmetry actions given by Eq. \eqref{eq: embedding SL Square Incommensurate}, adjacent to tetrahedral umbrella order, and the second has symmetry actions
\beq
\begin{split}
    &C_4: n\rightarrow
    \left(
    \begin{array}{ccccc}
    I_3 & & & & \\
    & & & 1 & \\
    & & & & 1 \\
    & & 1 & & \\
    & 1 & & & 
    \end{array}
    \right)n
    \left(
    \begin{array}{ccc}
    & 1 & \\
    1 & & \\
    & & 1
    \end{array}
    \right),
    \quad
    M: n\rightarrow
    \left(
    \begin{array}{ccccc}
    I_3 & & & & \\
    & & -1 & & \\
    & -1 & & & \\
    & & & -1 & \\
    & & & & -1
    \end{array}
    \right)n
    \left(
    \begin{array}{ccc}
    -1 & &\\
    & -1 & \\
    & & 1
    \end{array}
    \right),\\
    & T_1: n\rightarrow
    \left(
    \begin{array}{cccc}
    I_3 & & & \\
    & \cos k & \sin k & \\
    & -\sin k & \cos k & \\
    & & & I_2
    \end{array}
    \right)n
    \left(
    \begin{array}{ccc}
    -1 & & \\
    & 1 & \\
    & & -1
    \end{array}
    \right),
    \quad
    T_2: n\rightarrow
    \left(
    \begin{array}{cccc}
    I_3 & & & \\
    & I_2 & & \\
    & & \cos k & -\sin k \\
    & & \sin k & \cos k
    \end{array}
    \right)n
    \left(
    \begin{array}{ccc}
    1 & &\\
    & -1 &\\
    & & -1
    \end{array}
    \right),\\
    &O(3)^T: n\rightarrow \left(
    \begin{array}{cc}
    O(3)^T & \\
    & I_4
    \end{array}
    \right)n
\end{split}
\eeq
where $k\in(-\pi, \pi)$ is a generic momentum. In both of these two realizations, the only relevant perturbation that is allowed by microscopic symmetries is the one that shifts the momenta of the $n$ modes.

The third has symmetry actions:
\beq
\begin{split}
    &C_4: n\rightarrow
    \left(
    \begin{array}{ccccc}
    I_3 & & & & \\
    & & & 1 & \\
    & & & & 1 \\
    & & 1 & & \\
    & 1 & & & 
    \end{array}
    \right)n
    \left(
    \begin{array}{ccc}
    & 1 & \\
    1 & & \\
    & & 1
    \end{array}
    \right),
    \quad
    M: n\rightarrow
    \left(
    \begin{array}{ccccc}
    I_3 & & & & \\
    & & & & -1 \\
    & & & -1 & \\
    & & -1 & & \\
    & -1 & & &
    \end{array}
    \right)n
    \left(
    \begin{array}{ccc}
    1 & & \\
    & 1 & \\
    & & -1
    \end{array}
    \right),\\
    & T_1: n\rightarrow
    \left(
    \begin{array}{ccccc}
    I_3 & & & & \\
    & \cos k & \sin k & & \\
    & -\sin k & \cos k& & \\
    & & & \cos k & \sin k \\
    & & & -\sin k & \cos k
    \end{array}
    \right)n
    \left(
    \begin{array}{ccc}
    -1 & & \\
    & 1 & \\
    & & -1
    \end{array}
    \right),\\
    &T_2: n\rightarrow
    \left(
    \begin{array}{ccccc}
    I_3 & & & & \\
    & \cos k & \sin k & & \\
    & -\sin k & \cos k& & \\
    & & & \cos k & -\sin k \\
    & & & \sin k & \cos k
    \end{array}
    \right)n
    \left(
    \begin{array}{ccc}
    1 & &\\
    & -1 & \\
    & & -1
    \end{array}
    \right),\\
    &O(3)^T: n\rightarrow \left(
    \begin{array}{cc}
    O(3)^T & \\
    & I_4
    \end{array}
    \right)n
\end{split}
\eeq
where $k\in(-\pi, \pi)$ is a generic momentum. For this family of realizations, as long as $k\neq\pm\pi/2$, the only symmetric relevant perturbation is the one that shifts the momenta of the $n$ modes. When $k=\pm\pi/2$, besides this symmetric relevant perturbation, there is an additional one that can change the emergent order of SL$^{(7)}$ and make it unstable.

Finally, there is one realization where all $n$ modes are at high-symmetry momentum, and the most spinful fluctuations have spin-2. It has symmetry actions:
\beq
\begin{split}
    &SO(3): n\rightarrow
    \left(
    \begin{array}{cc}
    \varphi_5(SO(3)) & \\
    & I_2
    \end{array}
    \right)n,
    \quad
    \mc{T}: n\rightarrow
    \left(
    \begin{array}{ccc}
    I_5 & &\\
    & -1 &\\
    & & -1
    \end{array}
    \right)n
    \left(
    \begin{array}{ccc}
    -1 & &\\
    & 1 & \\
    & & 1
    \end{array}
    \right)\\
    &T_1: n\rightarrow
    \left(
    \begin{array}{ccc}
    I_5 & &\\
    & -1 &\\
    & & 1
    \end{array}
    \right)n
    \left(
    \begin{array}{ccc}
    -1 & &\\
    & 1 & \\
    & & 1
    \end{array}
    \right),
    \quad
    T_2: n\rightarrow
    \left(
    \begin{array}{ccc}
    I_5 & &\\
    & 1 &\\
    & & -1
    \end{array}
    \right)n
    \left(
    \begin{array}{ccc}
    -1 & &\\
    & 1 & \\
    & & 1
    \end{array}
    \right),\\
    &C_4: n\rightarrow
    \left(
    \begin{array}{ccc}
    I_5 & & \\
    & & 1 \\
    & -1 &
    \end{array}
    \right)n,
    \quad
    M: n\rightarrow
    \left(
    \begin{array}{ccc}
    I_5 & & \\
    & 1 & \\
    & & -1
    \end{array}
    \right)n\\
\end{split}
\eeq

\section{Classical regular magnetic orders} \label{app: regular magnetic orders}

Ref. \cite{Messio2011} studied regular magnetic orders, \ie magnetic orders that respect all the lattice symmetries modulo global $O(3)^T$ spin transformations (rotations and/or
spin flips). In particular, on triangular, kagome, honeycomb and square lattices, all {\it classical} regular magnetic orders are classified. These classical orders can all be realized by a product state, where each spin moment on the lattice can be assigned a definite orientation. In this appendix, we explicitly write down the spin configurations of these classical regular magnetic orders, and the lattice symmetry actions on the order parameters.

In terms of the symmetry breaking pattern of the spin $O(3)^T$ symmetry, there are three types of magnetic orders: collinear, coplanar and non-coplanar. The order parameter of a collinear magnetic order is a three-component vector, $\vec n$, which transforms in the spin-1 representation of the $O(3)^T$ spin symmetry. The order parameters of a coplanar magnetic order consists of two orthonormal three-component vectors, $\vec n_{1,2}$, both transforming in the spin-1 representation of the $O(3)^T$ spin symmetry. The order parameters of a non-coplanar magnetic order consists of three orthonormal three-component vectors, $\vec n_{1,2,3}$, all transforming in the spin-1 representation of the $O(3)^T$ spin symmetry.

We start from the triangular lattice. We will denote the position $\vec r$ of a site on a triangular lattice by its coordinates in the basis of translation vectors of $T_{1,2}$ (see Fig. \ref{fig:p6m}), such that $\vec r=x\vec T_1+y\vec T_2$, where $\vec T_{1,2}$ is the translation vector of $T_{1,2}$. Under the $p6m$ symmetry,
\beq
\begin{split}
    &T_1: (x, y)\rightarrow (x+1, y)\\
    &T_2: (x, y)\rightarrow (x, y+1)\\
    &C_6: (x, y)\rightarrow (x-y, x)\\
    &M: (x, y)\rightarrow (y, x)
\end{split}
\eeq

\begin{enumerate}
    
    \item There is a single collinear classical regular magnetic order, the ferromagnetic order, where $\vec S(x, y)=\vec n$. Under the $p6m$ symmetry, $\vec n$ is invariant.
    
    \item There is a single coplanar classical regular magnetic order, the 120$^\circ$ order, where $\vec S(x, y)=(-1)^{x+y}\cos\frac{\pi(x+y)}{3}\vec n_1+(-1)^{x+y}\sin\frac{\pi(x+y)}{3}\vec n_2$. Under the $p6m$ symmetry,
    \beq
    \begin{split}
    &T_{1,2}: \vec n_1\rightarrow -\frac{1}{2}\vec n_1-\frac{\sqrt{3}}{2}\vec n_2,
    \quad
    \vec n_2\rightarrow\frac{\sqrt{3}}{2}\vec n_1-\frac{1}{2}\vec n_2\\
    &C_6: \vec n_1\rightarrow\vec n_1,
    \quad
    \vec n_2\rightarrow-\vec n_2\\
    &M: \vec n_{1,2}\rightarrow \vec n_{1,2}
    \end{split}
    \eeq
    
    \item There are two non-coplanar classical regular magnetic order. The first is the tetrahedral order, where $\vec S(x, y)=(-1)^x\vec n_1+(-1)^y\vec n_2+(-1)^{x+y}\vec n_3$. Under the $p6m$ symmetry,
    \beq
    \begin{split}
    &T_{1}: \vec n_1\rightarrow -\vec n_1,
    \quad
    \vec n_2\rightarrow\vec n_2,
    \quad
    \vec n_3\rightarrow -\vec n_3\\
    &T_2: \vec n_1\rightarrow\vec n_1,
    \quad
    \vec n_2\rightarrow -\vec n_2,
    \quad
    \vec n_3\rightarrow -\vec n_3\\
    &C_6: \vec n_1\rightarrow\vec n_2,
    \quad
    \vec n_2\rightarrow\vec n_3,
    \quad
    \vec n_3\rightarrow\vec n_1\\
    &M: \vec n_{1}\rightarrow \vec n_{2},
    \quad
    \vec n_2\rightarrow\vec n_1,
    \quad
    \vec n_3\rightarrow\vec n_3
    \end{split}
    \eeq
    
    \item The second non-coplanar classical regular magnetic order is the F-umbrella order, where $\vec S(x, y)=(-1)^{x+y}\cos\frac{\pi(x+y)}{3}\sin\theta\vec n_1+(-1)^{x+y}\sin\frac{\pi(x+y)}{3}\sin\theta\vec n_2+\cos\theta\vec n_3$, with $\theta$ a free parameter. Under the $p6m$ symmetry,
    \beq
    \begin{split}
    &T_{1,2}: \vec n_1\rightarrow -\frac{1}{2}\vec n_1+\frac{\sqrt{3}}{2}\vec n_2,
    \quad
    \vec n_2\rightarrow-\frac{\sqrt{3}}{2}\vec n_1-\frac{1}{2}\vec n_2,
    \quad
    \vec n_3\rightarrow \vec n_3\\
    &C_6: \vec n_1\rightarrow\vec n_1,
    \quad
    \vec n_2\rightarrow\vec -n_2,
    \quad
    \vec n_3\rightarrow\vec n_3\\
    &M: \vec n_{1,2,3}\rightarrow \vec n_{1,2,3}
    \end{split}
    \eeq
    
\end{enumerate}

Next we turn to the kagome lattice. Each unit cell in a kagome lattice includes three sites, so the spin configuration will be written as $\vec S_i(x, y)$, where $(x, y)$ labels the position of the unit cell in the same way as the triangular lattice, and $i=1, 2, 3$ represents the site obtained by applying a half translation $\vec T_1/2$, $\vec T_2/2$ and $(\vec T_1+\vec T_2)/2$ to the $C_6$ center of the unit cell, respectively.

\begin{enumerate}
    
    \item There is a single collinear classical regular magnetic order, the ferromagnetic order, where $\vec S_i(x, y)=\vec n$ for $i=1,2,3$. Under the $p6m$ symmetry, $\vec n$ is invariant.
    
    \item There are two coplanar classical regular magnetic orders. The first is the $\vec q=0$ order, where $\vec S_1(x, y)=\vec n_1$, $\vec S_2(x, y)=-\frac{1}{2}\vec n_1+\frac{\sqrt{3}}{2}\vec n_2$, and $\vec S_3(x. y)=-\frac{1}{2}\vec n_1-\frac{\sqrt{3}}{2}\vec n_2$. Under the $p6m$ symmetry,
    \beq
    \begin{split}
        &T_{1,2}: \vec n_{1, 2}\rightarrow\vec n_{1, 2}\\
        &C_6: \vec n_1\rightarrow-\frac{1}{2}\vec n_1-\frac{\sqrt{3}}{2}
        \vec n_2,
        \quad
        \vec n_2\rightarrow\frac{\sqrt{3}}{2}\vec n_1-\frac{1}{2}\vec n_2\\
        &M: \vec n_1\rightarrow-\frac{1}{2}\vec n_1+\frac{\sqrt{3}}{2}\vec n_2,
        \quad
        \vec n_2\rightarrow\frac{\sqrt{3}}{2}\vec n_1+\frac{1}{2}\vec n_2
    \end{split}
    \eeq
    
    \item The second coplanar classical regular magnetic order is the $\vec q=\sqrt{3}\times\sqrt{3}$ order, where $\vec S_1(x, y)=(-1)^{x+y}\cos\frac{\pi(x+y)}{3}\vec n_1+(-1)^{x+y}\sin\frac{\pi(x+y)}{3}\vec n_2$, $\vec S_2(x, y)=\vec S_1(x, y)$, and $\vec S_3(x, y)=(-1)^{x+y}\cos\frac{\pi(x+y+2)}{3}\vec n_1+(-1)^{x+y}\sin\frac{\pi(x+y+2)}{3}\vec n_2$. Under the $p6m$ symmetry,
    \beq
    \begin{split}
        & T_{1, 2}: \vec n_1\rightarrow-\frac{1}{2}\vec n_1-\frac{\sqrt{3}}{2}\vec n_2,
        \quad
        \vec n_2\rightarrow\frac{\sqrt{3}}{2}\vec n_1-\frac{1}{2}\vec n_2\\
        &C_6: \vec n_1\rightarrow-\frac{1}{2}\vec n_1+\frac{\sqrt{3}}{2}\vec n_2,
        \quad
        \vec n_2\rightarrow\frac{\sqrt{3}}{2}\vec n_1+\frac{1}{2}\vec n_2\\
        &M: \vec n_{1, 2}\rightarrow\vec n_{1, 2}
    \end{split}
    \eeq
    
    \item There are five non-coplanar classical regular magnetic orders. The first is the octahedral order, where $\vec S_1(x, y)=(-1)^y\vec n_1$, $\vec S_2(x, y)=(-1)^x\vec n_2$ and $\vec S_3(x, y)=(-1)^{x+y}\vec n_3$. Under the $p6m$ symmetry,
    \beq
    \begin{split}
        & T_1: \vec n_1\rightarrow\vec n_1,
        \quad
        \vec n_2\rightarrow-\vec n_2,
        \quad
        \vec n_3\rightarrow (-1)^{x+y}\vec n_3\\
        & T_2: \vec n_1\rightarrow-\vec n_1,
        \quad
        \vec n_2\rightarrow\vec n_2,
        \quad
        \vec n_3\rightarrow-\vec n_3\\
        &C_6: \vec n_1\rightarrow n_3,
        \quad
        \vec n_2\rightarrow\vec n_1,
        \quad
        \vec n_3\rightarrow\vec n_2\\
        &M: \vec n_1\rightarrow\vec n_2,
        \quad
        \vec n_2\rightarrow\vec n_1,
        \quad
        \vec n_3\rightarrow\vec n_3
    \end{split}
    \eeq
    
    \item The second non-coplanar classical regular magnetic order is the cuboc1 order, where $\vec S_1(x, y)=(-1)^x\vec n_2+(-1)^{x+y}\vec n_3$, $\vec S_2(x, y)=(-1)^y\vec n_1+(-1)^{x+y}\vec n_3$ and $\vec S_3(x, y)=-(-1)^x\vec n_2-(-1)^y\vec n_1$. Under the $p6m$ symmetry,
    \beq
    \begin{split}
        & T_1: \vec n_1\rightarrow\vec n_1,
        \quad
        \vec n_2\rightarrow-\vec n_2,
        \quad
        \vec n_3\rightarrow-\vec n_3\\
        & T_2: \vec n_1\rightarrow-\vec n_1,
        \quad
        \vec n_2\rightarrow\vec n_2,
        \quad
        \vec n_3\rightarrow-\vec n_3\\
        & C_6: \vec n_1\rightarrow -\vec n_3,
        \quad
        \vec n_2\rightarrow -\vec n_1,
        \quad
        \vec n_3\rightarrow-\vec n_2\\
        & M: \vec n_1\rightarrow\vec n_2,
        \quad
        \vec n_2\rightarrow\vec n_1,
        \quad
        \vec n_3\rightarrow\vec n_3
    \end{split}
    \eeq
    
    \item The third non-coplanar classical regular magnetic order is the cuboc2 order, where $\vec S_1(x, y)=(-1)^x\vec n_2-(-1)^{x+y}\vec n_3$, $\vec S_2(x, y)=(-1)^y\vec n_1+(-1)^{x+y}\vec n_3$ and $\vec S_3(x, y)=(-1)^x\vec n_2+(-1)^y\vec n_1$. Under the $p6m$ symmetry,
    \beq
    \begin{split}
        & T_1: \vec n_1\rightarrow \vec n_1,
        \quad
        \vec n_2\rightarrow-\vec n_2,
        \quad
        \vec n_3\rightarrow-\vec n_3\\
        & T_2: \vec n_1\rightarrow-\vec n_1,
        \quad
        \vec n_2\rightarrow\vec n_2,
        \quad
        \vec n_3\rightarrow-\vec n_3\\
        & C_6: \vec n_1\rightarrow\vec n_3,
        \quad
        \vec n_2\rightarrow\vec n_1,
        \quad
        \vec n_3\rightarrow-\vec n_2\\
        & M: \vec n_1\rightarrow\vec n_2,
        \quad
        \vec n_2\rightarrow\vec n_1,
        \quad
        \vec n_3\rightarrow-\vec n_3
    \end{split}
    \eeq
    
    \item The fourth non-coplanar is the $\vec q=0$ umbrella order, where $\vec S_1(x, y)=\sin\theta\vec n_1+\cos\theta\vec n_3$, $\vec S_2(x, y)=-\frac{1}{2}\sin\theta\vec n_1+\frac{\sqrt{3}}{2}\sin\theta\vec n_2+\cos\theta\vec n_3$, and $\vec S_3(x, y)=-\frac{1}{2}\sin\theta\vec n_1-\frac{\sqrt{3}}{2}\sin\theta\vec n_2+\cos\theta\vec n_3$, with $\theta$ a free parameter. Under the $p6m$ symmetry,
    \beq
    \begin{split}
        & T_{1, 2}: \vec n_{1,2,3}\rightarrow\vec n_{1,2,3}\\
        & C_6: \vec n_1\rightarrow-\frac{1}{2}\vec n_1-\frac{\sqrt{3}}{2}\vec n_2,
        \quad
        \vec n_2\rightarrow\frac{\sqrt{3}}{2}\vec n_1-\frac{1}{2}\vec n_2,
        \quad
        \vec n_3\rightarrow\vec n_3\\
        & M: \vec n_1\rightarrow-\frac{1}{2}\vec n_1+\frac{\sqrt{3}}{2}\vec n_2,
        \quad
        \vec n_2\rightarrow\frac{\sqrt{3}}{2}\vec n_1+\frac{1}{2}\vec n_2,
        \quad
        \vec n_3\rightarrow\vec n_3
    \end{split}
    \eeq
    
    \item The last non-coplanar classical regular magnetic order is the $\vec q=\sqrt{3}\times\sqrt{3}$ umbrella order, where $\vec S_1(x, y)=(-1)^{x+y}\cos\frac{\pi(x+y)}{3}\sin\theta\vec n_1+(-1)^{x+y}\sin\frac{\pi(x+y)}{3}\sin\theta\vec n_2+\cos\theta\vec n_3$, $\vec S_2(x, y)=\vec S_1(x, y)$, and $\vec S_3(x, y)=-(-1)^{x+y}\cos\frac{\pi(x+y-1)}{3}\sin\theta\vec n_1-(-1)^{x+y}\sin\frac{\pi(x+y-1)}{3}\sin\theta\vec n_2+\cos\theta\vec n_3$. Under the $p6m$ symmetry,
    \beq
    \begin{split}
        & T_{1, 2}: \vec n_1\rightarrow-\frac{1}{2}\vec n_1-\frac{\sqrt{3}}{2}\vec n_2,
        \quad
        \vec n_2\rightarrow\frac{\sqrt{3}}{2}\vec n_1-\frac{1}{2}\vec n_2,
        \quad
        \vec n_3\rightarrow\vec n_3\\
        & C_6: \vec n_1\rightarrow-\frac{1}{2}\vec n_1+\frac{\sqrt{3}}{2}\vec n_2,
        \quad
        \vec n_2\rightarrow\frac{\sqrt{3}}{2}\vec n_1+\frac{1}{2}\vec n_2,
        \quad
        \vec n_3\rightarrow\vec n_3\\
        & M: \vec n_{1,2,3}\rightarrow\vec n_{1,2,3}
    \end{split}
    \eeq
    
\end{enumerate}

Now we turn to the honeycomb lattice. Each unit cell of a honeycomb lattice includes two sites, so the spin configuration will be written in terms of $\vec S_A(x, y)$ and $\vec S_B(x, y)$, where the $A$ and $B$ sublattice can be obtained by translating by $\frac{2\vec T_1+\vec T_2}{3}$ and $\frac{\vec T_1-\vec T_2}{3}$ from the $C_6$ center, respectively.

\begin{enumerate}
    
    \item There are two collinear classical regular magnetic orders. The first is the ferromagnetic order, where $\vec S_A(x, y)=\vec S_B(x, y)=\vec n$. Under the $p6m$ symmetry, $\vec n$ is invariant.
    
    \item The second collinear classical regular magnetic order is the anti-ferromagnetic order, where $\vec S_A(x, y)=-\vec S_B(x, y)=\vec n$. Under the $p6m$ symmetry,
    \beq
    \begin{split}
        & T_{1, 2}: \vec n\rightarrow\vec n\\
        & C_6: \vec n\rightarrow-\vec n\\
        & M: \vec n\rightarrow-\vec n
    \end{split}
    \eeq
    
    \item There is a single coplanar classical regular magnetic order, the $V$ order, where $\vec S_A(x, y)=\cos\theta\vec n_1-\sin\theta\vec n_2$ and $\vec S_B(x, y)=\cos\theta\vec n_1+\sin\theta\vec n_2$, with $\theta$ a free parameter. Under the $p6m$ symmetry,
    \beq
    \begin{split}
        & T_{1,2}: \vec n_{1, 2}\rightarrow\vec n_{1, 2}\\
        & C_6: \vec n_1\rightarrow \vec n_1,
        \quad
        \vec n_2\rightarrow-\vec n_2\\
        & M: \vec n_1\rightarrow\vec n_1,
        \quad
        \vec n_2\rightarrow-\vec n_2
    \end{split}
    \eeq
    
    \item There are two non-coplanar classical regular magnetic orders. The first is the cubic order, where $\vec S_A(x, y)=(-1)^x\vec n_1+(-1)^y\vec n_2+(-1)^{x+y}\vec n_3$ and $\vec S_B(x, y)=(-1)^x\vec n_1-(-1)^y\vec n_2+(-1)^{x+y}\vec n_3$. Under the $p6m$ symmetry,
    \beq
    \begin{split}
        & T_1: \vec n_1\rightarrow-\vec n_1,
        \quad
        \vec n_2\rightarrow\vec n_2,
        \quad
        \vec n_3\rightarrow-\vec n_3\\
        & T_2: \vec n_1\rightarrow\vec n_1,
        \quad
        \vec n_2\rightarrow-\vec n_2,
        \quad
        \vec n_3\rightarrow-\vec n_3\\
        & C_6: \vec n_1\rightarrow\vec n_2,
        \quad
        \vec n_2\rightarrow-\vec n_3,
        \quad
        \vec n_3\rightarrow\vec n_1\\
        & M: \vec n_1\rightarrow\vec n_2,
        \quad
        \vec n_2\rightarrow\vec n_1,
        \quad
        \vec n_3\rightarrow-\vec n_3
    \end{split}
    \eeq
    
    \item The second non-coplanar classical regular magnetic order is the tetrahedral order, where $\vec S_A(x, y)=(-1)^x\vec n_1+(-1)^y\vec n_2+(-1)^{x+y}\vec n_3$ and $\vec S_B(x, y)=-(-1)^x\vec n_1+(-1)^y\vec n_2-(-1)^{x+y}\vec n_3$. Under the $p6m$ symmetry,
    \beq
    \begin{split}
        & T_1: \vec n_1\rightarrow-\vec n_1,
        \quad
        \vec n_2\rightarrow\vec n_2,
        \quad
        \vec n_3\rightarrow-\vec n_3\\
        & T_2: \vec n_1\rightarrow\vec n_1,
        \quad
        \vec n_2\rightarrow -\vec n_2,
        \quad
        \vec n_3\rightarrow-\vec n_3\\
        & C_6: \vec n_1\rightarrow -\vec n_2,
        \quad
        \vec n_2\rightarrow\vec n_3,
        \quad
        \vec n_3\rightarrow-\vec n_1\\
        & M: \vec n_1\rightarrow-\vec n_2,
        \quad
        \vec n_2\rightarrow -\vec n_1,
        \quad
        \vec n_3\rightarrow\vec n_3
    \end{split}
    \eeq
    
\end{enumerate}

Finally, we discuss the square lattice. We will denote the position of a site by its coordinates in the basis of translation vectors of $T_{1, 2}$ (see Fig. \ref{fig:p4m}). such that $\vec r=x\vec T_1+y\vec T_2$, where $\vec T_{1,2}$ is the translation vector of $T_{1,2}$. Under the $p4m$ symmetry,
\beq
\begin{split}
    & T_1: (x, y)\rightarrow (x+1, y)\\
    & T_2: (x, y)\rightarrow (x, y+1)\\
    & C_4: (x, y)\rightarrow (-y, x)\\
    & M: (x, y)\rightarrow (-x, y)
\end{split}
\eeq

\begin{enumerate}
    
    \item There are two collinear classical regular magnetic orders. The first is the ferromagnetic order, where $\vec S(x, y)=\vec n$. Under the $p4m$ symmetry, $\vec n$ is invariant.
    
    \item The second collinear classical regular magnetic order is the anti-ferromagnetic order, where $\vec S(x, y)=(-1)^{x+y}\vec n$. Under the $p4m$ symmetry,
    \beq
    \begin{split}
        & T_{1, 2}: \vec n\rightarrow -\vec n\\
        & C_4: \vec n\rightarrow-\vec n\\
        & M: \vec n\rightarrow\vec n
    \end{split}
    \eeq
    
    \item There are two coplanar classical regular magnetic orders. The first is the orthogonal order, where $\vec S(x, y)=\frac{(-1)^x+(-1)^y}{2}\vec n_1+\frac{-(-1)^x+(-1)^y}{2}\vec n_2$. Under the $p4m$ symmetry,
    \beq
    \begin{split}
        & T_1: \vec n_1\rightarrow\vec n_2,
        \quad
        \vec n_2\rightarrow\vec n_1\\
        & T_2: \vec n_1\rightarrow-\vec n_2,
        \quad
        \vec n_2\rightarrow-\vec n_1\\
        & C_4: \vec n_1\rightarrow\vec n_1,
        \quad
        \vec n_2\rightarrow-\vec n_2\\
        & M: \vec n_{1,2}\rightarrow\vec n_{1,2}
    \end{split}
    \eeq
    
    \item The second coplanar classical regular magnetic order is the $V$ order, where $\vec S(x, y)=\cos\theta\vec n_1-(-1)^{x+y}\sin\theta\vec n_2$, where $\theta$ a free parameter. Under the $p4m$ symmetry,
    \beq
    \begin{split}
        & T_{1, 2}: \vec n_1\rightarrow\vec n_1,
        \quad
        \vec n_2\rightarrow-\vec n_2\\
        & C_4: \vec n_{1, 2}\rightarrow\vec n_{1, 2}\\
        & M: \vec n_{1, 2}\rightarrow\vec n_{1, 2}
    \end{split}
    \eeq
    
    \item There are two non-coplanar classical regular magnetic orders. The first is the tetrahedral umbrealla order (also known as the AF umbrella order), where $\vec S(x, y)=\frac{(-1)^x\sin\theta}{\sqrt{2}}\vec n_1-\frac{(-1)^y\sin\theta}{\sqrt{2}}\vec n_2-(-1)^{x+y}\cos\theta\vec n_3$, with $\theta$ a free parameter. Under the $p4m$ symmetry,
    \beq
    \begin{split}
        & T_1: \vec n_1\rightarrow-\vec n_1,
        \quad
        \vec n_2\rightarrow\vec n_2,
        \quad
        \vec n_3\rightarrow-\vec n_3\\
        & T_2: \vec n_1\rightarrow\vec n_1,
        \quad
        \vec n_2\rightarrow-\vec n_2,
        \quad
        \vec n_3\rightarrow-\vec n_3\\
        & C_4: \vec n_1\rightarrow-\vec n_2,
        \quad
        \vec n_2\rightarrow-\vec n_1,
        \quad
        \vec n_3\rightarrow\vec n_3\\
        & M: \vec n_{1, 2 ,3}\rightarrow\vec n_{1, 2, 3}
    \end{split}
    \eeq
    
    \item The second non-coplanar classical regular magnetic order is the umbrella order (also known as the F umbrella order), where $\vec S(x, y)=\cos\theta\vec n_1+\frac{(-1)^x\sin\theta}{\sqrt{2}}\vec n_2+\frac{(-1)^y\sin\theta}{\sqrt{2}}\vec n_3$. Under the $p4m$ symmetry,
    \beq
    \begin{split}
        & T_1: \vec n_1\rightarrow\vec n_1,
        \quad
        \vec n_2\rightarrow-\vec n_2,
        \quad
        \vec n_3\rightarrow\vec n_3\\
        & T_2: \vec n_1\rightarrow\vec n_1,
        \quad
        \vec n_2\rightarrow\vec n_2,
        \quad
        \vec n_3\rightarrow-\vec n_3\\
        & C_4: \vec n_1\rightarrow n_1,
        \quad
        \vec n_2\rightarrow\vec n_3,
        \quad
        \vec n_3\rightarrow\vec n_2\\
        & M: \vec n_{1, 2, 3}\rightarrow\vec n_{1, 2, 3}
    \end{split}
    \eeq
    
\end{enumerate}

\section{Stability of DSL realizations on NaYbO$_2$ and twisted bilayer WSe$_2$} \label{app: stability}

In this appendix, we discuss the stability of a few more examples of DSL realizations on systems with spin-orbit coupling (SOC). The specific systems we have in mind are NaYbO$_2$ and twisted bilayer WSe$_2$ (tWSe$_2$). Recently, it was pointed out that tWSe$_2$ is a good quantum simulator of triangular lattice Hubbard model, which can be effectively described by a triangular lattice spin-1/2 system in the strong coupling regime \cite{Wang2020, Pan2020, Zhang2020}.

The symmetries of NaYbO$_2$ are given in Eq. \eqref{eq: symmetry NaYbO2}. The symmetries of tWSe$_2$ are
\beq
T_{1,2},\ C_3\equiv C_6^2,\ SO(2),\ \mc{T}
\eeq
where $SO(2)$ is a reduced spin rotational symmetry {\footnote{These are the symmetries of tWSe$_2$ in the presence of a displacement field, which is satisfied in the generic experimental setting. If there is no displacement field, there is an extra mirror symmetry.}}.

On triangular lattice spin-1/2 systems with the full $p6m\times O(3)^T$ symmetry, our exhaustive search finds 3 realizations of DSL, given by Eqs. \eqref{eq: DSL standard}, \eqref{eq: DSL triangular others 1} and \eqref{eq: DSL triangular others 2}. Using these symmetry actions, it is straightforward to see that for all three realizations, the remaining symmetries of NaYbO$_2$ are sufficient to forbid all relevant operators of DSL listed in Sec. \ref{subsec: review of SLs}. However, for the symmetry setting of tWSe$_2$ and for all three realizations, the $(A_L, A_R)$ operator (the fermion mass that transforms in the adjoint representation of the flavor symmetry) is always symmetry-allowed and will destablize the DSL. This means if a DSL is stably realized in tWSe$_2$, that realization cannot be compatible with a full $p6m\times O(3)^T$ symmetry.

\clearpage
\twocolumngrid

\bibliography{ref.bib}

\end{document}